\newcommand*{\addFileDependency}[1]{% argument=file name and extension
  \typeout{(#1)}
  \@addtofilelist{#1}
  \IfFileExists{#1}{}{\typeout{No file #1.}}
}
\tikzset{middlearrow/.style={
		decoration={markings,
			mark= at position 0.5 with {\arrow{#1}} ,
		},
		postaction={decorate}
	}
}
\theoremstyle{plain}
\newtheorem{theorem}{Theorem}[section]
\newtheorem{corollary}[theorem]{Corollary}
\newtheorem{lemma}[theorem]{Lemma}
\newtheorem{proposition}[theorem]{Proposition}
\newtheorem{definition}[theorem]{Definition}
\newtheorem{condition}[theorem]{Condition}
\theoremstyle{remark}
\newtheorem{remark}[theorem]{Remark}
\newtheorem{example}[theorem]{Example}
\numberwithin{equation}{section}
\providecommand{\keywords}[1]{\textbf{\textsf{Keywords: }} #1}
\newcommand\myprime{\mkern-3.5mu\raise0.6ex\hbox{$\scriptstyle\prime$}}
\newcommand{\conv}{\mathrm{conv}}
\def\CC{{C\nolinebreak[4]\hspace{-.05em}\raisebox{.4ex}{\tiny\bf ++}}}
\DeclareMathOperator*{\intt}{int}
\DeclareMathOperator*{\argmin}{arg\,min}
\DeclareMathOperator*{\Glim}{
\Gamma-\,lim}
\DeclareMathOperator*{\Klim}{K-\,lim}
\newcommand*\bigcdot{\mathpalette\bigcdot@{.8}}
\newcommand*\bigcdot@[2]{\mathbin{\vcenter{\hbox{\scalebox{#2}{$\m@th#1\bullet$}}}}}
\title{Efficient convex PCA with applications to Wasserstein GPCA and ranked data}
\author{
  Steven Campbell\\
  Department of Statistics\\
  Columbia University\\
  \texttt{sc5314@columbia.edu}
  \and
  Ting-Kam Leonard Wong\\
  Department of Statistical Sciences\\
  University of Toronto\\
  \texttt{tkl.wong@utoronto.ca}
}
\date{\today}
\begin{document}

\maketitle

\begin{abstract}
Convex PCA, which was introduced in \cite{BGKL17}, modifies Euclidean PCA by restricting the data and the principal components to lie in a given convex subset of a Hilbert space. %is a dimension reduction methodology for data with values in a convex subset of a Hilbert space. \textcolor{blue}{The objective is to find a convex set of lower affine dimension such that the average distance of the data to the set is minimized.}
This setting arises naturally in many applications, including distributional data in the Wasserstein space of an interval, and ranked compositional data under the Aitchison geometry. Our contribution in this paper is threefold. First, we present several new theoretical results including consistency as well as continuity and differentiability of the objective function in the finite dimensional case. Second, we develop a numerical implementation of finite dimensional convex PCA when the convex set is polyhedral, and show that this provides a natural approximation of Wasserstein GPCA. Third, we illustrate our results with two financial applications, namely distributions of stock returns ranked by size and the capital distribution curve, both of which are of independent interest in stochastic portfolio theory. Supplementary materials for this article are available online.
\end{abstract}

\keywords{Convex principal component analysis, optimal transport, Wasserstein space, Aitchison geometry, capital distribution curve, stochastic portfolio theory, distributional data.}

\newpage

\section{Introduction} \label{sec:intro}
%Principal component analysis (PCA) is a fundamental dimensionality reduction technique for revealing the underlying structure of the data; see \cite{abdi2010principal} and \cite{jolliffe2016principal} for systematic overviews. %The standard Euclidean formulation of PCA operates by projecting the data orthogonally onto a lower dimensional subspace which minimizes the residual error, and can be solved analytically via the eigenvalue decomposition of the empirical covariance matrix.
%Modern challenges in data science motivated various extensions of Euclidean PCA, such as sparsity \parencite{ZHT06}, exponential family PCA \parencite{CDS01}, functional PCA \parencite{S14}, principal geodesic analysis for manifold-valued data \parencite{fletcher2004principal}, to name a few. For systematic overviews of PCA and its variants we refer the reader to \cite{abdi2010principal} and \cite{jolliffe2016principal}.

We study {\it convex principal component analysis} (CPCA) which extends the standard Euclidean PCA to data  constrained to lie in a convex subset of a Hilbert space. The precise mathematical formulation is described in Section \ref{sec:CPCA.general}. To give right away the main idea of CPCA,  we compare in  Figure \ref{fig:Low.Dim.Example} Euclidean and convex PCA using a simulated two dimensional data-set, where the data is restricted to lie in a convex cone. The orthogonal projection of a data point onto the Euclidean principal component (PC) may not lie in the convex set, making it difficult to interpret the result. By restricting the projected points to the convex set, we obtain (using our algorithm developed in Section \ref{sec:algorithm}) the convex PC. CPCA was first introduced in \cite{BGKL17} to formulate a notion of geodesic PCA (GPCA) on the Wasserstein space on an interval, where each data point is a probability distribution.  This approach is based on the fact that the $2$-Wasserstein space on an interval is isometric to a convex subset of an $L^2$ space (see Section \ref{sec:Wasserstein.isometry}). Apart from Wasserstein GPCA, convex PCA is also useful in other settings. In Section \ref{sec:rank.based.dist} we apply CPCA to a {\it ranked} data-set where each observation is a vector $(y_1, \ldots, y_d)$ satisfying the convex constraint $y_1 \geq \cdots \geq y_d$.

\subsection{Main contributions and organization of the paper}
Our contribution is threefold. First, we obtain new theoretical results about CPCA. % including consistency and analytical properties of the objective function.
Second, we present a numerical implementation of CPCA when the convex state space is polyhedral, and apply it to solve a finite dimensional approximation of Wasserstein GPCA. Third, we provide two novel financial applications of CPCA.% motivated by stochastic portfolio theory \parencite{F02} in mathematical finance. 

\begin{figure}[t]
\centering
\begin{center}
\includegraphics[width=0.7\textwidth]{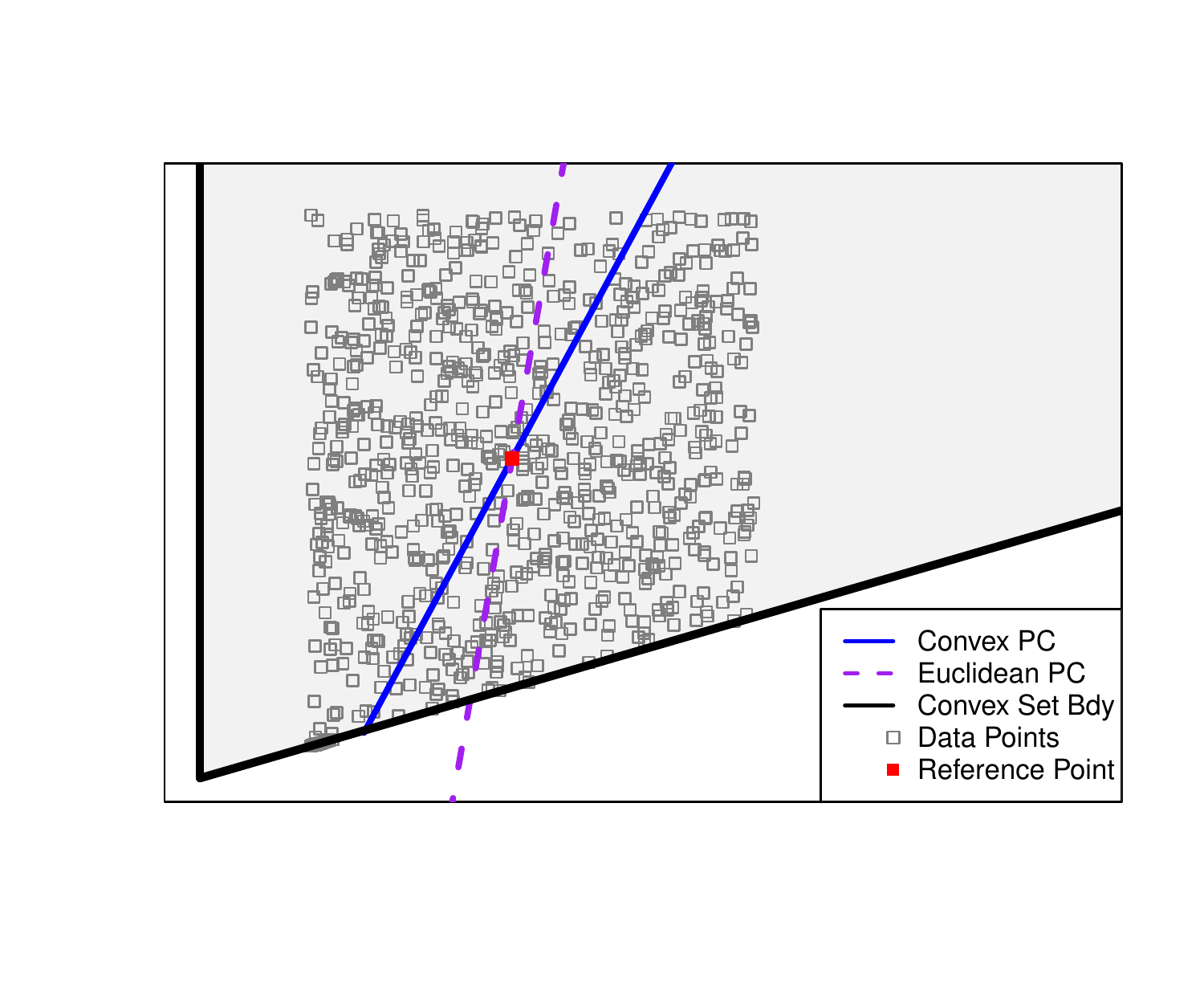}
\end{center}
\vspace{-2cm}
\caption{Euclidean and convex PCA for a simulated two dimensional data-set with values in a cone (shaded area). In Euclidean PCA, the orthogonal projection onto the principal component may lie outside of the convex set. Restricting the projection to the convex set leads to convex PCA.}
\label{fig:Low.Dim.Example}
\end{figure}
We begin by recalling in Section \ref{sec:CPCA.general} the formulation of convex PCA following the treatment of \cite{BGKL17}. Convex PCA has two formulations, namely global and nested. In this paper, we focus on the nested version for interpretability and computational purposes. We also define a notion of explained variation in this context. %We discuss related literature in Section \ref{sec:literature} below.

In Section \ref{sec:CPCA.Finite.Dim} we study theoretical properties of CPCA, mostly in the finite dimensional case. We consider (i) consistency in measure, (ii) consistency of finite dimensional approximations of the underlying Hilbert space, and (iii) analytical properties of the objective function \eqref{eqn:Vp} of an equivalent formulation of nested CPCA. These results place CPCA on a rigorous basis and pave the way for numerical implementation. 

Implementation of CPCA is the subject of Section \ref{sec:algorithm}. Our focus is on the case where the Hilbert space is finite dimensional and the convex set is polyhedral. We present an efficient algorithm along with a \CC \ implementation which  utilizes the application programming interface OpenMP for parallel computing.\footnote{See \url{https://www.openmp.org/}. Our implementation is available at: \url{https://github.com/stevenacampbell/ConvexPCA}.}

We study Wasserstein GPCA on an interval in Section \ref{sec:Application.GPCA}. After recalling the isometry between the Wasserstein space and a closed convex subset of an $L^2$ space, we propose a natural finite dimensional approximation of Wasserstein GPCA% by projecting the data onto suitable subspaces of the underlying Hilbert space
, and note that it satisfies the consistency result in Section \ref{subsec:Consistency.Finite.Dim.Approx}. 

We present two financial applications, both of which are motivated by our work \parencite{campbell2022functional} on stochastic portfolio theory (SPT). SPT is a mathematical framework introduced by \cite{F02} for studying macroscopic properties of equity markets and portfolio selection. A key empirical observation of this theory is that the behaviours of stocks have a systematic dependence on their relative ranks (measured by market capitalization). While several stylized stochastic models based on interacting particle systems (see \cite{banner2005atlas},  \cite{itkin2021open} and the references therein) have been introduced to capture this phenomenon, they were mostly studied from the theoretical perspective. On the other hand, CPCA allows us to (i) choose a geometric structure compatible with the data, and (ii) quantify the rank dependence and principal directions to motivate future models. In Section \ref{sec:ranked.returns}, we apply Wasserstein GPCA to distributions of US stock returns ranked by size, and show that the first two convex principal components can be interpreted in terms of volatility and skewness. In Section \ref{sec:rank.based.dist} we consider the US capital distribution curve which captures market capitalization weights arranged in descending order. By viewing them as ranked compositional data under the {\it Aitchison geometry}, we show that the first convex principal component corresponds closely to {\it market diversity}, an entropy-like quantity which measures the concentration of market capital and correlates significantly with the relative performance of active portfolio managers. These properties have significant implications in the construction of macroscopic stochastic models of equity markets as well as portfolio selection.

Finally, in Section \ref{sec:conclusion} we summarize our contributions and discuss possible directions for future research. The Appendix contains technical proofs as well as further details of our numerical implementation, including a comparison with alternative approaches.

\subsection{Related literature} \label{sec:literature}
Statistical analysis of distributional data using optimal transport, and PCA in particular, has attracted a lot of attention in recent years. In \cite{BGKL17} the authors focused on developing the theory of CPCA and Wasserstein GPCA. Numerical implementation of Wasserstein GPCA was considered in \cite{PB22} using a quadratic B-spline approximation of the distribution function and in \cite{cazelles2018geodesic} using a proximal forward-backward splitting algorithm. Here, our approximation is based on discretizing the transport map and is exact when all data points are in the given finite dimensional subspace (we provide a detailed comparison in Appendix \ref{app:comparision}).   Alternatives to Wasserstein GPCA have also been proposed. For instance, \cite{cazelles2018geodesic} studied a ``log'' PCA on the Wasserstein space, and \cite{PB22} introduced projection methods to perform PCA and regression for one-dimensional distributional data. Other recent statistical works include Wasserstein autoregressive models for distributional time series \parencite{zhang2022wasserstein}, Wasserstein PCA for circular measures \parencite{beraha2023wasserstein}, regression models \parencite{chen2021wasserstein} and clustering \parencite{okano2024wasserstein}. Finally, we refer the reader to  \cite{PC19} and \cite{chewi2024statistical} for accessible introductions to optimal transport and its applications in statistics and machine learning.

\section{Convex PCA}\label{sec:CPCA.general}
Let $H$ be a separable real Hilbert space with inner product $\langle \cdot,\cdot \rangle$ and norm $\|\cdot\|$. For $x,y\in H$ and $E\subset H$, let $\mathsf{d}(x,y)=\|x-y\|$ and $\mathsf{d}(x,E)=\inf_{z\in E}\mathsf{d}(x,z)$. Let $X$ be a given nonempty closed convex subset of $H$ which will serve as the state space of the data. In some theoretical results we will further assume that $X$ is compact. If $A$ and $B$ are nonempty subsets of $X$, we let $\mathsf{h}(A, B)$
%\begin{equation} \label{eqn:Hausdorff.distance}
%\mathsf{h}(A, B) = \max\left\{ \sup_{x \in A} \mathsf{d}(x, B), \sup_{y \in B} \mathsf{d}(y, A) \right\} 
%\end{equation}
be the Hausdorff distance between $A$ and $B$ and note that it defines a metric on the space of nonempty and compact subsets of $X$. For $A \subset H$ we let $\dim A$ be its affine dimension. In what follows $k$ is always a non-negative integer.

\begin{definition} \label{def:CL.space}
We endow the following spaces with the Hausdorff distance $h$. 
\begin{enumerate}
\item[(i)] Let $CL(X)$ be the space of nonempty and closed subsets of $X$.
\item[(ii)] Let $CC_k(X)$ be the family of convex sets $C\in CL(X)$ such that $\dim C\leq k$.
\item[(ii)] Given a reference point $\overline{x} \in X$, we let $CC_{\overline{x},k}(X)$ be the family of sets $C\in CC_k(X)$ such that $\overline{x}\in C$.
\end{enumerate}
\end{definition}

By Proposition 3.3 of \cite{BGKL17}, if $X$ is compact then all three spaces in Definition \ref{def:CL.space} are compact. Also, by Proposition A.3 of the same reference, $\mathsf{d}(x, \cdot)$ is continuous on $CL(X)$. If $C \subset H$ is a (nonempty) closed convex set, we let $\Pi_C: X \rightarrow C$ be the metric projection which is well-defined by the Hilbert projection theorem. 

\medskip

We are almost ready to state the two versions of CPCA. Let $\mu$ be a Borel probability measure on $X$. In applications, $\mu$ is usually the empirical measure of the data, say $\mu = \frac{1}{N} \sum_{i = 1}^N \delta_{x_i}$, where $\delta_{x_i}$ is the point mass at the $i$-th data point. Let ${\bf x}$ be an $X$-valued random element whose distribution is $\mu$. We assume $\mu$ has finite second moment, i.e., $\mathbb{E}_{\mu}[\|x\|^2] < \infty$. We also let $\overline{x} \in X$ be a fixed reference point which is typically the mean $\mathbb{E}_{\mu}[{\bf x}]$ of $\mu$.  We stress that this probabilistic formulation is only used to state compactly the CPCA problem. We do not assume that in applications the data points are, say, independent samples from a fixed distribution.

\begin{definition} [Convex PCA] \label{def:CPCA}
Let $\mu$ be a Borel probability measure on $X$ with finite second moment, ${\bf x} \sim \mu$ and $\overline{x} \in X$. For $C \in CL(X)$, define
\begin{equation} \label{eqn:functional.J}
J(C; \mu) = \mathbb{E}_{\mu} [\mathsf{d}^2({\bf x}, C)].
\end{equation}
%which is continuous in $C$ by Proposition \parencite[Propsition 3.1]{BGKL17}.
\begin{itemize}
\item[(i)] (Global CPCA) Given $k \geq 0$, a $(k, \overline{x}, \mu)$-global convex principal component (CPC) is a set $C_k$ such that
\begin{equation} \label{eqn:CPCA.global}
C_k\in\argmin_{C\in CC_{\overline{x},k}(X)} J(C;\mu).
\end{equation}
\item[(ii)] (Nested CPCA) Let $C_0 = \{ \overline{x} \}$. For $k \geq 1$, we say that $C_k$ is a $(k, \overline{x}, \mu)$-nested convex principal component (CPC)  if there exist $C_1, \ldots, C_{k-1}$ such that $C_0 \subset C_1 \subset \cdots \subset C_{k-1} \subset C_k$ and, for each $\ell = 1, \ldots, k$, 
\begin{equation} \label{eqn:CPCA.nested}
C_{\ell}\in\argmin_{C\in CC_{\overline{x},\ell}(X), C \supset C_{\ell-1} } J(C;\mu).
\end{equation}
\end{itemize}
\end{definition}

In the Euclidean setting (i.e., $X = H$), the global and nested PCA problems are equivalent. In CPCA, they are equivalent only when $k = 1$. Although solutions to the nested CPCA problem are generally not optimal in the global sense, they are easier to compute and interpret. Specifically, since the nested CPCs are by construction nested, i.e., $C_k \subset C_{k+1}$, we can quantify the contribution of each additional dimension. Hence, we focus on nested CPCA in later sections. Using the compactness and continuity results cited above, it can be shown (see Theorem 3.1 of \cite{BGKL17}) that when $X$ is compact both the global and nested CPCA problems have non-empty solution sets for any $k$. Also, as long as $\mu$ is not supported on a $(k-1)$-dimensional convex subset of $X$, in \eqref{eqn:CPCA.global} and \eqref{eqn:CPCA.nested} the objective values are unchanged if we restrict to sets of the form
\begin{equation} \label{eqn:C.special.form}
C_{k} = (\overline{x} + \mathrm{span}\{{\bf p}_1, \ldots, {\bf p}_k\}) \cap X,
\end{equation}
where $({\bf p}_1, \ldots, {\bf p}_k)$ is a sequence of orthonormal vectors in $H$ \parencite[Propositions 3.3 and 3.4]{BGKL17}. This representation will be used in the analysis in Sections \ref{sec:CPCA.Finite.Dim} and \ref{sec:algorithm}. In \eqref{eqn:C.special.form}, for an optimal set $\{{\bf p}_1^*,\dots,{\bf p}_j^*\}$, we call ${\bf p}_j^*$ the $j$-th {\it convex principal direction}.\footnote{By an abuse of terminology as in the Euclidean context, we also call it a $j$-th {\it convex principal component}.}

\medskip

A useful concept in PCA is the proportion of explained variation. In the Euclidean setting, it can be expressed in terms of the eigenvalues of the covariance matrix. Since the metric projection $\Pi_C$ is nonlinear when $C$ is not a subspace, eigenvalue decomposition is no longer available. Nevertheless, we may still define a similar notion.

\begin{definition} [Explained variation] \label{def:explained.variation}
The total variation of the data ${\bf x} \sim \mu$ with respect to a given reference point $\overline{x}\in X$ is defined by $
TV = \mathbb{E}_{\mu}\left[\mathsf{d}(\mathbf{x}, \overline{x})^2\right]$. To avoid triviality, we assume $TV > 0$. Let $C \in CC_{\overline{x}, k}(X)$. We define the proportion of variation explained by $C$ by the ratio
\begin{equation} \label{eqn:explaind.variation}
EV(C) = \frac{\mathbb{E}_{\mu}\left[ \mathsf{d}(\Pi_{C} {\bf x}, \overline{x})^2\right]}{TV}.
\end{equation}
\end{definition}

\begin{remark} { \ }
\begin{itemize}
\item[(i)] Clearly $EV(C) \leq 1$ and, if $\mu(C) = 1$, then $EV(C) = 1$. If $C_1 \subset C_2 \subset \cdots$, then $EV_k = EV(C_k)$ is non-decreasing in $k$. We interpret $(EV_k - EV_{k-1}) TV$ as the additional variation explained by the $k$-th component.
\item[(ii)] Definition \ref{def:explained.variation} reduces to the usual definition for Euclidean PCA if we let $X = H$, $\overline{x} = \mathbb{E}_{\mu}[{\bf x}]$ and $C$ be an affine subspace containing $\overline{x}$.
\item[(iii)] Since the definition of $EV(C)$ requires evaluating the projection map $\Pi_{C}$ for individual data points, it is generally costly to compute $EV(C)$ for  large data sets.
\end{itemize}
\end{remark}

\begin{proposition} \label{prop:explained.var.projection}
Let $X$ be compact, and let $C_k$, $k = 1, 2, \ldots$, be an increasing sequence of convex subsets of $X$. If $\overline{\bigcup_k C_k} = X$, then $\Pi_{C_k} \rightarrow \Pi_X$ pointwise on $X$. Consequently, we have $EV(C_k) \rightarrow 1$ as $k \rightarrow \infty$.
\end{proposition}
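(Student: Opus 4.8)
The plan is to reduce everything to the elementary fact that, for a fixed point, its distance to the sets $C_k$ decreases to zero, and then push this through the expectation using compactness. Observe first that $\Pi_X$ is just the identity map on $X$ (since $X$ is itself the ambient convex set), so the assertion $\Pi_{C_k} \to \Pi_X$ pointwise on $X$ is precisely that $\Pi_{C_k}(x) \to x$ for every $x \in X$. By the Hilbert projection theorem, $\|\Pi_{C_k}(x) - x\| = \mathsf{d}(x, C_k)$, so it suffices to show $\mathsf{d}(x, C_k) \to 0$ for each $x \in X$. (We may assume each $C_k$ is closed: replacing $C_k$ by $\overline{C_k}$ affects neither $\mathsf{d}(\cdot, C_k)$ nor $\overline{\bigcup_k C_k}$, and makes $\Pi_{C_k}$ well-defined.)

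Next I would establish the pointwise distance convergence. Since $C_k \subset C_{k+1}$, the sequence $k \mapsto \mathsf{d}(x, C_k)$ is non-increasing and bounded below by $0$, hence converges to some $\ell(x) \geq 0$. Given $\varepsilon > 0$, density of $\bigcup_k C_k$ in $X$ yields an index $j$ and a point $z \in C_j$ with $\|z - x\| < \varepsilon$, whence $\mathsf{d}(x, C_j) < \varepsilon$ and therefore $\ell(x) \leq \mathsf{d}(x, C_j) < \varepsilon$. As $\varepsilon$ was arbitrary, $\ell(x) = 0$, proving $\Pi_{C_k}(x) \to x = \Pi_X(x)$.

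For the statement about explained variation I would apply the bounded convergence theorem to $EV(C_k) = \mathbb{E}_{\mu}[\mathsf{d}(\Pi_{C_k}{\bf x}, \overline{x})^2]/TV$. By the first part and continuity of the norm, $\mathsf{d}(\Pi_{C_k}(x), \overline{x})^2 \to \mathsf{d}(x, \overline{x})^2$ for every $x \in X$. Since $X$ is compact, $\|y\| \leq R$ for all $y \in X$ and some $R < \infty$, so $\mathsf{d}(\Pi_{C_k}(x), \overline{x})^2 \leq (2R)^2$ uniformly in $k$ and $x$, and this constant is $\mu$-integrable because $\mu$ is a probability measure. Hence $\mathbb{E}_{\mu}[\mathsf{d}(\Pi_{C_k}{\bf x}, \overline{x})^2] \to \mathbb{E}_{\mu}[\mathsf{d}({\bf x}, \overline{x})^2] = TV$, i.e., $EV(C_k) \to 1$.

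As for difficulty: this argument is almost entirely soft and there is no genuine obstacle. The only points requiring a moment's care are the domination step in the last paragraph (supplied for free by compactness of $X$) and, if one prefers not to assume the $C_k$ closed, the well-definedness of the metric projection, which I would sidestep via the closure reduction noted above. If $X$ were only bounded rather than compact one would instead invoke the finite second moment of $\mu$ to dominate, but under the stated hypotheses the uniform bound suffices.
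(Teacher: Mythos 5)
Your proof is correct, and it reaches the first assertion by a genuinely more elementary route than the paper. The paper first proves a set-convergence lemma: using compactness of $(CL(X),\mathsf{h})$ and the equivalence of Hausdorff and Kuratowski convergence on a compact space, it shows $\mathsf{h}(C_k, X)\to 0$, and then bounds $\|\Pi_{C_k}(x)-x\| = \mathsf{d}(x,C_k)\le \mathsf{h}(C_k,X)$, so it actually obtains \emph{uniform} convergence of the projections on $X$; you instead exploit monotonicity of $k\mapsto \mathsf{d}(x,C_k)$ together with density of $\bigcup_k C_k$ to get $\mathsf{d}(x,C_k)\to 0$ pointwise, which is all the statement requires. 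Your route avoids the hyperspace machinery entirely and, as you note, does not use compactness of $X$ for the first assertion at all (compactness, or merely boundedness plus the finite second moment of $\mu$, enters only to dominate in the bounded convergence step, exactly as in the paper); the paper's approach buys the stronger uniform statement $\sup_{x\in X}\|\Pi_{C_k}(x)-x\|\to 0$ and reuses lemmas needed elsewhere in its consistency proofs. Your closure reduction $C_k\mapsto\overline{C_k}$ is a sensible way to make $\Pi_{C_k}$ well defined under the literal hypotheses (the paper implicitly works with closed, indeed compact, sets), and the final bounded convergence argument coincides with the paper's.
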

\begin{proof}
See Appendix \ref{sec:appendix.CPCA.general}.
\end{proof}

%\begin{lemma} \label{lem:explained.var.projection} If $\overline{\mathbf{\bigcup_j\mathrm{span}{\{\mathbf{p}_1^*,\dots,\mathbf{p}_j^*\}}}}=H$ then the projections satisfy
%\[\Pi_{\mathrm{span}\{\mathbf{p}_1^*,\dots,\mathbf{p}_j^*\}\cap (X-\mathbf{x}^{(0)})}\to \Pi_{(X-\mathbf{x}^{(0)})}\] pointwise.
%\end{lemma}

%\begin{proof}
%See Appendix \ref{sec:appendix.CPCA.general}.
%\end{proof}

\section{Finite dimensional CPCA}\label{sec:CPCA.Finite.Dim}
While infinite dimensional Hilbert spaces arise naturally in theory, in implementations we almost always work with a finite dimensional space due to approximation or discretization. In this section we establish some theoretical properties of CPCA mostly in the finite dimensional case. Sections \ref{sec:CPCA.consistency} and \ref{subsec:Consistency.Finite.Dim.Approx} study consistency properties in measure and in the dimension of the ambient space. In Section \ref{sec:nested.analytical.properties} we consider an equivalent formulation of nested CPCA and prove continuity and differentiability of its objective function.

\subsection{Consistency in measure} \label{sec:CPCA.consistency}
%Consistency is a desirable, if not indispensable, property for PCA and other statistical methodologies. In our context, 
By consistency in measure we mean the stability of the optimization problems \eqref{eqn:CPCA.global} and \eqref{eqn:CPCA.nested} as the input measure $\mu$ varies. For example, under i.i.d.~sampling the empirical distribution converges weakly to the population distribution as the sample size tends to infinity. In this subsection we provide consistency results in the spirit of Section 6 of \cite{BGKL17}, thereby fixing a technical gap in their Lemma 6.1. Using similar techniques, we study in Section \ref{subsec:Consistency.Finite.Dim.Approx} the stability of CPCA when the underlying Hilbert space $H$ is approximated by a sequence of finite dimensional subspaces.

%In this subsection we will revisit the consistency result of \parencite[Theorem 6.1]{BGKL17} which treats the consistency of the empirical problem in the setting of Global CPCA. As we will see later, it will be useful to give a straightforward extension of this result so that we consider consistency for general measures $\mu^{(n)}\in\mathcal{P}(X)$ that converge weakly to $\mu\in\mathcal{P}(X)$. We will also partially remedy a small bug in the proof of \parencite[Lemma 6.1]{BGKL17} that is needed for the consistency result and further proceed to treat a Nested CPCA consistency result that is anticipated, but omitted in the original work \parencite{BGKL17}.\footnote{In the proof of \parencite[Lemma 6.1]{BGKL17} the translate $C_n:=C+x_n-x$ need not be in $X$ and so there is no guarantee that $C_n\in CC_{x,k}(X)$.} The proof techniques that remedy this bug apply to the finite dimensional setting that is the focus of the bulk of this paper. We leave the infinite dimensional case to future work.

%To begin our analysis we will proceed similarly to \parencite{BGKL17}. We begin by showing the following result for the objective function. 

%In this subsection 
We assume that $X \subset H$ is a compact convex set. Let $\mathcal{P}(X)$ be the space of Borel probability measures on $X$ and equip it with the topology of weak convergence. Since $X$ is compact, weak convergence is equivalent to convergence in the $p$-Wasserstein distance, for any $p \geq 1$ (see Theorem 6.9 of \cite{V08}). For later use, we recall that for $\mu, \nu \in \mathcal{P}(X)$, the {\it $p$-Wasserstein distance} is defined by
\begin{equation} \label{eqn:Wasserstein.distance}
W_p(\mu, \nu) = \left( \inf_{\pi \in \Pi(\mu, \nu)} \int \mathsf{d}(x, y)^p \mathrm{d} \pi(x, y) \right)^{1/p},
\end{equation}
where $\Pi(\mu, \nu)$ is the set of couplings of the pair $(\mu, \nu)$. %Also recall the functional $J(C ; \mu)$ defined by \eqref{eqn:functional.J}.

%Note if the convergence $\mu^{(n)}\to \mu$ is a.s. (e.g. $\mu^{(n)}$ are random measures as in the empirical setting) then the equalities are a.s. and the results of this section hold mutadis mutandis. For details, see the arguments in \parencite[Theorem 6.1]{BGKL17}. 

%When stating and proving consistency results, it is convenient to formulate the optimization problems in terms of indicator functions.
Given a subset $A$ of an ambient space, let $\chi_{A}$ be its {\it indicator function} which is $0$ on $A$ and $+\infty$ on its complement. Then the global CPCA problem \eqref{eqn:CPCA.global} is equivalent to
\begin{equation}\label{eqn:Global.CPCA.Ind.Form}
\inf_{C\in CL(X)}\left\{J(C;\mu)+\chi_{CC_{\overline{x},k}(X)}(C)\right\}.
\end{equation}
Similarly, the nested CPCA problem \eqref{eqn:CPCA.nested} can be written as
\begin{equation}\label{eqn:Nested.CPCA.Ind.Form}
\inf_{C\in CL(X)}\left\{J(C;\mu)+\chi_{\{ C\in CC_{\overline{x},k}(X): \  C\supset C_{k-1}\}}(C)\right\},
\end{equation}
where $C_{k-1}$ is a given $(k-1, \overline{x}, \mu)$-nested CPC. Also, we will be using the notion of {\it $\Gamma$-convergence} (denoted by $\Gamma\textnormal{-}\lim$) whose definition is recalled in Appendix \ref{sec:appendix.external.def.and.results}. We first give a fairly standard but general result which serves as a framework for proving specific consistency statements.

\begin{theorem}[Abstract consistency]\label{thm:general.consistency}
Let $X \subset H$ be compact and convex. Let $\mathfrak{C}_n, \mathfrak{C} \subset CL(X)$ and $\mu_n, \mu \in \mathcal{P}(X)$. Suppose $\Gamma\textnormal{-}\lim_{n \rightarrow \infty} \chi_{\mathfrak{C}_n} = \chi_{\mathfrak{C}}$ and $\mu_n \rightarrow \mu$ in $\mathcal{P}(X)$. Then
\begin{equation} \label{eqn:general.consistency}
\lim_{n\to\infty}\inf_{C\in CL(X)}\{J(C;\mu_n)+\chi_{\mathfrak{C}_n}(C)\}=\inf_{C\in CL(X)}\{J(C;\mu)+\chi_{\mathfrak{C}}(C)\}.
\end{equation}
Moreover, if $C_n\in \argmin_{C\in CL(X)}\{J(C;\mu_n)+\chi_{\mathfrak{C}_n} (C)\}$ for $n\geq1$, then any accumulation point of $(C_n)_{n\geq1}$ (under the Hausdorff distance) attains the infimum on the right hand side of \eqref{eqn:general.consistency}.
\end{theorem}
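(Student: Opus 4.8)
The plan is to cast the two functionals appearing in \eqref{eqn:general.consistency} as $F_n := J(\cdot\,;\mu_n)+\chi_{\mathfrak{C}_n}$ and $F := J(\cdot\,;\mu)+\chi_{\mathfrak{C}}$ on the metric space $(CL(X),\mathsf{h})$, and to deduce the statement from the fundamental theorem of $\Gamma$-convergence (the relevant notions are recalled in Appendix~\ref{sec:appendix.external.def.and.results}): if $(F_n)$ is equicoercive and $\Gamma\textnormal{-}\lim_n F_n = F$, then $\min F=\lim_n \inf F_n$ and every accumulation point of a sequence of (approximate) minimizers of $F_n$ minimizes $F$. Equicoercivity is essentially free here: since $X$ is compact, $(CL(X),\mathsf{h})$ is compact by Proposition~3.3 of \cite{BGKL17}, so every sequence in $CL(X)$ has a convergent subsequence; this also guarantees that the minimizers $C_n$ in the statement admit accumulation points. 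Thus the whole content is the $\Gamma$-convergence $F_n\to F$.

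To establish it I would first prove the auxiliary fact that $J(\cdot\,;\mu_n)$ converges \emph{continuously} to $J(\cdot\,;\mu)$, i.e.\ $J(C_n;\mu_n)\to J(C;\mu)$ whenever $C_n\to C$ in $\mathsf{h}$ and $\mu_n\to\mu$ in $\mathcal{P}(X)$. The estimate $|\mathsf{d}(x,C_n)-\mathsf{d}(x,C)|\le \mathsf{h}(C_n,C)$, valid for every $x$, shows $\mathsf{d}(\cdot,C_n)\to\mathsf{d}(\cdot,C)$ uniformly on $X$; since $X$ is compact these functions are uniformly bounded, so $\mathsf{d}^2(\cdot,C_n)\to\mathsf{d}^2(\cdot,C)$ uniformly as well. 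Writing $J(C_n;\mu_n)-J(C;\mu)=\int(\mathsf{d}^2(x,C_n)-\mathsf{d}^2(x,C))\,\mathrm{d}\mu_n(x)+\big(\int\mathsf{d}^2(x,C)\,\mathrm{d}\mu_n(x)-\int\mathsf{d}^2(x,C)\,\mathrm{d}\mu(x)\big)$, the first term is at most $\|\mathsf{d}^2(\cdot,C_n)-\mathsf{d}^2(\cdot,C)\|_\infty\to 0$ and the second tends to $0$ because $\mathsf{d}^2(\cdot,C)$ is bounded and continuous and $\mu_n\to\mu$ weakly. (Continuity of $C\mapsto\mathsf{d}(x,C)$ on $CL(X)$, hence measurability of the integrands, is already Proposition~A.3 of \cite{BGKL17}.)

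Combining this with the hypothesis $\Gamma\textnormal{-}\lim_n\chi_{\mathfrak{C}_n}=\chi_{\mathfrak{C}}$ yields $\Gamma\textnormal{-}\lim_n F_n=F$ via the standard stability of $\Gamma$-limits under continuously convergent perturbations: for the liminf inequality, if $C_n\to C$ then $\liminf_n F_n(C_n)\ge \lim_n J(C_n;\mu_n)+\liminf_n\chi_{\mathfrak{C}_n}(C_n)\ge J(C;\mu)+\chi_{\mathfrak{C}}(C)=F(C)$, where the case $\chi_{\mathfrak{C}}(C)=+\infty$ is handled by noting that the $\chi$ term then forces $F_n(C_n)=+\infty$ for all large $n$; for the recovery sequence, pick $C_n\to C$ recovering $\chi_{\mathfrak{C}_n}$ at $C$ and observe $\limsup_n F_n(C_n)\le\lim_n J(C_n;\mu_n)+\limsup_n\chi_{\mathfrak{C}_n}(C_n)\le J(C;\mu)+\chi_{\mathfrak{C}}(C)$. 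The fundamental theorem then gives both \eqref{eqn:general.consistency} and the claim about accumulation points of the $C_n$. The one genuinely delicate step is the continuous convergence of $J(\cdot\,;\mu_n)$: one must use the Hausdorff convergence of the sets and the weak convergence of the measures \emph{together}, and it is precisely the compactness of $X$ — providing the uniform bound on the distance functions and the uniform continuity of $t\mapsto t^2$ on their range — that allows the two limits to be separated as above; this is also where the technical gap in Lemma~6.1 of \cite{BGKL17} is located.
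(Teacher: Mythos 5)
Your proof is correct and follows essentially the same route as the paper's: establish continuous convergence of $J(\cdot\,;\mu_n)$ to $J(\cdot\,;\mu)$ on $(CL(X),\mathsf{h})$, combine it with the assumed $\Gamma$-convergence of the indicators via the stability of $\Gamma$-limits under continuously convergent perturbations, and conclude with the fundamental theorem of $\Gamma$-convergence on the compact metric space $CL(X)$ (Proposition \ref{prop:consistency.gamma.conv}). The only cosmetic difference is that the paper derives continuous convergence from uniform convergence of $J(\cdot\,;\mu_n)$ (proved via Kantorovich--Rubinstein duality in Lemma \ref{lem:J.uniform.convergence}) together with continuity of $J(\cdot\,;\mu)$, and cites the perturbation result from \cite{dal2012introduction} rather than verifying the liminf and recovery-sequence inequalities directly, as you do.
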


%We now address the aforementioned bug by providing a similar result to \parencite[Theorem 6.1]{BGKL17} for both the Global and Nested CPCA formulations under some additional assumptions in our finite dimensional setting. To keep this initial discussion concise we give the main consistency statements here and leave the details of the proofs to Appendix \ref{sec:appendix.CPCA.general}. As we are treating a finite dimensional Hilbert space, we may suppose without loss of generality that $H=\mathbb{R}^d$.

To apply Theorem \ref{thm:general.consistency}, we need to verify the condition $\Gamma\textnormal{-}\lim_{n \rightarrow \infty} \chi_{\mathfrak{C}_n} = \chi_{\mathfrak{C}}$. When this holds, we have convergence of the objective value as well as convergence of the solution in a suitable sense. The proof of Lemma 6.1 in \cite{BGKL17}, which verifies this condition and underlies the consistency statements there, contains a technical gap which we were able to fix only when $H$ is finite dimensional. Thus, in the remainder of this subsection we assume that $H$ is finite dimensional. The proofs of the following results are given in Appendix \ref{sec.appendix.CPCA.Finite.Dim} which also contains more discussion about the technical issues involved.

\begin{theorem}[Consistency of global CPCA] \label{thm:Global.CPCA.measure.consistency}
Suppose $H$ is finite dimensional and $X$ has non-empty interior. Let $\overline{x} \in \intt X$, $\overline{x}_n \to \overline{x}$ in $X$ and $\mu_n \to \mu$ in $\mathcal{P}(X)$. For $k \geq 1$, let $\mathfrak{C}_n = CC_{\overline{x}_n, k}(X)$ and $\mathfrak{C} = CC_{\overline{x}, k}(X)$. Then $\Gamma\textnormal{-}\lim_{n \rightarrow \infty} \chi_{\mathfrak{C}_n} = \chi_{\mathfrak{C}}$ and hence the conclusions of Theorem \ref{thm:general.consistency} hold for the corresponding global CPCA problems formulated via \eqref{eqn:Global.CPCA.Ind.Form}.
%\begin{enumerate}
%    \item[(i)] \[\lim_{n\to\infty}\inf_{C\in CC_{x^{(n)}_0,k}(X)}J(C;\mu^{(n)})=\inf_{C\in CC_{x_0,k}(X)}J(C;\mu)\]
%    \item[(ii)] Consider a sequence of $(k,x^{(n)}_0,\mu^{(n)})$-global convex principal components \[C_n\in\argmin_{C\in CC_{x^{(n)}_0,k}(X)}J(C;\mu^{(n)}).\]
%    for $n\geq1$. The accumulation points of $(C_n)_{n\geq1}$ 
%    belong to 
%    \[\argmin_{C\in CC_{x_0,k}(X)}J(C;\mu)\]
%    (i.e. they are $(k,x_0,\mu)$-global convex principal components).
%\end{enumerate}
\end{theorem}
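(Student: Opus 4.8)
The plan is to reduce the asserted $\Gamma$-convergence $\Gamma\textnormal{-}\lim_{n}\chi_{\mathfrak{C}_n}=\chi_{\mathfrak{C}}$ in the compact metric space $(CL(X),\mathsf{h})$ to its two standard halves for indicator functions and verify each by hand; the conclusions for the global CPCA problems \eqref{eqn:Global.CPCA.Ind.Form} then follow by applying Theorem \ref{thm:general.consistency} with $\mathfrak{C}_n=CC_{\overline{x}_n,k}(X)$ and $\mathfrak{C}=CC_{\overline{x},k}(X)$. Since indicators take only the values $0$ and $+\infty$, the two halves read: (a) [$\Gamma$-$\liminf$] whenever $C_{n_j}\in\mathfrak{C}_{n_j}$ and $C_{n_j}\to C$ in Hausdorff distance along a subsequence, then $C\in\mathfrak{C}$; and (b) [recovery sequence] every $C\in\mathfrak{C}$ is a Hausdorff limit of sets $C_n\in\mathfrak{C}_n$.

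For (a) I would check that the limit $C$ inherits the four defining properties of $CC_{\overline{x},k}(X)$. Nonemptiness and closedness are automatic because $(CL(X),\mathsf{h})$ is complete; convexity passes to Hausdorff limits by the usual argument; the affine-dimension bound $\dim C\le k$ I would obtain by enclosing each $C_{n_j}$ in a $k$-dimensional affine plane $A_{n_j}=\overline{x}_{n_j}+V_{n_j}$, extracting a subsequence along which $V_{n_j}$ converges in the Grassmannian of $k$-planes, and passing to the limit to get $C\subset\overline{x}+V$ for a $k$-plane $V$. Finally $\overline{x}\in C$ follows from $\mathsf{d}(\overline{x},C)\le\|\overline{x}-\overline{x}_{n_j}\|+\mathsf{d}(\overline{x}_{n_j},C_{n_j})+\mathsf{h}(C_{n_j},C)$, whose right-hand side tends to $0$ because $\mathsf{d}(\overline{x}_{n_j},C_{n_j})=0$ and the other two terms vanish.

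For (b), given $C\in CC_{\overline{x},k}(X)$ I would produce the recovery sequence by shrinking $C$ toward $\overline{x}$ and then translating. Pick $r>0$ with $\overline{B}(\overline{x},r)\subset X$, which exists since $\overline{x}\in\intt X$; for $\lambda\in(0,1)$ set $C^{(\lambda)}=\lambda C+(1-\lambda)\overline{x}$, so that convexity of $X$ gives the inclusion $C^{(\lambda)}+\overline{B}(0,(1-\lambda)r)\subset X$ and compactness of $X$ gives $\mathsf{h}(C^{(\lambda)},C)\le(1-\lambda)\operatorname{diam}X$. Choosing $\lambda_n\uparrow1$ slowly enough that $\|\overline{x}_n-\overline{x}\|\le(1-\lambda_n)r$ eventually, and putting $C_n=C^{(\lambda_n)}+(\overline{x}_n-\overline{x})$, one reads off from the inclusion above that $C_n\subset X$, that $C_n$ is closed and convex with $\dim C_n\le k$ and $\overline{x}_n\in C_n$, so $C_n\in\mathfrak{C}_n$, and that $\mathsf{h}(C_n,C)\le(1-\lambda_n)\operatorname{diam}X+\|\overline{x}_n-\overline{x}\|\to0$.

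I expect the main obstacle to be the affine-dimension bound in step (a): that a Hausdorff limit of convex sets of affine dimension $\le k$ is again of dimension $\le k$ relies on compactness of the Grassmannian of $k$-planes, which is precisely where finite-dimensionality of $H$ enters and where the argument in \cite{BGKL17} requires repair. The recovery construction in (b) is where the interior hypothesis $\overline{x}\in\intt X$ is essential; the remaining verifications are routine, and the passage to the stated conclusion is immediate from Theorem \ref{thm:general.consistency}.
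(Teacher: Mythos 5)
Your proof is correct, but the recovery-sequence half is genuinely different from the paper's. Where you build $C_n=\lambda_n C+(1-\lambda_n)\overline{x}+(\overline{x}_n-\overline{x})$ by first shrinking $C$ toward $\overline{x}$ (so that the interior ball $\overline{B}(\overline{x},r)\subset X$ absorbs the translation, keeping $C_n\subset X$ by convexity), the paper instead translates without shrinking, sets $C_n=(C+\overline{x}_n-\overline{x})\cap X$, and then invokes a stability theorem for intersections of convex bodies (Theorem \ref{thm:haus.intersection}, Schneider) together with Lemma \ref{lem:weak.sep.sets.condition} (non-separability of $C$ from $X$, using $\overline{x}\in\intt X$) to get $\mathsf{h}(C_n,C)\to 0$. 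Your homothety construction is more elementary and self-contained, and it gives an explicit quantitative rate $\mathsf{h}(C_n,C)\le(1-\lambda_n)\operatorname{diam}X+\|\overline{x}_n-\overline{x}\|$; the paper's intersection trick has the advantage of reappearing verbatim in the nested-CPCA consistency proof, where the limiting set must also contain a prescribed lower-order component. For the liminf/accumulation half, the paper simply cites compactness of $CC_k(X)$ (Proposition 3.3 of \cite{BGKL17}) to conclude the accumulation point lies in $CC_k(X)$, and then only checks $\overline{x}\in C$; your hands-on verification via convergence of $k$-planes in the Grassmannian is a valid substitute, though note that compactness of $CC_k(X)$ holds even in infinite dimensions, so this half is not where the gap in Lemma 6.1 of \cite{BGKL17} lies — the gap (cf.\ Remark \ref{rmk:technical.gap}) is precisely in the recovery sequence, where naively translated sets may exit $X$, and that is the step your interior-ball construction (like the paper's intersection argument) repairs. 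Finally, both routes pass from Kuratowski/Hausdorff statements to $\Gamma$-convergence of indicators via Lemma \ref{lem:kur.gamma}, and then conclude through Theorem \ref{thm:general.consistency}, exactly as you do.
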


We also prove an analogous consistency result for nested CPCA which is not explicitly treated in \cite{BGKL17}. It is formulated in terms of a subsequence because the convex PCs are not necessarily unique. 

\begin{theorem}[Consistency of nested CPCA]\label{thm:nested.CPCA.measure.consistency}
Suppose $H$, $X$, $\overline{x}$ and $k$ are as in Theorem \ref{thm:Global.CPCA.measure.consistency}. Suppose $\overline{x}_n \rightarrow \overline{x}$ in $X$, $\mu_n \rightarrow \mu$ in $\mathcal{P}(X)$ and that $\mathrm{dim}\left(\mathrm{supp}(\mu)\right)\geq k$. For each $n$, suppose $\{\overline{x}_n\} = C_{n, 0} \subset C_{n, 1} \subset \cdots \subset C_{n, k}$ is a sequence such that each $C_{n, j}$ is a $(j, \overline{x}_n, \mu_n)$-nested convex principal component. Then there is a subsequence $n'$ along which the following statements hold:
\begin{itemize}
    \item[(i)] For each $j$ we have $C_{n', j} \rightarrow C_j \in CC_{\overline{x}, j}(X)$.
    \item[(ii)] The sequence $(C_j)_{j = 0}^k$ of sets is increasing in $j$ and forms a sequence of nested convex principal components with respect to $\mu$ and $\overline{x}$.
    \item[(iii)] For each $n'$ and $j$, let $\mathfrak{C}_{n', j} = \{C \in CC_{\overline{x}_{n'}, j}(X): C \supset C_{n', j-1} \}$ and $\mathfrak{C}_{j} = \{C \in CC_{\overline{x}, j}(X): C \supset C_{j-1} \}$. Then $\Gamma\textnormal{-}\lim_{n' \rightarrow \infty} \chi_{\mathfrak{C}_{n', j}} = \chi_{\mathfrak{C}_j}$ and hence the conclusions of Theorem \ref{thm:general.consistency} hold for the corresponding nested CPCA problems formulated via \eqref{eqn:Nested.CPCA.Ind.Form}.
\end{itemize}
\end{theorem}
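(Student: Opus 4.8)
The plan is to obtain (i) by compactness, and then to prove (ii) and (iii) simultaneously by induction on $j$, with the $\Gamma$-convergence in (iii) feeding into Theorem~\ref{thm:general.consistency} at each step. Since $X$ is compact and $H$ is finite dimensional, each $CC_j(X)\subset CL(X)$ is compact for the Hausdorff distance (Proposition~3.3 of \cite{BGKL17}); extracting nested subsequences for $j=1,\dots,k$ and diagonalising yields one subsequence $n'$ along which $C_{n',j}\to C_j$ for every $j$. Standard properties of Hausdorff limits of convex sets then give that $C_j$ is closed and convex, $C_j\subseteq X$, $\overline{x}\in C_j$ (since $\overline{x}_{n'}\in C_{n',j}\to C_j$ and $\overline{x}_{n'}\to\overline{x}$), $\dim C_j\le\liminf_{n'}\dim C_{n',j}\le j$ by lower semicontinuity of affine dimension, and $C_0\subseteq C_1\subseteq\cdots\subseteq C_k$ since inclusions pass to Hausdorff limits. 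This is (i) together with the monotonicity in (ii).

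\textbf{A non-degeneracy lemma (the crux).} The key preliminary is: if $\dim(\mathrm{supp}\,\mu)\ge\ell$, then every $(\ell,\overline{x},\mu)$-nested CPC has affine dimension exactly $\ell$. To see this, suppose a nested CPC $C_\ell$ with associated chain $C_0\subseteq\cdots\subseteq C_{\ell-1}\subseteq C_\ell$ had $\dim C_\ell\le\ell-1$. Then $C_\ell$ is admissible in the level-$(\ell-1)$ problem, so $J(C_\ell;\mu)\ge J(C_{\ell-1};\mu)$, while $J(C_\ell;\mu)\le J(C_{\ell-1};\mu)$ because $C_{\ell-1}\subseteq C_\ell$ and $J(\cdot;\mu)$ is non-increasing under inclusion; hence the optimal values at levels $\ell$ and $\ell-1$ coincide. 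But $\dim(\mathrm{supp}\,\mu)\ge\ell>\dim C_{\ell-1}$ provides a $z\in\mathrm{supp}\,\mu$ with $\mathsf{d}(z,C_{\ell-1})>0$, and then $\conv(C_{\ell-1}\cup\{z\})$ is an admissible level-$\ell$ competitor with strictly smaller $J$ than $C_{\ell-1}$ (positive $\mu$-mass strictly closer to the enlarged set), contradicting that coincidence. Applying this with $\ell=j-1$ (legitimate since $\dim(\mathrm{supp}\,\mu)\ge k\ge j>j-1$), once we know inductively that $C_{j-1}$ is a $(j-1,\overline{x},\mu)$-nested CPC we get $\dim C_{j-1}=j-1$, hence $\dim C_{n',j-1}=j-1$ for all large $n'$ (as $\dim C_{n',j-1}\le j-1$ while $\liminf\dim C_{n',j-1}\ge\dim C_{j-1}$), which forces $\mathrm{aff}(C_{n',j-1})\to\mathrm{aff}(C_{j-1})$. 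I expect this non-degeneracy to be the heart of the matter: it excludes the ``twisting'' of affine hulls that would otherwise break the recovery construction below, and it is where finite dimensionality (compactness of the relevant sets and affine subspaces) and $\overline{x}\in\intt X$ enter, as in the global case treated in Theorem~\ref{thm:Global.CPCA.measure.consistency}; the same phenomenon seems to be the source of the gap in Lemma~6.1 of \cite{BGKL17}.

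\textbf{The induction.} For the induction on $j$ (the case $j=0$ trivial), assume $C_0\subseteq\cdots\subseteq C_{j-1}$ is a nested chain of CPCs for $(\mu,\overline{x})$, and set $\mathfrak{C}_{n',j}=\{C\in CC_{\overline{x}_{n'},j}(X):C\supseteq C_{n',j-1}\}$ and $\mathfrak{C}_j=\{C\in CC_{\overline{x},j}(X):C\supseteq C_{j-1}\}$. The ``$\liminf$'' half of $\Gamma$-convergence is routine: if $D_{n'}\in\mathfrak{C}_{n',j}$ and $D_{n'}\to D$, then $D$ is convex, $\dim D\le j$, $\overline{x}\in D$, and $D\supseteq C_{j-1}$ (from $D_{n'}\supseteq C_{n',j-1}\to C_{j-1}$), so $D\in\mathfrak{C}_j$. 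For the ``$\limsup$''/recovery half, given $D\in\mathfrak{C}_j$ — so $C_{j-1}\subseteq D$ and $\dim D\in\{j-1,j\}$ by the non-degeneracy lemma — I would (a) shrink $D$ towards $\overline{x}$, using that $D^{(\lambda)}=\overline{x}+\lambda(D-\overline{x})$ lies at positive distance from $\partial X$ for $\lambda<1$ because $\overline{x}\in\intt X$; (b) using $\dim C_{n',j-1}=j-1$, pick $j$-dimensional affine spaces $G_{n'}\supseteq\mathrm{aff}(C_{n',j-1})$ with $G_{n'}\to G$ for a fixed $j$-dimensional $G\supseteq\mathrm{aff}(D)$ (possible since $\mathrm{aff}(C_{n',j-1})\to\mathrm{aff}(C_{j-1})\subseteq\mathrm{aff}(D)$); (c) transport $D^{(\lambda_{n'})}$ into $G_{n'}$ by a near-identity affine map $\Phi_{n'}$, choosing $\lambda_{n'}\to1$ slowly enough that $\Phi_{n'}(D^{(\lambda_{n'})})\subseteq X$; and (d) set $D_{n'}=\conv(\Phi_{n'}(D^{(\lambda_{n'})})\cup C_{n',j-1})$. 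Then $D_{n'}\subseteq X$ (convex hull of subsets of $X$), $D_{n'}\subseteq G_{n'}$ so $\dim D_{n'}\le j$, $\overline{x}_{n'}\in C_{n',j-1}\subseteq D_{n'}$, so $D_{n'}\in\mathfrak{C}_{n',j}$; and $D_{n'}\to\conv(D\cup C_{j-1})=D$ by Hausdorff-continuity of $\conv$ and of the dilations. This establishes the $\Gamma$-convergence in (iii). Finally, $C_{n',j}$ minimises $J(\cdot;\mu_{n'})$ over $\mathfrak{C}_{n',j}$ by the definition of a nested CPC along the prescribed chain, so Theorem~\ref{thm:general.consistency} (with $\mu_{n'}\to\mu$ and the $\Gamma$-convergence just verified) shows that the accumulation point $C_j$ minimises $J(\cdot;\mu)$ over $\mathfrak{C}_j$, which with the inductive hypothesis means $C_j$ is a $(j,\overline{x},\mu)$-nested CPC. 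This closes the induction, yielding (ii) and (iii), while (i) was established at the outset. The step I expect to be most delicate is the recovery construction, where one must force $D_{n'}$ to contain the prescribed set $C_{n',j-1}$, to remain inside $X$, and to keep affine dimension at most $j$ all at once — precisely the combination that fails without the finite-dimensionality and interior-point hypotheses and that the non-degeneracy lemma is designed to reconcile.
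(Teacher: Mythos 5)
Your proposal is correct and follows the same overall architecture as the paper: extract a convergent subsequence by compactness of $CC_j(X)$, prove that nested CPCs are full dimensional when $\dim(\mathrm{supp}\,\mu)\geq k$ (your non-degeneracy lemma is, up to a small detour through the level-$(\ell-1)$ value, exactly the paper's Lemma \ref{lem:full.dim.cpc}, proved with the same competitor $\conv(C\cup\{z\})$ capturing positive mass near a support point off the affine hull), and then induct on $j$, verifying the $\Gamma$-convergence of the constraint families and feeding it into Theorem \ref{thm:general.consistency}. The genuinely different ingredient is the recovery construction in the $\Gamma$-convergence step (the paper's Lemma \ref{lem:nested.consistency.inductive.step}). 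The paper transports $C-\overline{x}$ by the linear maps $T_n=\Pi_{A_n}+\Pi_{\mathrm{span}\{b\}}$, takes the convex hull with $C_{n,j-1}$, and — since the resulting set may leave $X$ — intersects with $X$ and controls the intersection via Schneider's stability theorem (Theorem \ref{thm:haus.intersection}) together with non-weak-separability supplied by $\overline{x}\in\intt X$ (Lemma \ref{lem:weak.sep.sets.condition}). You instead pre-shrink $D$ toward $\overline{x}$ by a dilation with factor $\lambda_{n'}\to1$ chosen slower than the perturbation, so the transported copy stays inside $X$ by construction and no intersection-stability result is needed; the interior-point hypothesis enters through the dilation rather than through separability. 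Both routes work, and yours trades the convex-geometry stability theorem for quantitative ``shrink slower than you perturb'' bookkeeping. One caveat: the step you dismiss with ``which forces $\mathrm{aff}(C_{n',j-1})\to\mathrm{aff}(C_{j-1})$'' is precisely where the paper spends a dedicated lemma (pointwise convergence of the projections $\Pi_{A_n}\to\Pi_{A_\infty}$, proved by extracting convergent orthonormal bases and using that all the sets have dimension exactly $j-1$). The claim is true under the equal-dimension condition your non-degeneracy lemma provides, but your construction of the spaces $G_{n'}$ and the near-identity maps $\Phi_{n'}$ rests entirely on it, so it must be proved rather than asserted; once that is supplied, your argument goes through.
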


\subsection{Consistency of finite dimensional approximation}\label{subsec:Consistency.Finite.Dim.Approx}
Convex PCA and Wasserstein GPCA on the real line are generally infinite dimensional optimization problems. To implement them in practice, finite dimensional approximations are necessary. Using similar ideas as in Section \ref{sec:CPCA.consistency}, we formulate such an approximation and show its consistency as the dimension tends to infinity. 

Let $H$ be an infinite dimensional separable Hilbert space and $X \subset H$ be a compact convex set. Let $\{H_n\}_{n \geq 1}$ be an increasing sequence of finite dimensional vector subspaces of $H$, such that $\bigcup_{n = 1}^{\infty} H_n$ is dense in $H$. For example, we may consider an orthonormal basis $\{ \varphi_n \}_{n \geq 1}$ and let $H_n = \mathrm{span}\{ \varphi_1, \ldots, \varphi_n\}$. Let $\Pi_n: H \rightarrow H_n$ be the orthogonal projection onto $H_n$. We impose the simplifying assumption that $\Pi_n X \subset X$ for each $n$. This condition holds in our finite dimensional approximation of Wasserstein GPCA in Section \ref{sec:Application.GPCA}.

Let $\overline{x} \in X$ and $\mu \in \mathcal{P}(X)$. Given $\{H_n\}$, natural finite dimensional approximations of the CPCA problems can be formulated as follows. Let $X_n = \Pi_n(X)$ which is a compact convex subset of $H_n$. Let $\overline{x}_n = \Pi_n \overline{x} \in X_n$ and let $\mu_n = (\Pi_n)_{\#} \mu \in \mathcal{P}(X_n)$ be the {\it pushforward} of $\mu$ under $\Pi_n$. By definition, $\mu_n$ is defined by $\mu_n(A) = \mu(\Pi_n^{-1}(A))$, for $A \subset X_n$ measurable. It is clear that $\mu_n \rightarrow \mu$. Now we may define the finite dimensional approximations by
\begin{equation} \label{eqn:CPCA.finite.dimensional}
\begin{split}
&\inf_{C\in CL(X)}\left\{J(C;\mu_n)+\chi_{CC_{\overline{x}_n,k}(X_n)}(C)\right\}, \quad \text{(global)}\\
&\inf_{C\in CL(X)}\left\{J(C;\mu_n)+\chi_{\{ C\in CC_{\overline{x}_n,k}(X_n): \  C\supset C_{k-1}\}}(C)\right\}, \quad \text{(nested)}
\end{split}
\end{equation}
where $C_{k-1} \in CC_{\overline{x}_n,k}(X_n)$ is given.

We give a consistency result for the global CPCA which is equivalent to nested CPCA when $k = 1$. The proof is given in Appendix \ref{sec.appendix.CPCA.Finite.Dim}. While we expect that analogous results hold for the higher CPCs of the nested problem, the mathematical statements and proofs become much more messy and technical. To focus on the algorithms and applications, we do not pursue consistency further in this paper.

\begin{theorem}[Consistency of finite dimensional approximation for global CPCA]\label{thm:global.consistency.CPCA.approx}
Suppose $X$ is compact and $\Pi_n X \subset X$ for each $n$. For $k$ fixed, let $\mathfrak{C}_n = CC_{\overline{x}_n, k}(X_n)$ and $\mathfrak{C} = CC_{\overline{x}, k}(X)$. Then $\Gamma\textnormal{-}\lim_{n \rightarrow \infty} \chi_{\mathfrak{C}_n} = \chi_{\mathfrak{C}}$ and hence the conclusions of Theorem \ref{thm:general.consistency} hold.
\end{theorem}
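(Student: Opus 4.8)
The plan is to verify the two defining conditions of $\Gamma$-convergence for the indicator functionals $\chi_{\mathfrak{C}_n}$ on the metric space $(CL(X),\mathsf{h})$ and then to invoke Theorem~\ref{thm:general.consistency} together with the convergence $\mu_n\to\mu$ in $\mathcal{P}(X)$ already noted in the text. Two preliminary observations will be used throughout. First, since $\bigcup_n H_n$ is dense in $H$ and the subspaces are increasing, $\Pi_n\to\id_H$ pointwise; because each $\Pi_n$ is a contraction, this convergence is automatically uniform on compact subsets of $H$ (cover the compact set by finitely many balls of radius $\varepsilon/3$ and use $\|\Pi_n\|\le 1$). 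In particular $\overline{x}_n=\Pi_n\overline{x}\to\overline{x}$. Second, $\mathfrak{C}_n\subset CL(X)$ and $\mathfrak{C}\subset CL(X)$: every $C\in CC_{\overline{x}_n,k}(X_n)$ is a nonempty closed subset of the compact set $X_n=\Pi_n(X)\subset X$, hence compact, hence closed in $X$; and $\mu_n=(\Pi_n)_\#\mu\in\mathcal{P}(X)$ since $\Pi_nX\subset X$. Thus the hypotheses of Theorem~\ref{thm:general.consistency} are in force once the $\Gamma$-convergence is established.

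For the recovery-sequence ($\Gamma$-$\limsup$) inequality I would use the canonical choice $C_n:=\Pi_n(C)$. Fix $C\in CL(X)$; if $C\notin\mathfrak{C}$ there is nothing to prove, so assume $C\in CC_{\overline{x},k}(X)$. Then $C_n$ is a nonempty compact convex subset of $\Pi_n(X)=X_n$ with $\dim C_n\le\dim C\le k$, since a linear map cannot increase affine dimension, and $\overline{x}_n=\Pi_n\overline{x}\in C_n$; hence $C_n\in\mathfrak{C}_n$ and $\chi_{\mathfrak{C}_n}(C_n)=0$ for every $n$. Writing $\delta_n:=\sup_{x\in C}\|\Pi_n x-x\|$, which tends to $0$ by uniform convergence on the compact set $C$, one gets $\mathsf{d}(\Pi_n x,C)\le\delta_n$ and $\mathsf{d}(x,C_n)\le\delta_n$ for all $x\in C$, so $\mathsf{h}(C_n,C)\le\delta_n\to 0$ and $(C_n)$ is a recovery sequence for $C$.

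For the $\Gamma$-$\liminf$ inequality I would show that if $C_n\in\mathfrak{C}_n$ for all $n$ in a subsequence and $C_n\to C$ in $(CL(X),\mathsf{h})$, then $C\in\mathfrak{C}$, by checking each defining property of $CC_{\overline{x},k}(X)$ in turn. Convexity of $C$ passes to Hausdorff limits (approximate $y,z\in C$ by $y_n,z_n\in C_n$, note $ty_n+(1-t)z_n\in C_n$, and pass to the limit); $C\subset X$ because each $C_n\subset X_n\subset X$, $X$ is closed and $\mathsf{d}(y,X)\le\mathsf{d}(y,C_n)\le\mathsf{h}(C_n,C)\to 0$ for $y\in C$; and $\overline{x}\in C$ because $\mathsf{d}(\overline{x},C)\le\|\overline{x}-\overline{x}_n\|+\mathsf{h}(C_n,C)\to 0$ with $C$ closed. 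For the dimension bound, suppose for contradiction that $C$ contains $k+2$ affinely independent points $y^{(0)},\dots,y^{(k+1)}$ and choose $y_n^{(i)}\in C_n$ with $y_n^{(i)}\to y^{(i)}$; affine independence of a finite tuple is an open condition, being equivalent to non-vanishing of the Gram determinant of the difference vectors, which depends continuously on the points, so for $n$ large $(y_n^{(0)},\dots,y_n^{(k+1)})$ is affinely independent, forcing $\dim C_n\ge k+1$, a contradiction. Hence $\dim C\le k$ and $C\in CC_{\overline{x},k}(X)=\mathfrak{C}$.

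Combining the two inequalities gives $\Gamma\textnormal{-}\lim_{n\to\infty}\chi_{\mathfrak{C}_n}=\chi_{\mathfrak{C}}$, and the conclusions of Theorem~\ref{thm:general.consistency} follow. I expect the one genuinely delicate step to be the preservation of the dimension constraint under Hausdorff limits in the $\liminf$ inequality; everything else is routine manipulation of Hausdorff distances and of the projections $\Pi_n$, the key structural point being that $\Pi_n(C)$ furnishes a canonical, easily controlled recovery sequence---which is precisely why, in contrast with Theorem~\ref{thm:Global.CPCA.measure.consistency}, no interior assumption on $X$ is needed here.
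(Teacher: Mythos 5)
Your proposal is correct and takes essentially the same route as the paper: the recovery sequence is the canonical one $C_n=\Pi_n C$, and the liminf half reduces to showing that Hausdorff limits of sets in $\mathfrak{C}_n$ belong to $\mathfrak{C}$. The only differences are in the sub-steps: the paper phrases everything through Kuratowski convergence (Lemmas \ref{lem:kur.gamma} and \ref{lem:kur.haus}) and closes the liminf step by citing compactness of $CC_k(X)$ from \cite{BGKL17}, whereas you bound $\mathsf{h}(\Pi_n C,C)$ directly via uniform convergence of $\Pi_n$ on the compact set $C$ and verify convexity, the base point, and the dimension bound (via openness of affine independence through the Gram determinant) by hand, which is a slightly more self-contained treatment of the same argument.
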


\begin{remark}
It is tempting to try to combine the results of Theorem \ref{thm:Global.CPCA.measure.consistency} (consistency in measure of finite dimensional CPCA) and Theorem \ref{thm:global.consistency.CPCA.approx} (consistency as dimension tends to infinity) to obtain consistency in measure for infinite dimensional CPCA. However, this requires letting two parameters tend to infinity simultaneously and is beyond the scope of our current techniques. %(e.g., it would be desirable to have uniform convergence of the finite dimensional approximations to the sequence of infinite dimensional problems). %analyzing the behaviour as {\it both} the d
%seems to require greater control on the marginal convergence than we have at present .
\end{remark}

\subsection{Analytical properties for nested CPCA} \label{sec:nested.analytical.properties}
Now we focus on nested CPCA where $H$ is finite dimensional. We establish continuity and differentiability of quantities involved in the optimization. This will allow us to study, in Section \ref{sec:algorithm}, numerical algorithms to implement CPCA in practice. We work under the following setting:

\begin{condition}[Conditions for finite dimensional nested CPCA] \label{cond:finite.dim.conditions} { \ } 
\begin{itemize}
\item[(i)] $H$ is finite dimensional and so is isomorphic to $\mathbb{R}^d$ for some $d \geq 1$.
\item[(ii)] There exist continuously differentiable convex functions $g_1, \ldots, g_m : H \rightarrow \mathbb{R}$ such that the convex set $X$ has the form
\begin{equation} \label{eqn:X.intersection}
X = \bigcap_{j = 1}^m \{ x \in H: g_j(x) \leq 0 \}.
\end{equation}
\item[(iii)] The distribution $\mu$ has the form $\mu = \frac{1}{N} \sum_{i = 1}^N \delta_{x_i}$.
\item[(iv)] The interior of $X$ is non-empty and the reference point $\overline{x} \in X$ satisfies $g_j(\overline{x}) < 0$ for all $j$. 
\end{itemize}
\end{condition}

In \eqref{eqn:X.intersection}, if each $g_j$ is affine then $X$ is polyhedral. If each $g_j$ is linear then $X$ is a polyhedral cone. Note that here we do {\it not} require $X$ to be compact. 

To formulate the main results, we use the observation (see \eqref{eqn:C.special.form}) that the nested CPCA problem can be solved by constructing a sequence of orthonormal vectors. For ${\bf p} \in H$, we let
\begin{equation} \label{eqn:Vp}
V({\bf p}) = J( (\overline{x} + \mathrm{span}\{{\bf p}\})\cap X; \mu) = \frac{1}{N} \sum_{i = 1}^N \left\{ \inf_{z_i \in (\overline{x} + \mathrm{span}\{{\bf p}\}) \cap X} \|x_i - z_i\|^2 \right\}.
\end{equation}
Define
\begin{equation} \label{eqn:CPCA.sequential.1}
{\bf p}_1^* \in \argmin_{{\bf p} \in H: \|{\bf p}\| = 1} V({\bf p}),
\end{equation}
and, for $k \geq 2$,
\begin{equation}  \label{eqn:CPCA.sequential.2}
{\bf p}_k^* \in \argmin_{{\bf p} \in H: {\bf p} \in \mathrm{span}\{ {\bf p}_1^*, \ldots, {\bf p}_{k-1}^*\}^{\perp}, \|p\| = 1} V({\bf p}).
\end{equation}
By Theorem \ref{thm:cont.pca} below, $V(\cdot)$ is continuous on $H \setminus \{0\}$. This guarantees the existence of ${\bf p}_k^*$ (as long as $\dim H \geq k$). By Proposition 3.4 of \cite{BGKL17}, the sets $C_k = (\overline{x} + \mathrm{span}\{ {\bf p}_1^*, \ldots, {\bf p}_k^*\}) \cap X$ form a sequence of nested convex principal components.

%Now we establish some analytical properties of the function $V(\cdot)$. 

\begin{theorem}[Continuity]\label{thm:cont.pca}
Under Condition \ref{cond:finite.dim.conditions}, the function $V$ defined by \eqref{eqn:Vp} is continuous on $H \setminus \{0\}$. Moreover, each inner minimization problem in \eqref{eqn:Vp} has a unique optimizer $z_i^* = z_i^*({\bf p}) = \overline{x} + a_i^*({\bf p}) {\bf p}$ which is continuous in ${\bf p} \in H \setminus \{0\}$.
\end{theorem}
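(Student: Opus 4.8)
The plan is to prove the statement about the inner minimization first, since continuity of $V$ will then follow readily from continuity of the optimizers. Fix ${\bf p} \in H \setminus \{0\}$ and a data point $x_i$. The inner problem is minimization of the quadratic $a \mapsto \|x_i - \overline{x} - a{\bf p}\|^2$ over the set $I_i({\bf p}) = \{ a \in \mathbb{R} : \overline{x} + a{\bf p} \in X \}$, which by \eqref{eqn:X.intersection} and convexity of the $g_j$ is a closed interval (possibly unbounded, possibly a single point, but nonempty since $0 \in I_i({\bf p})$ by Condition \ref{cond:finite.dim.conditions}(iv)). The objective is a strictly convex quadratic in $a$ (strictly, because ${\bf p} \neq 0$), so it has a unique minimizer $a_i^*({\bf p})$ on any nonempty closed interval; this gives uniqueness of $z_i^* = \overline{x} + a_i^*({\bf p}){\bf p}$. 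Equivalently, $z_i^* = \Pi_{L_{\bf p} \cap X}(x_i)$ where $L_{\bf p} = \overline{x} + \mathrm{span}\{{\bf p}\}$, so existence and uniqueness also follow from the Hilbert projection theorem applied to the closed convex set $L_{\bf p} \cap X$.

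The main work is continuity of ${\bf p} \mapsto a_i^*({\bf p})$. The natural approach is via the set-valued map ${\bf p} \mapsto I_i({\bf p})$: I would show it is continuous (in the appropriate sense for interval-valued maps) on $H \setminus \{0\}$, and then invoke a Berge-type maximum-theorem argument together with the strict convexity of the objective to conclude the unique minimizer varies continuously. The unconstrained minimizer of $a \mapsto \|x_i - \overline{x} - a{\bf p}\|^2$ is $\hat a({\bf p}) = \langle x_i - \overline{x}, {\bf p}\rangle / \|{\bf p}\|^2$, which is manifestly continuous on $H \setminus \{0\}$; by strict convexity the constrained minimizer on the interval $I_i({\bf p}) = [\alpha({\bf p}), \beta({\bf p})]$ is simply the projection $a_i^*({\bf p}) = \mathrm{med}(\alpha({\bf p}), \hat a({\bf p}), \beta({\bf p}))$ of $\hat a({\bf p})$ onto that interval. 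So it suffices to show the endpoint functions $\alpha(\cdot)$ (a sup of the lower constraints) and $\beta(\cdot)$ (an inf of the upper constraints) are continuous where finite, and to handle the cases where an endpoint is $\pm\infty$ by a squeezing/monotone-limit argument. Each endpoint is determined by the active constraints $g_j(\overline{x} + a{\bf p}) = 0$; since $\overline{x}$ lies strictly inside $X$ (Condition \ref{cond:finite.dim.conditions}(iv), so $g_j(\overline{x}) < 0$ for all $j$), the map $a \mapsto g_j(\overline{x} + a{\bf p})$ is a convex function that is strictly negative at $a = 0$, so the set where it is $\leq 0$ is a closed interval containing $0$ in its interior, and the finite endpoints depend continuously on ${\bf p}$ by continuity of $g_j$ and an elementary argument (e.g. the implicit structure of where a jointly continuous convex-in-$a$ function crosses zero). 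Intersecting over $j = 1, \ldots, m$ preserves this.

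I expect the endpoint continuity, specifically near the transition between a finite endpoint and $\pm\infty$ (a constraint becoming inactive along a ray), to be the main obstacle and the one requiring care. Concretely, if along ${\bf p}$ some constraint $g_j(\overline{x} + a{\bf p}) \leq 0$ for all $a \geq 0$ but along nearby ${\bf p}'$ it becomes violated for large $a$, the corresponding endpoint jumps from $+\infty$ to a large finite value — but this is still continuous in the sense that the finite value tends to $+\infty$, and since $a_i^*$ is a median, it is pinned by the unconstrained minimizer $\hat a({\bf p})$ once the interval is large enough, so continuity of $a_i^*$ is preserved. I would organize this as: (a) reduce to showing continuity of ${\bf p} \mapsto a_i^*({\bf p})$ on $H \setminus \{0\}$; (b) write $a_i^*$ as the median of $\hat a$ and the two endpoints; (c) prove the endpoints, viewed as extended-real-valued functions, are continuous, treating finitely many constraints and the infinite-endpoint case; (d) conclude continuity of $a_i^*$, hence of $z_i^*$; and (e) deduce that $V({\bf p}) = \frac{1}{N}\sum_i \|x_i - z_i^*({\bf p})\|^2$ is continuous as a finite sum of compositions of continuous maps. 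The nonemptiness of the feasible interval throughout — which is exactly why Condition \ref{cond:finite.dim.conditions}(iv) is imposed — is what makes the median representation valid everywhere on $H \setminus \{0\}$ and keeps the argument clean.
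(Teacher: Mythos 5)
Your proposal is correct, but it takes a genuinely different route from the paper's. The paper first truncates the feasible set by adding a ball constraint $g_{m+1}(x)=\|x\|^2-r\leq 0$ (harmless, since the optimizers are uniformly bounded over the finitely many data points), so that the correspondence $\Phi(\mathbf{p})=\{a\in\mathbb{R}:\overline{x}+a\mathbf{p}\in X'\}$ is a nonempty compact interval; the real work there is proving continuity of $\Phi$, which is done through an implicit-function-theorem analysis of the crossing points $g_j(\overline{x}+a\mathbf{p})=0$, a ``flashlight'' lemma, and a case analysis of active versus inactive constraints, after which Berge's Maximum Theorem plus strict convexity (and the fact that a singleton-valued upper hemicontinuous correspondence is continuous) yields the theorem. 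Your clamp formula $a_i^*(\mathbf{p})=\max\bigl(\alpha(\mathbf{p}),\min(\hat a_i(\mathbf{p}),\beta(\mathbf{p}))\bigr)$ replaces all of that machinery, provided the extended-real endpoints are continuous, and the step you flag but only sketch does go through: writing $\beta_j(\mathbf{p})=\sup\{a\geq 0: g_j(\overline{x}+a\mathbf{p})\leq 0\}$, convexity together with $g_j(\overline{x})<0$ forces $g_j(\overline{x}+a\mathbf{p})<0$ strictly for $0<a<\beta_j(\mathbf{p})$, and joint continuity of $(a,\mathbf{p})\mapsto g_j(\overline{x}+a\mathbf{p})$ then gives lower and upper semicontinuity of $\beta_j$ as a map into $\overline{\mathbb{R}}$ (including at $\beta_j=+\infty$), so $\beta=\min_j\beta_j$ and $\alpha=\max_j\alpha_j$ are continuous and the median of $\alpha$, $\hat a_i$, $\beta$ is continuous since $\hat a_i$ is finite-valued. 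Your route is therefore more elementary: it needs no compactification, no Berge's theorem, and no implicit function theorem, and in particular it never uses the $C^1$ hypothesis on the $g_j$, only continuity, convexity and the Slater condition $g_j(\overline{x})<0$. What the paper's heavier setup buys is reuse: the continuous, compact-valued correspondence $\Phi$ constructed in its proof is exactly the hypothesis needed to invoke the generalized envelope theorem in the proof of Theorem \ref{thm:pca.diff}, so that machinery is amortized over the differentiability result, whereas your argument would need to be supplemented at that later stage.
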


The proof is given in Appendix \ref{sec:pf.cont.pca}. Using a recent generalized envelope theorem proved in \cite{marimon2021envelope} and recalled in Appendix \ref{sec:envelope}, we also study the differentiability of $V$. 

\begin{theorem}[Differentiability] \label{thm:pca.diff}
Under Condition \ref{cond:finite.dim.conditions}, the directional derivative 
\[
V'({\bf p}; \hat{{\bf p}}) = \lim_{t \rightarrow 0^+} \frac{1}{t} (V({\bf p} + t \hat{{\bf p}}) - V({\bf p}))
\]
of $V$ exists for any ${\bf p} \in H \setminus \{0\}$ and $\hat{{\bf p}} \in H$. Moreover, if ${\bf p} \neq 0$ and for each $i$ at most one constraint is binding (i.e., $g_j(z_i^*) = 0$ for at most one $j$), then $V$ is differentiable at ${\bf p}$. 
\end{theorem}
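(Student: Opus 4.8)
The plan is to treat $V=\frac1N\sum_{i=1}^N v_i$ summand by summand, rewrite each inner minimization in \eqref{eqn:Vp} as a one‑dimensional constrained convex program whose feasible set depends on $\mathbf p$, and apply the generalized envelope theorem of \cite{marimon2021envelope} recalled in Appendix~\ref{sec:envelope}. Parametrizing the line $\overline{x}+\mathrm{span}\{\mathbf p\}$ by $a\in\mathbb R$, I would write
\[
v_i(\mathbf p)=\min_{a\in\mathbb R}\bigl\{\,f_i(a,\mathbf p):h_{ij}(a,\mathbf p)\le 0,\ j=1,\dots,m\,\bigr\},\qquad f_i(a,\mathbf p)=\|x_i-\overline{x}-a\mathbf p\|^2,\quad h_{ij}(a,\mathbf p)=g_j(\overline{x}+a\mathbf p).
\]
For fixed $\mathbf p\neq 0$, $a\mapsto f_i(a,\mathbf p)$ is strictly convex and coercive and each $a\mapsto h_{ij}(a,\mathbf p)$ is convex, so this is a convex program with Slater point $a=0$, since $h_{ij}(0,\mathbf p)=g_j(\overline{x})<0$ by Condition~\ref{cond:finite.dim.conditions}(iv); consequently strong duality holds and the set $\Lambda_i(\mathbf p)$ of KKT multipliers at the unique optimizer $a_i^*(\mathbf p)$ (uniqueness and continuity from Theorem~\ref{thm:cont.pca}) is nonempty, compact and convex. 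First I would record that, as $a_i^*(\cdot)$ is continuous, on a neighbourhood of any fixed $\mathbf p_0\neq 0$ all optimizers lie in a fixed compact interval $[-R,R]$, so replacing $\mathbb R$ by $[-R,R]$ leaves $v_i$ unchanged; this supplies the compact‑valued feasible correspondence required by the envelope theorem, whose remaining hypotheses (joint continuity of $f_i,h_{ij}$, their $C^1$ dependence on $\mathbf p$, and uniqueness of the minimizer) are immediate from Condition~\ref{cond:finite.dim.conditions} and Theorem~\ref{thm:cont.pca}.

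Granting those hypotheses, the envelope theorem yields, for every direction $\hat{\mathbf p}\in H$,
\[
v_i'(\mathbf p;\hat{\mathbf p})=\min_{\lambda\in\Lambda_i(\mathbf p)}\bigl\langle\nabla_{\mathbf p}L_i\bigl(a_i^*(\mathbf p),\mathbf p,\lambda\bigr),\ \hat{\mathbf p}\bigr\rangle,\qquad L_i(a,\mathbf p,\lambda)=f_i(a,\mathbf p)+\sum_{j=1}^m\lambda_j\,h_{ij}(a,\mathbf p),
\]
the precise extremum over $\Lambda_i(\mathbf p)$ being the one supplied by Appendix~\ref{sec:envelope}; in particular the one‑sided directional derivative exists for every $\hat{\mathbf p}$. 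Summing over $i$ proves $V'(\mathbf p;\hat{\mathbf p})=\frac1N\sum_{i=1}^N v_i'(\mathbf p;\hat{\mathbf p})$ exists, which is the first assertion.

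For differentiability at a point $\mathbf p$ where, for each $i$, at most one constraint binds at $z_i^*=z_i^*(\mathbf p)$, I would show each $v_i'(\mathbf p;\cdot)$ is \emph{linear} in $\hat{\mathbf p}$, equivalently that $\Lambda_i(\mathbf p)$ is a singleton. Complementary slackness kills the multiplier of every slack constraint, so if no constraint binds then $\Lambda_i(\mathbf p)=\{0\}$; if exactly one constraint $j_0$ binds, stationarity in $a$ reads $\partial_a f_i(a_i^*,\mathbf p)+\lambda_{j_0}\langle\nabla g_{j_0}(z_i^*),\mathbf p\rangle=0$, and $\langle\nabla g_{j_0}(z_i^*),\mathbf p\rangle\neq 0$: the convex function $a\mapsto g_{j_0}(\overline{x}+a\mathbf p)$ is strictly negative at $a=0$ and vanishes at $a=a_i^*$, so a vanishing derivative there would make $a_i^*$ a global minimum of a nonnegative function, contradicting $g_{j_0}(\overline{x})<0$. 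Hence $\lambda_{j_0}$, and so $\Lambda_i(\mathbf p)$, is uniquely determined and $v_i'(\mathbf p;\cdot)$ is linear; since $H$ is finite dimensional and the localized value functions $v_i$ are moreover locally Lipschitz (Theorem~\ref{thm:cont.pca} together with standard perturbation bounds), the existence of linear directional derivatives upgrades to Fr\'echet differentiability of each $v_i$, hence of $V$, at $\mathbf p$.

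The crux is verifying the envelope theorem's hypotheses for a problem whose feasible set \emph{moves with the parameter}: the interval $\{a:g_j(\overline{x}+a\mathbf p)\le 0\ \forall j\}$ could a priori shrink to a point, become unbounded, or change its set of active constraints as $\mathbf p$ varies, and all of this must be controlled uniformly near the base point. This is exactly where Theorem~\ref{thm:cont.pca} does the work, localizing the feasible set to a fixed compact interval, while Condition~\ref{cond:finite.dim.conditions}(iv) furnishes the Slater point that keeps $\Lambda_i(\mathbf p)$ nonempty and bounded; the differentiability refinement then reduces to the short convexity computation above showing that a single active constraint forces a unique multiplier.
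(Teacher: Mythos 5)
Your proposal follows essentially the same route as the paper: decompose $V$ into the $N$ inner one-dimensional convex programs, compactify the feasible correspondence (the paper does this globally by appending a ball constraint $g_{m+1}=\|x\|^2-r\le 0$, you do it locally via the continuity of $a_i^*$ from Theorem \ref{thm:cont.pca}; both are fine), verify the hypotheses of the generalized envelope theorem of Appendix \ref{sec:envelope} with the Slater point $a=0$ supplied by Condition \ref{cond:finite.dim.conditions}(iv), and sum the resulting directional derivatives. Your argument for multiplier uniqueness when at most one constraint binds (complementary slackness, plus $\langle\nabla g_{j_0}(z_i^*),\mathbf p\rangle\neq 0$ by the convexity-and-Slater argument) is exactly the content the paper extracts from Lemma \ref{lem:local.sol} and the LICQ hypothesis of Corollary \ref{cor:gen.env.diff}. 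The one place you deviate is the final upgrade from linear directional derivatives to differentiability: you invoke local Lipschitzness of the $v_i$, citing only "standard perturbation bounds," whereas Theorem \ref{thm:cont.pca} gives continuity, not a Lipschitz estimate, so as written that step is asserted rather than proved (the Gateaux-to-Fr\'echet upgrade in finite dimensions does require the Lipschitz property, or some substitute). This is easily repaired: having already checked strict convexity in $a$, continuous differentiability, and the nonvanishing constraint derivative, you can simply conclude via Corollary \ref{cor:gen.env.diff}, which yields the unique saddle point and the existence of $\nabla_{\mathbf p}V$ directly — this is precisely what the paper does — so the detour through Lipschitz perturbation bounds is unnecessary.
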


For the proof see Appendix \ref{app:Thm.pca.dff} where we also provide expressions of the directional directive and gradient of $V$ in terms of a Lagrangian and its saddle points. In particular, if the Hausdorff dimension of $\bigcup_{i \neq j } \{g_i = 0 \} \cap \{g_j = 0\}$ is at most $d - 2$, then $V$ is differentiable almost everywhere on $H \setminus \{0\}$.

\section{Implementation for polyhedral domains}\label{sec:algorithm}
In this section we provide an efficient algorithm to solve nested CPCA when the Hilbert space is finite dimensional and the convex domain is polyhedral. The algorithm, which we implement with R and C++, can be applied to fairly large and high dimensional data-sets.\footnote{Our implementation is available at \url{https://github.com/stevenacampbell/ConvexPCA}.} 

As in Section \ref{sec:CPCA.Finite.Dim} we let the Hilbert space $H$ be finite dimensional, so $H$ is isomorphic to a Euclidean space. To simplify the notations we assume without loss of generality that $H = \mathbb{R}^d$ for some $d \geq 2$ (the case $d = 1$ is trivial). Elements of $\mathbb{R}^d$ are regarded as column vectors. We say that a closed convex set $X \subset \mathbb{R}^d$ is {\it polyhedral} if there exists an $m \times d$ matrix $A$ and $b \in \mathbb{R}^m$ such that 
\begin{equation} \label{eqn:convex.polyhedron}
X = \{x \in \mathbb{R}^d : Ax \geq b\}.
\end{equation}
If $b = 0$, we say that $X$ is a polyhedral convex {\it cone}. We remark that if $X \subset \mathbb{R}^d$ is a compact convex set with non-empty interior, then it can be approximated arbitrarily well by convex polyhedrons under the Hausdorff metric (see \cite{bronshtein1975approximation} and \cite{schneider1981approximation}). 

Let $X$ be as in \eqref{eqn:convex.polyhedron} and consider, for a given reference point $\overline{x} \in X$ and measure $\mu = \frac{1}{N} \sum_{i = 1}^N \delta_{x_i} \in \mathcal{P}_2(X)$, the nested CPCA problem which can be stated in the form (see \eqref{eqn:CPCA.sequential.1} and \eqref{eqn:CPCA.sequential.2})
\begin{equation} \label{eqn:nested.PCA.polyhedral1}
{\bf p}_1^* \in \argmin_{{\bf p} \in H: \|{\bf p}\| = 1} V({\bf p}), \quad
{\bf p}_k^* \in \argmin_{{\bf p} \in H: {\bf p} \in P_{k-1}^{\perp}, \|p\| = 1} V({\bf p}),
\end{equation}
where $P_{k-1} = \mathrm{span}\{ {\bf p}_1^*, \ldots, {\bf p}_{k-1}^*\}$ and 
\[
V({\bf p}) = \frac{1}{N} \sum_{i = 1}^N \left\{ \inf_{z_i \in (\overline{x} + \mathrm{span}\{{\bf p}\}) \cap X} \|x_i - z_i\|^2 \right\}, \]
as given by \eqref{eqn:Vp}. We assume that Condition \ref{cond:finite.dim.conditions} holds.

\begin{figure}[t!]
\centering
%\hspace{0.4cm}
\begin{tikzpicture}[scale = 1.25]
\draw[fill=gray!30, gray!30] (0,0) -- (0.5, -3) -- (1.5, -3.5) -- (4, -3.) -- (5.5, -1) -- (5.4, -0.2) -- (0, 0);
\draw[thick, gray!70] (0.2, -1.2) -- (4.75, -2);
\node[circle, draw=black, fill = black, inner sep=0pt, minimum size=2.5pt] at (2.5, -1.6)  {};
\node[left] at (0.2, -1.2) {\tiny $(\overline{x} + \mathrm{span}\{ {\bf p} \}) \cap X$};
\node[below] at (2.5, -1.6) {\tiny $\overline{x}$};

\node[circle, draw=blue, fill = blue, inner sep=0pt, minimum size=2.5pt] at (5.35, -0.8)  {};
\node[right, blue] at (5.35, -0.8) {\tiny $x_i$};
\node[right, blue] at (4.75, -2) {\tiny $z_i^* = \overline{x} + a_i^* {\bf p}$};
\node[circle, draw=blue, fill = blue, inner sep=0pt, minimum size=2.5pt] at (4.75, -2)  {};
\draw[dashed, blue] (5.35, -0.8) -- (4.75, -2);

\draw[dashed, blue] (1.1, -1.355) -- (0.93, -2.355);
\node[circle, draw=blue, fill = blue, inner sep=0pt, minimum size=2.5pt] at (1.1, -1.355)  {};
\node[circle, draw=blue, fill = blue, inner sep=0pt, minimum size=2.5pt] at (0.93, -2.355)  {};
\node[below, blue] at (0.93, -2.355) {\tiny $x_j$};
\node[above, blue] at (1.1, -1.355) {\tiny $z_j^*$};
\node[below, black] at (3.1, -3.3) {\tiny $X = \{x : Ax \geq b\}$};
\end{tikzpicture}
\caption{Nested CPCA \eqref{eqn:nested.PCA.polyhedral1} on a polyhedral domain. Note that $z_i^*$ is not always the orthogonal projection of $x_i$ onto $\overline{x} + \mathrm{span}\{ {\bf p} \}$ because of the constraint $z_i^* \in X$.} \label{fig:polyhedral}
\end{figure}
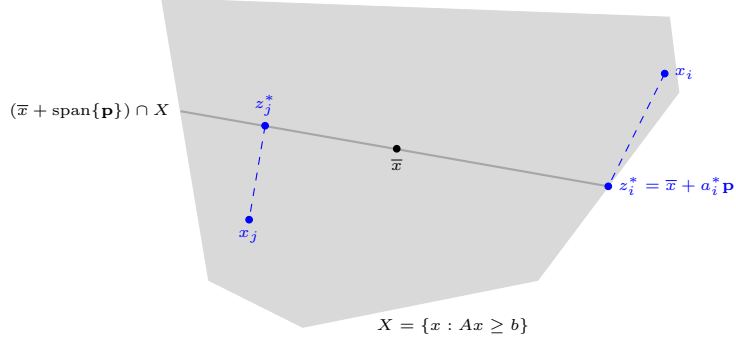

We begin by parameterizing the quantities involved. Write $z_i = \overline{x} + a_i {\bf p}$ where $a_i \in \mathbb{R}$ is chosen such that $z_i \in X$. Suppose ${\bf p}_1^*, \ldots, {\bf p}_{k-1}^*$ have been found and consider the second minimization problem in \eqref{eqn:nested.PCA.polyhedral1}. Given $P_{k-1}$, let $B_k = \begin{bmatrix} {\bf b}_1 & \cdots & {\bf b}_{d-k+1} \end{bmatrix}$ where 
$\{{\bf b}_1, \ldots, {\bf b}_{d-k+1}\}$ is an orthonormal basis of $P_{k-1}^{\perp}$. Then ${\bf p} \in P_{k-1}^{\perp}$ with $\| {\bf p} \| = 1$ if and only if ${\bf p}$ has the form ${\bf p} = B_k {\bf w}$ where ${\bf w}$ is a unit vector in $\mathbb{R}^{d-k+1}$. Thus \eqref{eqn:nested.PCA.polyhedral1} is equivalent to solving
\begin{equation} \label{eqn:nested.PCA.polyhedral2}
{\bf w}_k^* = \argmin_{{\bf w}: \| {\bf w} \| = 1} \tilde{V}({\bf w}) = \argmin_{{\bf w} : \| {\bf w} \| = 1} \left\{ \frac{1}{N} \sum_{i = 1}^N \inf_{a_i \in \mathbb{R}: z_i \in X} \|x_i - (\overline{x} + a_i B_k {\bf w}) \|^2  \right\},
\end{equation}
where the constraint $z_i \in X$ can be written in terms of $a_i$ and ${\bf w}$ as $A (\overline{x} + a_i B_k {\bf w}) \geq {\bf b}$. When $k = 1$, we may simply let $B_1 = I$ and the equivalence still holds. See Figure \ref{fig:polyhedral} for an illustration. Finally, we express the constraint $\|\mathbf{w}\| = 1$ using hyperspherical coordinates. In particular, write ${\bf w} = \boldsymbol{\omega}(\boldsymbol{\theta})$ where $\boldsymbol{\theta} = (\theta_1, \ldots, \theta_{d-k})$ and 

\[
\boldsymbol{\omega}(\boldsymbol{\theta})
=
\begin{bmatrix}
\sin(\theta_1) \\
\sin(\theta_1)\cos(\theta_2) \\
\vdots \\
\sin(\theta_1)\cdots\sin(\theta_{d-k-1})\cos(\theta_{d-k})\\
\sin(\theta_1)\cdots\sin(\theta_{d-k-1})\sin(\theta_{d-k})
\end{bmatrix}.
\]
Now \eqref{eqn:nested.PCA.polyhedral2} is an unconstrained optimization problem in $\boldsymbol{\theta} \in \mathbb{R}^{d-k}$.

We proceed to describe the algorithm, for which more details can be found in Appendix \ref{sec:appendix.alg.considerations}. For each $i$, the inner minimization problem in \eqref{eqn:nested.PCA.polyhedral1} is easy to solve since $L = (\overline{x} + \mathrm{span}\{ {\bf p} \}) \cap X$ can be identified with an interval. Specifically, we first compute the usual orthogonal projection, say $\tilde{z}_i^*$, of $x_i$ onto $L$. If $\tilde{z}_i^*$ is in $X$, then $z_i^* = \tilde{z}_i^*$ is the optimal point. Otherwise, $z_i^*$ is the boundary point of $L$ which is closest to $\tilde{z}_i^*$. This allows us to evaluate the objective $\tilde{V}(\boldsymbol{\omega}(\boldsymbol{\theta}))$. Algorithm \ref{alg:intersection.pt} in the appendix gives the precise procedure for finding the boundary points. For a fixed vector ${\bf p}$, the computational complexity of evaluating $V({\bf p})$ using our algorithm is at most $O(N d)+O(md)$. For a sparse (e.g. bi-diagonal) $A$, this can be sped up to $O(N d)+O(\max\{m,d\})$. If we pass to the unconstrained problem by replacing the input ${\bf p}$ with $B_k\boldsymbol{\omega}(\boldsymbol{\theta})$ we add another (at most) $O(d^2)$ operations. Thus, the dominant complexity will depend on the relative sizes of $N$, $m$, and $d$. In general, the cost of function evaluation is comparable to matrix-vector multiplication of the relevant dimensions.

Using the differentiability of $V$ (and hence $\tilde{V}$) in ${\bf p}$ (and hence in $\boldsymbol{\theta}$), we can use standard gradient approximation  and gradient-based optimization methods to solve \eqref{eqn:nested.PCA.polyhedral1}. As explained in Appendix \ref{sec:appendix.alg.considerations}, part of our implementation  relies on an intelligent initial guess for the solution. If we initialize a standard gradient descent in a neighborhood of the true minimum where $\boldsymbol{\theta}\mapsto \tilde{V}(\boldsymbol{\omega}(\boldsymbol{\theta}))$ is convex, the worst case complexity of a standard gradient descent with a target error of $\epsilon>0$ is $O\left(\max\{m,d,N\}d^2\epsilon^{-1}\right)$. If it is strongly convex, then the contribution $\epsilon^{-1}$ improves to $\log(\epsilon^{-1})$. Theoretically, these tailored rates arising from this custom implementation compare favourably to out-of-the-box interior point solvers which, even for convex problems, have typical computational complexity that is greater than cubic in the number of variables and constraints (see Sections 1.3.1 and 11.5.3 of \cite{BV04}).

\begin{example}
Consider the simulated data set shown in Figure \ref{fig:Low.Dim.Example}. Here $d = 2$ and $X$ is a convex cone indicated by the shaded region. The data is intentionally constrained to lie in a vertical strip so that the first Euclidean PC (dashed line) is almost vertical. Note that for data points near the lower left corner the Euclidean projections are outside $X$. We compute the first convex PC, $C_1$, using our algorithm. As shown in the figure, it is quite different from the Euclidean PC. Now the data points near the lower left corner are projected under $\Pi_{C_1}$ to the endpoint of the convex PC. We also compute the proportion of explained variation (Definition \ref{def:explained.variation}), which is 98\% for the convex PC. This is slightly lower than 99\% which is the variation explained by the Euclidean PC.
\end{example}

\section{Wasserstein GPCA}\label{sec:Application.GPCA}
In this section we consider Wasserstein GPCA which is the original motivation of CPCA in \cite{BGKL17}. We begin by recalling some basic notions of optimal transport and explaining why Wasserstein geodesic PCA is a special case of CPCA. Then we introduce a finite dimensional approximation to Wasserstein GPCA for which the algorithms of Section \ref{sec:algorithm} apply. As application, we analyze return distributions of stocks ordered by size.

%As mentioned in Section \ref{sec:intro}, CPCA was first introduced in \cite{BGKL17} to study Wasserstein GPCA on the real line. We begin by recalling some basic notions of optimal transport and explaining why Wasserstein geodesic PCA is a special case of CPCA.

%In a {\it distributional data-set} each data point can be regarded as a probability distribution. For instance, age distributions of countries, and empirical return distributions of financial assets, can be regarded as distributional data-sets. The Wasserstein metric \eqref{eqn:Wasserstein.distance} in the theory of optimal transport \parencite{V03, V08} provides a natural geometric structure for modelling distributional data. As mentioned in Section \ref{sec:intro}, CPCA was first introduced in \cite{BGKL17} to study Wasserstein GPCA on the real line. This is because the Wasserstein space $(\mathcal{P}_2(\Omega), W_2)$, where $\Omega \subset \mathbb{R}$ is an interval and $\mathcal{P}_2(\Omega)$ is the space of Borel probability measures on $\Omega$ with finite second moment, is isometric to a closed convex subset of an $L^2$ space. After recalling the isometry and the relevant notations, we introduce a finite dimensional approximation for which the algorithms developed in Section \ref{sec:algorithm} apply. We then apply Wasserstein GPCA to analyze return distributions of stocks ordered by size. 

%\textcolor{blue}{We introduce here a finite dimensional approximation to Wasserstein GPCA for which the algorithms of Section \ref{sec:algorithm} apply and analyze return distributions of stocks ordered by size.}

\subsection{Wasserstein space on an interval and geodesic PCA} \label{sec:Wasserstein.isometry}
For technical purposes, we let $\Omega = [a, b]$ be a compact interval and leave the unbounded case for future research.\footnote{At a practical level, if the data set consists of compactly supported distributions (say empirical distributions), we can always choose a sufficiently large interval.} In this case $\mathcal{P}_2(\Omega) = \mathcal{P}(\Omega)$ but we will still write the former since we are using the $2$-Wasserstein metric. Let $\mathbb{P}_0 \in \mathcal{P}_2(\Omega)$ be a given reference measure which is absolutely continuous with respect to the Lebesgue measure on $\Omega$ and has full support. A simple but useful example is the uniform distribution on $\Omega$. For any $\mathbb{P} \in \mathcal{P}_2(\Omega)$, there exists a non-decreasing mapping $T_{\mathbb{P}}: \Omega \rightarrow \Omega$, which is unique $\mathbb{P}$-a.e., such that $(T_{\mathbb{P}})_{\#} \mathbb{P}_0 = \mathbb{P}$. Then, it is a well-known result in optimal transport that the Wasserstein distance between $\mathbb{P}, \mathbb{Q} \in \mathcal{P}_2(\Omega)$ is given by
\begin{equation} \label{eqn:interval.Wasserstein.metric}
W_2^2(\mathbb{P}, \mathbb{Q}) = \int_{\Omega} |T_{\mathbb{P}}(x) - T_{\mathbb{Q}}(x)|^2 \mathrm{d} \mathbb{P}_0(x).
\end{equation}
The mapping $\mathbb{P} \mapsto T_{\mathbb{P} }$ defines a bijection from $\mathcal{P}_2(\Omega)$ onto the space
\[
\mathcal{T}_{\mathbb{P}_0}(\Omega) = \{ T : \Omega \rightarrow \Omega: \text{$T$ is $\mathbb{P}_0$-a.e.~non-decreasing}\},
\]
which can be shown to be a compact convex subset of the separable Hilbert space $H = L_{\mathbb{P}_0}^2(\Omega)$. By \eqref{eqn:interval.Wasserstein.metric}, $\mathcal{P}_2(\Omega)$ is isometric to $\mathcal{T}_{\mathbb{P}_0}(\Omega)$ via this mapping. %(In \cite{BGKL17} the isometry is defined in terms of the vector field $v_{\mathbb{P}_0}(x) = T_{\mathbb{P}_0}(x) - x$ which is also in $L_{\mathbb{P}_0}^2(\Omega)$.)
Based on this isometry, we {\it define} Wasserstein GPCA on $\mathcal{P}_2(\Omega)$ to be the CPCA on $X = \mathcal{T}_{\mathbb{P}_0}(\Omega) \subset H = L_{\mathbb{P}_0}^2(\Omega)$. Note that a line segment in $X$ corresponds via the isometry to a geodesic in the Wasserstein space $\mathcal{P}_2(\Omega)$ which is given in terms of McCann's {\it displacement interpolation} \parencite{M97}. In particular, the geometry does not depend on the choice of the absolutely continuous reference measure $\mathbb{P}_0$. We also define the {\it barycenter} $\overline{\mathbb{P}}$ of a collection $\mathbb{P}_1, \ldots, \mathbb{P}_N \in \mathcal{P}_2(\Omega)$ by $T_{\overline{\mathbb{P}}} =  \sum_{i = 1}^N \frac{1}{N} T_{\mathbb{P}_i}$. This coincides with the {\it Wasserstein barycenter} introduced in \cite{agueh2011barycenters}.

\subsection{Finite dimensional approximation} \label{sec:Wasserstein.finite.dimensional}
Consider a distributional data-set consisting of distributions $\mathbb{P}_1, \ldots, \mathbb{P}_N \in \mathcal{P}_2(\Omega)$ and a given point $\overline{\mathbb{P}} \in \mathcal{P}_2(\Omega)$ (corresponding to $\overline{x}$ in the general formulation). Let $T_1, \ldots, T_N, \overline{T}$ be the corresponding elements in $X = \mathcal{T}_{\mathbb{P}_0}(\Omega)$ which is the space of transport maps. We introduce a natural finite dimensional approximation of Wasserstein GPCA.

The idea is simple: we approximate the interval $\Omega = [a, b]$ using a refining sequence of partitions. For expositional simplicity and concreteness we focus on the {\it dyadic partition} $\mathcal{D}_n = \{a + (b - a)j 2^{-n}\}_{j = 0}^{2^n}$, but other partitions may be used for computational purposes. For $n \geq 1$, let $H_n \subset H$ be the finite dimensional subspace consisting of functions $T \in H$ such that $T$ is $\mathbb{P}_0$-a.e.~constant on each subinterval corresponding to $\mathcal{D}_n$. Clearly $H = \overline{\bigcup_n H_n}$ by a standard density argument. Let $\Pi_n: H \rightarrow H_n$ be the orthogonal projection onto $H_n$. By projecting $T_i$ and $\overline{T}$ onto $H_n$ (see \eqref{eqn:Wasserstein.L2.projection} below), we get along the lines of Section \ref{subsec:Consistency.Finite.Dim.Approx} a finite dimensional approximation of Wasserstein GPCA. In particular, since $T_i$ (which corresponds to the transport map from $\mathbb{P}_0$ to $\mathbb{P}_i$) is piecewise constant, we approximate each $\mathbb{P}_i$ by a {\it discrete} distribution supported on a set with at most $2^n$ elements. Note that if each data point $\mathbb{P}_i$ (more precisely each $T_i = T_{\mathbb{P}_i}$) is already in $H_n$, then our formulation is exact. The following lemma shows that our approximation satisfies the condition $\Pi_n X \subset X$ and hence the consistency result in Theorem \ref{thm:global.consistency.CPCA.approx} applies.

\begin{lemma}
For $n \geq 1$ we have $\Pi_n X \subset X$.
\end{lemma}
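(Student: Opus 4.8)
The plan is to identify the orthogonal projection $\Pi_n$ with a conditional-expectation (local-averaging) operator, and then to observe that such averaging preserves both the range constraint $T(\Omega)\subset\Omega$ and $\mathbb{P}_0$-a.e.\ monotonicity.

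First I would record the explicit form of $\Pi_n$. Regard $\mathcal{D}_n$ also as the collection of the $2^n$ closed subintervals of the partition. For distinct $I,I'$ in this collection, $I\cap I'$ is either empty or a single shared endpoint, which is $\mathbb{P}_0$-null since $\mathbb{P}_0$ is absolutely continuous; hence $\{\mathbf{1}_I : I\in\mathcal{D}_n\}$ is an orthogonal family in $H=L^2_{\mathbb{P}_0}(\Omega)$ whose span is exactly $H_n$. Moreover $\mathbb{P}_0(I)>0$ for each $I$ because $\mathbb{P}_0$ has full support. Expanding in this basis gives, for every $T\in H$,
\[
\Pi_n T \;=\; \sum_{I\in\mathcal{D}_n}\left(\frac{1}{\mathbb{P}_0(I)}\int_I T\,\mathrm{d}\mathbb{P}_0\right)\mathbf{1}_I ,
\]
i.e.\ $\Pi_n T$ is $\mathbb{P}_0$-a.e.\ equal to the step function taking, on each subinterval $I$, the constant value equal to the $\mathbb{P}_0$-average of $T$ over $I$. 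From now on I work with this canonical step-function representative.

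Now fix $T\in X=\mathcal{T}_{\mathbb{P}_0}(\Omega)$. For the range: on each $I$ the $\mathbb{P}_0$-average of $T$ over $I$ lies between the $\mathbb{P}_0$-essential infimum and supremum of $T$ on $I$, both of which lie in $\Omega=[a,b]$ since $T$ takes values in $\Omega$; hence $\Pi_n T$ maps $\Omega$ into $\Omega$. For monotonicity: let $I,I'\in\mathcal{D}_n$ be consecutive subintervals with $I$ to the left of $I'$, and let $v$ be their shared endpoint. Choosing a genuinely non-decreasing version $\widetilde T$ of $T$ (equal to $T$ $\mathbb{P}_0$-a.e.) and using that the endpoint $v$ has zero $\mathbb{P}_0$-mass, one has $\widetilde T\le \widetilde T(v-)$ $\mathbb{P}_0$-a.e.\ on $I$ and $\widetilde T\ge \widetilde T(v-)$ $\mathbb{P}_0$-a.e.\ on $I'$; integrating, the average of $T$ over $I$ is at most the average over $I'$. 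Thus the step function $\Pi_n T$ is non-decreasing across consecutive subintervals, hence non-decreasing on $\Omega$. Combining, $\Pi_n T\in\mathcal{T}_{\mathbb{P}_0}(\Omega)=X$.

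The argument is essentially routine, so I do not expect a substantive obstacle; the only points needing mild care are that every subinterval carries positive $\mathbb{P}_0$-mass (so the averages make sense) and that $\mathbb{P}_0$ has no atoms (so partition endpoints are negligible and the a.e.\ qualifications can be discharged by passing to the monotone representative) — both guaranteed by the standing hypotheses that $\mathbb{P}_0$ is absolutely continuous with full support.
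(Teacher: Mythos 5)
Your proposal is correct and follows essentially the same route as the paper: both compute $\Pi_n$ explicitly as the local $\mathbb{P}_0$-averaging operator over the partition subintervals and then observe that averaging preserves the range constraint and monotonicity. Your argument simply spells out carefully (via the monotone representative and the null shared endpoints) the step the paper dismisses as ``clearly,'' which is a harmless elaboration rather than a different approach.
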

\begin{proof}
Given $n$, consider the orthogonal collection $(\varphi_{n,j} )_{j=1}^{2^n} \subset H_n$ given by $\varphi_{n, j} = 
\mathds{1}_{I_j}$, where $I_j = (a + (b - a)(j-1)2^{-n}, a + (b-a)j2^{-n}]$ is the $j$-th subinterval of $\mathcal{D}_n$, and $\mathds{1}_{I_j}$ is $1$ on $I_j$ and $0$ elsewhere.\footnote{Since $\mathbb{P}_0$ is absolutely continuous, the values on $\mathcal{D}_n$ do not matter.} Then for $T \in H$ we have
\begin{equation} \label{eqn:Wasserstein.L2.projection}
\Pi_nT =\sum_{j=1}^{2^{n}}\frac{\langle T, \varphi_{n, j}\rangle_{L_{\mathbb{P}_0}^2(\Omega)}}{\|\varphi_{n, j} \|_{L_{\mathbb{P}_0}^2(\Omega)}^2} \varphi_{n, j} = \sum_{j = 1}^{2^n} \left( \frac{1}{\mathbb{P}_0(I_j)} \int_{I_j} T \mathrm{d} \mathbb{P}_0 \right) \mathds{1}_{I_j}.
\end{equation}
Clearly, if $T \in X$ then $\Pi_n T$ is non-decreasing and maps $\Omega$ into $\Omega$. So $\Pi_n T \in X$ and $X$ is invariant under $\Pi_n$.
\end{proof}

Each $T \in H_n$ (has a version which) can be written in the form $
T = \sum_{j = 1}^{2^n} t_j \mathds{1}_{I_j}$, for some ${\bf t} = (t_1, \ldots, t_{2^n}) \in \mathbb{R}^{2^n}$. Then $T \in \Pi_n X$ if and only if $a \leq t_0 \leq t_1 \leq \cdots \leq t_{2^n} \leq b$. This allows us to identify $X_n = \Pi_n X$ with a convex {\it polyhedral} set of the form $\{ {\bf t} \in \mathbb{R}^{2^n}: A {\bf t} \geq {\bf b} \}$, so our algorithm developed in Section \ref{sec:algorithm} applies. It can be seen that the matrix $A$ is {\it bidiagonal}; this can be exploited to speed up some matrix computations. 

\begin{remark} Alternative approaches for solving Wasserstein GPCA (and relaxations thereof) have been proposed in \cite{cazelles2018geodesic} and \cite{PB22}. In the former paper, a proximal forward-backward splitting algorithm is employed to solve the GPCA problem on the data histograms. In the latter, a monotone B-spline approximation of the transport maps is used to first reduce the dimensionality of the data. The simplified GPCA problem is then solved directly using an interior point solver for nonlinear programs or approximately via a ``projected PCA''. That is, the unconstrained problem (i.e., over $H$) is solved and the ``projected'' components are taken to be the restrictions of the unconstrained components to (in our notation) $X$. We compare and illustrate these methods using publicly available data in Appendix \ref{app:comparision}.
\end{remark}

\subsection{Ranked stock return distributions} \label{sec:ranked.returns}
In both empirical examples (here and in Section \ref{sec:rank.based.dist}) of this paper we consider the US stock market using data provided by the %Center for Research in Security Prices (CRSP) 
\cite{CRSPdata}. Here, we analyze the daily (arithmetic) returns\footnote{Returns correspond to the \texttt{ret} variable in the CRSP database which is adjusted for stock splits and dividends.} of the largest $N=1000$ stocks by market capitalization from January 1962 to December 2021.

\begin{figure}[t!]
    \centering
    \begin{center}
        \includegraphics[width=0.7\textwidth]{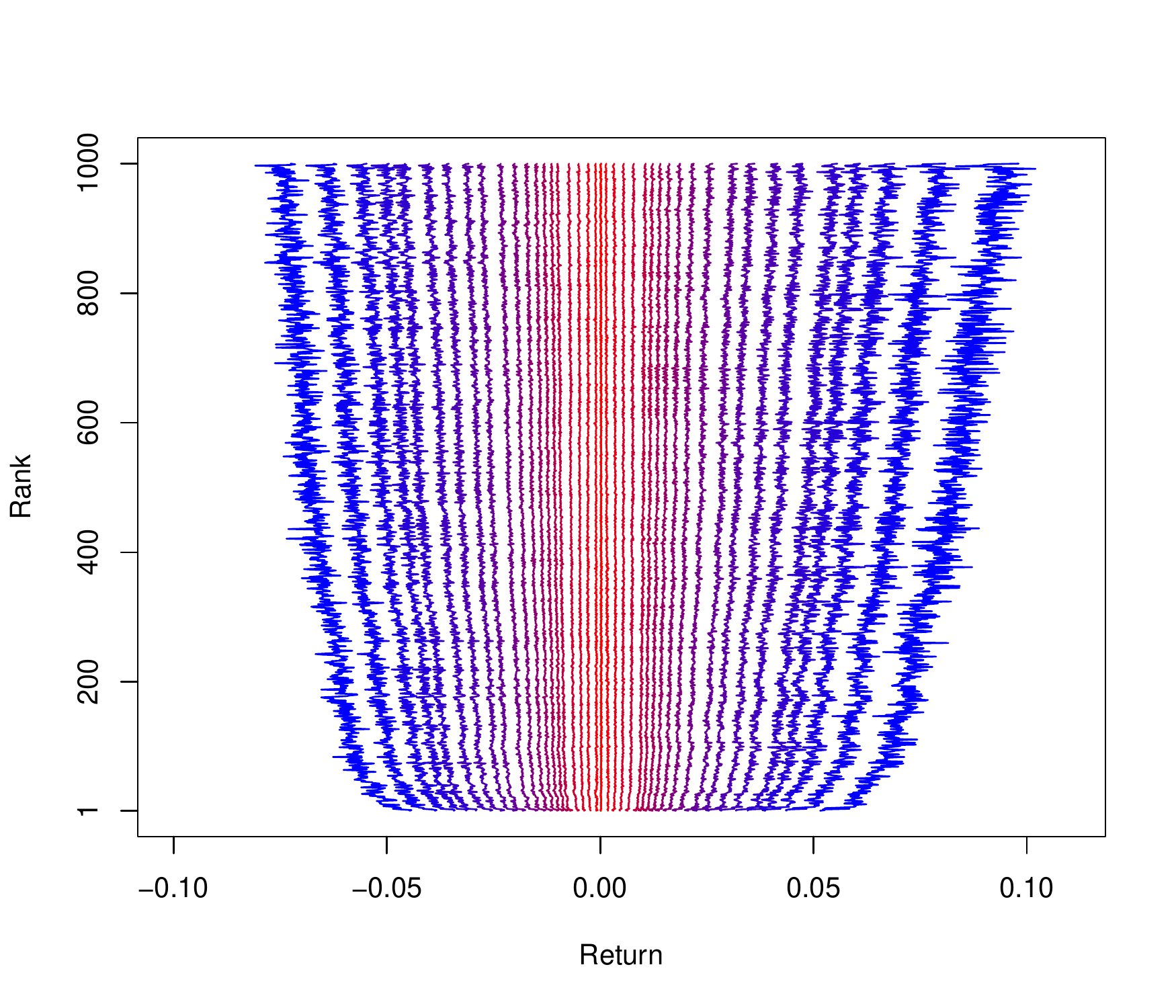}
    \end{center}
    \vspace{-0.5cm}
    \caption{Visualization of the data-set of ranked stock return distributions $\{\mathbb{P}_i\}_{i = 1}^{1000} \subset \mathcal{P}_2(\mathbb{R})$ in Section \ref{sec:ranked.returns}. 
    For several values of $q \in (0, 1)$,\protect\footnotemark \ we show how the $q$-quantile of $\mathbb{P}_i$ varies with the rank $i$. For example, the middle series (corresponding to $q = 0.5$) is mostly vertical, showing that the medians of the return distributions at all ranks are close to $0$. When $q$ is close to $0$ or $1$, the quantiles spread out as $i$ increases, i.e., smaller stocks tend to be more volatile. We also observe that the returns of smaller stocks are more positively skewed.
    }
    \label{fig:Return.Distributions}
\end{figure}
\footnotetext{We show here quantiles for $q$ and $1 - q$ in $\{.5$, $.55$, $.6$, $.65$, $.7$, $.75$, $.775$, $.8$, $.825$, $.875$, $.9$, $.925$, $.94$, $.955$, $.965$, $.975$, $.98$, $.985$, $.99$, $.994\}$. These values are chosen for visualization purposes.} 

At the start of each trading day, we rank the stocks by market capitalization and arrange the returns according to their ranks. This yields, for each rank $i = 1, \ldots, N$, an {\it empirical} return distribution, $\mathbb{P}_i = \frac{1}{M} \sum_{t = 1}^M \delta_{R_{it}} \in \mathcal{P}_2(\mathbb{R})$ where $R_{it}$ is the return of the stock which was at rank $i$ on day $t$, $t = 1, \ldots, M$. In stochastic portfolio theory it was observed that rank-based properties of stocks are useful in portfolio selection. For example, parameters (e.g.~growth rates) of individual stocks are difficult to estimate but the corresponding rank-based quantities are more often much more stable. Understanding of these return distributions is thus relevant in constructing realistic rank-based models of the stock market. The data-set $\{\mathbb{P}_i\}_{i = 1}^{1000}$ is visualized in Figure \ref{fig:Return.Distributions}. Observe that small stocks tend to be more volatile and their return distributions are more positively skewed.

We perform nested Wasserstein GPCA to analyze this data-set. We choose a sufficiently large interval $[a, b]$ which covers all returns, and use a dyadic partition ($\mathcal{D}_n$ with $n = 10$) to obtain a finite dimensional nested CPCA problem as described in Section \ref{sec:Wasserstein.finite.dimensional}. We let the reference distribution $\mathbb{P}_0$ be the uniform distribution on $[a, b]$, and let $\overline{\mathbb{P}}$ be the (Wasserstein) barycenter of the return distributions. In Figure \ref{fig:Results.Wasserstein.GPCA} we visualize the first two convex PCs (principal directions) which we also call the Wasserstein GPCs. In terms of the general notations of CPCA in Section \ref{sec:CPCA.general}, they correspond to the curves $t \mapsto \overline{x} + t {\bf p}_k^* \in X$ for $t \in [-\epsilon, \epsilon]$ and $k = 1, 2$. We also call this the {\it perturbation} of $\overline{x}$ along ${\bf p}_k^*$. As probability distributions, each curve has the form $(\mathrm{Id} + t v_k)_{\#} \overline{\mathbb{P}}$ for some vector field $v_k$ on the interval. Note that each quantile moves at constant velocity as the perturbation parameter $t$ varies. This is a feature of McCann's displacement interpolation, under which each ``particle'' travels along a constant velocity straight line.

\begin{figure}[t!]
    \centering
    \begin{center}
        \includegraphics[width=0.49\textwidth]{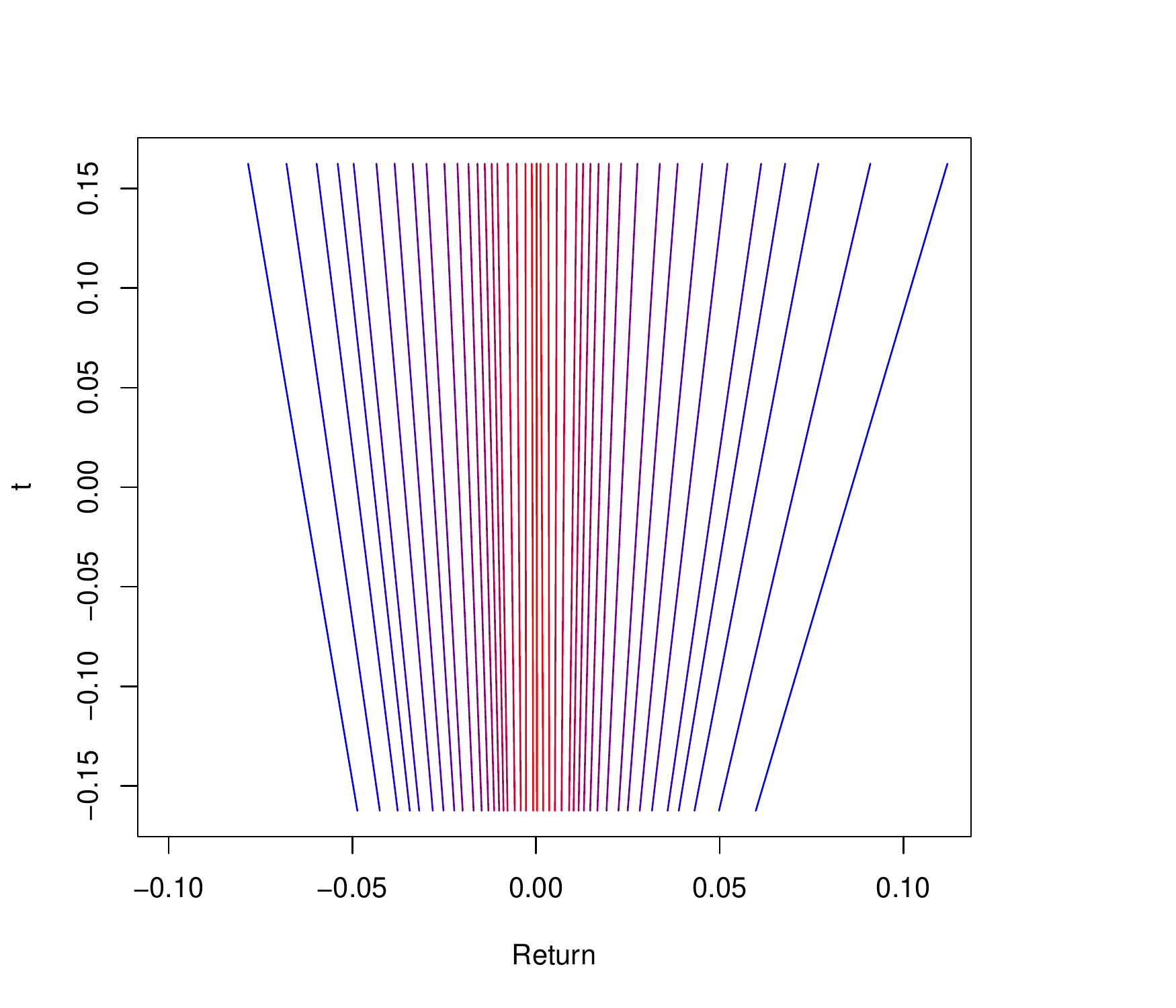}
        \includegraphics[width=0.49\textwidth]{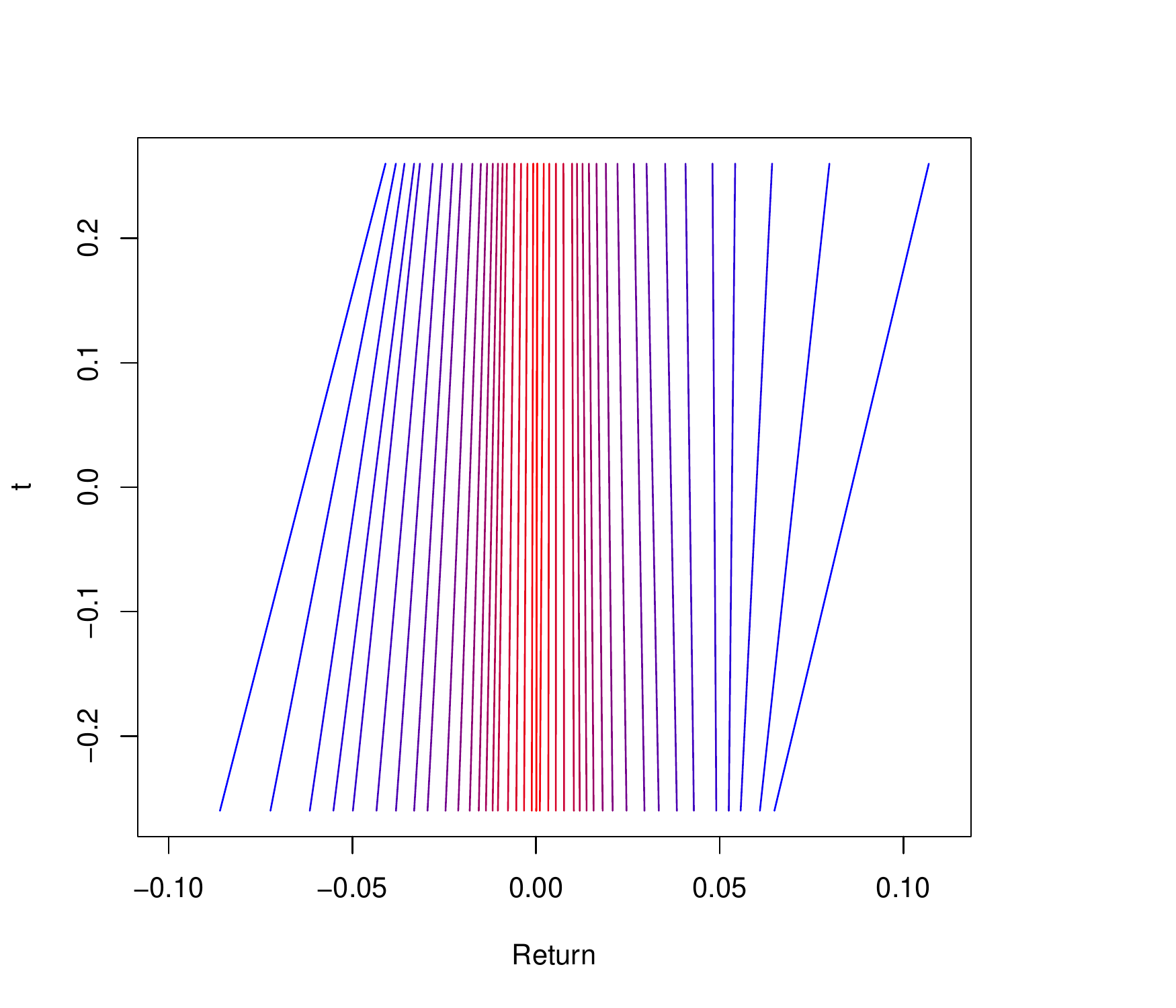}
    \end{center}
    \vspace{-0.5cm}
    \caption{Perturbations $\{(\mathrm{Id} + tv_k)_{\#}\overline{\mathbb{P}}\}_{\{t \in [-\epsilon, \epsilon]\}}$ along the first ($k = 1$, left) and second ($k = 2$, right) Wasserstein GPCs. We show the quantiles (using the $q$-values in Figure \ref{fig:Return.Distributions}) as the perturbation parameter $t$ varies. The first GPC corresponds to a change in volatility and is asymmetric in the two tails. The second GPC acts mostly on the tails. Here $\epsilon > 0$ is chosen to match the variation of the data.
    }
    \label{fig:Results.Wasserstein.GPCA}
\end{figure}

The first principal direction captures the volatility and the overall sknewness of the distribution, and explains alone $89.9\%$ of the variation (in the sense of Definition \ref{def:explained.variation}). The second principal direction can be interpreted in terms of additional skewness effects on the left and right {\it tails} of the distribution (note that the middle quantiles are almost unaffected), and contributes to a small increase of about $3.5\%$ in the explained variation. The projection of the data-set onto $C_2^* = (\overline{x} + \mathrm{span}\{ {\bf p}_1^*, {\bf p}_2^*\})\cap X$ (identified with a subset of $\mathbb{R}^2$) is shown in Figure \ref{fig:Results.Wasserstein.GPCA2} (left). In Figure \ref{fig:Results.Wasserstein.GPCA2} (right) we plot the initial velocity fields $v_1$ and $v_2$ as functions of the return. Consistent with above, we see that $v_1$ is an asymmetric size effect and $v_2$ acts mostly on the tails.

\begin{figure}[t!]
    \centering
    \begin{center}
                \includegraphics[width=0.49\textwidth]{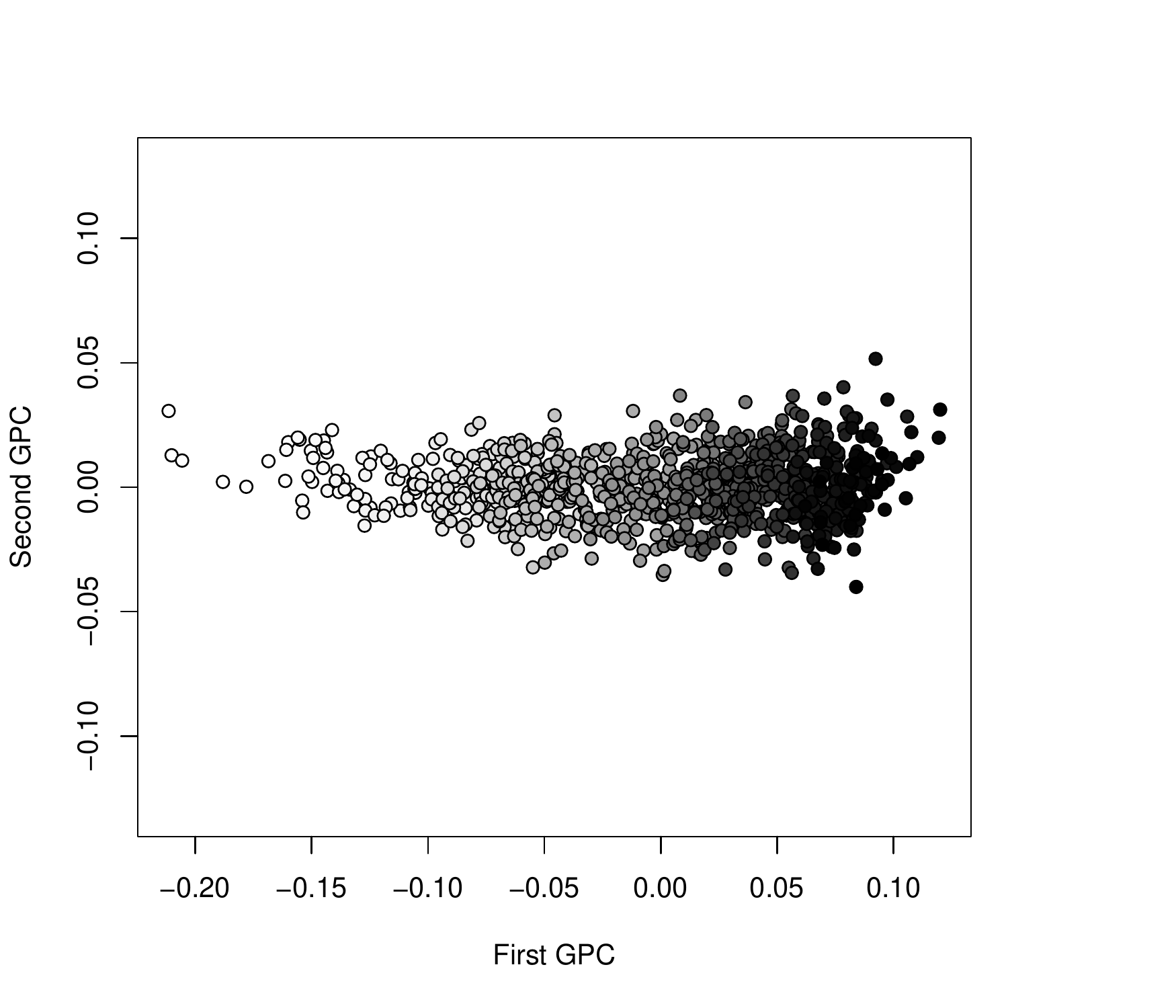}
        \includegraphics[width=0.49\textwidth]{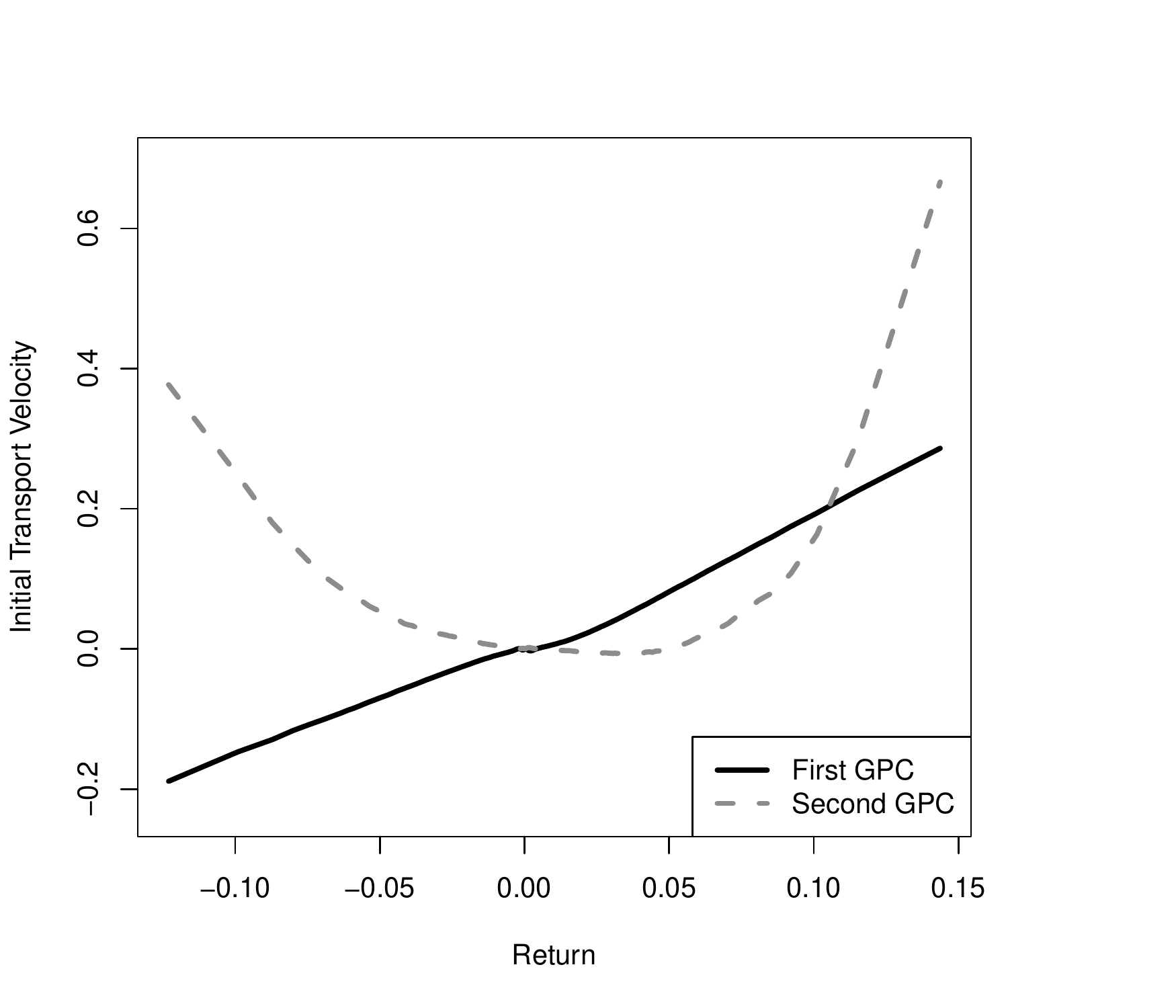}
    \end{center}
    \vspace{-0.5cm}
    \caption{Left: Projection of the ranked return distributions onto the first and second GPCs. Darker dots correspond to smaller stocks. Right: Initial velocity vector fields $v_1, v_2$ corresponding to the first and second GPCs.}
    %\caption{Box Plots corresponding to the perturbation of the mean distribution by the first PC (left) and second PC (right). The former PC corresponds to the spread and the latter PC to the skew. We note that the CPCs used to compute the Geodesic PCs coincide with the standard PCs. }
    \label{fig:Results.Wasserstein.GPCA2}
\end{figure}

\section{Capital distribution curve under the Aitchison geometry}\label{sec:rank.based.dist}

\subsection{Financial background}
Our second empirical example is about the {\it capital distribution curve} of the US stock market. We begin by defining this quantity and explaining its financial relevance. Let $m_i(t) > 0$ be the market capitalization of stock $i$ at time $t$. The {\it market weight} of stock $i$ (relative to a given universe with $n$ stocks) is given by
\begin{equation} \label{eqn:market.weight}
w_i(t) = \frac{m_i(t)}{m_1(t) + \cdots m_n(t)}, \quad i = 1, \ldots, n.
\end{equation}
The vector $w(t) = (w_1(t), \ldots, w_n(t))$ takes values in the {\it open unit simplex}
\begin{equation} \label{eqn:open.simplex}
\Delta_n = \{\mathbf{p} = (p_1, \ldots, p_n) \in (0, 1)^n : p_1 + \cdots + p_n = 1\}.
\end{equation}
The {\it capital distribution} at time $t$ is given by the vector of {\it ordered} market weights given by
\begin{equation} \label{eqn:ranked.market.weight}
w_{(\cdot)}(t) = (w_{(1)}(t), \ldots, w_{(n)}(t)),
\end{equation}
where $w_{(1)}(t) \geq \cdots \geq w_{(n)}(t)$ are the (decreasing) order statistics. Note that $w_{(\cdot)}(t)$ takes values in the {\it ordered} unit simplex $\Delta_{n, \geq} = \{\mathbf{p} \in \Delta_n : p_1 \geq \cdots \geq p_n\}$. In Figure \ref{fig:capdist2} (left) we plot the distribution curves (in log-log scale, i.e., $\log k \mapsto \log w_{(k)}(t)$) of the whole US universe, where one curve is drawn for (early January of) each year from 1962 to 2021. A general observation is that the shape of the capital distribution is remarkably stable over time. Nevertheless, fluctuations of the capital distribution curve have significant financial implications. In particular, consider what is called in stochastic portfolio theory the {\it market diversity} which refers to the concentration of market capital. We say that market diversity is low when most market capital is highly concentrated in the largest stocks, and market diversity is high if market capital is distributed more equally among the stocks. One way to measure market diversity is to use the function
\begin{equation} \label{eqn:diversity}
\mathsf{D}_{\lambda}({\bf p}) = \left( \sum_{i = 1}^n p_i^{\lambda} \right)^{1/\lambda}, \quad {\bf p} \in \Delta_n,
\end{equation}
where $\lambda \in (0, 1)$ is a tuning parameter.\footnote{Here we use $\lambda = 0.5$ which is commonly used in the SPT literature. Other symmetric and concave functions on $\Delta_n$ may be used, e.g. the Shannon entropy, and their qualitative behaviours are similar. Such functions are called {\it measures of diversity} in \cite{F02}.} It was shown empirically, see e.g.~\cite{AFG11,F02, FGH98}, that the change in market diversity correlates positively with the performance of active portfolio managers relative to the capitalization-weighted market portfolio. Also see \cite{campbell2022functional} and \cite{ruf2020impact} for recent empirical studies.% using the CRSP universe.

\begin{figure} [t!]
\includegraphics[width=0.49\textwidth]{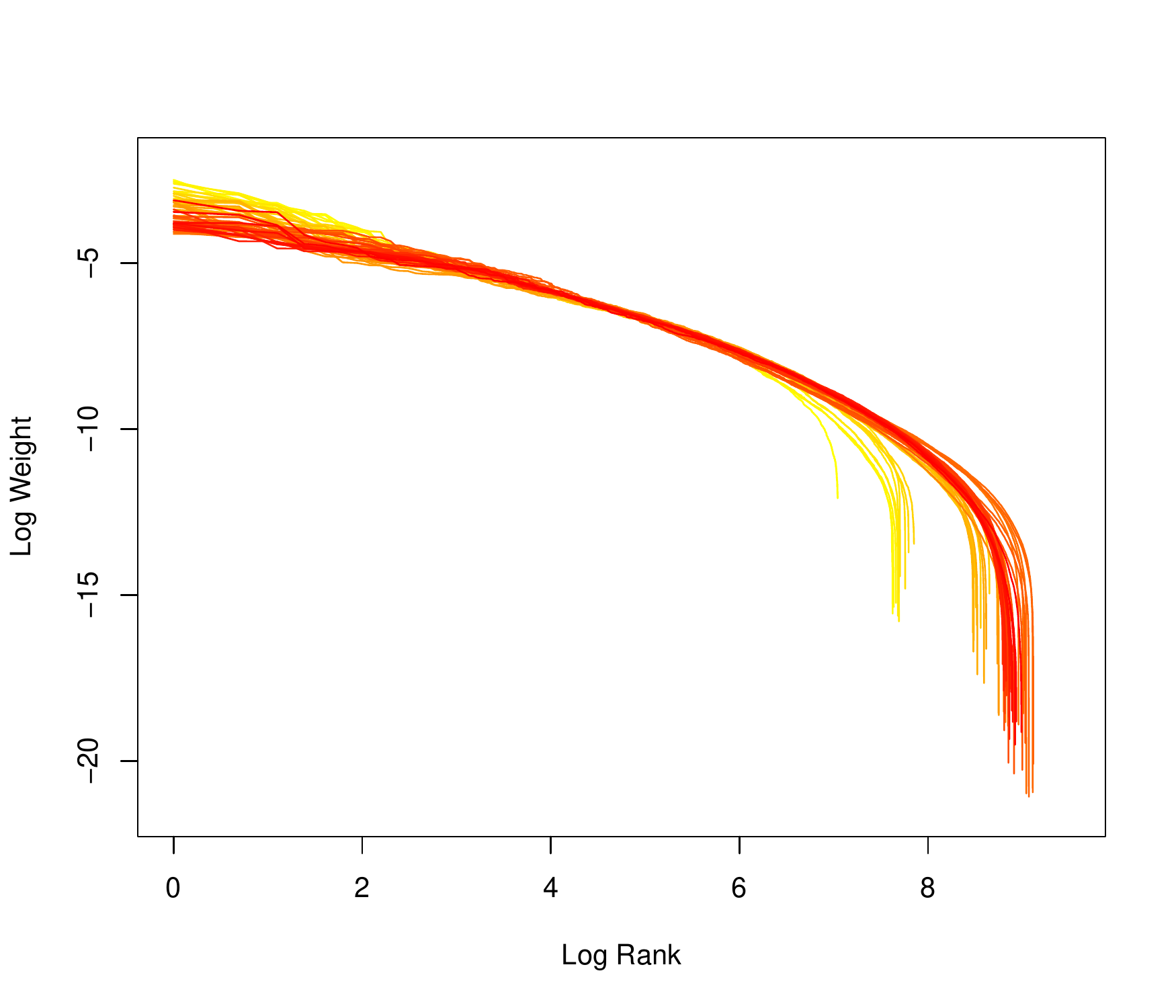}
\includegraphics[width=0.49\textwidth]{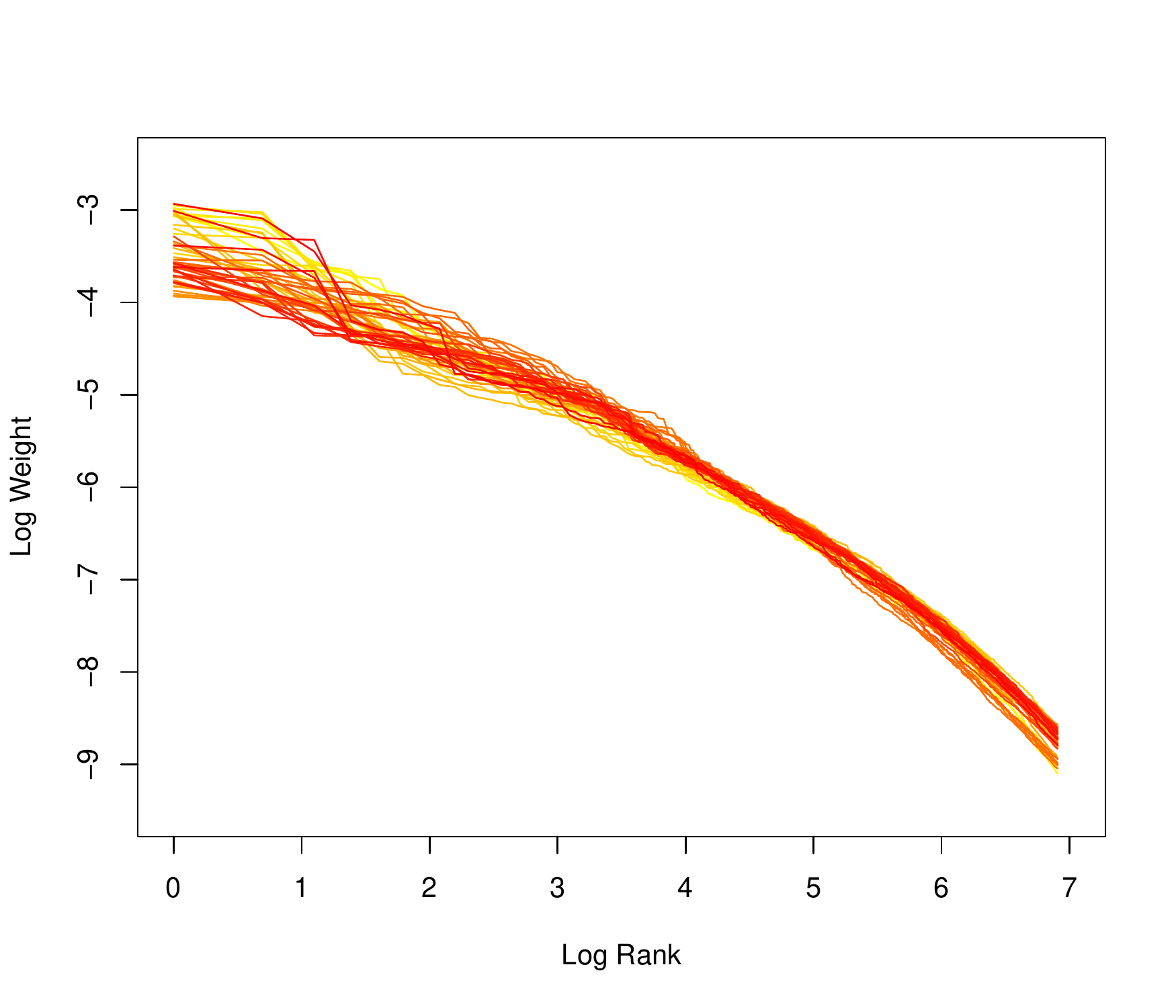}
\caption{Capital distribution curve of the CRSP universe from 1962 to 2021 (left) and restricted capital distribution curve of the largest 1000 stocks from 1973 to 2021 (right). The curves are coloured according to their chronological order with the most recent curves red, and the oldest curves yellow.} \label{fig:capdist2}
\end{figure}

As seen in Figure \ref{fig:capdist2} (left), the total number of stocks varies over time. To keep the dimension fixed, we consider in this section the daily capital distribution curves computed using the largest $n = 1000$ stocks (which vary each day) from January 1973 to December 2021. These curves are plotted in  Figure \ref{fig:capdist2} (right). In December 1972, data from the Nasdaq Stock Exchange was added to the CRSP universe and caused a material change in the capital distribution. Hence for consistency we restrict to dates after this addition. Over the chosen period the top $1000$ stocks generally capture more than $85\%$ of the total market capitalization. While each $w_{(\cdot)}(t)$ may be regarded as a probability distribution on the finite set $\mathcal{X} = \{1, \ldots, n\}$, the set $\mathcal{X}$ has no natural metric structure and so the Wasserstein metric may not be appropriate.\footnote{If we endow $\mathcal{X}$ with the structure of a weighted graph, we may consider optimal transport along the lines of \parencite{M11}. But CPCA cannot be applied, at least directly, to the resulting Wasserstein-like space which can be shown to be Riemannian.} 
The Euclidean geometry (on $\Delta_n$ as a subset of $\mathbb{R}^n$) is inappropriate for this data-set since in financial contexts comparing sizes by their {\it ratios} is often more meaningful than their absolute difference. With this in mind, we endow the simplex $\Delta_n$ with the {\it Aitchison geometry} in {\it compositional data analysis} \parencite{EDP06}, and use it to study fluctuations of the capital distribution curve using convex PCA. The Aitchison geometry, which we review in Section \ref{sec:Aitchison}, respects the relative (rather than absolute) scale of the components in the simplex and clearly distinguishes the boundary of the simplex from the interior. %We provide a brief review of the main ideas in the next subsection.

\subsection{Aitchison geometry} \label{sec:Aitchison}
% We review briefly the Aitchison geometry (for additional details see 
% \cite{EDP06}). 

For ${\bf x} \in (0, \infty)^n$, define its {\it closure} $\mathcal{C}[{\bf x}] \in \Delta_n$ by $\mathcal{C}[{\bf x}] = \left( \frac{x_1}{\sum_i x_i}, \ldots \frac{x_n}{\sum_i x_i} \right)$. Now define the following operations on $\Delta_n$:
\begin{itemize}
    \item[(i)] ({\it Perturbation}) For ${\bf p}, {\bf q} \in \Delta_n$, define $\mathbf{p} \oplus \mathbf{q} = \mathcal{C} [ (p_1q_1, \ldots, p_nq_n)]$.
    \item[(ii)] ({\it Powering}) For ${\bf p} \in \Delta_n$ and $\alpha \in \mathbb{R}$, define $\alpha \odot p = \mathcal{C} [ (p_1^{\alpha}, \ldots, p_n^{\alpha})]$.
\end{itemize}
Under these operations, $\Delta_n$ is an $(n - 1)$-dimensional real vector space. The identity element is the barycenter $\overline{\mathbf{e}} = \left(\frac{1}{n}, \ldots, \frac{1}{n} \right)$, and the additive inverse of $\mathbf{p}$ is $\mathbf{p}^{-1} = \ominus \mathbf{p} := \mathcal{C}[(p_1^{-1}, \ldots, p_n^{-1})]$. Next define the {\it Aitchison inner product} by
\begin{equation} \label{eqn:simplex.inner.product}
\langle \mathbf{p}, \mathbf{q} \rangle_A = \frac{1}{2n} \sum_{i, j = 1}^n \log \frac{p_i}{p_j} \log \frac{q_i}{q_j}.
\end{equation}	
With this $\Delta_n$ is a Hilbert space. It is isometric with $\mathbb{R}^{n-1}$ via the {\it isometric log ratio transform} given by
\begin{equation} \label{eqn:ilr}
\mathrm{ilr}(\mathbf{p}) = \left(\sqrt{\frac{i}{i+1}}\log\frac{(p_1\dots p_i)^{\frac{1}{i}}}{p_{i+1}}\right)_{1 \leq i \leq n - 1},
\end{equation}
which has the form $\mathrm{ilr}(\mathbf{p}) = (\langle {\bf p}, {\bf e}_i \rangle_A)_{1 \leq i \leq n - 1}$ for a specific orthonormal basis. (The result of the CPCA on the Aitchison simplex is independent of the choice of the basis.)

%The relevance of convex PCA stems from the following observation.

\begin{lemma} \label{lem:ordered.simplex.cone}
Under the Aitchison geometry, the ordered simplex $\Delta_{n, \geq}$ is a polyhedral convex cone. In particular, we have $\mathrm{ilr}(\Delta_{n, \geq}) = \{x \in \mathbb{R}^{n-1}: Ax \geq 0\}$, where 
\[A = (a_{ij})_{i,j=1}^{n-1}=\left(\sqrt{\frac{j+1}{j}} \delta_{i,j} -\sqrt{\frac{j}{j+1}} \delta_{i,j+1}\right)_{i,j=1}^{n-1},
\]
and $\delta_{i,j}$ is the Kronecker delta.
\end{lemma}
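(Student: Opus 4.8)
The plan is to turn the statement into a single explicit computation with the isometric log-ratio transform. Since $\mathrm{ilr}\colon\Delta_n\to\mathbb{R}^{n-1}$ is a linear isometry of the Aitchison Hilbert space onto Euclidean space, it is in particular a bijection, so it suffices to show that, writing $\mathbf y=\mathrm{ilr}(\mathbf p)$ and $\ell_k=\log p_k$, the defining constraints $p_1\ge p_2\ge\cdots\ge p_n$ of $\Delta_{n,\ge}$ are equivalent to $A\mathbf y\ge 0$ for the matrix $A$ in the statement. Granting this, bijectivity of $\mathrm{ilr}$ gives $\mathrm{ilr}(\Delta_{n,\ge})=\{x\in\mathbb{R}^{n-1}\colon Ax\ge 0\}$, which is manifestly a polyhedral convex cone, and transporting this description back through the linear isometry $\mathrm{ilr}$ shows that $\Delta_{n,\ge}$ is itself a polyhedral convex cone in the Aitchison geometry.

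The substance is to express $\ell_i-\ell_{i+1}$ (which is nonnegative exactly when $p_i\ge p_{i+1}$) in terms of $\mathbf y$. From \eqref{eqn:ilr} we have $y_i=\sqrt{\tfrac{i}{i+1}}\bigl(\tfrac1i\sum_{k=1}^{i}\ell_k-\ell_{i+1}\bigr)$, which I would rearrange in two ways: into the forward recursion $\ell_{i+1}=\tfrac1i\sum_{k=1}^{i}\ell_k-\sqrt{\tfrac{i+1}{i}}\,y_i$, and (reading the same relation with $i$ replaced by $i-1$) into the identity $(i-1)\ell_i-\sum_{k=1}^{i-1}\ell_k=-\sqrt{i(i-1)}\,y_{i-1}$, both with the convention $y_0=0$. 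Substituting the recursion for $\ell_{i+1}$ gives $\ell_i-\ell_{i+1}=\tfrac1i\bigl(i\ell_i-\sum_{k=1}^{i}\ell_k\bigr)+\sqrt{\tfrac{i+1}{i}}\,y_i$; since $i\ell_i-\sum_{k=1}^{i}\ell_k=(i-1)\ell_i-\sum_{k=1}^{i-1}\ell_k$, the identity above together with $\tfrac1i\sqrt{i(i-1)}=\sqrt{\tfrac{i-1}{i}}$ yields
\[
\ell_i-\ell_{i+1}=\sqrt{\tfrac{i+1}{i}}\,y_i-\sqrt{\tfrac{i-1}{i}}\,y_{i-1},\qquad i=1,\dots,n-1.
\]
The right-hand side is precisely the $i$-th entry of $A\mathbf y$ for the stated $A$, whose $i$-th row has $\sqrt{(i+1)/i}$ in column $i$ and $-\sqrt{(i-1)/i}$ in column $i-1$; hence $\mathbf p\in\Delta_{n,\ge}\iff A\mathbf y\ge 0$, as required.

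I expect the coefficient bookkeeping just described — verifying the auxiliary identity and telescoping the recursion without sign or index slips — to be the only real obstacle; everything else is formal, the key point being that the order constraints are differences of log-coordinates and hence linear in the ilr coordinates, which is exactly why the image is polyhedral. A tidier variant I might use instead factors $\mathrm{ilr}$ through the centered log-ratio map $\mathrm{clr}(\mathbf p)=\bigl(\ell_k-\tfrac1n\sum_j\ell_j\bigr)_{k=1}^{n}=V\,\mathrm{ilr}(\mathbf p)$, where the columns of the $n\times(n-1)$ matrix $V$ are the clr-images of the orthonormal basis $\{\mathbf e_i\}$ of \eqref{eqn:ilr}, explicitly the Helmert-type vectors with entries $1/\sqrt{i(i+1)}$ in positions $1,\dots,i$, entry $-\sqrt{i/(i+1)}$ in position $i+1$, and $0$ elsewhere; then $\ell_j-\ell_{j+1}=(\mathbf u_j-\mathbf u_{j+1})^{\top}V\,\mathrm{ilr}(\mathbf p)$ with $\mathbf u_j$ the standard coordinate vectors of $\mathbb{R}^n$, so the $j$-th row of $A$ is $V^{\top}(\mathbf u_j-\mathbf u_{j+1})$ and the claimed entries fall out of those dot products. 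Finally I would record the trivial consistency checks that $\Delta_{n,\ge}$ contains the Aitchison origin $\overline{\mathbf e}=(1/n,\dots,1/n)$ and is closed under $\oplus$ (since $p_iq_i\ge p_{i+1}q_{i+1}$ whenever both factors are ordered) and under $\odot$ by $\alpha\ge 0$ (since $t\mapsto t^{\alpha}$ is nondecreasing on $(0,\infty)$ and the closure $\mathcal C$ only rescales positively), which independently confirm that it is an Aitchison convex cone.
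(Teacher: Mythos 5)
Your proposal is correct, and since the paper's proof of this lemma is simply ``Omitted,'' your direct computation supplies exactly the verification that was left out: the identity $\log p_i-\log p_{i+1}=\sqrt{\tfrac{i+1}{i}}\,y_i-\sqrt{\tfrac{i-1}{i}}\,y_{i-1}$ (with $y_0=0$) checks out, it matches the stated bidiagonal $A$, and bijectivity plus linearity of $\mathrm{ilr}$ then gives both the image description and the cone property. The closing remarks on closure under $\oplus$ and nonnegative $\odot$ are harmless but not needed once the polyhedral description is established.
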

\begin{proof}
Omitted. Note that the matrix $A$ is bidiagonal which is helpful for implementation.
\end{proof}

\subsection{Convex principal components and market diversity}
To perform nested CPCA for the capital distribution curves, we first map the data to $\mathbb{R}^{n-1}$ via the isometric log ratio transform \eqref{eqn:ilr}. After performing CPCA on the resulting polyhedral convex cone (see Lemma \ref{lem:ordered.simplex.cone}), we can interpret the result on the (ordered) simplex by applying the inverse transformation.

\begin{figure}[t!]
    \centering
    \begin{center}
        \includegraphics[width=0.49\textwidth]{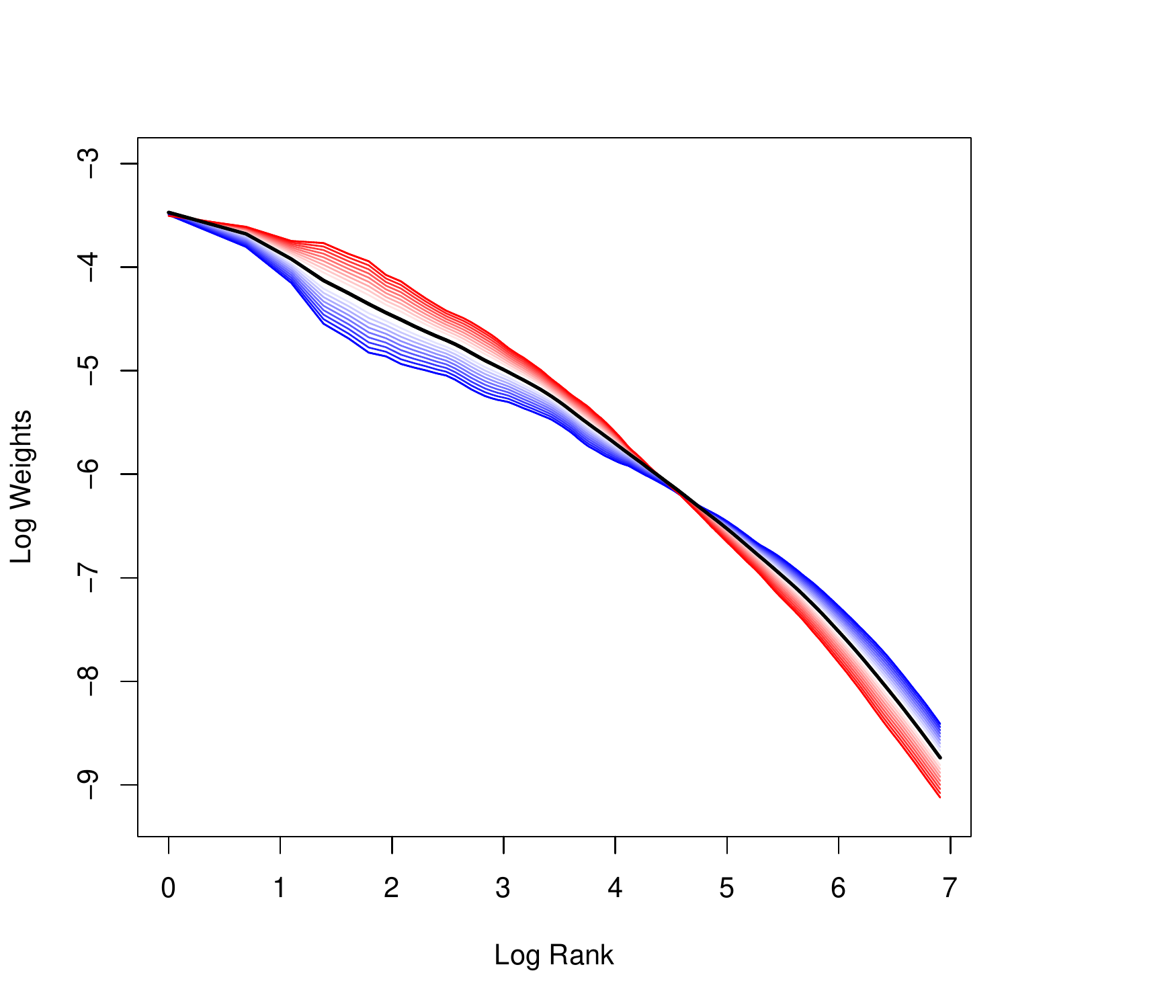}
        \includegraphics[width=0.49\textwidth]{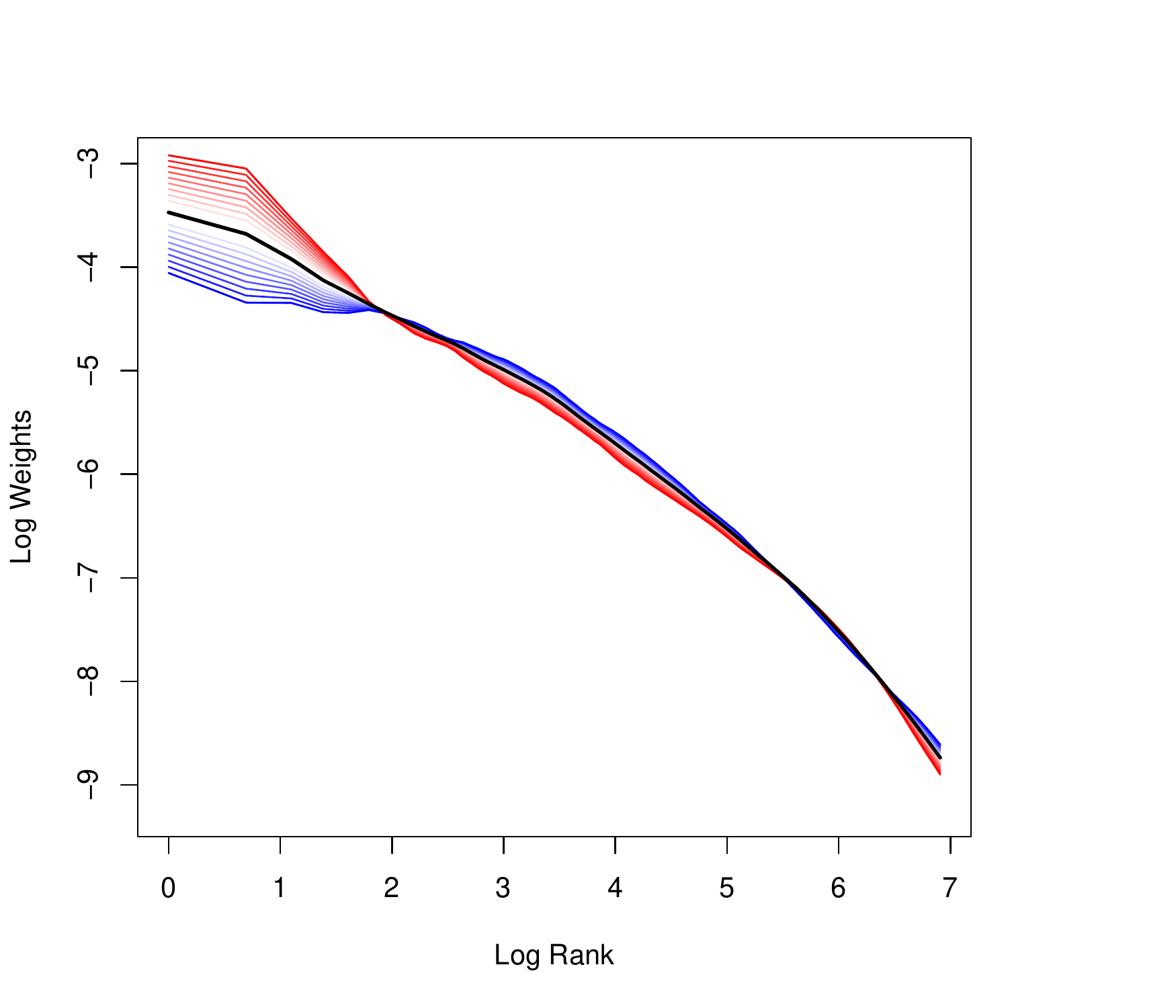}
    \end{center}
    \caption{Perturbation of the average capital distribution curve in the Aitchison geometry with respect to the first (left) and second (right) convex principal directions. The first principal direction corresponds roughly to a tilting of the curve (excluding the largest 2 or 3 stocks) about a point near the center. The second direction corresponds roughly to the relative size of the largest few stocks.}
    \label{fig:First.Two.CPCs.Cap.Dist}
\end{figure}

In Figure \ref{fig:First.Two.CPCs.Cap.Dist} we illustrate the perturbations about the average capital distribution curve (under the Aitchison geometry) along each of the first two convex principal directions\footnote{We note that for the illustrated time period, since the data is concentrated away from the boundary of $\Delta_{n,\geq}$, the convex principal directions do not differ significantly from the standard principal components in the Aitchison geometry.}. The proportion of explained variation is $62.5\%$ for the first principal direction and $78.8\%$ for the first two. Perturbations along the first principal direction correspond to tilting of the curve about a point near the center (in log-log scale) while fixing for the most part the largest few stock. On the other hand, perturbations in the second principal direction may be interpreted in terms of (correlated) fluctuations of the largest few stocks relative to the rest of the universe.

Remarkably, the first principal direction is closely related to the change in market diversity. In Figures \ref{fig:scatter.plots} and \ref{fig:ts.plots} (left) we compare (i) the projected coordinates of the capital distribution curves on the first principal direction and (ii) market diversity measured by $\mathsf{D}_{\lambda}$ (see \eqref{eqn:diversity}) with $\lambda = 0.5$. We observe that the two series are very strongly correlated. This finding gives a \textit{statistical} justification for market diversity and suggests that it is an excellent one-dimensional summary of the capital distribution curve. We also observe that there appears to be a change in the relationship during the COVID health crisis. The relationship in Figure \ref{fig:scatter.plots} seems to experience a parallel shift and the resulting points appear similar to those observed in the 1970s and 1980s. Turning to the second principal component (shown in Figures \ref{fig:scatter.plots} and \ref{fig:ts.plots} (right)), we observe a fairly significant positive correlation between the projected coordinate and the (relative and combined) weight of the largest two stocks. These findings suggest that the capital distribution curve is not as stationary as what Figure \ref{fig:capdist2} and the rank-based diffusion models might suggest.

Taken together, the results of our CPCA suggest that the dominant features of the capital curve are the change in market diversity and the idiosyncratic fluctuations of the largest stocks. In future research, we plan to use these insights to improve existing rank-based models and their calibration.

\begin{figure}
    \centering
    \begin{center}
        \includegraphics[width=0.49\textwidth]{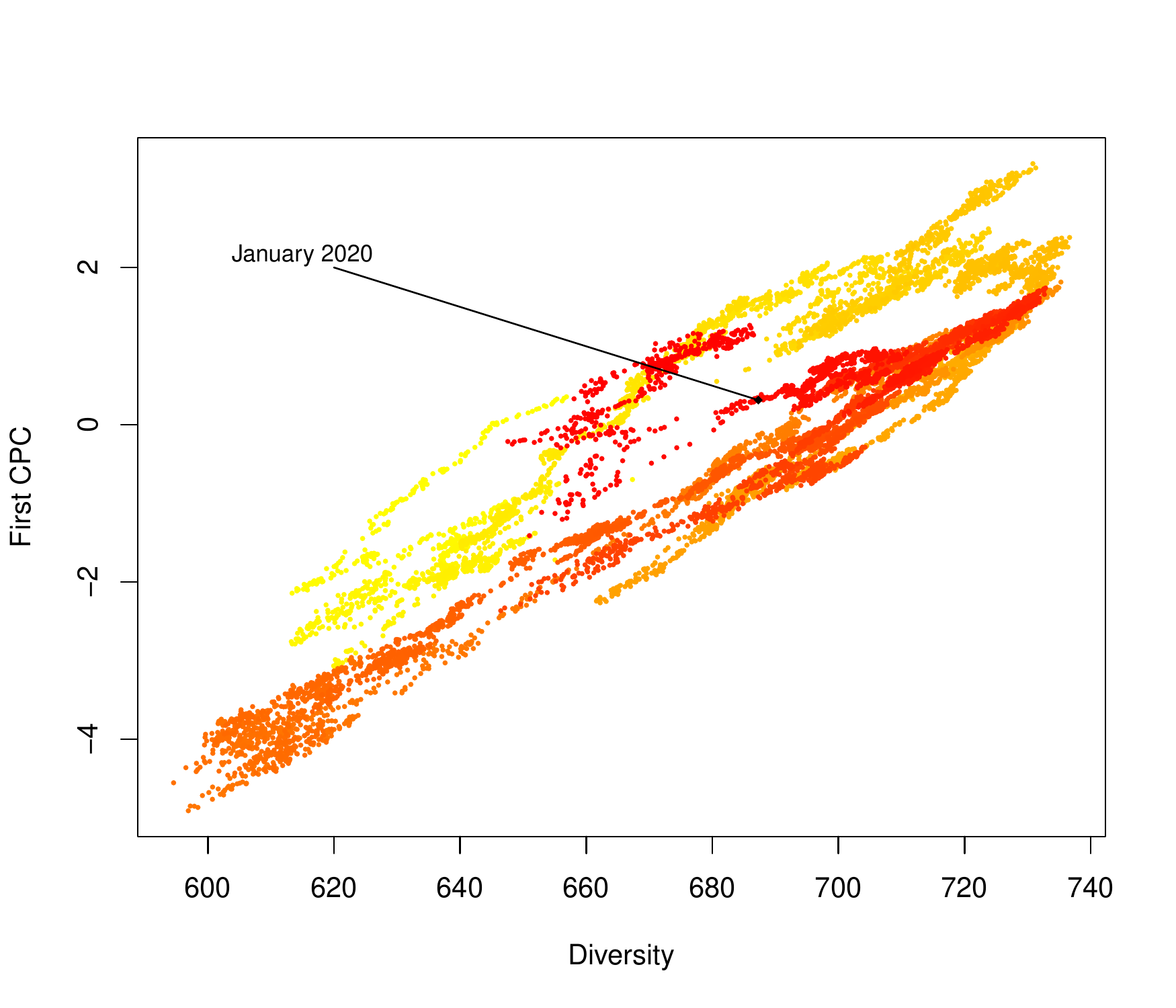}
        \includegraphics[width=0.49\textwidth]{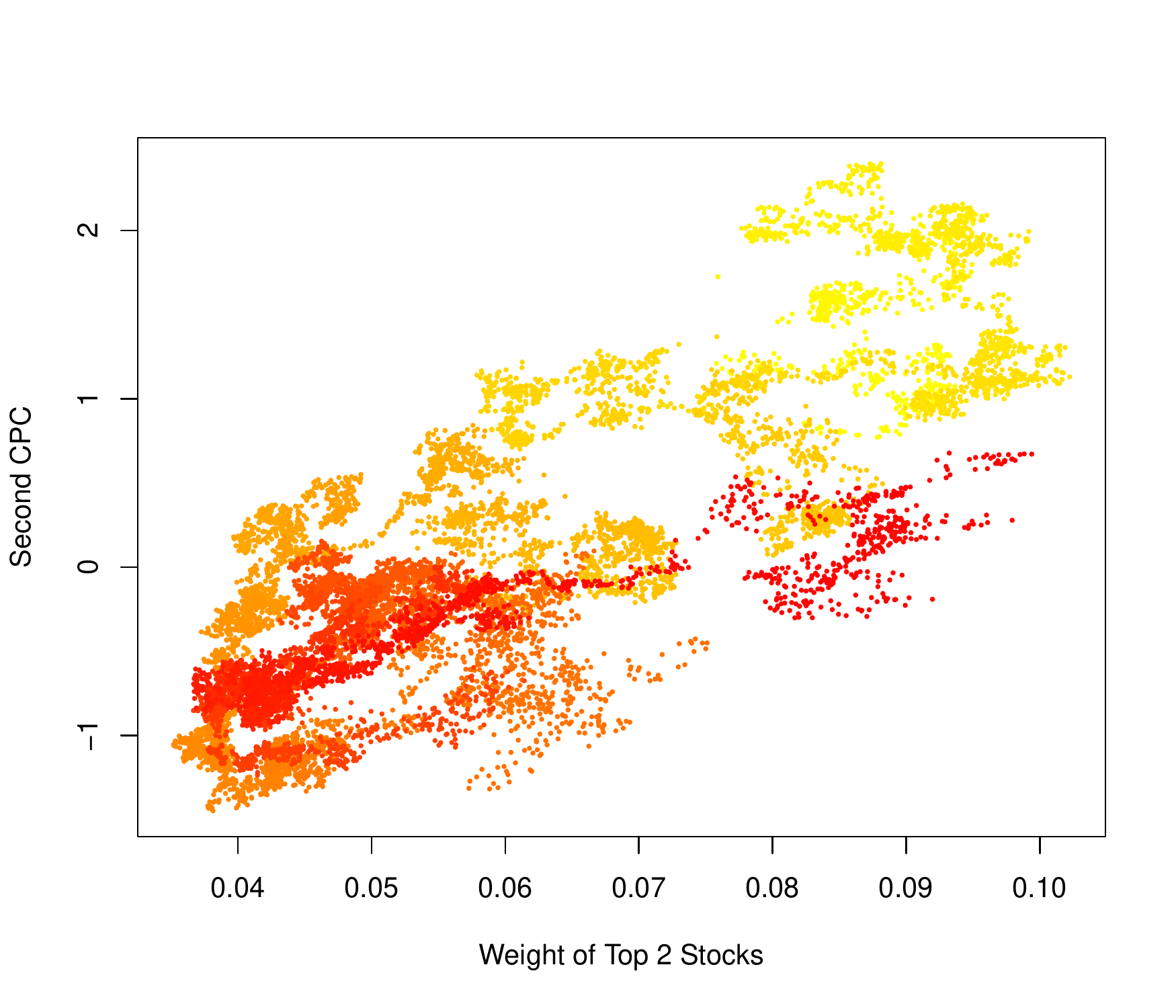}
    \end{center}
    \vspace{-0.5cm}
    \caption{Left: Projected coordinate with respect to the first convex principal direction vs market diversity. Right: Projected coordinate with respect to the second convex principal direction vs the combined market weight of the largest two stocks. The former variables have a strong positive correlation of about $0.91$ and the latter variables have a slightly weaker, but still significant, correlation of $0.8$. In both images, the color gradient illustrates the time of the data point (lighter (1973) -- darker (2021)).}
    
    %Market diversity (top left) and Renormalized 1st Convex PC and Market Diversity paths (top right). A scatter plot of Market Diversity and the first Convex PC (bottom) also shows a high positive correlation $\approx 0.99$
    \label{fig:scatter.plots}
\end{figure}

\begin{figure}[t!]
    \centering
    \begin{center}
        \includegraphics[width=0.49\textwidth]{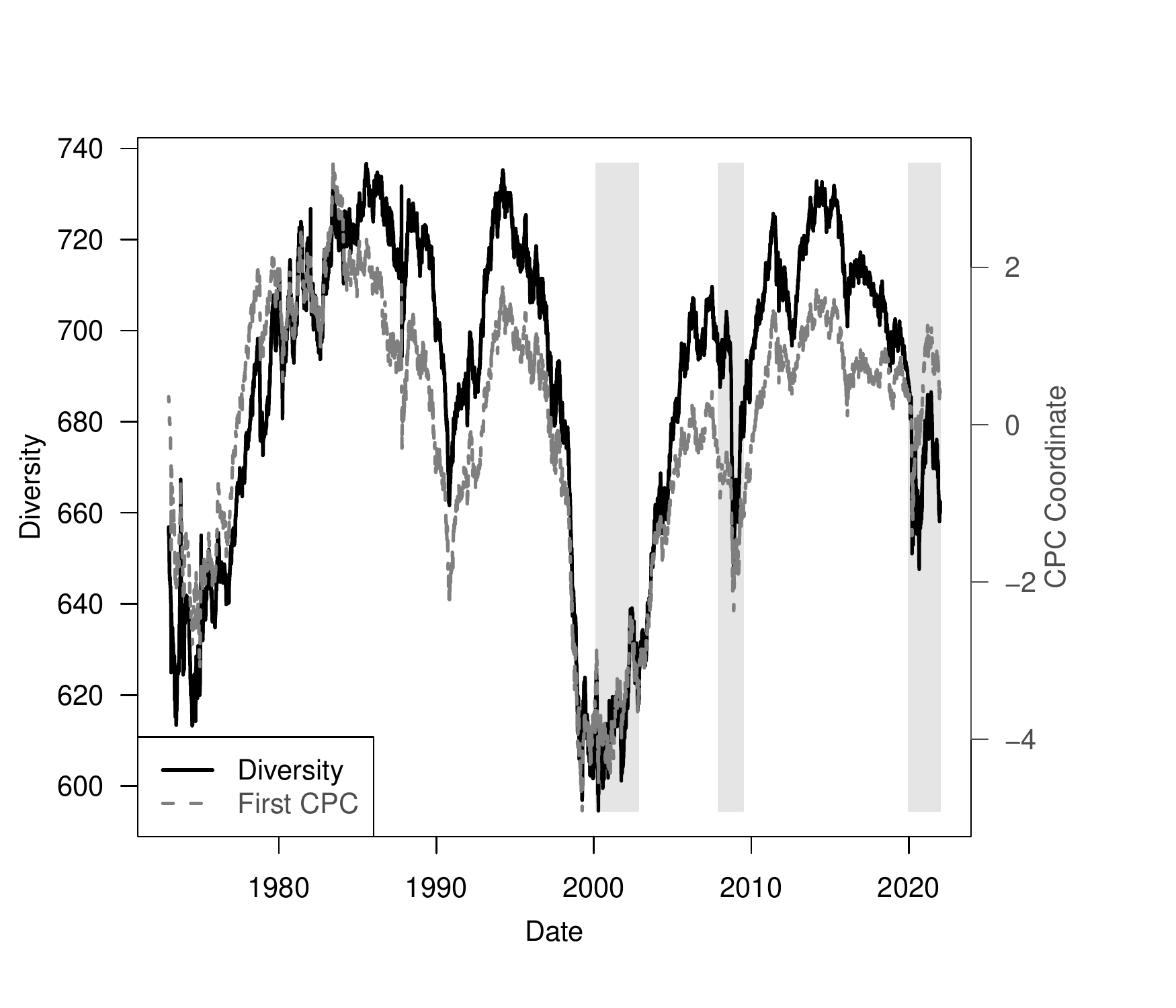}
        \includegraphics[width=0.49\textwidth]{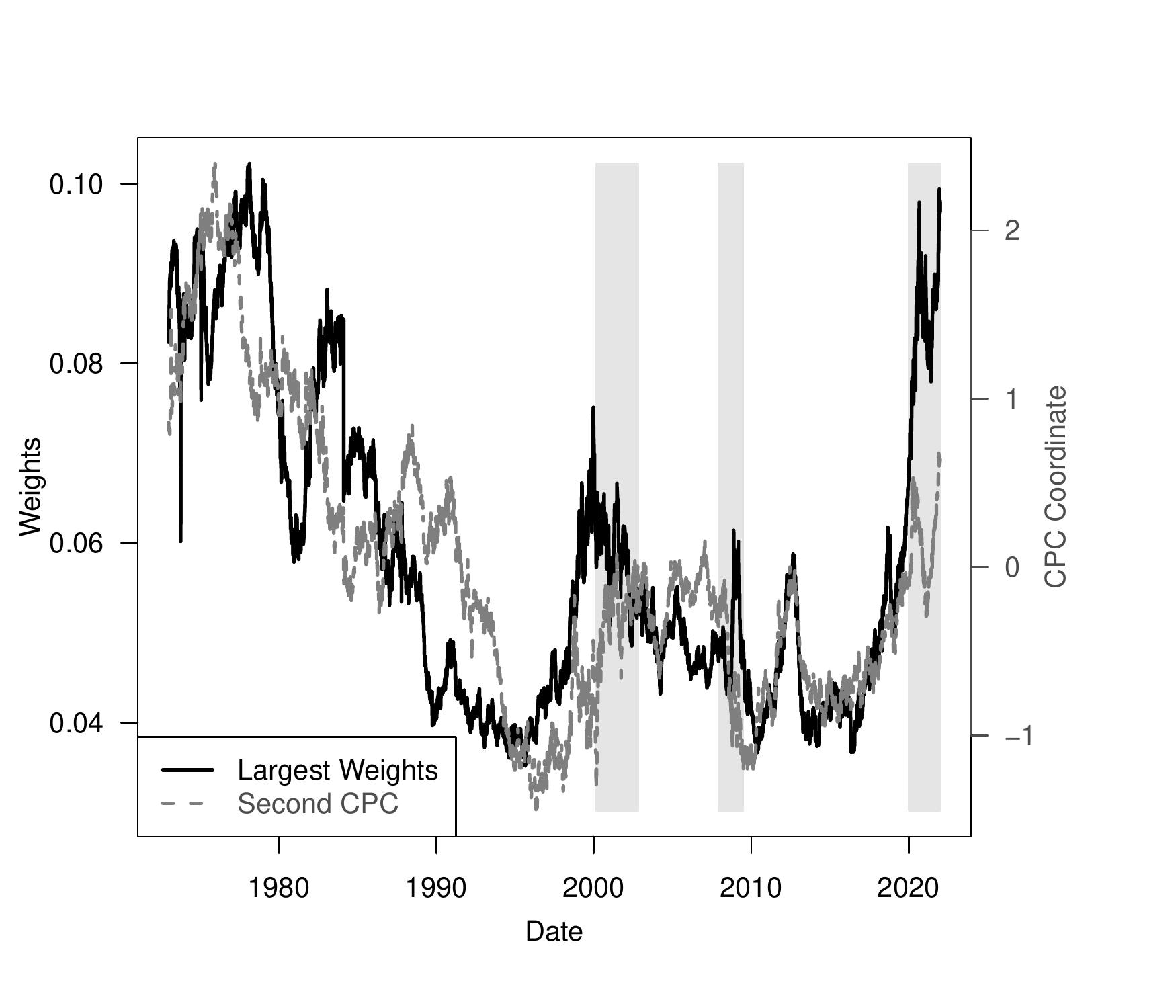}
    \end{center}
    \vspace{-0.5cm}
    \caption{Time series plot of the coordinate on the first convex principal direction with market diversity (left), and the coordinate on the second convex principal direction with the sum of the largest two market weights (right). The vertical grey bars correspond to the dot-com crash, the 2008 recession, and the COVID crisis.}
    \label{fig:ts.plots}
\end{figure}

\section{Conclusion and discussion} \label{sec:conclusion}
In this paper we contribute to both the theory and application of convex PCA. We establish several new theoretical result and develop an efficient algorithm for solving large CPCA problems that is amenable to parallelization and complements the approach of \cite{PB22}. This algorithm can be readily specialized to solve (a finite dimensional approximation of) Wasserstein GPCA on a finite interval. We also provide two interesting financial applications. 

We discuss here several directions for future research.
\begin{itemize}
\item[(i)] While we provide several consistency results in the finite dimensional case, infinite dimensional analogues of Theorems \ref{thm:Global.CPCA.measure.consistency} and \ref{thm:nested.CPCA.measure.consistency} remain open. In particular, we rely crucially on the compactness of the state space $X$ and that it has a nonempty interior. It appears that a different approach is needed for the general case.
\item[(ii)] Existing works on Wasserstein GPCA focus on the one-dimensional setting because the optimal transport maps are simply non-decreasing functions. In general, optimal transport maps are gradients of convex functions by Brenier's theorem, but they are much less tractable when $d \geq 2$. Also, computation of the optimal transport map is much more expensive in multiple dimensions. Another direction of interest is to formulate probabilistic models for Wasserstein GPCA and related algorithms.
\item[(iii)] Our empirical study highlights behaviours of the capital distribution curve and rank-based properties of stocks. Naturally, we hope that these stylized properties are preserved in stochastic models of the equity market. While the statistical properties of individual asset prices have been intensively studied in financial econometrics (see e.g.~\cite{cont2001empirical}), it is not easy to combine them to formulate realistic macroscopic models involving hundreds or thousands of stocks. The rank-based models in stochastic portfolio theory, which are based on systems of interacting It\^{o} processes, fail to capture short-term fluctuations that are relevant in portfolio selection, as they rely on the ergodic properties of Markov processes. On the other hand, dynamic factor models (see e.g.~\cite{bai2016econometric} and the references therein) capture the (possibly dynamic) correlation structure among asset returns but fail to reproduce macroscopic behaviours such as the stability of the capital distribution curve when compounded over time. The construction of realistic macroscopic models, with applications in portfolio management, is an important direction which we hope to address in future research.

\end{itemize}

\section*{Supplementary Materials}
%Proofs of all theoretical results and additional algorithmic considerations are included in a supporting appendix. An online code supplement that implements our CPCA algorithm in 
%Section \ref{sec:algorithm} with illustrative examples can be found at \url{https://github.com/stevenacampbell/ConvexPCA}. The complete list of supplementary materials accompanying this article is included below.

\begin{description}
    \item[Implementation in R and \CC:] Code and README files are available at \url{https://github.com/stevenacampbell/ConvexPCA}. Examples are included that reproduce the analysis of this paper on simulated data. All code files necessary to replicate the numerical experiments of Appendix \ref{app:comparision} are also provided.

    \item[Supporting Appendix:] It provides proofs, algorithmic details and supplementary results.
\end{description}

\section*{Acknowledgment}
This work is partially supported by NSERC Grant RGPIN-2019-04419, a Seed Funding for Methodologists Grant from the Data Sciences Institute (DSI) at the University of Toronto, and an NSERC Alexander Graham Bell Canada Graduate Scholarship (Application No.~CGSD3-535625-2019). The authors report there are no competing interests to declare.

\newpage
\appendix

\counterwithin{figure}{section}
\addcontentsline{toc}{section}{}

\section*{Appendix}

\bigskip

The appendix consists of three parts. Section \ref{sec:app.A} provides proofs of our theoretical results. Section \ref{sec:appendix.alg.considerations} provides details of our algorithm and implementation. We also provide a comparison with alternative approaches. Finally, Section \ref{sec:appendix.external.def.and.results} collects miscellaneous definitions and results that are used in the proofs.

\section{Proofs}
\label{sec:app.A}

% \subsection{ddd}
% \subsection{Details of numerical implementation}

\subsection{Proofs from Section \ref{sec:CPCA.general}}\label{sec:appendix.CPCA.general}

\subsubsection{Proof of Proposition \ref{prop:explained.var.projection}}

\begin{lemma}\label{lem:incr.haus.conv}
Suppose $X$ is compact. Let $(C_k)_{k\geq1}$ be an increasing sequence of non-empty compact sets contained in $X$ and $C = \overline{\bigcup_{k}C_k}$. Then $\mathsf{h}(C_k,C)\to 0$ where $\mathsf{h}$ is the Hausdroff metric.
\end{lemma}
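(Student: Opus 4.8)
The plan is to first reduce the two-sided Hausdorff distance to a one-sided quantity, and then exploit compactness of $C$ together with the density of $\bigcup_k C_k$ in $C$.

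Since $C_k \subseteq \bigcup_j C_j \subseteq C$ for every $k$, the ``$C_k$-side'' of the Hausdorff distance vanishes: $\sup_{y \in C_k} \mathsf{d}(y, C) = 0$. Hence $\mathsf{h}(C_k, C) = \sup_{x \in C} \mathsf{d}(x, C_k)$, and it suffices to show this quantity tends to $0$. I would also record that, because the sequence $(C_k)$ is increasing, $k \mapsto \mathsf{d}(x, C_k)$ is non-increasing for each fixed $x$, so $k \mapsto \sup_{x \in C} \mathsf{d}(x, C_k)$ is non-increasing; thus it is enough to produce, for each $\epsilon > 0$, a single index $M$ with $\sup_{x \in C} \mathsf{d}(x, C_M) \le \epsilon$.

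Fix $\epsilon > 0$. Since $X$ is compact and $C$ is closed in $X$, the set $C$ is compact, so it admits a finite cover by balls $B(x_i, \epsilon/2)$, $i = 1, \dots, n$, with centers $x_i \in C$. Because $x_i \in C = \overline{\bigcup_k C_k}$, for each $i$ there is an index $m_i$ and a point $y_i \in C_{m_i}$ with $\mathsf{d}(x_i, y_i) < \epsilon/2$. Put $M = \max_i m_i$; since the $C_k$ increase, $y_i \in C_M$ for all $i$. Now for an arbitrary $x \in C$, pick $i$ with $x \in B(x_i, \epsilon/2)$; then $\mathsf{d}(x, C_M) \le \mathsf{d}(x, y_i) \le \mathsf{d}(x, x_i) + \mathsf{d}(x_i, y_i) < \epsilon$. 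Taking the supremum over $x \in C$ and using monotonicity gives $\mathsf{h}(C_k, C) \le \epsilon$ for all $k \ge M$, which proves the claim.

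There is no genuine obstacle here; the only points requiring care are to observe that $C_k \subseteq C$ (so the problem is one-sided) and that monotonicity of $(C_k)$ lets one pass from the finitely many indices $m_i$ to a single $M$. As an alternative one could argue by contradiction: if $\mathsf{h}(C_k, C) \ge \epsilon$ for all $k$, extract $x_k \in C$ with $\mathsf{d}(x_k, C_k) \ge \epsilon/2$, pass to a convergent subsequence $x_{k_j} \to x_* \in C$ by compactness, approximate $x_*$ to within $\epsilon/4$ by a point of some $C_m$, and reach a contradiction once $k_j \ge m$ and $\mathsf{d}(x_{k_j}, x_*) < \epsilon/4$. I would present the covering argument, since it avoids subsequences and delivers the stated convergence directly.
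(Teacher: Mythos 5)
Your proof is correct, and it takes a genuinely different route from the paper's. You reduce the problem to the one-sided quantity $\sup_{x\in C}\mathsf{d}(x,C_k)$ (valid since $C_k\subseteq C$), note its monotonicity in $k$, and then run a direct covering argument: cover the compact set $C$ by finitely many $\epsilon/2$-balls centered in $C$, approximate each center by a point of some $C_{m_i}$, and use the increasing structure to collapse to a single index $M$. This is elementary, self-contained, and even quantitative in the sense that it exhibits an explicit index beyond which the Hausdorff distance is at most $\epsilon$. The paper instead invokes the compactness of the hyperspace $(CL(X),\mathsf{h})$ (Proposition 3.2 of \cite{BGKL17}), extracts a convergent sub-subsequence from an arbitrary subsequence, uses the equivalence of Hausdorff and Kuratowski convergence on a compact space to identify every accumulation point with $\overline{\bigcup_k C_k}$, and concludes by the ``every subsequence has a further subsequence converging to the same limit'' principle. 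What the paper's argument buys is brevity given machinery (Lemma \ref{lem:kur.haus}, Definition \ref{def:kuratowski}) that is reused throughout the appendix, so it keeps the proofs stylistically uniform; what your argument buys is independence from that machinery — no appeal to compactness of $CL(X)$ or to Kuratowski convergence is needed — and it directly yields convergence of the full sequence without subsequence extraction. Either proof would be acceptable here.
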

\begin{proof}
By \cite[Proposition 3.2]{BGKL17}, $(CL(X),\mathsf{h})$ is a compact metric space. Along a fixed subsequence $k'$ there is a further subsequence $k''$ along which $\mathsf{h}(C_{k''},C^*)\to 0$ for some compact $C^*\in CL(X)$. Since $X$ is compact, convergence under Hausdorff metric is equivalent to Kuratowski convergence (see Lemma \ref{lem:kur.haus}). By the definition of Kuratowski convergence (Definition \ref{def:kuratowski}), for any $x\in C^*$ there exists a sequence $(x_{k''})$ such that $x_{k''} \in C_{k''}$ for each $k''$ and $x_{k''}\to x$. It follows that $C^*\subseteq\overline{\bigcup_{k}C_k}$. On the other hand, it is easy to see that $\bigcup_k C_k \subset C^*$. Thus $C^* = \overline{\bigcup_{k} C_k}$. Since the subsequence $k'$ is arbitrary, we have that $\mathsf{h}(C_k,C)\to 0$.
\end{proof}

\begin{proof}[Proof of Proposition \ref{prop:explained.var.projection}]
By Lemma \ref{lem:incr.haus.conv}, we have $\mathsf{h}(C_k,X)\to 0$. Since $\Pi_{C_k}(x) \in C_k$ by construction, we have
\[
\mathsf{d}(\Pi_{C_k}(x), x) \leq \mathsf{h}(C_k, X) \rightarrow 0, \quad x \in X. 
\]
Hence $\Pi_{C_k} \rightarrow \Pi_X = \mathrm{Id}$ pointwise on $X$. By the bounded convergence theorem, we have
\[
EV(C_k) = \frac{1}{TV} \mathbb{E}_{\mu} \left[ \mathsf{d}(\Pi_{C_k} {\bf x}, \overline{x})^2  \right] \rightarrow \frac{1}{TV} \mathbb{E}_{\mu} \left[ \mathsf{d}(\Pi_X {\bf x}, \overline{x})^2  \right] = 1.
\]
\end{proof}
\subsection{Proofs from Section \ref{sec:CPCA.Finite.Dim}} \label{sec.appendix.CPCA.Finite.Dim}

\subsubsection{Proof of Theorem \ref{thm:general.consistency}} \label{sec:proof.general.consistency} 

\begin{lemma} \label{lem:J.uniform.convergence}
Suppose $X$ is compact. If $\mu_n\to\mu$ in $\mathcal{P}(X)$ then $J(\cdot ; \mu_n) \rightarrow J(\cdot ; \mu)$ uniformly on $CL(X)$.
\end{lemma}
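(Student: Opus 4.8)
The plan is to exhibit $J(\cdot;\mu_n) - J(\cdot;\mu)$ as the integral, against $\mu_n - \mu$, of a test function drawn from a single equi-Lipschitz, uniformly bounded family, and then invoke convergence in the Wasserstein-$1$ metric.

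First I would write $J(C;\mu) = \int_X f_C \, \mathrm{d}\mu$ where $f_C(x) := \mathsf{d}^2(x, C)$, and record two elementary facts about the family $\{ f_C : C \in CL(X)\}$. Set $D := \mathrm{diam}(X) < \infty$. Since any $z \in C \subset X$ satisfies $\|x - z\| \leq D$ for $x \in X$, we have $0 \leq \mathsf{d}(x, C) \leq D$ and hence $0 \leq f_C \leq D^2$ on $X$, uniformly in $C$. Moreover, from the standard inequality $|\mathsf{d}(x, C) - \mathsf{d}(y, C)| \leq \|x - y\|$ we obtain
\[
|f_C(x) - f_C(y)| = \bigl(\mathsf{d}(x,C) + \mathsf{d}(y,C)\bigr)\,|\mathsf{d}(x,C) - \mathsf{d}(y,C)| \leq 2D\,\|x - y\|,
\]
so each $f_C$ is $2D$-Lipschitz with a constant that does not depend on $C$.

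Next, since $X$ is compact, weak convergence $\mu_n \to \mu$ is equivalent to $W_1(\mu_n, \mu) \to 0$ (as recalled after \eqref{eqn:Wasserstein.distance}). By Kantorovich--Rubinstein duality, for any $L$-Lipschitz $f : X \to \mathbb{R}$ one has $\left| \int_X f\, \mathrm{d}\mu_n - \int_X f\, \mathrm{d}\mu \right| \leq L\, W_1(\mu_n, \mu)$. Applying this with $f = f_C$ and $L = 2D$ yields
\[
\sup_{C \in CL(X)} |J(C;\mu_n) - J(C;\mu)| \leq 2D\, W_1(\mu_n, \mu) \longrightarrow 0 \quad (n \to \infty),
\]
which is precisely the asserted uniform convergence.

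There is no serious obstacle here; the only point requiring care is that the Lipschitz constant $2D$ must be chosen independently of $C$ — this is what upgrades pointwise convergence (for each fixed $C$) to convergence uniform over all of $CL(X)$. If one prefers to avoid optimal transport duality, an equivalent route is to note that $\{ f_C \}$ is relatively compact in $(C(X), \|\cdot\|_\infty)$ by Arzel\`a--Ascoli and then use that weak convergence integrates uniformly over sup-norm-compact families of test functions; the Wasserstein argument above is the shortest given what has already been cited.
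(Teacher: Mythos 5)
Your proof is correct and follows essentially the same route as the paper: establish that $\mathsf{d}^2(\cdot,C)$ is Lipschitz on $X$ with a constant independent of $C$ (you make it explicit as $2D$), then apply Kantorovich--Rubinstein duality and the equivalence of weak and $W_1$ convergence on the compact set $X$. No gaps.
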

\begin{proof}
Note that $\mathsf{d}(\cdot, C)$ is 1-Lipschitz for any $C\in CL(X)$. Since $X$ is compact, $\mathsf{d}(\cdot, C)$ is uniformly bounded above. Thus, there exists a constant $L > 0$, independent of $C \in CL(X)$, such that $\mathsf{d}^2(\cdot, C)$ is $L$-Lipschitz on $X$. By the Kantorovich-Rubenstein duality (see \cite[(6.3)]{V08}), we have, for any $C\in CL(X)$,
\[
|J(C;\mu_n)-J(C;\mu)| = \left|\mathbb{E}_{\mu_n}[\mathsf{d}^2(\mathbf{x},C)]-\mathbb{E}_{\mu}[\mathsf{d}^2(\mathbf{x},C)]\right|\leq LW_1(\mu_n,\mu). %\leq LW_2(\mu^{(n)},\mu).
\]
This completes the proof since by compactness of $X$ weak convergence is equivalent to convergence in $W_1$.
\end{proof}

\begin{proof}[Proof of Theorem \ref{thm:general.consistency}]
We proceed analogously to \cite[Theorem 6.1]{BGKL17}. As $J(\cdot;\mu)$ is continuous, by \cite[Remark 4.8]{dal2012introduction}, the uniform convergence $J(\cdot;\mu_n)\to J(\cdot;\mu)$ (proved in the lemma above) implies continuous convergence (Definition \ref{def:cont.convergence}). Moreover, by \cite[Remark 4.9]{dal2012introduction}, continuous convergence implies $\Gamma$-convergence. It follows that
\[
\Glim_{n \rightarrow \infty} J(\cdot;\mu_n)=J(\cdot;\mu).  
\]
Now \parencite[Proposition 6.20]{dal2012introduction} gives, by the continuous convergence of $J(\cdot,\mu_n)$, the finiteness of $J(\cdot,\cdot)$ (by the boundedness of $\mathsf{d}(\cdot,\cdot)$), and the $\Gamma$-convergence of the indicators, that
\[
\Glim_{n \rightarrow \infty} \left\{ J(\cdot,\mu_n)+\chi_{\mathfrak{C}_n}(\cdot)\right\}=J(\cdot,\mu)+\chi_{\mathfrak{C}}(\cdot).
\]
In view of the compactness of $CL(X)$ and Proposition \ref{prop:consistency.gamma.conv}, this completes the proof. 
\end{proof}

\subsubsection{Proof of Theorem \ref{thm:Global.CPCA.measure.consistency}}

We begin by noting a technical gap in \cite[Lemma 6.1]{BGKL17}. 

\begin{remark}[Technical gap in \cite{BGKL17}]\label{rmk:technical.gap}
In \cite[Lemma 6.1]{BGKL17}, the authors constructed the translated sets $C_n:=C+x_n-x$ and use them in the proof of $\Gamma$-convergence in their Theorem 6.1. However, these sets may not be contained in $X$. Indeed, for a counterexample it suffices to take $X$ to be the unit square in $\mathbb{R}^2$ and $C$ to be one of its diagonals. Any translation of $C$ moves it, at least partially, outside of $X$. 
\end{remark}

To overcome this technical issue, we assume that $H$ is finite dimensional and that $X$ has a nonempty interior containing the reference element $\overline{x}$. With these additional assumptions we are able to establish Theorem \ref{thm:Global.CPCA.measure.consistency} and Theorem \ref{thm:nested.CPCA.measure.consistency}. 

\medskip

The following results are needed in the proof of Theorem \ref{thm:Global.CPCA.measure.consistency}. We say that sets $A,B\subset H$ are {\it weakly linearly separable} if there exists a non-zero linear functional $f: H \rightarrow \mathbb{R}$ and $c\in\mathbb{R}$ such that $f(a)\geq c\geq f(b)$ for all $a\in A$ and $b\in B$.

\begin{theorem}\label{thm:haus.intersection}
Let $H$ be finite dimensional and $K,L\subset H$ be non-empty, compact and convex sets that are not weakly linearly separable. %by a hyperplane. 
If $K_i,L_i\subset H$ are also non-empty, compact and convex sets with $\mathsf{h}(K_i,K)\to0$, $\mathsf{h}(L_i,L)\to 0$ as $i\to\infty$, then $K_i\cap L_i\not=\emptyset$ for almost all $i$ and $\mathsf{h}(K_i\cap L_i,K\cap L)\to 0$ as $i\to\infty$.
\end{theorem}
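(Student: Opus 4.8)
The plan is to prove the two conclusions separately: first nonemptiness of $K_i \cap L_i$ for large $i$, then Hausdorff convergence $\mathsf{h}(K_i \cap L_i, K \cap L) \to 0$. For nonemptiness, I would argue by contradiction. If $K_i \cap L_i = \emptyset$ for infinitely many $i$, then by the separating hyperplane theorem (valid since $K_i, L_i$ are disjoint, nonempty, compact and convex in finite dimensions) there is, for each such $i$, a unit vector $u_i \in H$ and a scalar $c_i$ with $\langle u_i, a\rangle \ge c_i \ge \langle u_i, b\rangle$ for all $a \in K_i$, $b \in L_i$. Since $K$ and $L$ are bounded and the Hausdorff convergence gives uniform bounds, the $c_i$ lie in a bounded set. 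Passing to a subsequence, $u_i \to u$ with $\|u\| = 1$ and $c_i \to c$. Using $\mathsf{h}(K_i, K) \to 0$ and $\mathsf{h}(L_i, L) \to 0$, one checks that $\langle u, a\rangle \ge c \ge \langle u, b\rangle$ for all $a \in K$, $b \in L$: indeed any $a \in K$ is a limit of points $a_i \in K_i$, and $\langle u_i, a_i\rangle \ge c_i$ passes to the limit. This exhibits a weak linear separation of $K$ and $L$, contradicting the hypothesis. Hence $K_i \cap L_i \neq \emptyset$ for all large $i$.

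For the convergence, recall that on the compact metric space $(CL(X'), \mathsf{h})$ — with $X'$ a large closed ball containing all the sets in question — Hausdorff convergence is equivalent to Kuratowski convergence (Lemma \ref{lem:kur.haus}), so it suffices to verify the Kuratowski upper and lower limit conditions for $K_i \cap L_i \to K \cap L$. The upper inclusion ($\limsup$) is the easy direction: if $x_{i'} \in K_{i'} \cap L_{i'}$ along a subsequence and $x_{i'} \to x$, then $x \in K$ (as $\mathsf{h}(K_i,K)\to 0$) and likewise $x \in L$, so $x \in K \cap L$. The lower inclusion ($\liminf$) is the delicate part: given $x \in K \cap L$, I must produce $x_i \in K_i \cap L_i$ with $x_i \to x$. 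The issue is that approximating $x$ within $K_i$ and within $L_i$ separately gives two generally different points, and there is no reason the naive choice lies in the intersection. This is where the hypothesis that $K$ and $L$ are \emph{not} weakly linearly separable must be used quantitatively: non-separability forces the relative interiors of $K$ and $L$ to intersect (in finite dimensions, two convex sets are weakly separable iff their relative interiors are disjoint), so $K \cap L$ has nonempty relative interior and in particular contains a point $x_0$ that is an interior point of $K \cap L$ relative to the affine hull, or better, a point lying in $\mathrm{int}(K) \cap \mathrm{int}(L)$ if those interiors meet.

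The cleanest route for the lower inclusion is: first handle a point $x_0 \in \mathrm{relint}(K) \cap \mathrm{relint}(L)$ (or, after reducing to the affine hull, a point with a full-dimensional ball around it inside both $K$ and $L$), show that such $x_0$ eventually lies in $K_i \cap L_i$ because a small fixed ball around it is eventually contained in both $K_i$ and $L_i$ (a standard consequence of Hausdorff convergence of convex sets with nonempty interior — this uses finite dimensionality essentially); then, for a general $x \in K \cap L$, write $x$ as a limit of convex combinations $x_t = (1-t)x_0 + t x \to x$ as $t \to 0$, note each $x_t \in \mathrm{relint}(K) \cap \mathrm{relint}(L)$ by the line-segment principle for convex sets, hence $x_t \in K_i \cap L_i$ for $i$ large depending on $t$, and extract a diagonal sequence $x_i \to x$ with $x_i \in K_i \cap L_i$. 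I would also need a preliminary reduction handling the affine hulls: a priori $\mathrm{aff}(K_i)$ and $\mathrm{aff}(L_i)$ may wobble, but since $\mathsf{h}(K_i,K), \mathsf{h}(L_i,L)\to 0$ and everything is finite dimensional, the affine hulls converge as well, and passing to $\mathrm{aff}(K)\cap\mathrm{aff}(L)$ one may assume the sets are full dimensional. The main obstacle, as indicated, is precisely this lower Kuratowski inclusion and making the non-separability hypothesis do its job — ensuring that interior points survive the approximation and then propagating this to boundary points of $K\cap L$ via convexity and a diagonal argument.
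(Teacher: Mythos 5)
The paper does not actually prove this statement from scratch: its ``proof'' is a one-line citation of \cite[Theorem 1.8.10]{schneider2014convex}, so you are attempting a self-contained argument that the authors deliberately outsourced. Within your proposal, the nonemptiness part (separating hyperplanes for $K_i,L_i$, compactness of the unit sphere and of the scalars, passage to the limit to contradict non-separability) is sound, and the upper (accumulation-point) half of the Kuratowski convergence is indeed immediate. The problem is the lower inclusion, which you correctly identify as the crux but do not actually close.

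The gap is this: non-weak-separability does \emph{not} give you a point of $\mathrm{int}(K)\cap\mathrm{int}(L)$. What it gives (in finite dimensions) is $\mathrm{relint}(K)\cap\mathrm{relint}(L)\neq\emptyset$ together with $\mathrm{aff}(K\cup L)=H$; your parenthetical ``weakly separable iff relative interiors disjoint'' is the criterion for \emph{proper} separation, not weak separation (two sets lying in a common hyperplane are weakly separable even if their relative interiors meet). When $K$, say, is lower dimensional --- which is exactly the situation in the paper's own application in Theorem \ref{thm:Global.CPCA.measure.consistency}, where $K$ is a candidate component of dimension at most $k$ and $L=X$ --- your ball mechanism is unavailable: a point of $\mathrm{relint}(K)$ need not belong to $K_i$ for \emph{any} $i$ (take $K$ a segment and $K_i=K+i^{-1}v$ with $v\perp\mathrm{aff}(K)$; then $\mathsf{h}(K_i,K)\to0$ but $K\cap K_i=\emptyset$), so ``a small fixed ball around $x_0$ is eventually contained in both $K_i$ and $L_i$'' has no analogue. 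The proposed rescue --- that ``the affine hulls converge as well'' and one may ``pass to $\mathrm{aff}(K)\cap\mathrm{aff}(L)$ and assume the sets are full dimensional'' --- does not work: Hausdorff convergence gives no control over affine hulls (full-dimensional $K_i$ can collapse onto a lower-dimensional $K$, and $\mathrm{aff}(K_i)$ need be neither close to nor contained in $\mathrm{aff}(K)$), and the approximants $K_i,L_i$ do not live in $\mathrm{aff}(K)\cap\mathrm{aff}(L)$, so there is no restricted problem to pass to. Your diagonal argument from an interior point to general $x\in K\cap L$ is fine, but only once the interior-point step is secured; as written, your proof establishes the theorem only under the stronger hypothesis $\mathrm{int}(K)\cap\mathrm{int}(L)\neq\emptyset$. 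Handling the degenerate (lower-dimensional) case quantitatively is precisely the content of Schneider's proof, and it is the case the paper needs.
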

\begin{proof}
This is a restatement of \cite[Theorem 1.8.10]{schneider2014convex} after noting that it holds for any finite dimensional real Hilbert space $H$. 
\end{proof}

\begin{lemma}\label{lem:weak.sep.sets.condition}
If $H$ is finite dimensional and $A,B\subset H$ are sets with $\mathrm{int}(A)\not=\emptyset$, $B\subset A$ and $B\cap \mathrm{int}(A)\not=\emptyset$, then $A$ and $B$ are not weakly linearly separable.
\end{lemma}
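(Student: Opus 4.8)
The plan is to argue by contradiction. Suppose $A$ and $B$ were weakly linearly separable, witnessed by a nonzero linear functional $f : H \rightarrow \mathbb{R}$ and a constant $c \in \mathbb{R}$ with $f(a) \geq c \geq f(b)$ for all $a \in A$ and $b \in B$. The first step is to observe that, since $B \subset A$, every $b \in B$ satisfies \emph{both} inequalities, hence $f \equiv c$ on $B$; equivalently, the separating hyperplane $\{f = c\}$ contains all of $B$, while $A$ is contained in the closed half-space $\{f \geq c\}$.

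Next I would use the hypothesis $B \cap \mathrm{int}(A) \neq \emptyset$: pick a point $x_0 \in B \cap \mathrm{int}(A)$. By the previous step $f(x_0) = c$, and by choice of $x_0$ there is $r > 0$ with the open ball $U = \{ y \in H : \| y - x_0 \| < r \}$ contained in $A$, so that $f \geq c$ on $U$. Since $f \neq 0$, there exists $v \in H$ with $f(v) < 0$ (replace $v$ by $-v$ if necessary). For all sufficiently small $t > 0$ the point $x_0 + t v$ lies in $U \subset A$, yet $f(x_0 + t v) = c + t f(v) < c$, contradicting $f \geq c$ on $A$. This contradiction establishes the lemma.

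I do not expect any genuine obstacle here; the argument is a one-line application of the definition of separation together with the elementary fact that a nonzero linear functional cannot be bounded from one side on a (relatively) open set. Finite dimensionality of $H$ plays essentially no role beyond ensuring $f$ is continuous, but even that is unnecessary: the contradiction is obtained purely algebraically along the segment $\{ x_0 + t v : 0 \leq t < \epsilon \}$, so the proof works verbatim without invoking continuity of $f$. The only point worth stating carefully is the reduction "$B \subset A$ forces $f$ to be constant on $B$", which is what links the two separation inequalities and makes the interior point $x_0$ simultaneously a point where $f$ attains the bound $c$ and an interior point of the half-space $\{f \geq c\}$ — an impossibility.
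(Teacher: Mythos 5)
Your proof is correct and is essentially the same argument as the paper's: deduce $f\equiv c$ on $B$ from $B\subset A$, pick a point of $B\cap\mathrm{int}(A)$ with a ball inside $A$, and note a nonzero linear functional must dip below $c$ on that ball. Your explicit segment $x_0+tv$ just spells out the paper's final step, and your side remark that neither finite dimensionality nor continuity of $f$ is needed is accurate.
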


\begin{proof}
We proceed by contradiction. If $A$ and $B$ are weakly linearly separable, there exits a non-zero linear functional $f$ and $c\in\mathbb{R}$ such that $f(a)\geq c\geq f(b)$ for all $a \in A$ and $b \in B$. As $B\subset A$, we must have $f(b)=c$ for all $b\in B$. Choose $b_0\in B\cap \mathrm{int}(A)$ and an $\epsilon$-ball $B_\epsilon(b_0)\subset \mathrm{int}(A)$. Since $f$ is nonzero, there exists $a \in B_{\epsilon}(b_0)$ such that $f(a) < c$. Thus we get a contradiction.
\end{proof}

\begin{proof}[Proof of Theorem \ref{thm:Global.CPCA.measure.consistency}]
By Lemma \ref{lem:kur.gamma}  it suffices to show that
\[
CC_{\overline{x},k}(X)=\Klim_{n\to\infty}CC_{\overline{x}_n,k}(X).
\]
We prove this by verifying the properties in Definition \ref{def:kuratowski}.

(i) Let $C\in CC_{\overline{x},k}(X)$ be given. For $n \geq 1$, let $C_n'=C+\overline{x}_n-\overline{x} \subset H$. Clearly $\mathsf{h}(C,C_n')\leq \|\overline{x}-\overline{x}_n\|\to 0$. Define $C_n = C_n'\cap X \in CC_{\overline{x}_n,k}(X)$. We claim that $\mathsf{h}(C_n,C)\to 0$. Since $\overline{x}\in \mathrm{int}(X)$ we have $C\cap \mathrm{int}(X)\not=\emptyset$ as $\overline{x}\in C$. By Lemma \ref{lem:weak.sep.sets.condition}, $C$ cannot be (weakly) linearly separated from $X$. Taking $L_n=L=X$ in Theorem \ref{thm:haus.intersection} gives that $\mathsf{h}(C_n,C)=\mathsf{h}(C'_n\cap X,C\cap X)\to 0$.

(ii) The proof of this part is the same as that of  \cite[Lemma 6.1]{BGKL17}. We include it for completeness. Let $C$ be an accumulation point of $(C_n)_{n\geq1}$. As $\overline{x}_n\in C_n$ and $\overline{x}_n\to \overline{x}$ it follows that $\overline{x}\in C$ (as Kuratowski convergence is implied by Lemma \ref{lem:kur.haus}). Since $CC_k(X)$ is compact and $C_n\in CC_k(X)$, we have $C\in CC_k(X)$ which allows us to conclude that $C\in CC_{\overline{x},k}(X)$ as desired.
\end{proof}

\subsubsection{Proof of Theorem \ref{thm:nested.CPCA.measure.consistency}}
The main argument (given after Lemma \ref{lem:nested.consistency.inductive.step}) employs an induction on $j$. The following lemmas are used for the induction step. In particular, the key ingredient is Lemma \ref{lem:nested.consistency.inductive.step} corresponding to part (iii) of the theorem. 

We denote by $\mathrm{conv}(\ldots)$ the convex hull (of a set).

\begin{lemma}\label{lem:conv.hull.properties} Consider a Banach space $(\mathcal{X},\|\cdot\|)$ over $\mathbb{R}$.
\begin{enumerate}
    \item[(i)] If $A,B\subset \mathcal{X}$ are compact convex sets then so is $\mathrm{conv}(A\cup B)$.
    \item[(ii)] Let $A, B, A_n,B_n\subset \mathcal{X}$, $n\geq 1$, be compact convex sets and suppose there exists a compact set $C$ such that
    \[
    A,B,A_n,B_n,\mathrm{conv}(A_n\cup B_n),\mathrm{conv}(A\cup B)\subset    C,
    \]
    for all $n\geq1$. If $A_n\to A$ and $B_n\to B$ in Hausdorff distance then $\mathrm{conv}(A_n\cup B_n)\to \mathrm{conv}(A\cup B)$ in Hausdorff distance.
\end{enumerate}
\end{lemma}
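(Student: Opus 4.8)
The plan is to first obtain an explicit description of $\mathrm{conv}(A\cup B)$ when $A$ and $B$ are convex, and then deduce both parts from it; part (ii) will follow from a one-line Lipschitz estimate.

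For part (i), I would show that
\[
\mathrm{conv}(A \cup B) = \{ \lambda a + (1 - \lambda) b : a \in A,\ b \in B,\ \lambda \in [0, 1] \} =: S .
\]
The inclusion $A \cup B \subseteq S$ is immediate (take $\lambda = 1$ or $\lambda = 0$), and $S \subseteq \mathrm{conv}(A \cup B)$ since each element of $S$ is a convex combination of a point of $A$ and a point of $B$. For the reverse inclusion it suffices to check that $S$ is convex. Given two points $\lambda_i a_i + (1 - \lambda_i) b_i \in S$ ($i = 1, 2$) and $t \in [0,1]$, set $\lambda = t\lambda_1 + (1-t)\lambda_2$; when $\lambda \in (0,1)$ one rewrites $t(\lambda_1 a_1 + (1-\lambda_1)b_1) + (1-t)(\lambda_2 a_2 + (1-\lambda_2) b_2)$ as $\lambda \tilde a + (1 - \lambda) \tilde b$, where $\tilde a \in A$ and $\tilde b \in B$ are the induced convex combinations of the $a_i$ and $b_i$ (using convexity of $A$ and $B$); the degenerate cases $\lambda \in \{0,1\}$ are handled directly. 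Hence $S = \mathrm{conv}(A\cup B)$, and $S$ is compact because it is the image of the compact set $[0,1] \times A \times B$ under the continuous map $(\lambda, a, b) \mapsto \lambda a + (1-\lambda) b$.

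For part (ii), the key estimate is that for any nonempty compact convex $A, B, A', B' \subset \mathcal X$,
\[
\mathsf{h}\bigl(\mathrm{conv}(A \cup B),\, \mathrm{conv}(A' \cup B')\bigr) \le \max\bigl(\mathsf{h}(A, A'),\, \mathsf{h}(B, B')\bigr).
\]
Indeed, put $\delta = \max(\mathsf{h}(A,A'), \mathsf{h}(B,B'))$; given $p = \lambda a + (1-\lambda) b \in \mathrm{conv}(A\cup B)$ with $a\in A$, $b \in B$, choose $a' \in A'$ and $b' \in B'$ with $\|a - a'\| \le \delta$ and $\|b - b'\| \le \delta$, so that $p' = \lambda a' + (1-\lambda) b' \in \mathrm{conv}(A' \cup B')$ satisfies $\|p - p'\| \le \lambda\|a-a'\| + (1-\lambda)\|b - b'\| \le \delta$. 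Every point of $\mathrm{conv}(A\cup B)$ is therefore within $\delta$ of a point of $\mathrm{conv}(A'\cup B')$, and by symmetry the roles interchange, giving the bound. Applying it with $(A',B') = (A_n,B_n)$ yields $\mathsf{h}(\mathrm{conv}(A_n\cup B_n), \mathrm{conv}(A\cup B)) \le \max(\mathsf{h}(A_n,A), \mathsf{h}(B_n,B)) \to 0$. The auxiliary compact set $C$ in the hypothesis is used only to guarantee that all the sets involved sit inside one fixed compact subset of $\mathcal X$, so that $\mathsf{h}$ is a bona fide metric on them; it plays no other role.

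I do not expect a genuine obstacle here. The only points needing a little care are the verification that $S$ is convex in the degenerate case $\lambda \in \{0,1\}$, and, in part (ii), keeping the approximation $a \leftrightarrow a'$, $b \leftrightarrow b'$ attached to the same weight $\lambda$ so that the convex-combination estimate goes through cleanly.
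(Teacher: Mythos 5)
Your proof is correct, but it takes a genuinely different route from the paper's. For (i), the paper cites the fact that the closed convex hull of a compact set in a Banach space is compact (Theorem 5.35 of \cite{guide2006infinite}) and then shows $\overline{\mathrm{conv}}(A\cup B)=\mathrm{conv}(A\cup B)$ by taking a sequence $x_n=w_na_n+(1-w_n)b_n$ and extracting convergent subsequences of $(w_n,a_n,b_n)$; you instead identify $\mathrm{conv}(A\cup B)$ with the set $S$ of two-point combinations and get compactness directly as the continuous image of $[0,1]\times A\times B$, which is more self-contained (the paper uses the same two-point representation but does not verify convexity of $S$ explicitly, relying on the external theorem for compactness). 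For (ii), the paper works qualitatively: it invokes the equivalence of Hausdorff and Kuratowski convergence inside the common compact set $C$ (Lemma \ref{lem:kur.haus}) and checks the two Kuratowski conditions by another subsequence-extraction argument. Your Lipschitz bound $\mathsf{h}\bigl(\mathrm{conv}(A\cup B),\mathrm{conv}(A_n\cup B_n)\bigr)\leq \max\bigl(\mathsf{h}(A,A_n),\mathsf{h}(B,B_n)\bigr)$ is sharper: it is quantitative, avoids subsequences entirely, and, as you observe, makes the hypothesis of a common compact superset $C$ superfluous (the Hausdorff distance between nonempty compact sets is already a metric). What the paper's route buys is uniformity with the Kuratowski machinery used throughout its appendix; what yours buys is a shorter, more elementary and slightly more general argument. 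The only implicit assumption in your representation $S=\mathrm{conv}(A\cup B)$ is that $A$ and $B$ are nonempty, but the paper's proof makes the same tacit assumption, so this is not a gap in context.
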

\begin{proof}
(i) Clearly $A\cup B$ is compact. By \cite[Theorem 5.35]{guide2006infinite}, $\overline{\mathrm{conv}}(A\cup B)$ is compact. Clearly $\mathrm{conv}(A \cup A)\subset \overline{\mathrm{conv}}(A\cup B)$. To prove the reverse inclusion, let $x\in\overline{\mathrm{conv}}(A\cup B)$ be given. Then there exists a sequence $x_n\in\mathrm{conv}(A\cup B)$ such that $x_n\to x$. For each $n$, there exist $w_n\in[0,1]$ and $a_n\in A$, $b_n\in B$ such that $x_n=w_na_n+(1-w_n)b_n$. By compactness, there exists a subsequence $n'$ along which $w_{n'} \rightarrow w \in [0, 1]$, $a_{n'} \rightarrow a \in A$ and $b_{n'}\rightarrow b \in B$. Thus $x=wa+(1-w)b \in \conv(A \cup B)$.

(ii) First we note that $A_n\cup B_n\to A\cup B$ in Hausdorff distance. Since $C$ is compact, by Lemma \ref{lem:kur.haus} it suffices to show that $\Klim_{n \rightarrow \infty} \conv(A_n \cup B_n) = \conv(A \cup B)$. We now check the relevant conditions. (i) If $x\in \mathrm{conv}(A\cup B)$ then $x=wa+(1-w)b$ for some $w\in [0,1]$, $a\in A$ and $b\in B$. By Hausdorff convergence there exists $a_n\in A_n$, $b_n\in B_n$ such that $a_n\to a$ and $b_n\to b$. Let $x_n=wa_n+(1-w)b_n\in \mathrm{conv}(A_n\cup B_n)$. Then $x_n\to x$. (ii) Suppose $x_n\in\mathrm{conv}(A_n\cup B_n)$ for $n\geq1$ and $x$ is an accumulation point of $(x_n)_{n\geq1}$. Then there exists a subsequence such that $x_{n'}\to x$. Write $x_{n'}=w_{n'}a_{n'}+(1-w_{n'})b_{n'}$ where $w_{n'}\in [0,1]$, $a_{n'}\in A$ and $b_{n'}\in B$. Again by compactness there exists a further subsequence such that $w_{n''} \rightarrow w \in [0, 1]$, $a_{n''} \rightarrow a$ and $b_{n''} \rightarrow b$. So
\[
x = \lim_{n'' \rightarrow \infty} x_{n''} = \lim_{n'' \rightarrow \infty} (w_{n''}a_{n''}+(1-w_{n''})b_{n''}) = wa + (1 - w)b \in \conv(A \cup B).
\]
\end{proof}

%Next, we establish that the principal components are full dimensional. 
%Recall that $\mathrm{supp}(\mu)$ denotes the topological support of the measure $\mu$.

\begin{lemma}\label{lem:full.dim.cpc}
Suppose $\mathrm{dim}(\mathrm{supp}(\mu)) \geq k$. Let $j \leq k$. If $C_j^*$ is either (i) a global or a (ii) nested $(j, \overline{x}, \mu)$-convex principal component, then $\dim C_j^* = j$.

%f for any $1\leq j\leq k$ either (i) 
%\[C^*_j\in \argmin_{C\in CC_{\overline{x},j}}J(C;\mu)\]
%or (ii) 
%\[C^*_j\in \argmin_{C\in CC_{\overline{x},j}, C\supset C_{j-1}}J(C;\mu), \ \ \ C_{j-1}\in CC_{\overline{x},j-1}, \  C_0:=\{\overline{x}\}\]
%then $\mathrm{dim}(C_j^*)=j$ .
\end{lemma}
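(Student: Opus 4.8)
The plan is to reduce the statement, in both cases, to the following key sub-lemma. \emph{Suppose $V\subset H$ is a linear subspace with $\dim V = r < \dim(\mathrm{supp}(\mu))$. Writing $L := (\overline{x} + V)\cap X$ and $L_q := (\overline{x} + V + \mathbb{R}q)\cap X$ for a unit vector $q \in V^{\perp}$, there exists such a $q$ with $J(L_q;\mu) < J(L;\mu)$.} That is, if $\mu$ is not already supported on the affine slice $\overline{x} + V$, then enlarging the slice by one genuine dimension strictly decreases the objective. Granting this, the global case follows quickly: if $C_j^{*}$ is a $(j,\overline{x},\mu)$-global CPC with $\dim C_j^{*} = s < j$, set $V = \mathrm{span}(C_j^{*} - \overline{x})$ (so $\dim V = s$ and $C_j^{*} \subseteq L$, whence $J(L;\mu) \le J(C_j^{*};\mu)$); since $s < j \le k \le \dim(\mathrm{supp}(\mu))$ the sub-lemma produces $q$ with $L_q \in CC_{\overline{x}, s+1}(X) \subseteq CC_{\overline{x}, j}(X)$ and $J(L_q;\mu) < J(C_j^{*};\mu)$, contradicting optimality. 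For the nested case I would induct on $\ell$ to prove $\dim C_{\ell} = \ell$ for $0 \le \ell \le j$; the base case $\ell = 1$ (i.e.\ $C_1 \ne \{\overline{x}\}$) is the sub-lemma with $V = \{0\}$. For the step, assume $\dim C_{\ell-1} = \ell-1$; since $C_{\ell-1} \subseteq C_{\ell}$ and $\dim C_{\ell} \le \ell$, the only alternative to $\dim C_{\ell} = \ell$ is $\dim C_{\ell} = \ell-1$, which forces $\mathrm{aff}(C_{\ell}) = \mathrm{aff}(C_{\ell-1}) =: \overline{x} + V$ with $\dim V = \ell-1$, hence $C_{\ell} \subseteq L$ and $J(L;\mu) \le J(C_{\ell};\mu)$; the sub-lemma (applicable as $\ell-1 < j \le \dim(\mathrm{supp}(\mu))$) then gives $q$ with $L_q \in CC_{\overline{x}, \ell}(X)$, $L_q \supseteq L \supseteq C_{\ell-1}$ (so $L_q$ is feasible for the $\ell$-th nested problem) and $J(L_q;\mu) < J(C_{\ell};\mu)$, a contradiction.

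The heart of the argument is the sub-lemma, and the main obstacle is the boundary of $X$: a naive perturbation of $L$ in the direction $q$ can fail to decrease the distance of a data point whose metric projection onto $L$ lies on $\partial X$. I would sidestep perturbations entirely. Assume the sub-lemma fails. Since $L \subseteq L_q$ one always has $\mathsf{d}(x, L_q) \le \mathsf{d}(x, L)$, so $J(L_q;\mu) \le J(L;\mu)$ for every $q$; the failure therefore forces $J(L_q;\mu) = J(L;\mu)$ for all unit $q \in V^{\perp}$. Using the identity $\mathsf{d}^2(x, A \cap X) = \mathsf{d}^2(x, A) + \mathsf{d}^2(\Pi_A x, A \cap X)$ for an affine subspace $A$ (iterated projection onto a nested affine set and then a convex subset), the function $h_q(x) := \mathsf{d}^2(x, L) - \mathsf{d}^2(x, L_q)$ is pointwise $\ge 0$, so equality of the $\mu$-integrals gives $h_q = 0$ $\mu$-a.e., for each fixed $q$. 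Integrating over the unit sphere of $V^{\perp}$ against the uniform measure and applying Fubini (joint measurability of $(x,q) \mapsto h_q(x)$ following from continuity in $q$ and $1$-Lipschitz dependence on $x$), for $\mu$-a.e.\ $x$ we get $h_q(x) = 0$ for a.e.\ $q$; since $q \mapsto \mathsf{d}^2(x, L_q)$ is continuous — by Theorem~\ref{thm:haus.intersection} (after truncating the affine slices to a large ball so they are compact, and noting that $\overline{x} \in \mathrm{int}\, X$ keeps them non-weakly-separable from $X$) together with continuity of $\mathsf{d}(x,\cdot)$ on $CL(X)$ — this upgrades to $h_q(x) = 0$ for \emph{all} unit $q \in V^{\perp}$. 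Finally, writing $\eta(x) = \Pi_{V^{\perp}}(x - \overline{x})$ and taking $q = \eta(x)/\|\eta(x)\|$ whenever $\eta(x) \ne 0$, the point $x$ itself lies in $L_q$ (since $x - \overline{x} \in V + \mathbb{R}q$ and $x \in X$), so $\mathsf{d}(x, L_q) = 0$ and thus $0 = h_q(x) = \mathsf{d}^2(x, L) \ge \|\eta(x)\|^2 > 0$, a contradiction. Hence $\eta(x) = 0$ for $\mu$-a.e.\ $x$, i.e.\ $\mu(\overline{x} + V) = 1$, so $\dim(\mathrm{supp}(\mu)) \le r$, contradicting $r < \dim(\mathrm{supp}(\mu))$.

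The routine points left to fill in are: the projection identity above (a Pythagorean computation); the monotonicity $J(C') \le J(C)$ whenever $C \subseteq C'$; the joint measurability needed for Fubini; and, for the nested induction, the elementary fact that nested convex sets of equal affine dimension share an affine hull. I expect the continuity of $q \mapsto \mathsf{d}^2(x, L_q)$ — the one place where $\overline{x} \in \mathrm{int}\, X$ is genuinely exploited — to be the step demanding the most care, and it is supplied by Theorem~\ref{thm:haus.intersection} together with the non-separability argument already used in the proof of Theorem~\ref{thm:Global.CPCA.measure.consistency}.
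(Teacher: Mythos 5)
Your proof is correct, but it follows a genuinely different route from the paper's. The paper argues locally: assuming $\dim C_j^*<j$, it picks a support point $x\notin\mathrm{aff}(C_j^*)$, a small ball $B_\epsilon(x)$ of positive $\mu$-mass at distance more than $2\epsilon$ from $C_j^*$, and uses the explicit competitor $\tilde C=\mathrm{conv}(C_j^*\cup\{x\})\in CC_{\overline{x},j}(X)$, which is strictly closer to every point of $B_\epsilon(x)$ and no farther from any other point; since $\tilde C\supseteq C_j^*\supseteq C_{j-1}$, the same competitor settles both the global and the nested case at once, with no induction. You instead prove a dichotomy sub-lemma (either some one-dimensional extension $L_q$ of the affine slice strictly decreases $J$, or $\mu$ is carried by the slice) via an averaging argument over all directions $q$, upgraded from ``a.e.\ $q$'' to ``all $q$'' by continuity of $q\mapsto \mathsf{d}^2(x,L_q)$, which you obtain from Theorem \ref{thm:haus.intersection} together with the non-separability argument used in the proof of Theorem \ref{thm:Global.CPCA.measure.consistency}. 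That step is sound, but it is where your proof consumes hypotheses the paper's proof never needs: $\overline{x}\in\intt X$ (for non-separability of the slice from $X$) and finite-dimensionality of $H$ (for Theorem \ref{thm:haus.intersection} and for the spherical/Fubini average). Both are in force in the subsection where Lemma \ref{lem:full.dim.cpc} is invoked, so your argument is valid in context, whereas the paper's convex-hull perturbation is shorter, needs only $\overline{x}\in X$ and $\mathrm{supp}(\mu)\subset X$, and works without interiority. Two minor streamlinings of your route: the nested-case induction is unnecessary, since with $V=\mathrm{span}(C_j^*-\overline{x})$ the competitor $L_q\supseteq L\supseteq C_j^*\supseteq C_{j-1}$ is already feasible for the $j$-th nested problem; and the Fubini step can be replaced by taking a countable dense set of directions $q$ and using the same continuity in $q$.
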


\begin{proof}
We will prove this result by contradiction. We will treat (i) as the result for (ii) is analogous. Suppose that $\mathrm{dim}(C_j^*)< j \leq k$. Then, there must exist an $x\in\mathrm{supp}(\mu)$ such that $x\not\in \mathrm{aff}(C_j^*)$, where $\mathrm{aff}(C_j^*)$ is the affine hull of $C_j^*$. Additionally, we can take an open ball $B_\epsilon(x)$ of sufficiently small radius $\epsilon>0$ about $x$ such that $B_\epsilon(x)\cap C_j^*\not=0$ and $\mathrm{dist}(B_\epsilon(x),C_j^*)>2\epsilon$. As $x\in\mathrm{supp}(\mu)$ it follows that $\mu(B_\epsilon(x))>0$. Now consider $\tilde{C}=\mathrm{conv}(C^*_j\cup \{x\})$. It is clear that $\mathrm{dim}(\tilde{C})=\mathrm{dim}(C_j^*)+1\leq j$. Moreover, $\overline{x}\in\tilde{C}$, $\tilde{C}\subset X$ (as $\mathrm{supp}(\mu)\subset X$ by assumption), $\tilde{C}$ is convex, and $\tilde{C}$ is compact (see Lemma \ref{lem:conv.hull.properties}). Hence $\tilde{C}\in CC_{\overline{x},j}(X)$ and thus
\[ J(C_j^*,\mu)\leq J(\tilde{C},\mu)\]
by definition of $C_j^*$. However, for any $y\in B_\epsilon(x)$
\[
\mathsf{d}(y,\tilde{C})=\|\Pi_{\tilde{C}}y-y\|\leq \|\Pi_{\tilde{C}}y-x\|+\|x-y\|\leq 2\epsilon<\mathrm{dist}(B_\epsilon(x),C^*_j)\leq \mathsf{d}(y,C^*_j).
\]
Note here we have used that $\|\Pi_{\tilde{C}}y-x\|\leq \|y-x\|$ as $x\in \tilde{C}$. Also, as $C_j^*\subset \tilde{C}$ for any $z\in H$,  $\mathsf{d}(z,C_j^*)\geq \mathsf{d}(z,\tilde{C})$. We conclude that
\[
J(C_j^*;\mu)=\mathbb{E}_\mu[\mathsf{d}(\mathbf{x},C_j^*)^2]>\mathbb{E}_\mu[\mathsf{d}(\mathbf{x},\tilde{C})^2]=J(\tilde{C};\mu),
\]
which contradicts of the optimality of $C_j^*$.
\end{proof}

We introduce a linear transformation which will be useful in our argument. 

\begin{definition}\label{def:key.lin.trans}
Let $K$ be a compact and convex set in $H$ containing $0$. For $\mathrm{aff}(K)$ the affine span of $K$ and $v\in H$ a vector, we define the linear transformation $T(\cdot; K, v)=\Pi_{\mathrm{aff}(K)}(\cdot)+\Pi_{\mathrm{span}\{v\}}(\cdot)$.
\end{definition}

To prepare for the induction step, suppose Theorem \ref{thm:nested.CPCA.measure.consistency} holds up to some $j-1 \leq k$. Let $C\in\mathfrak{C}_{j}$ and take $C_{j-1}$ with $C_{j-1}\subset C$ according to the definition of $\mathfrak{C}_{j}$. Similarly, take $C_{n,j-1}\in CC_{\overline{x}_n,j-1}(X)$ as in the statement of the Theorem. Define $K_{n,j-1}=C_{n,j-1}-\overline{x}_n$, $K_{\infty,j-1}=C_{j-1}-\overline{x}$ and $K= C - \overline{x}$. Note that
\[
K_{n,j-1},K_{\infty,j-1},K \subset \mathfrak{X} := X-X.
\]
It is easy to see from the compactness of $X$ that $\mathfrak{X}$ is compact.

\begin{lemma}\label{lem:translated.sets.haus.conv}
If $\mathsf{h}(C_{n,j-1},C_{j-1})\to 0$ then $\mathsf{h}(K_{n,j-1},K_{\infty,j-1})\to 0$.
\end{lemma}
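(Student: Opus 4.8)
The plan is to exploit the near-translation-invariance of the Hausdorff metric. The two elementary facts I will use are: (a) for a fixed vector $v \in H$ and nonempty compact sets $A, B \subset H$ one has $\mathsf{h}(A - v, B - v) = \mathsf{h}(A, B)$, since $x \mapsto x - v$ is an isometry of $H$; and (b) for a fixed nonempty compact $B$ and vectors $v, w \in H$ one has $\mathsf{h}(B - v, B - w) \leq \|v - w\|$, because each point $b - v$ of $B - v$ lies within distance $\|v - w\|$ of $b - w \in B - w$, and symmetrically. Both are standard properties of the Hausdorff metric, so I would either state them in a one-line preliminary remark or cite them.

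With these in hand, the argument is a single application of the triangle inequality for $\mathsf{h}$ on the family of nonempty compact subsets of $H$ — which applies here because $K_{n,j-1} = C_{n,j-1} - \overline{x}_n$, $K_{\infty,j-1} = C_{j-1} - \overline{x}$, and the intermediate set $C_{j-1} - \overline{x}_n$ are all nonempty and compact (indeed all contained in the compact set $\mathfrak{X} = X - X$, as noted just before the lemma). Concretely:
\[
\mathsf{h}(K_{n,j-1}, K_{\infty,j-1}) \leq \mathsf{h}\bigl(C_{n,j-1} - \overline{x}_n,\, C_{j-1} - \overline{x}_n\bigr) + \mathsf{h}\bigl(C_{j-1} - \overline{x}_n,\, C_{j-1} - \overline{x}\bigr) \leq \mathsf{h}(C_{n,j-1}, C_{j-1}) + \|\overline{x}_n - \overline{x}\|,
\]
using (a) on the first term and (b) on the second. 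The first term on the right tends to $0$ by hypothesis, and the second tends to $0$ since $\overline{x}_n \to \overline{x}$ in $X$; hence $\mathsf{h}(K_{n,j-1}, K_{\infty,j-1}) \to 0$.

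There is essentially no obstacle here: the only point worth a line of verification is that all the sets involved are nonempty and compact so that the triangle inequality for $\mathsf{h}$ is legitimate, which is immediate from compactness of $X$ and the definitions. I would therefore keep the proof to the two preliminary facts plus the displayed three-term estimate above.
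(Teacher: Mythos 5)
Your proof is correct, but it takes a different route from the paper. The paper's proof is soft: it invokes the compactness of $\mathfrak{X} = X - X$ together with the equivalence of Hausdorff and Kuratowski convergence (Lemma \ref{lem:kur.haus}), and then declares the verification of the two Kuratowski conditions for the translated sets straightforward (one would check that $x - \overline{x}_n \to x - \overline{x}$ for approximating sequences, and that accumulation points of $x_n - \overline{x}_n$ land in $C_{j-1} - \overline{x}$). You instead argue directly in the Hausdorff metric: translation invariance of $\mathsf{h}$, the elementary bound $\mathsf{h}(B - v, B - w) \le \|v - w\|$, and the triangle inequality through the intermediate set $C_{j-1} - \overline{x}_n$ give the explicit estimate $\mathsf{h}(K_{n,j-1}, K_{\infty,j-1}) \le \mathsf{h}(C_{n,j-1}, C_{j-1}) + \|\overline{x}_n - \overline{x}\|$, which tends to $0$ using the standing hypothesis $\overline{x}_n \to \overline{x}$ from Theorem \ref{thm:nested.CPCA.measure.consistency}. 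Your argument is more elementary and quantitative — it yields an explicit rate and does not need compactness of $\mathfrak{X}$ at all (only that the sets involved are nonempty and, say, bounded and closed so that $\mathsf{h}$ is a metric and the triangle inequality applies), whereas the paper's Kuratowski detour is in keeping with the surrounding machinery but proves only the qualitative statement. The one point to make explicit, which you do, is that the conclusion genuinely uses $\overline{x}_n \to \overline{x}$ in addition to the stated hypothesis $\mathsf{h}(C_{n,j-1}, C_{j-1}) \to 0$; this is a standing assumption of the theorem in whose proof the lemma is used, so your reliance on it is legitimate.
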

\begin{proof}
By Lemma \ref{lem:kur.haus} it suffices to check Kuratowski convergence (Definition \ref{def:kuratowski}) as $\mathfrak{X}$ is compact. The proof is then straightforward.
\end{proof}

% {\color {blue} [Also here and below]} 
We will use these sets to define linear transformations using Definition \ref{def:key.lin.trans}. The required vector $v$ will be constructed as follows. By Lemma \ref{lem:full.dim.cpc}, we have $\dim C_{j-1} = j-1$, and by definition, we have $\dim C_{n,j-1}\leq j-1$, $\dim C\leq j$. Consider the subspaces of $H$ given by $A_n:=\mathrm{aff}(K_{n,j-1})$ and $A_\infty:=\mathrm{aff}(K_{\infty,j-1})$. Here, and throughout, for two \textit{subspaces} $A,A'$ of $H$ we will define $A\oplus A'$ to be their direct sum:
\[
A\oplus A'=\{x\in H: x= a+a', \  a\in A, a'\in A'\}.
\]
Since $A_\infty\subset \mathrm{aff}(K)$ we have that either (i) $A_\infty=\mathrm{aff}(K)$, or (ii) as $\dim(K_{\infty,j-1})=j-1$, there exists a $b\in \mathrm{aff}(K)$ such that $b$ is orthogonal to $A_\infty$ and $\mathrm{aff}(K)=A_\infty\oplus\mathrm{span}\{b\}$. In this case, we fix such a $b$ to play the role of the defining vector $v$ in Definition \ref{def:key.lin.trans}. In the former case (i), we instead take $b=0$.

With this, we will now finally define
\begin{equation}\label{eqn:Tn}
    T_n:=T(\cdot;K_{n,j-1},b)=\Pi_{A_n}+\Pi_{\mathrm{span}\{b\}}
\end{equation} and 
\begin{equation}\label{eqn:Tinfty}
    T_\infty:=T(\cdot;K_{\infty, j-1},b)=\Pi_{A_\infty}+\Pi_{\mathrm{span}\{b\}}.
\end{equation}
Next, it is important when taking the inductive step that the $T_n$ are consistent with $T_\infty$. We show below that $\Pi_{A_n}\to \Pi_{A_\infty}$ pointwise, and therefore, $T_n\to T_\infty$ pointwise.

\begin{lemma}
If $\mathsf{h}(C_{n,j-1},C_{j-1})\to 0$ we have $\Pi_{A_n}(x)\to \Pi_{A_\infty}(x)$ for all $x\in H$.
\end{lemma}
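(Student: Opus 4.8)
The plan is to reduce the statement to elementary linear algebra in the finite-dimensional space $H$. First I would note that, since $\overline{x}_n \in C_{n,j-1}$ and $\overline{x} \in C_{j-1}$, both $K_{n,j-1} = C_{n,j-1} - \overline{x}_n$ and $K_{\infty,j-1} = C_{j-1} - \overline{x}$ contain the origin, so $A_n = \mathrm{aff}(K_{n,j-1}) = \mathrm{span}(K_{n,j-1})$ and $A_\infty = \mathrm{aff}(K_{\infty,j-1}) = \mathrm{span}(K_{\infty,j-1})$ are genuine \emph{linear} subspaces of $H$. By Lemma \ref{lem:full.dim.cpc} (applied with $j-1 \le k$ and $\dim(\mathrm{supp}(\mu)) \ge k$) we have $\dim C_{j-1} = j-1$, hence $\dim A_\infty = j-1$; by definition $\dim A_n \le j-1$.

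Second, I would show the dimension stabilizes. Because $A_\infty = \mathrm{span}(K_{\infty,j-1})$ has dimension $j-1$, one can pick $v_1, \ldots, v_{j-1} \in K_{\infty,j-1}$ that are linearly independent (if no such selection existed, the span would have dimension $< j-1$). By Lemma \ref{lem:translated.sets.haus.conv}, $\mathsf{h}(K_{n,j-1}, K_{\infty,j-1}) \to 0$, so for each $i$ we may choose $v_i^{(n)} \in K_{n,j-1}$ with $v_i^{(n)} \to v_i$. The Gram matrix $G^{(n)} = \big(\langle v_i^{(n)}, v_\ell^{(n)} \rangle\big)_{i,\ell}$ then converges to $G = \big(\langle v_i, v_\ell \rangle\big)_{i,\ell}$, which is nonsingular since the $v_i$ are independent; hence $G^{(n)}$ is nonsingular for all large $n$, the vectors $v_1^{(n)}, \ldots, v_{j-1}^{(n)}$ are linearly independent, and therefore $\dim A_n = j-1$ with $A_n = \mathrm{span}\{v_1^{(n)}, \ldots, v_{j-1}^{(n)}\}$.

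Third, I would invoke the closed-form formula for the orthogonal projection onto the range of a full-rank linear map. Regard $V^{(n)} \colon \mathbb{R}^{j-1} \to H$ (resp.\ $V \colon \mathbb{R}^{j-1} \to H$) as the map sending the $i$-th standard basis vector to $v_i^{(n)}$ (resp.\ $v_i$); then $(V^{(n)})^* V^{(n)} = G^{(n)}$, and for $n$ large $\Pi_{A_n} = V^{(n)} (G^{(n)})^{-1} (V^{(n)})^*$ while $\Pi_{A_\infty} = V G^{-1} V^*$. Since $V^{(n)} \to V$ and, by continuity of matrix inversion on the set of invertible matrices, $(G^{(n)})^{-1} \to G^{-1}$, we conclude $\Pi_{A_n}(x) \to \Pi_{A_\infty}(x)$ for every $x \in H$ (in fact uniformly on bounded subsets of $H$). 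Equivalently, one could run Gram--Schmidt on the tuples $(v_i^{(n)})_i$ and $(v_i)_i$ and use that this procedure is continuous on tuples of linearly independent vectors, so the resulting orthonormal bases of $A_n$ converge to one of $A_\infty$.

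The only genuine obstacle is the dimension-stabilization step in the second paragraph: a priori $\dim A_n$ could drop strictly below $j-1$, which would invalidate the projection-formula argument, and ruling this out is precisely where $\dim C_{j-1} = j-1$ (Lemma \ref{lem:full.dim.cpc}) combined with Hausdorff convergence is essential. Everything after that is continuity of standard finite-dimensional linear-algebra operations.
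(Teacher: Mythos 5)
Your argument is correct, and it reaches the conclusion by a genuinely different route than the paper. The paper starts from arbitrary orthonormal bases of the $A_n$, uses compactness of the unit sphere to extract a subsequence along which these bases converge, shows the limiting span must contain (hence, by the dimension count $\dim A_\infty=j-1$ from Lemma \ref{lem:full.dim.cpc}, equal) $A_\infty$, and then concludes by a ``every subsequence has a further subsequence converging to the same limit'' argument. You instead work forwards from the limit set: you fix $j-1$ linearly independent vectors of $K_{\infty,j-1}$ spanning $A_\infty$, approximate them inside $K_{n,j-1}$ using the Hausdorff convergence of Lemma \ref{lem:translated.sets.haus.conv}, and use nonsingularity of the limiting Gram matrix to force $\dim A_n=j-1$ eventually and to obtain an explicit convergent (non-orthonormal) basis of $A_n$, after which the formula $\Pi_{A_n}=V^{(n)}(G^{(n)})^{-1}(V^{(n)})^*$ and continuity of inversion finish the proof. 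Your route has the advantage of being constructive and subsequence-free, of making the dimension-stabilization step ($\dim A_n=j-1$ for large $n$) explicit --- a point the paper's proof handles only implicitly when it asserts the subsequence can be chosen with $\mathrm{span}\{\phi^{(n'')}_1,\dots,\phi^{(n'')}_{j-1}\}=A_{n''}$ --- and of yielding convergence in operator norm (hence uniformly on bounded sets, which is what Lemma \ref{lem:T.unif.on.cmpcts} later extracts from pointwise convergence via uniform boundedness); the paper's compactness argument, by contrast, never needs to exhibit a convergent basis and so requires slightly less bookkeeping up front. Both proofs rest on the same two inputs: Hausdorff convergence of the translated sets and full dimensionality of $C_{j-1}$, and your preliminary observation that $0\in K_{n,j-1}\cap K_{\infty,j-1}$ (so the affine hulls are linear spans) is a legitimate justification of what the paper simply asserts when it calls $A_n,A_\infty$ subspaces.
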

\begin{proof}
We begin by showing the affine subspaces $(A_n)_{n\geq1}$ converge to $A_\infty$ in an appropriate sense. We recall that we must have $K_{n,j-1}\to K_{\infty,j-1}$ in the Kuratowski sense (see Lemmas \ref{lem:translated.sets.haus.conv} and \ref{lem:kur.haus}). Let $j_n:=\dim A_n\leq j-1$. Consider an arbitrary orthonormal basis $(\phi^{(n)}_\ell)_{\ell=1}^{j_n}$ of $A_n$ and extend it to an orthonormal basis of $H$, $(\phi^{(n)}_{\ell})_{\ell=1}^{d}$ (where $d=\dim H$). By definition every $y\in K_{n,j-1}$ can be written as
$y=\sum_{\ell=1}^{d}w_\ell\phi^{(n)}_\ell$
for some $(w_\ell)_{\ell=1}^{d}$ with $w_{\ell'}=0$ if $\ell'> j_n$. 

Let $x\in K_{\infty, j-1}$ be arbitrary. By the Kuratowski convergence (Definition \ref{def:kuratowski}) there exists a sequence $y_n\in K_{n,j-1}$ with $y_n\to x$. By the compactness of the unit sphere we may extract a subsequence of the bases $(\phi_\ell^{(n)})_{\ell=1}^d$ such that $\phi_\ell^{(n')}\to \bar{\phi}_\ell$ for some  $\bar{\phi}_\ell$ and all $\ell=1,\dots,d$. Write $y_{n'}=\sum_{\ell=1}^{d}w_\ell^{(n')}\phi^{(n')}_\ell$.
By the uniform boundedness of $(K_{n,j-1})_{n\geq0},K_{\infty,j-1}$ (as $K_{n,j-1},K_{\infty,j-1}\subset \mathfrak{X}$) and the Bolzano-Weierstrass Theorem we may choose a further subsequence so that $w_\ell^{(n'')}\to \bar{w}_\ell$ for some $\bar{w}_\ell\in\mathbb{R}$ and all $\ell=1,\dots,d$.

Using the convergence $y_n\to x$ we write
\[
y_{n''}=\sum_{\ell=1}^{d}w_\ell^{(n'')}\phi^{(n'')}_\ell\to \sum_{\ell=1}^{d}\bar{w}_\ell\bar{\phi}_\ell=x
\]
As indices $\ell> j_{n''}$ are such that $w^{(n'')}_j=0$ for all $n''$ we have that we may write  $x=\sum_{\ell=1}^{j-1}\bar{w}_\ell\bar{\phi}_\ell$ (recall $j_{n''}\leq j-1$).
Notice that the convergent subsequence of bases is independent of the choice of $y_n$ (it relates only to $K_{n,j-1}$) so the same basis subsequence can be chosen for arbitrary $x\in K_{\infty,j-1}$. Thus, the representation $x=\sum_{\ell=1}^{j-1}w_\ell\bar{\phi}_\ell$ for $(w_\ell)_{\ell=1}^{j-1}$ depending on $x$ holds for arbitrary $x\in K_{\infty, j-1}$. We conclude that $A_\infty\subset \mathrm{span}\{\bar{\phi}_1,\dots,\bar{\phi}_{j-1}\}$. On the other hand, the orthogonality of $(\bar{\phi}_\ell)_{\ell=1}^{j-1}$ is assured by the limit of an orthonormal sequence. Consequently, we have $\dim\left(\mathrm{span}\{\bar{\phi}_1,\dots,\bar{\phi}_{j-1}\}\right)=j-1$. As $\dim A_\infty =j-1$ we conclude 
$\mathrm{span}\{\bar{\phi}_1,\dots,\bar{\phi}_{j-1}\}=A_\infty$.

%there exist linearly independent vectors $(x_\ell)_{\ell=1}^{j-1}\in K_{\infty,j-1}$ with some subset having non-zero coordinates for each basis vector $(\bar{\phi}_i)_{i=1}^{k-1}$ and so $\mathrm{span}\{\bar{\phi}_i\}_{i=1}^{k-1}=\mathrm{aff}(K_{\infty,k-1})=A_\infty$. From the above discussion it is also clear that there is a sub-subsequence such that $\dim(A_{n_{m_r}})=k-1$ for all sufficiently large $r$.

%Indeed, suppose this is not the case and fix the linearly independent $(x_i)_{i=1}^{k-1}\in K_{\infty,k-1}$ and choose a convergent sub-subsequence as above for which all the coordinates converge. If there is a non-trivial combination that yields $0$ for infinitely many $n$, then it can be normalized to be on the unit sphere. Then by compactness there must be a non-trivial combination in the limit that causes the $x_i$ to vanish violating linear independence or convergence which is a contradiction. 

We will now collect these observations to prove the convergence of the projection. As the choice of basis is immaterial to the definition of the projection, the above tells us that for any subsequence $(\Pi_{A_{n'}})$ we may find a further subsequence $(\Pi_{A_{n''}})$ of projections that can be written in terms of some converging orthonormal bases $(\phi^{(n'')}_\ell)_{\ell=1}^{j-1}$ of $A_{n''}$ where $\dim(A_{n''})=j-1$ (we can pick the subsequence so that $\mathrm{span}\{\phi^{(n'')}_1,\dots,\phi^{(n'')}_{j-1}\}=A_{n''}$). Therefore,
\[\Pi_{A_{n''}}(x)\to \Pi_{A_\infty}(x), \ \ \ \forall x\in H.\]
As the subsequence was arbitrary, we are done.
\end{proof}

We pause here to briefly remark on the properties of our linear transformations.

\begin{remark}[Properties of $T_n$ and $T_\infty$]\label{rmk:properties.lin.transformations} We can notice that
\begin{enumerate}
    \item[(i)] $T_\infty= \Pi_{A_\infty}+\Pi_{\mathrm{span}\{b\}}=\Pi_{\mathrm{aff}(K)}$,
    %By our construction $T_\infty= \Pi_{A_\infty}+\Pi_{\mathrm{span}\{b\}}=\Pi_{\mathrm{aff}(K)}$  (as we have $A_\infty\perp \mathrm{span}\{b\}$).
    \item[(ii)]  $T_n[K_{\infty,j-1}]\subset A_n$,
    %As $K_{\infty,k-1}\subset \mathrm{aff}(K_{\infty,k-1})=A_\infty$ and $A_\infty\perp \mathrm{span}\{b\}$ \[T_n[K_{\infty,k-1}]=\Pi_{A_n}[K_{\infty,k-1}]+\Pi_{\mathrm{span}\{b\}}[K_{\infty,k-1}]=\Pi_{A_n}[K_{\infty,k-1}]\subset A_n.\]
    \item[(iii)]$\dim(T_n[K])\leq j$, and;
    %It holds by linearity that $\dim(T_n[K])\leq \dim(K)\leq k$.
    \item[(iv)] $\dim(T_n[K]\cup K_{n,j-1})\leq \dim(A_n\oplus \mathrm{span}\{b\})\leq j$.
    %As $\dim(A_n)\leq k-1$ and \[T_n[K]\cup K_{n,k-1}\subset T_n[K]\cup A_n\subset (A_n\oplus \mathrm{span}\{b\})\cup A_n=A_n\oplus \mathrm{span}\{b\}\]
% we have 
% \[\dim(T_n[K]\cup K_{n,k-1})\leq \dim(A_n\oplus \mathrm{span}\{b\})\leq k.\]
\end{enumerate}
\end{remark}
% Let $C_{n,k-1},C_{k-1},C$ be uniformly bounded compact convex sets containing the zero element $0$ with $\dim(C_{n,k-1})\leq k-1$, $\dim(C_{k-1})=k-1$, $C_{k-1}\subset C$ and $\dim(C)\leq k$. Suppose further that $h(C^n_{k-1},C_{k-1})\to0$. Let $A_n:=\mathrm{aff}(C_{k-1}^n)$ and $A:=\mathrm{aff}(C_{k-1})$. There are two cases to consider as $A\subset \mathrm{aff}(C)$. If $\mathrm{aff}(C)\not=A$. then there exists a vector $b\in \mathrm{aff}(C)$ such that $b$ is orthogonal to $A$ and $\mathrm{aff}(C)=A\oplus \mathrm{span}\{b\}$. Otherwise, let $b=0$. Define $T_n:=\Pi_{A_n}+\Pi_{\mathrm{span}\{b\}}$. We will show that $\Pi_{A_n}\to \Pi_A$ pointwise so that $T_n\to \Pi_{A}+\Pi_{\mathrm{span}\{b\}}=\Pi_{\mathrm{aff}(C)}$ pointwise (as $A\perp \mathrm{span}\{b\}$). Moreover, as $C_{k-1}\subset A$ and $A\perp \mathrm{span}\{b\}$ \[T_n[C_{k-1}]=\Pi_{A_n}[C_{k-1}]+\Pi_{\mathrm{span}\{b\}}[C_{k-1}]=\Pi_{A_n}[C_{k-1}]\subset A_n.\]

% \[\dim(T_n[\mathrm{aff}(K)])\leq \dim(\mathrm{aff}(K)).\]
% So, as $K\subset \mathrm{aff}(K)$
%\[\dim(T_n[K])\leq \dim(\mathrm{aff}(K))=\dim(K)\]

In (i) and (ii) we have used that $A_\infty\perp \mathrm{span}\{b\}$. Before proceeding to our final analysis, we highlight two more important lemmas whose conclusions hold for our transformations. First, we have the following which can be proved as a straightforward consequence of the Uniform Boundedness Principle.
\begin{lemma}\label{lem:T.unif.on.cmpcts}
If $(T_n)_{n\geq1}$ are pointwise bounded continuous linear operators with $T_n:H\to H$, and $T_\infty$ is continuous with $T_n(x)\to T_\infty(x)$ for all $x\in H$ then $T_n\to T_\infty$ uniformly on compacts.
\end{lemma}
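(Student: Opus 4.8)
The plan is to obtain this as a routine consequence of the Banach--Steinhaus theorem combined with a finite-net argument. The first step is to invoke the Uniform Boundedness Principle: since $(T_n)_{n \geq 1}$ is pointwise bounded, there is a finite constant $M$ with $\sup_n \|T_n\|_{\mathrm{op}} \leq M$, and since $T_n(x) \to T_\infty(x)$ for every $x$ (or simply because $T_\infty$ is assumed continuous) we also have $\|T_\infty\|_{\mathrm{op}} \leq M$. Thus every operator in play is bounded by the common constant $M$; this is the step that upgrades pointwise control to uniform control, and it is really the crux of the lemma.

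Next I would fix a compact $K \subset H$ and $\epsilon > 0$, choose $\delta > 0$ small enough that $M \delta < \epsilon/3$, and use compactness of $K$ to pick finitely many points $x_1, \dots, x_N \in K$ whose $\delta$-balls cover $K$. By the assumed pointwise convergence there is an index $N_0$ with $\|T_n(x_j) - T_\infty(x_j)\| < \epsilon/3$ for all $n \geq N_0$ and all $j \leq N$. Then for arbitrary $x \in K$, choosing $j$ with $\|x - x_j\| < \delta$, the triangle inequality gives $\|T_n(x) - T_\infty(x)\| \leq \|T_n(x - x_j)\| + \|T_n(x_j) - T_\infty(x_j)\| + \|T_\infty(x_j - x)\| < 2 M \delta + \epsilon/3 < \epsilon$ for all $n \geq N_0$, and this bound is uniform over $x \in K$. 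Hence $T_n \to T_\infty$ uniformly on $K$.

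I do not expect any genuine obstacle here; the argument is entirely standard. The only points requiring a little care are applying Banach--Steinhaus correctly to produce the uniform operator bound $M$ before fixing the net radius $\delta$, and keeping the three-term error bookkeeping straight. In the finite-dimensional setting in which this lemma is used the uniform boundedness step is in fact automatic, but phrasing it via Banach--Steinhaus keeps the statement valid in the generality stated.
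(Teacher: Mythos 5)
Your proof is correct and follows exactly the route the paper intends: the paper states the lemma as ``a straightforward consequence of the Uniform Boundedness Principle'' without writing out details, and your Banach--Steinhaus bound plus the $\epsilon/3$ finite-net argument on the compact set is precisely that standard argument. (Minor note: the bound $\|T_\infty\|_{\mathrm{op}}\le M$ comes from the pointwise convergence, not from continuity of $T_\infty$ alone, but your argument already contains this.)
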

Second, we have the following standard result.
\begin{lemma}\label{lem:T.composition.conv}
If $(T_n)_{n\geq1}$ is such that $T_n\to T_\infty$ uniformly on $X\subset H$ and $x_n\in X$ is such that $x_n\to x\in X$ then $T_n(x_n)\to T_\infty(x)$.
\end{lemma}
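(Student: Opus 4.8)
The plan is a routine $\varepsilon$-argument via the triangle inequality, separating the discrepancy between $T_n(x_n)$ and $T_\infty(x)$ into a part controlled by the uniform convergence $T_n \to T_\infty$ and a part controlled by the continuity of $T_\infty$. First I would write, for each $n \geq 1$,
\[
\| T_n(x_n) - T_\infty(x) \| \leq \| T_n(x_n) - T_\infty(x_n) \| + \| T_\infty(x_n) - T_\infty(x) \|,
\]
and bound the two terms on the right-hand side separately.

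For the first term, since $x_n \in X$ for every $n$ we have $\| T_n(x_n) - T_\infty(x_n) \| \leq \sup_{y \in X} \| T_n(y) - T_\infty(y) \|$, and the right-hand side tends to $0$ by the assumed uniform convergence of $T_n$ to $T_\infty$ on $X$. For the second term, I would invoke the continuity of $T_\infty$: since $x_n \to x$ with $x \in X$, we obtain $\| T_\infty(x_n) - T_\infty(x) \| \to 0$. In the setting where this lemma is applied, $T_\infty$ is a bounded linear operator, being a sum of orthogonal projections (see \eqref{eqn:Tinfty}), so continuity is immediate; more generally, $T_\infty$ restricted to $X$ is continuous because it is the uniform limit there of the continuous maps $T_n$. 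Adding the two estimates and letting $n \to \infty$ gives $T_n(x_n) \to T_\infty(x)$.

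There is essentially no serious obstacle here. The only point worth flagging is that controlling the second term requires knowing that $T_\infty$ is continuous (at least along $X$), which is automatic since the transformations $T_n, T_\infty$ introduced in \eqref{eqn:Tn}--\eqref{eqn:Tinfty} are continuous linear operators. The argument is thus short and self-contained once the triangle-inequality split is in place.
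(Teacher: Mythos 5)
Your proof is correct: the triangle-inequality split, with the first term controlled by uniform convergence on $X$ and the second by continuity of $T_\infty$ (which holds here since $T_n,T_\infty$ are the continuous linear maps of \eqref{eqn:Tn}--\eqref{eqn:Tinfty}, or, as you note, since $T_\infty$ is a uniform limit of continuous maps on $X$), is exactly the standard argument. The paper states this lemma without proof as a standard result, so your write-up matches the intended reasoning, and your remark that some continuity of $T_\infty$ must be invoked is the right point to flag.
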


\begin{lemma} \label{lem:nested.consistency.inductive.step}
Let $j\geq 1$, $C_{n,j-1}\in CC_{\overline{x}_{n},j-1}(X)$, and $C_{j-1}\in CC_{\overline{x},j-1}(X)$ with $\dim(C_{j-1})=j-1$. If $\mathsf{h}(C_{n,j-1},C_{j-1})\to 0$ then, $\Gamma\textnormal{-}\lim_{n\to\infty}\chi_{\mathfrak{C}_{n,j}}= \chi_{\mathfrak{C}_j}$.
\end{lemma}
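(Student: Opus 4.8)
The plan is to establish the Kuratowski convergence $\mathfrak{C}_j = \Klim_{n\to\infty}\mathfrak{C}_{n,j}$ in the metric space $(CL(X),\mathsf{h})$, since by Lemma \ref{lem:kur.gamma} this is equivalent to the claimed $\Gamma$-convergence of the indicator functions. Accordingly I would verify the two requirements of Definition \ref{def:kuratowski}: \textbf{(a)} every $C\in\mathfrak{C}_j$ is the Hausdorff limit of a sequence $C_n\in\mathfrak{C}_{n,j}$, and \textbf{(b)} every accumulation point (under $\mathsf{h}$) of any sequence $C_n\in\mathfrak{C}_{n,j}$ lies in $\mathfrak{C}_j$. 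Part (b) I expect to be routine: if $C_{n'}\to C$ with $C_{n'}\in\mathfrak{C}_{n',j}$, then $C$ is convex with $\dim C\le j$ by compactness of $CC_j(X)$, $\overline{x}\in C$ because $\overline{x}_{n'}\in C_{n'}$ and $\overline{x}_{n'}\to\overline{x}$, and $C\supset C_{j-1}$ because $C_{n'}\supset C_{n',j-1}$, $\mathsf{h}(C_{n',j-1},C_{j-1})\to0$, and containments pass to Hausdorff limits; hence $C\in\mathfrak{C}_j$.

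The substance is part (a), and here the linear maps $T_n,T_\infty$ of \eqref{eqn:Tn}--\eqref{eqn:Tinfty} are the key device. Given $C\in\mathfrak{C}_j$, I would set $K=C-\overline{x}$, a compact convex set containing $0$ with $K\supset K_{\infty,j-1}$, $\dim K_{\infty,j-1}=j-1$ and $\dim K\le j$; thus $\mathrm{aff}(K)$ is a linear subspace equal to $A_\infty$ (if $\dim K=j-1$) or to $A_\infty\oplus\mathrm{span}\{b\}$ (if $\dim K=j$), and in either case $T_\infty=\Pi_{\mathrm{aff}(K)}$ by Remark \ref{rmk:properties.lin.transformations}(i), so $T_\infty$ fixes $K$ pointwise and $\mathrm{conv}(T_\infty[K]\cup K_{\infty,j-1})=K$. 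I would then take the candidate recovery sequence
\[
\tilde{C}_n=\overline{x}_n+\mathrm{conv}(T_n[K]\cup K_{n,j-1}),\qquad C_n=\tilde{C}_n\cap X,
\]
and check that $C_n\in\mathfrak{C}_{n,j}$: it is closed, convex and contained in $X$; it contains $\overline{x}_n$ (since $0\in K_{n,j-1}$ forces $\overline{x}_n\in\tilde{C}_n$, and $\overline{x}_n\in X$) and it contains $C_{n,j-1}=\overline{x}_n+K_{n,j-1}$ (which lies in $\tilde{C}_n\cap X$); its affine dimension is at most $j$ by Remark \ref{rmk:properties.lin.transformations}(iv); and compactness of the convex hull follows from Lemma \ref{lem:conv.hull.properties}(i).

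For the convergence $\mathsf{h}(C_n,C)\to0$ I would argue as follows. The pointwise convergence $T_n\to T_\infty$ established above, combined with Lemma \ref{lem:T.unif.on.cmpcts}, upgrades to uniform convergence on the compact set $K$, whence $\mathsf{h}(T_n[K],K)\to0$; together with $\mathsf{h}(K_{n,j-1},K_{\infty,j-1})\to0$ from Lemma \ref{lem:translated.sets.haus.conv} and the uniform boundedness of all sets involved (as $\|T_n\|\le2$ and $K_{n,j-1}\subset\mathfrak{X}$, everything sits in a fixed ball), Lemma \ref{lem:conv.hull.properties}(ii) gives $\mathrm{conv}(T_n[K]\cup K_{n,j-1})\to\mathrm{conv}(K\cup K_{\infty,j-1})=K$ in Hausdorff distance, so $\mathsf{h}(\tilde{C}_n,C)\to0$ since $\overline{x}_n\to\overline{x}$. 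Finally, because $\overline{x}\in\mathrm{int}(X)$ and $\overline{x}\in C_{j-1}\subset C\subset X$, Lemma \ref{lem:weak.sep.sets.condition} shows $C$ and $X$ are not weakly linearly separable, so Theorem \ref{thm:haus.intersection} applied with the constant sequence $L_n\equiv X$ yields $C_n=\tilde{C}_n\cap X\to C\cap X=C$, completing part (a).

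The main obstacle is the construction of the recovery sequence. A direct translate of $C$, as used in \cite{BGKL17} (see Remark \ref{rmk:technical.gap}), fails on two counts: it need not lie in $X$, and it need not contain $C_{n,j-1}$. The role of $T_n$ is to fold $K$ onto the subspace $A_n\oplus\mathrm{span}\{b\}$, so that $\mathrm{conv}(T_n[K]\cup K_{n,j-1})$ simultaneously has affine dimension at most $j$ \emph{and} contains $K_{n,j-1}$; intersecting with $X$ afterwards restores $C_n\subset X$ without destroying the Hausdorff limit, precisely because $\overline{x}$ lies in the interior of $X$. Verifying that $T_n[K]\cup K_{n,j-1}$ really has affine dimension $\le j$ — the dimension bookkeeping behind Remark \ref{rmk:properties.lin.transformations}(iv) — and checking the hypotheses of Theorem \ref{thm:haus.intersection} at the last step is where the care is needed.
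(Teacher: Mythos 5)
Your proposal is correct and follows essentially the same route as the paper: Kuratowski convergence via Lemma \ref{lem:kur.gamma}, the recovery sequence $\overline{x}_n+\mathrm{conv}(T_n[K]\cup K_{n,j-1})$ intersected with $X$ (identical to the paper's $C_n=\mathrm{conv}(C_n''\cup C_{n,j-1})\cap X$), convergence via $T_n\to T_\infty=\Pi_{\mathrm{aff}(K)}$ uniformly on compacts together with Lemma \ref{lem:conv.hull.properties}(ii), and the final intersection step via Lemma \ref{lem:weak.sep.sets.condition} and Theorem \ref{thm:haus.intersection}. The only cosmetic differences are that you deduce $\mathsf{h}(T_n[K],K)\to 0$ directly from uniform convergence where the paper checks Kuratowski conditions, and in the accumulation-point step you argue the inclusion $C\supset C_{j-1}$ by hand where the paper cites Lemma A.4 of \cite{BGKL17}.
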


\begin{proof}
As with the previous consistency results it suffices to show that $\mathfrak{C}_{j}=K\textnormal{-}\lim_{n\to\infty}\mathfrak{C}_{n,j}$ (see Lemma \ref{lem:kur.gamma}). We check the conditions in the definition (Definition \ref{def:kuratowski}).

(i) Suppose $C\in \mathfrak{C}_j$. Define $C''_{n}:=T_{n}[C-\overline{x}]+\overline{x}_n$ where $T_n$ and $T$ are determined through equations \eqref{eqn:Tn} and \eqref{eqn:Tinfty}. %$K_{n,k-1}=C_{n,k-1}-\overline{x}_n, \ K_{\infty,k-1} = C_{k-1}-\overline{x}$ and $K = C-\overline{x}$
%as in our earlier discussion (see equations \ref{eqn:Tn} and \ref{eqn:Tinfty}).
We see that $\dim(C''_n)\leq j$ and that the convexity and compactness of $C_n''$ is inherited by the linearity and continuity of $T_n$. Next, we will see that $C''_n\to C$ in Hausdorff distance. By the boundedness (in operator norm) of the family $(T_n)_{n\geq1}$, the set $\bigcup_nT_n[X-\overline{x}]+X$ is contained in a (large) bounded set in $H$. Since $H$ is finite dimensional, we can further take it to be compact. Thus, to get $\mathsf{h}(C_n'',C)\to 0 $ it suffices to show the Kuratowski convergence of $C_n''$ to $C$ (see Lemma \ref{lem:kur.haus}):

(i$'$) Let $x\in C$ and define $x_n=T_n(x-\overline{x})+\overline{x}_n\in C''_n$. We have, in view of Lemma \ref{lem:T.unif.on.cmpcts} and Remark \ref{rmk:properties.lin.transformations}(i), that $T_n\to \Pi_{\mathrm{aff}(C-\overline{x})}$ uniformly on $X-\overline{x}$. Consequently, as $\overline{x}_n\to \overline{x}$, the convergence $x_n\to x$ holds.

(ii$'$) Let $y_n\in C''_n$ and suppose $\tilde{y}$ is an accumulation point of $(y_n)_{n\geq1}$. We have there exists a subsequence such that $y_{n'}\to \tilde{y}$. By definition of $C''_n$ there exists an $x_{n'}\in C$ such that $y_{n'}=T_{n'}(x_{n'}-\overline{x})+\overline{x}_{n'}$. By compactness of $C$ extract a further subsequence such that $x_{n''}\to \tilde{x}\in C$. Applying the properties of $T_n$ (particularly Lemma \ref{lem:T.composition.conv}) we have
\begin{align*}\|\tilde{y}-\tilde{x}\|&=\lim_{n''\to\infty}\|y_{n''}-x_{n''}\|\\
&= \lim_{n''\to\infty}\|T_{n''}(x_{n''}-\overline{x})+\overline{x}_{n''}-x_{n''}\|=\|\tilde{x}-\overline{x}+\overline{x}-\tilde{x}\|=0.
\end{align*}
So, $\tilde{y}=\tilde{x}\in C$ and we conclude $h(C_n'',C)\to 0$. 

With this established, let $C'_n:=\mathrm{conv}(C''_n\cup C_{n,j-1})$. Note that we must have that $\dim(C'_n)\leq j$ (see Remark \ref{rmk:properties.lin.transformations}(iv) and argue by translation). By Lemma \ref{lem:conv.hull.properties} we have $C'_n\to C$ in Hausdorff distance as $C''_n\to C$ and $C_{n,j-1}\to C_{j-1}\subset C$. Here we also have $C'_n$ is convex and compact (again by Lemma \ref{lem:conv.hull.properties}), and $\overline{x}_n\in C'_n$. It remains to get a set that has the same properties, but is assured to be in $X$. Define $C_n=C'_n\cap X$. Applying the same reasoning as in the proof of Theorem \ref{thm:Global.CPCA.measure.consistency} we can make use of Theorem \ref{thm:haus.intersection} to get $\mathsf{h}(C_n, C)\to 0$. As $C_n\in \mathfrak{C}_{n,j}$ we are done with the first condition.

(ii) Let $C_{n}\in \mathfrak{C}_{n,j}$ for $n\geq0$. Suppose $C$ is an accumulation point of the $C_{n}$. As $\overline{x}_{n}\to \overline{x}$ we have $\overline{x}\in C$ (by Hausdorff convergence). We can conclude $C\in CC_j(X)$ by compactness since $C_{n}\in CC_j(X)$ for all $n$. Hence, $C\in CC_{\overline{x},j}(X)$. Finally, $C\supset C_{j-1}$ by \cite[Lemma A.4]{BGKL17} and so, $C\in \mathfrak{C}_{j}$. This completes the proof.
\end{proof}

We are now ready to prove the theorem.

\begin{proof}[Proof of Theorem \ref{thm:nested.CPCA.measure.consistency}]
Since global and nested CPCA are equivalent when $k = 1$, the case $j = 1$ is a consequence of Theorem \ref{thm:Global.CPCA.measure.consistency}.

We then argue by induction on $j$. Suppose (i)-(iii) hold up to some $j$. Then, along a subsequence $C_{n',\ell}\to C_\ell\in CC_{\overline{x},\ell}(X)$ for each $\ell \leq j$ and $C_0\subset C_1\subset \dots \subset C_j$ form a sequence of nested convex principal components. By Lemma \ref{lem:full.dim.cpc} the convex principal components are full dimensional. Specifically, $\dim(C_j)=j$ and so all the assumptions of Lemma \ref{lem:nested.consistency.inductive.step} are satisfied for $\mathfrak{C}_{n',j+1},\mathfrak{C}_{j+1}$. Thus, (iii) holds for $j+1$ and we can then extract a further subsequence such that (i) and (ii) also hold up to $j+1$. Thus by induction the theorem holds up to $k$.
\end{proof}

% \begin{proof}[Proof of Theorem \ref{thm:nested.CPCA.measure.consistency}]
% This is an immediate consequence of the above in view of Lemma \ref{lem:kur.gamma}.
% \end{proof}

\subsubsection{Proof of Theorem \ref{thm:global.consistency.CPCA.approx}}

\begin{proof}[Proof of Theorem \ref{thm:global.consistency.CPCA.approx}]
Note again by Lemma \ref{lem:kur.gamma} it suffices to show
\[
CC_{\bar{x},k}(X)=\Klim_{n\to\infty} CC_{\bar{x}_n,k}(X_n).
\]
We will prove the Kuratowski convergence via the definition (Definition \ref{def:kuratowski}). 

(i) Let $C\in CC_{\overline{x},k}(X)$ be arbitrary and let $C_n = \Pi_n C$. It is easy to verify that $C_n \in CC_{\overline{x}_n,k}(X_n)$. To prove (i) it is sufficient to show that $\mathsf{h}(C_n,C)\to 0$.

We will appeal to the equivalence of Kuratowski and Hausdorff convergence in the compact metric space $(X,\mathsf{d})$ (see Lemma \ref{lem:kur.haus}). First, if $x\in C$ we have that $\Pi_nx \in C_n$ is such that $\|\Pi_nx-x\|\to 0$. Second, consider $y_n\in C_n$ for $n\geq 1$ and let $y$ be an accumulation point. There must exist a subsequence $(y_{n'})$ such that $\|y_{n'}-y\|\to 0$. By definition we have that each $y_n$ corresponds to at least one $x_n\in C$ so that $y_n=\Pi_n x_n$. By compactness of $C$ there exists a further subsequence $(x_{n''})$ and $x\in C$ such that $\|x_{n''}-x\|\to 0$. Consider also $\Pi_nx$. For all $\epsilon>0$ there exists an $N$ such that for $n''$ sufficiently large we have $\|\Pi_{n''}x-x\|<\epsilon/3$, $\|y_{n''}-y\|<\epsilon/3$, and $\|x_{n''}-x\|<\epsilon/3$. Since $\Pi_n$ is $1$-Lipschitz, we have, for $n''$ sufficiently large,
\begin{align*}
    \|y-x\|&\leq \|y-y_{n''}\|+\|y_{n''}-\Pi_{n''}x\|+\|\Pi_{n''}x-x\|\\
    &=\frac{2}{3}\epsilon+\|\Pi_{n''}x_{n''}-\Pi_{n''}x\|
    \leq \frac{2}{3}\epsilon+\|x_{n''}-x\|< \epsilon.
\end{align*}
Since $\epsilon>0$ is arbitrary, we have $y=x$. It follows that $\mathsf{h}(C_n,C)\to 0$.

(ii) Let $C_n\in CC_{\overline{x}_n,k}(X_n)$ for $n\geq1$. Suppose that $C$ is an accumulation point of $(C_n)_{n\geq1}$. Then $\mathsf{h}(C_{n'},C)\to0$ along some subsequence. As $C_{n'}\ni\overline{x}_{n'}=\Pi_{n'}\overline{x}\to \overline{x}$ we have that $\overline{x}\in C$ (again by appealing to Kuratowski convergence). On the other hand, note that $CC_k(X)$ is compact so as $C_{n'}\in CC_k(X_n)\subset CC_k(X)$ we have that $C\in CC_k(X)$.
% (note if a set is non-empty, closed and convex of affine dimension less than $k$ in $X_n$ then it is non-empty, closed and convex with the same bound on the affine dimension in $X$). 
We conclude $C\in CC_{\overline{x},k}(X)$ which completes the proof.
\end{proof}

\subsubsection{Proof of Theorem \ref{thm:cont.pca}}\label{sec:pf.cont.pca}
We present directly the main argument, and the lemmas needed will be proved in the discussion to follow.

\begin{proof}[Proof of Theorem \ref{thm:cont.pca}]

Write $z_i = \overline{x} + a_i {\bf p} \in X$ and consider
\[
V({\bf p}) =\frac{1}{N} \sum_{i = 1}^N \left\{ \inf_{a_i\in\mathbb{R}:  \overline{x} + a_i {\bf p}\in X} \|x_i - \overline{x}-a_i{\bf p}\|^2 \right\}.
\]
Since there are only finitely many data points, there exists $r > 0$ such that the optimal $z_i$ has norm at most $\sqrt{r}$, for all $i$ and independent of ${\bf p}$. Without loss of generality, we may introduce an additional convex constraint $g_{m+1} = \|x\|^2 - r \leq 0$, and we may choose $r$ such that $g_{m+1}(\overline{x})<0$. Let $X' = X\cap\{x\in H: g_{m+1}(x)\leq0\}$ be the new convex state space which does not affect the solution. Define the correspondence 
\begin{equation}\label{eqn:Phi.correspondence}
    \Phi(\mathbf{p}) =\bigcap_{j=1}^{m+1}\{a\in\mathbb{R}:  g_j(\overline{x}+a\mathbf{p})\leq 0\}=\{a\in\mathbb{R}:\overline{x}+a{\bf p}\in X'\}
\end{equation}
on $H\setminus \{0\}$. We now have
\begin{equation}\label{eqn:V.with.Phi}
V({\bf p}) =\frac{1}{N} \sum_{i = 1}^N \left\{ \inf_{a_i\in \Phi(\mathbf{p})} \|x_i - \overline{x}-a_i{\bf p}\|^2 \right\}.
\end{equation}
% where we define the correspondence $\Phi(\mathbf{p})$
% on $H \setminus \{0\}$ as in Equation \eqref{eqn:Phi.correspondence}. 
By Lemma \ref{lem:correspondence} below, $\Phi$ is continuous, non-empty and compact valued. This, alongside the continuity of $\|x_i-\overline{x}-a_i\mathbf{p}\|^2$ allows us to apply Berge's Maximum Theorem (see Theorem \ref{thm:berge}) to get that
\[
\psi_i(\mathbf{p})=\inf_{a_i\in\Phi(\mathbf{p})}\|x_i-\overline{x}-a_i\mathbf{p}\|^2
\]
is continuous on $H\setminus\{0\}$. This implies that $V(\cdot)$, as an average of the continuous functions $\psi_i(\cdot)$, is continuous. Berge's Maximum Theorem also tells us that that the set of optimizers $a^*_i(\mathbf{p})$ corresponding to each $\psi_i$ is upper-hemicontinuous, non-empty, and compact. By the strict convexity of $\|x_i-\overline{x}-a_i\mathbf{p}\|^2$ in $a_i\in\mathbb{R}$ we have a unique optimizer. 
%It is either the candidate optimizer $a_c=\frac{(x_i-\overline{x})^\top\mathbf{p}}{\mathbf{p}^\top\mathbf{p}}$ obtained from the unconstrained problem, or the endpoint $[\underline{a}(\mathbf{p})$, $\overline{a}(\mathbf{p})]$ of the feasible set that is closest to the candidate optimizer. 
Hence, we also have that $a^*_i(\mathbf{p})$ is a singleton. By Lemma \ref{lem:hemi.singleton}
%the fact that $a^*_i(\mathbf{p})$ is a singleton then
this implies that it is continuous on $H\setminus\{0\}$ when viewed as a function.% This completes the proof.
\end{proof}

\begin{lemma} \label{lem:correspondence}
The correspondence $\Phi(\mathbf{p})$ from \eqref{eqn:Phi.correspondence}
%\[\Phi(\mathbf{p})=\{a\in\mathbb{R}:  g_k(\overline{x}+a\mathbf{p})\leq 0,\  k=1,...,N+1\}\]
%where
%\[g_{N+1}(\mathbf{x}^{(0)}+a\mathbf{p})=||\mathbf{x}^{(0)}+a\mathbf{p}||_2-r, \ \ \ r> 0\]
is non-empty, continuous, convex, and compact-valued for every $\mathbf{p}\in H\setminus\{0\}$ under the assumptions of Theorem \ref{thm:cont.pca}. In particular, it is a compact interval.
\end{lemma}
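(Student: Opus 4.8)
The plan is to verify the four asserted properties in the order nonemptiness, convexity, compactness, continuity, treating the first three as routine and spending the bulk of the effort on continuity, which I would split into upper and lower hemicontinuity. For the basic properties: since $g_j(\overline{x})<0$ for every $j=1,\ldots,m+1$ --- by Condition~\ref{cond:finite.dim.conditions}(iv) for $j\le m$ and by the choice of $r$ for $j=m+1$ --- we have $0\in\Phi(\mathbf{p})$, so $\Phi(\mathbf{p})\ne\emptyset$. For fixed $\mathbf{p}$ the map $a\mapsto g_j(\overline{x}+a\mathbf{p})$ is convex (a convex function precomposed with an affine map) and continuous, so each set $\{a:g_j(\overline{x}+a\mathbf{p})\le0\}$ is closed and convex, and hence so is the finite intersection $\Phi(\mathbf{p})$. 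The constraint $g_{m+1}$ gives boundedness: $a\in\Phi(\mathbf{p})$ forces $\|\overline{x}+a\mathbf{p}\|\le\sqrt r$, whence $|a|\,\|\mathbf{p}\|\le\sqrt r+\|\overline{x}\|$, and since $\mathbf{p}\ne0$ this bounds $|a|$. A nonempty, closed, bounded, convex subset of $\mathbb{R}$ is a compact interval, which settles everything except continuity.

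For upper hemicontinuity I would show the graph of $\Phi$ is closed and that $\Phi$ is locally bounded, then invoke the sequential characterisation of upper hemicontinuity for locally bounded compact-valued correspondences (using that $H$ is finite dimensional). Closed graph: if $\mathbf{p}_n\to\mathbf{p}$ in $H\setminus\{0\}$ and $a_n\to a$ with $a_n\in\Phi(\mathbf{p}_n)$, then $\overline{x}+a_n\mathbf{p}_n\to\overline{x}+a\mathbf{p}$ and continuity of the $g_j$ gives $g_j(\overline{x}+a\mathbf{p})\le0$ for all $j$, i.e.\ $a\in\Phi(\mathbf{p})$. Local boundedness near $\mathbf{p}_0\ne0$: on $V=\{\mathbf{p}:\|\mathbf{p}-\mathbf{p}_0\|<\|\mathbf{p}_0\|/2\}$ the estimate above gives $|a|\le 2(\sqrt r+\|\overline{x}\|)/\|\mathbf{p}_0\|$ for every $a\in\Phi(\mathbf{p})$, $\mathbf{p}\in V$. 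Thus any sequence $a_n\in\Phi(\mathbf{p}_n)$ with $\mathbf{p}_n\to\mathbf{p}_0$ is bounded in $\mathbb{R}$, a subsequence converges to some $a$, and $a\in\Phi(\mathbf{p}_0)$ by the closed graph.

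Lower hemicontinuity is the crux, and it is precisely where strict feasibility of $\overline{x}$ (the Slater-type condition $g_j(\overline{x})<0$) is used. The key observation is that for any $a\in\Phi(\mathbf{p})$ and any $t\in(0,1]$, the point $(1-t)a$ is \emph{strictly} feasible: writing $\overline{x}+(1-t)a\mathbf{p}=(1-t)(\overline{x}+a\mathbf{p})+t\,\overline{x}$ and using convexity of $g_j$,
\[
g_j\bigl(\overline{x}+(1-t)a\mathbf{p}\bigr)\le(1-t)\,g_j(\overline{x}+a\mathbf{p})+t\,g_j(\overline{x})\le t\,g_j(\overline{x})<0,\qquad j=1,\ldots,m+1.
\]
Given $\mathbf{p}_n\to\mathbf{p}_0$ in $H\setminus\{0\}$ and $a\in\Phi(\mathbf{p}_0)$, for each $k\ge1$ the point $b_k=(1-1/k)a$ is strictly feasible for $\mathbf{p}_0$, so $g_j(\overline{x}+b_k\mathbf{p}_n)\to g_j(\overline{x}+b_k\mathbf{p}_0)<0$ as $n\to\infty$; since there are finitely many $j$, there is a strictly increasing sequence $(N_k)$ with $b_k\in\Phi(\mathbf{p}_n)$ for all $n\ge N_k$. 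Setting $a_n=b_k$ for $N_k\le n<N_{k+1}$ (and $a_n=0\in\Phi(\mathbf{p}_n)$ for $n<N_1$) produces a selection $a_n\in\Phi(\mathbf{p}_n)$ with $a_n\to a$, which is lower hemicontinuity. Together with the previous paragraph this shows $\Phi$ is continuous, and since it is compact and convex valued in $\mathbb{R}$ it is a compact interval, as claimed.

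I expect lower hemicontinuity to be the only genuine obstacle. Without the hypothesis $g_j(\overline{x})<0$ the interval $\Phi(\mathbf{p})$ could collapse or jump inward as $\mathbf{p}$ varies --- for instance if $\overline{x}$ lay on $\partial X$ --- and it is exactly the strict-feasibility assumption that lets one approximate a boundary point of $\Phi(\mathbf{p}_0)$ from inside by points that are strictly feasible, hence stable under perturbation of $\mathbf{p}$. The remaining arguments are standard facts about convex sublevel sets and about hemicontinuity of correspondences with closed graph.
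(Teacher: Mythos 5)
Your proposal is correct, and it reaches the conclusion by a genuinely different route than the paper for the only nontrivial part, continuity. You and the paper handle nonemptiness ($0\in\Phi(\mathbf{p})$ from $g_j(\overline{x})<0$), convexity/closedness (sublevel sets composed with an affine map), and compactness (the auxiliary constraint $g_{m+1}=\|x\|^2-r$ forces $|a|\,\|\mathbf{p}\|\leq \sqrt r+\|\overline{x}\|$) identically. For continuity, the paper works through the endpoint functions $\underline{a}(\mathbf{p}),\overline{a}(\mathbf{p})$ of the interval: it shows these are continuous (Lemma \ref{lem:a.cont}) via the implicit function theorem applied to the binding constraints (Lemma \ref{lem:local.sol}, where strict feasibility of $\overline{x}$ rules out a vanishing derivative) together with the geometric ``flashlight'' localization (Lemma \ref{lem:flashlight}), and then invokes Theorem \ref{thm:cont.func.convex.hull.cont} to pass from the two-point correspondence $\{\underline{a},\overline{a}\}$ to its convex hull $\Phi$. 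You instead prove upper hemicontinuity directly from the closed graph plus local boundedness of $\Phi$ near any $\mathbf{p}_0\neq 0$, and lower hemicontinuity by the Slater-type radial argument: for $a\in\Phi(\mathbf{p}_0)$ the points $(1-t)a$ are strictly feasible by convexity and $g_j(\overline{x})<0$, hence remain feasible for nearby $\mathbf{p}_n$, and a diagonal selection gives $a_n\in\Phi(\mathbf{p}_n)$ with $a_n\to a$. Your argument is more elementary and slightly more general: it uses only continuity and convexity of the $g_j$ (no continuous differentiability, no implicit function theorem, no flashlight lemma), and it suffices for the downstream uses of the lemma in Theorems \ref{thm:cont.pca} and \ref{thm:pca.diff}. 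What the paper's route buys in exchange is the finer structural information that the endpoint maps $\overline{a},\underline{a}$ are themselves continuous (and locally $C^1$ where a single constraint binds), which connects to the boundary-point computations used in the algorithm. Your diagnosis that lower hemicontinuity is the genuine crux, and that it is exactly where strict feasibility of $\overline{x}$ enters, matches the role this hypothesis plays in the paper's proof as well.
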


\begin{proof}
Clearly, $\Phi$ is closed and convex as the intersection of the sub-level sets of continuously differentiable convex functions in $a$. It is non-empty since $0\in\Phi(\mathbf{p})$ by assumption on $\overline{x}$ and $r$. Moreover, by the last constraint $\Phi$ is bounded for each fixed $\mathbf{p}\in H\setminus \{0\}$. Thus, it is compact by the Heine-Borel Theorem. Since $\Phi$ is also convex and compact in $\mathbb{R}$ it must define a closed interval. It remains to show that $\Phi$ is continuous as a correspondence. To see this first consider the correspondence:
\begin{equation}\label{eqn:overline.underline.a}\phi(\mathbf{p})=\{\underline{a}(\mathbf{p}),\overline{a}(\mathbf{p})\},
\end{equation}
where $\underline{a}(\mathbf{p})$ solves for the intersection of the line  $\overline{x}+a\mathbf{p}$ with the feasible region boundary when $a<0$. That is
\[
\overline{x}+\underline{a}(\mathbf{p})\mathbf{p}\in \partial\{x\in H:  g_j(x)\leq 0,\  j=1,...,m+1\}
\]
and $\underline{a}(\mathbf{p})<0$. Define $\overline{a}(\mathbf{p})$ similarly for the case when $a>0$. Note that there are always two such points since by assumption the origin of the line, $\overline{x}$, is always in the interior of the feasible region and by the convexity of the region the line must exit (and hence intersect) the region at two points. By Lemma \ref{lem:a.cont} below the functions $\overline{a}(\cdot),\underline{a}(\cdot)$ are continuous in $\mathbf{p}\in H\setminus \{0\}$. As a result, by Theorem \ref{thm:cont.func.convex.hull.cont} $\phi$ is continuous as a correspondence and again by Theorem \ref{thm:cont.func.convex.hull.cont} since $\Phi$ is the convex hull of $\phi$ in $\mathbb{R}$, $\Phi$ is also continuous as a correspondence.
\end{proof}

To prove the continuity of $\overline{a}(\cdot),\underline{a}(\cdot)$ from \eqref{eqn:overline.underline.a} we will need some preliminary lemmas. The first is a technical result that states we can choose a full dimensional convex cone from the origin to intersect a portion of the boundary of a compact convex set with non-empty interior containing the origin. Heuristically, it says that we can shine a ``flashlight'' from the interior onto the boundary of a convex set.

% We begin by presenting some preliminary lemmas. The first is a technical result that states we can choose a full dimensional convex cone from the origin to intersect a portion of the boundary of a compact convex set with non-empty interior containing the origin. Heuristically, it says that we can shine a ``flashlight" from the interior onto the boundary of a convex set.

\begin{lemma}[Flashlight lemma] \label{lem:flashlight}
Let $X$ be a compact, convex set in $H$ for $\dim(H)\geq 2$ with non-empty interior containing the origin. Suppose $x_0\in\partial X$ and consider a ball of radius $\delta$ about $x_0$, $B_\delta(x_0)$ that does not contain $0$. There exists a convex cone $K=\{\lambda v:\lambda\in\mathbb{R}_+,\ v\in V\subset H\}$ of rays from the origin with non-empty interior containing $x_0$ and (necessarily) intersecting $B_\delta(x_0)$ such that $\partial X\cap K \subseteq \partial X\cap B_\delta(x_0)$.
\end{lemma}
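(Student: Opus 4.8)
The plan is to parametrize $\partial X$ radially from the origin and then take $K$ to be the cone generated by a small spherical cap around the direction of $x_0$. Since $0 \in \mathrm{int}(X)$ and $X$ is compact and convex, every ray $\{tu : t \ge 0\}$ from the origin (with $u$ in the unit sphere $S = \{u \in H : \|u\| = 1\}$) meets $\partial X$ in exactly one point; write it as $\rho(u)\,u$, where $\rho(u) = \sup\{ t \ge 0 : t u \in X \} \in (0, \infty)$. The radial function $\rho : S \to (0,\infty)$ is continuous (a standard fact for compact convex bodies; note that a compact set with non-empty interior forces $\dim H < \infty$, so this is the finite-dimensional statement), so $u \mapsto \rho(u) u$ is a continuous bijection of $S$ onto $\partial X$. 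Because $0 \in \mathrm{int}(X)$ and $x_0 \in \partial X$ we have $x_0 \neq 0$, and hence $x_0 = \rho(u_0)\, u_0$ where $u_0 := x_0 / \|x_0\| \in S$.

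Next I would fix the aperture. By continuity of $u \mapsto \rho(u)u$ at $u_0$, choose $\theta \in (0, \pi/2)$ so small that $\langle u, u_0 \rangle > \cos\theta$ implies $\| \rho(u) u - x_0 \| < \delta$ for $u \in S$. Set $V = \{ u \in S : \langle u, u_0 \rangle > \cos\theta \}$ and let $K = \{ \lambda v : \lambda \in \mathbb{R}_+, \ v \in V \}$ be the cone it generates. Then: (a) $K$ is a convex cone --- this is the one point needing care: since $\theta < \pi/2$ the cap $V$ lies strictly on one side of the hyperplane $u_0^{\perp}$, which together with $\|v\| = 1$ on $V$ and the triangle inequality makes $\lambda_1 v_1 + \lambda_2 v_2$ again a nonnegative multiple of an element of $V$; (b) $K$ has non-empty interior, because $\dim H \ge 2$ and $\{\lambda v : \lambda > 0,\ v \in V\}$ is the preimage of the relatively open cap $V$ under the continuous map $x \mapsto x/\|x\|$ on $H \setminus \{0\}$, hence open and non-empty in $H$; (c) $x_0 \in \mathrm{int}(K)$, since $u_0 \in V$ with $\langle u_0, u_0 \rangle = 1 > \cos\theta$ and $x_0 = \|x_0\|\, u_0$ with $\|x_0\| > 0$; in particular $x_0 \in K \cap B_\delta(x_0)$, so $K$ meets $B_\delta(x_0)$.

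It remains to check $\partial X \cap K \subseteq \partial X \cap B_\delta(x_0)$. Let $y \in \partial X \cap K$. Since $0 \in \mathrm{int}(X)$, we have $y \neq 0$, so writing $y = \lambda v$ with $\lambda \ge 0$ and $v \in V \subset S$ forces $\lambda = \|y\| > 0$ and $y / \|y\| = v \in V$. By the radial description of $\partial X$, $y = \rho(y/\|y\|)\,(y/\|y\|)$, and $v = y/\|y\| \in V$ means $\langle v, u_0 \rangle > \cos\theta$; hence by the choice of $\theta$, $\| y - x_0 \| < \delta$, i.e.\ $y \in B_\delta(x_0)$. Together with (a)--(c) this gives a cone $K$ with all the asserted properties.

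I expect the only genuine (if minor) obstacle to be the convexity of $K$ in step (a): a cone over a spherical cap is convex precisely when the cap has angular radius at most $\pi/2$, so the aperture $\theta$ must be kept below $\pi/2$ --- which is compatible with choosing it small. Everything else is routine continuity bookkeeping. (The hypothesis that $B_\delta(x_0)$ omits $0$ is not actually needed beyond ensuring $x_0 \neq 0$, which already follows from $x_0 \in \partial X$ and $0 \in \mathrm{int}(X)$.)
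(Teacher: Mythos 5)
Your proof is correct and follows essentially the same route as the paper's: parametrize $\partial X$ radially from the interior point $0$ and take $K$ to be the cone over a small set of directions around $u_0 = x_0/\|x_0\|$. The only cosmetic difference is that the paper reduces to the unit sphere via the homeomorphism $x \mapsto x/\|x\|$ and gets convexity for free by coning over $\mathrm{conv}\bigl(\partial B_\delta(x_0)\cap S^{d-1}\bigr)$, whereas you cone over an open cap of aperture $\theta < \pi/2$ and check convexity directly, while also making the continuity of the radial function explicit where the paper leaves it implicit.
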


\begin{proof}
As $\dim(H)\geq 2$ by the homeomorphism $x/\|x\|$ we can restrict our attention to $\partial X=S^{d-1}=\{x\in H: \|x\|=1\}$. Consider $D:=\partial B_\delta(x_0)\cap S^{d-1}$ which is a $d-2$ dimensional sphere if $d>2$, and $2$ distict points otherwise. Then, $\conv(D)$ is an $d-1$ dimensional ball that lives in an affine subspace $A$ of $H$ that does not contain $0$ and satisfies $\dim(A)=d-1$. Let $K=\mathrm{cone}(\conv(D))=\bigcup_{\lambda\geq0}\lambda\conv(D)$ and notice that every ray in $K$ intersects $S^{d-1}$ at a point in $B_\delta(x_0)\cap S^{d-1}$. Furthermore, it is clear $K$ is full dimensional (as $\dim(A)=d-1$) and $x_0\in \mathrm{int}(K)$.
\end{proof}

The second lemma characterizes the (local) solution $a^*_k(\cdot)$ of the equation defining the points where a constraint $g_j(\overline{x}+a\mathbf{p})$ from $\Phi(\mathbf{p})$ in \eqref{eqn:Phi.correspondence} is binding.

\begin{lemma} \label{lem:local.sol} Let $\bar{g}_j(\mathbf{p},a)=g_j(\overline{x}+a\mathbf{p})$ and suppose there exists a non-zero solution $a^*$ to the equation
\[
\bar{g}_j(\mathbf{p}_0,a)=0
\]
at $\mathbf{p}_0\in H$ for some $j\in\{1,...,m+1\}$. Then under the assumptions of Theorem \ref{thm:cont.pca} there exist an open set $U_j$ containing $\mathbf{p}_0$ and a unique $C^1$ function $a_j^*:U_k\to\mathbb{R}$ such that $a^*_j(\mathbf{p}_0)=a^*$ and 
\[
\bar{g}_j(\mathbf{p},a^*_j(\mathbf{p}))=0, \quad \mathbf{p}\in U_k
\]
In particular, if $a^*>0$ (resp. $a^*<0$) there exists an open set $V_j\subset U_k$ containing $\mathbf{p}_0$ such that $a_j^*(\mathbf{p})>0$ (resp.~$a_j^*(\mathbf{p})<0$) on $V_k$.
\end{lemma}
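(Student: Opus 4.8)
The plan is to invoke the implicit function theorem for $\bar g_j$, the only genuine content being the verification of the transversality (non‑degeneracy) condition $\partial_a \bar g_j(\mathbf{p}_0, a^*)\neq 0$; this will follow from the convexity of $g_j$ together with the strict feasibility $g_j(\overline{x})<0$ guaranteed by Condition \ref{cond:finite.dim.conditions}(iv) (and the analogous property of the auxiliary constraint $g_{m+1}=\|x\|^2-r$ introduced in the proof of Theorem \ref{thm:cont.pca}). Since each $g_j\colon H\to\mathbb{R}$ is $C^1$ and $(\mathbf{p},a)\mapsto \overline{x}+a\mathbf{p}$ is a polynomial map, $\bar g_j(\mathbf{p},a)=g_j(\overline{x}+a\mathbf{p})$ is $C^1$ on $H\times\mathbb{R}$, with
\[
\partial_a \bar g_j(\mathbf{p},a)=\langle \nabla g_j(\overline{x}+a\mathbf{p}),\mathbf{p}\rangle .
\]

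Next I would establish the transversality condition at $(\mathbf{p}_0,a^*)$. Write $z^*=\overline{x}+a^*\mathbf{p}_0$, so $g_j(z^*)=\bar g_j(\mathbf{p}_0,a^*)=0$. The gradient inequality for the convex function $g_j$ gives
\[
g_j(\overline{x})\;\ge\; g_j(z^*)+\langle \nabla g_j(z^*),\overline{x}-z^*\rangle \;=\; -\,a^*\langle \nabla g_j(z^*),\mathbf{p}_0\rangle ,
\]
where I used $\overline{x}-z^*=-a^*\mathbf{p}_0$. Since $g_j(\overline{x})<0$ this yields $a^*\langle \nabla g_j(z^*),\mathbf{p}_0\rangle>0$, and because $a^*\neq 0$ by hypothesis we conclude $\langle \nabla g_j(z^*),\mathbf{p}_0\rangle\neq 0$ (with the same sign as $a^*$). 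Hence $\partial_a\bar g_j(\mathbf{p}_0,a^*)\neq 0$.

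With this in hand, the implicit function theorem (applicable since $H$ is finite dimensional, hence isomorphic to $\mathbb{R}^d$) produces an open set $U_j\ni\mathbf{p}_0$, an open interval $I\ni a^*$, and a unique $C^1$ map $a_j^*\colon U_j\to I$ with $a_j^*(\mathbf{p}_0)=a^*$ and $\bar g_j(\mathbf{p},a_j^*(\mathbf{p}))=0$ for all $\mathbf{p}\in U_j$; moreover $a_j^*(\mathbf{p})$ is the only zero of $a\mapsto\bar g_j(\mathbf{p},a)$ in $I$, which is the asserted local uniqueness. Finally, for the sign statement, if $a^*>0$ then, by continuity of $a_j^*$, the set $V_j=\{\mathbf{p}\in U_j: a_j^*(\mathbf{p})>0\}$ is an open neighbourhood of $\mathbf{p}_0$ on which $a_j^*>0$; the case $a^*<0$ is identical. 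I do not expect a serious obstacle here: the argument is a textbook application of the implicit function theorem, and the only delicate point is the sign computation above, which is exactly where the hypotheses $a^*\neq 0$ and the strict feasibility of the reference point $\overline{x}$ are used.
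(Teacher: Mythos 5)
Your proposal is correct and follows essentially the same route as the paper: verify the non-degeneracy condition $\partial_a\bar g_j(\mathbf{p}_0,a^*)\neq 0$ using convexity in $a$ together with the strict feasibility $g_j(\overline{x})<0$, then apply the implicit function theorem, and obtain the sign statement by continuity of $a_j^*$. The only (cosmetic) difference is that you derive transversality via the gradient inequality at $z^*$, while the paper argues by contradiction that a vanishing $a$-derivative would make $a^*$ a minimizer of the convex function $a\mapsto\bar g_j(\mathbf{p}_0,a)$, contradicting $\bar g_j(\mathbf{p}_0,0)<0$.
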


\begin{proof}
We begin by showing that $\frac{\partial \bar{g}_j}{\partial a}(\mathbf{p}_0,a^*)\not=0$ by contradiction. Suppose that $\frac{\partial \bar{g}_j}{\partial a}(\mathbf{p}_0,a^*)=0$. Since $\bar{g}_k$ is convex in the coordinate corresponding to $a$ 
%(as the composition of a convex function and an affine function of $a$) 
we have that $\bar{g}_j(\mathbf{p}_0,a^*)=0$ is a minimum of $\bar{g}_j(\mathbf{p}_0,a)$. However, by setting $a=0$ this minimality contradicts our assumption on $\overline{x}$ which says $\overline{g}_j(\mathbf{p}_0,0)<0$. Hence, we conclude that $\frac{\partial \bar{g}_j}{\partial a}(\mathbf{p}_0,a^*)\not=0$. With this result, since we have assumed the existence of a solution $a^*$ to the equation $\bar{g}_j(\mathbf{p}_0,a)=0$ we can apply the implicit function theorem to get the first result in the statement of the lemma. The second result follows taking $V_j =a_j^*{}^{-1}((0,\infty))$ as the preimage which is open by continuity and contains $\mathbf{p}_0$ by assumption on $a^*=a_j^*(\mathbf{p}_0)$.
\end{proof}

With this we can tackle the final result of this section.

\begin{lemma}\label{lem:a.cont}
The functions $\overline{a}(\mathbf{p})$ and $\underline{a}(\mathbf{p})$ from \eqref{eqn:overline.underline.a} are continuous in ${\bf p} \in H \setminus \{0\}$.
\end{lemma}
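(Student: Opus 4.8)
Since $\underline{a}({\bf p})=-\overline{a}(-{\bf p})$ and ${\bf p}\mapsto-{\bf p}$ is a homeomorphism of $H\setminus\{0\}$, it suffices to prove that $\overline{a}$ is continuous. Throughout, recall that in the proof of Theorem~\ref{thm:cont.pca} one works with the compactified domain $X'$ obtained by adjoining $g_{m+1}=\|x\|^2-r\le 0$, so that $X'$ is compact and convex, $g_j(\overline{x})<0$ for all $j=1,\dots,m+1$ (by Condition~\ref{cond:finite.dim.conditions}(iv) and the choice of $r$), and, unwinding \eqref{eqn:overline.underline.a} using convexity of $X'$ together with $\overline{x}\in\mathrm{int}(X')$, $\overline{a}({\bf p})$ is the unique positive number with $\overline{x}+\overline{a}({\bf p}){\bf p}\in\partial X'$, equivalently $\overline{a}({\bf p})=\max\{a\ge 0:\overline{x}+a{\bf p}\in X'\}$.

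Fix ${\bf p}_0\in H\setminus\{0\}$, set $a^*:=\overline{a}({\bf p}_0)$ and $x_0:=\overline{x}+a^*{\bf p}_0\in\partial X'$. \textbf{Step 1 (local boundedness of $\overline{a}$).} Since $g_j(\overline{x})<0$ for all $j$ there is $\rho>0$ with $B_\rho(\overline{x})\subseteq X'$, whence $\overline{a}({\bf p})\ge\rho/\|{\bf p}\|$; and since $X'\subseteq\overline{B}_{\sqrt r}(0)$ we get $\overline{a}({\bf p})\|{\bf p}\|\le\sqrt r+\|\overline{x}\|$. On a small ball around ${\bf p}_0$, on which $\|{\bf p}\|$ is bounded above and away from $0$, these two bounds confine $\overline{a}({\bf p})$ to a compact subset of $(0,\infty)$; in particular $a^*>0$ and $x_0\ne\overline{x}$. \textbf{Step 2 (limiting boundary point).} Let ${\bf p}_n\to{\bf p}_0$; by Step 1 pass to a subsequence with $\overline{a}({\bf p}_n)\to b\in(0,\infty)$. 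Then $\overline{x}+\overline{a}({\bf p}_n){\bf p}_n\to\overline{x}+b{\bf p}_0$, and since each term lies in $\partial X'=X'\setminus\mathrm{int}(X')$, which is closed, we get $\overline{x}+b{\bf p}_0\in\partial X'$. \textbf{Step 3 (uniqueness of the positive crossing).} On the ray $\{\overline{x}+a{\bf p}_0:a\ge 0\}$ the only boundary point with $a>0$ is $a=a^*$: for $0<a<a^*$ the point $\overline{x}+a{\bf p}_0$ is a convex combination of $\overline{x}\in\mathrm{int}(X')$ and $x_0\in X'$, hence lies in $\mathrm{int}(X')$; and for $a>a^*$ it lies outside $X'$ by the description of $a^*$ above. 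Hence $b=a^*$. As every subsequence of $(\overline{a}({\bf p}_n))$ has a further subsequence converging to $a^*$, we conclude $\overline{a}({\bf p}_n)\to a^*=\overline{a}({\bf p}_0)$, so $\overline{a}$ is continuous at ${\bf p}_0$, and ${\bf p}_0\in H\setminus\{0\}$ was arbitrary.

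It is worth recording the route suggested by the two preceding lemmas, which in addition exhibits $\overline{a}$ locally as a minimum of finitely many $C^1$ functions. Assume $\dim H\ge 2$ (the case $\dim H=1$ being trivial). Apply Lemma~\ref{lem:flashlight} to $X'-\overline{x}$ at the boundary point $x_0-\overline{x}$, choosing $\delta>0$ small enough that $\overline{x}\notin B_\delta(x_0)$ and that every $g_j$ with $g_j(x_0)<0$ stays strictly negative on $B_\delta(x_0)$; translating back gives a full-dimensional cone $K$ with apex $\overline{x}$, $x_0\in\mathrm{int}(K)$, and $\partial X'\cap K\subseteq\partial X'\cap B_\delta(x_0)$. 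Since $K$ is a cone and $x_0-\overline{x}\in\mathrm{int}(K-\overline{x})$, the direction ${\bf p}_0$ lies in $\mathrm{int}(K-\overline{x})$, so for all ${\bf p}$ in a neighbourhood $U$ of ${\bf p}_0$ the ray $\{\overline{x}+a{\bf p}:a\ge 0\}$ stays in $K$ and its exit point $\overline{x}+\overline{a}({\bf p}){\bf p}$ lies in $B_\delta(x_0)$. On $B_\delta(x_0)$ the set $X'$ is cut out only by the constraints active at $x_0$; Lemma~\ref{lem:local.sol}, applied to each such $j$ with the nonzero root $a^*$, yields $C^1$ branches $a_j^*$ near ${\bf p}_0$, and since $\bar g_j({\bf p},\cdot)$ is convex with $\bar g_j({\bf p},0)<0$ it has a unique positive zero, namely $a_j^*({\bf p})$; one then checks $\overline{a}({\bf p})=\min_{j\text{ active at }x_0}a_j^*({\bf p})$ on $U$, a minimum of finitely many continuous functions.

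The only nontrivial point, common to both arguments, is that \emph{a priori, as ${\bf p}$ varies the exit point could slide to a distant part of $\partial X'$, or its set of binding constraints could change} — so no direct implicit-function or stability argument applies. In the first approach this is dispatched by combining closedness of $\partial X'$, the local boundedness of $\overline{a}$, and the uniqueness of the positive boundary crossing (which uses only convexity and $\overline{x}\in\mathrm{int}(X')$); in the second it is dispatched by the flashlight lemma, which traps the exit point in a ball on which only the constraints active at $x_0$ are relevant, reducing everything to the locally smooth, stable situation of Lemma~\ref{lem:local.sol}.
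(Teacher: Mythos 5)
Your main argument is correct, but it takes a genuinely different and more elementary route than the paper. The paper localizes the exit point via the ``flashlight'' Lemma \ref{lem:flashlight} (so that, near $\mathbf{p}_0$, only the constraints active at $\overline{x}+\overline{a}(\mathbf{p}_0)\mathbf{p}_0$ can bind) and then invokes the implicit-function branches $a_j^*$ of Lemma \ref{lem:local.sol}, concluding that locally $\overline{a}(\mathbf{p})=a_j^*(\mathbf{p})$ for some active $j$ and extracting continuity from an $\epsilon$--$\delta$ argument over the finitely many branches; your secondary sketch reproduces essentially this proof, with the extra (correct, and easily checked via the fact that a convex function negative at $0$ has at most one positive zero) observation that $\overline{a}=\min_{j \text{ active}} a_j^*$ locally. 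Your primary argument instead uses only: (a) the identification $\overline{a}(\mathbf{p})=\max\{a\ge 0:\overline{x}+a\mathbf{p}\in X'\}$ and the uniqueness of the positive boundary crossing of a ray issued from an interior point of the compact convex set $X'$ (the convex-combination argument, which also justifies that $\overline{a}$ is well defined in the first place); (b) local bounds $\rho/\|\mathbf{p}\|\le\overline{a}(\mathbf{p})\le(\sqrt r+\|\overline{x}\|)/\|\mathbf{p}\|$ keeping $\overline{a}$ in a compact subset of $(0,\infty)$ near $\mathbf{p}_0$; and (c) closedness of $\partial X'$ plus a subsequence extraction. This avoids both the flashlight lemma and the implicit function theorem, and correctly pinpoints the issue both proofs must address (that the exit point or its active set could a priori jump); the reduction $\underline{a}(\mathbf{p})=-\overline{a}(-\mathbf{p})$ is also valid. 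What the paper's (and your second) route buys in exchange for its extra machinery is the local $C^1$ description of $\overline{a}$ as a minimum of finitely many smooth branches, which your compactness argument does not provide but which is not needed for the continuity statement itself.
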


\begin{proof}
It suffices to consider $\overline{a}(\mathbf{p})$. Begin by fixing a vector $\mathbf{p}_0\in H\setminus\{0\}$. The point $\overline{x}+\overline{a}(\mathbf{p}_0)\mathbf{p}_0$ is on the boundary of $X'=\bigcap_{j=1}^{m+1}\{x\in H:  g_j(x)\leq 0\}$ 
and so by Lemma \ref{lem:constr.set.properties} this implies there is at least one $g_j$ such that
\[
g_j(\overline{x}+\overline{a}(\mathbf{p}_0)\mathbf{p}_0)=0
\]
with $\overline{a}(\mathbf{p}_0)>0$. By Lemma \ref{lem:local.sol}, for each $j$ for which this holds there exist an open set $U_j$ containing $\mathbf{p}_0$ and unique $C^1$ functions $a_j^*:U_j\to \mathbb{R}_+$ such that $a_j^*(\mathbf{p}_0)=\overline{a}(\mathbf{p}_0)$ and
\[
g_j(\overline{x}+a^*_j(\mathbf{p})\mathbf{p})=0, \quad \mathbf{p}\in U_j.
\]
With this established, let $\mathcal{B}\subset\{1,...,m+1\}$ be the set of indices corresponding to the binding/active constraints at $\mathbf{p}_0$ and let $\mathcal{I}$ be the set of inactive constraints at $\mathbf{p}_0$ (note $\mathcal{I}\cup\mathcal{B}=\{1,...,m+1\}$). 

For each $j\in \mathcal{I}$ consider the set
\[
I_j=g^{-1}_{j}((-\infty,0))
\]
Clearly $\overline{x}+\overline{a}(\mathbf{p}_0)\mathbf{p}_0\in I_k$ by assumption. Since the $g_k$ are continuous, each $I_k$ is open. Let:
\[
I=\bigcap_{j\in\mathcal{I}}I_j.
\]
As the intersection of finitely many non-empty open sets containing $\overline{x}+\overline{a}(\mathbf{p}_0)\mathbf{p}_0$, $I$ is open and non-empty. Moreover, if $x$ is in $I$ we have that $g_j(x)<0$ for all $k\in\mathcal{I}$. 

Now consider the functions $\mathbf{h}_j:U_j\to H$ for $j\in \mathcal{B}$ given by
\[
\mathbf{h}_j(\mathbf{p})=\overline{x}+a^*_j(\mathbf{p})\mathbf{p}.
\]
By the continuity of $a^*_j$ on $U_j$ this is continuous on $U_j$. Moreover since $\mathbf{p}_0$ is such that $\mathbf{h}_j(\mathbf{p}_0)=\overline{x}+\overline{a}(\mathbf{p}_0)\mathbf{p}_0\in I$ we have that $O_j=\mathbf{h}_j^{-1}(I)$ is open and non-empty containing $\mathbf{p}_0$. Let
\[
O = \bigcap_{j\in\mathcal{B}} O_j.
\]
Again, $O$ is open and contains $\mathbf{p}_0$. In particular, each of the functions $a^*_j$ for $j\in\mathcal{B}$ are continuous on $O$ and the values of $h_j(\mathbf{p})$ on $O$ correspond to boundary values of the level sets
\[
\{x\in H :g_j(x)\leq 0\}.
\]
Moreover, each  boundary value is contained in $I$ where no other constraint is active.

Choose $\delta$ sufficiently small so that the ball $B_\delta(\overline{x}+\overline{a}(\mathbf{p}_0)\mathbf{p}_0)$ is contained in $I$ and does not contain $\overline{x}$. Consider now our compact convex feasible set $X'$ and its boundary $\partial X'$.
%\[F:=C\cap \{\mathbf{x}\in\mathbb{R}^n:||\mathbf{x}||\leq \bar{r}\}\]
%
Since $X'$ has non-empty interior by assumption, by translation we can apply Lemma \ref{lem:flashlight} to conclude there exists a cone $K$ of rays with origin $\overline{x}$ and non-empty interior containing $\overline{x}+\overline{a}(\mathbf{p}_0)\mathbf{p}_0$  such that 
\[
K\cap\partial X'\subset B_\delta(\overline{x}+\overline{a}(\mathbf{p}_0)\mathbf{p}_0)\cap \partial X'.
\]
Hence, since our open ball is in $I$ by assumption and every ray in $K$ intersects a boundary point (by the origin being $\overline{x}$ in our convex set), we have that every ray in $K$ intersects a boundary point of our feasible set in $I$. Since $K$ has non-empty interior and $\overline{x}+\overline{a}(\mathbf{p}_0)\mathbf{p}_0\in \overset{\circ}{K}$ we can choose $\eta>0$ small enough so that $B_\eta(\overline{x}+\overline{a}(\mathbf{p}_0)\mathbf{p}_0)\subset K$. Translating by $\overline{x}$ and scaling we obtain a ball of radius $\eta'$ about $\mathbf{p}_0$ such that every $\mathbf{p}\in B_{\eta'}(\mathbf{p}_0)$ defines a ray from $\overline{x}$ that intersects a boundary point in $I$. That is, the boundary point corresponding to such a $\mathbf{p}$ can only be associated with the boundary of a level set for some $g_k$ with $k\in\mathcal{B}$. 

Define $U = O\cap B_{\eta'}(\mathbf{p}_0)$. Clearly $U$ is open and contains $\mathbf{p}_0$. For every $\mathbf{p}$ in $U$, the intersection with the boundary of our feasible set $\overline{x}+\overline{a}(\mathbf{p})\mathbf{p}$ is a boundary point of one of the level sets of $g_j$ for $j\in\mathcal{B}$. Additionally, the intersection with each level set is given by a continuous function $a^*_j$ on $U$. Taken together we have that for each $\mathbf{p}\in U$,  $\overline{a}(\mathbf{p})=a^*_j(\mathbf{p})$ for some $j\in\mathcal{B}$.

Finally, to see continuity at $\mathbf{p}_0$ let $\epsilon>0$ and let $\mathbf{p}_1\in H\setminus \{0\}$. Let $\theta>0$ be such that $B_\theta(\mathbf{p}_0)\subset U$. Since the $a^*_j$ are continuous on $U$ for each $k$ we have that there exists a $\delta_j>0$ ($\delta_j<\theta$) such that if $\|\mathbf{p}_1-\mathbf{p}_0\|_2<\delta_j$ then $|a^*_j(\mathbf{p}_1)-a^*_j(\mathbf{p}_0)|<\epsilon$. Let $\delta^*=\min_{j\in \mathcal{B}}\delta_j$. Since for any $\mathbf{p}_1\in U$ there is a $j_1\in \mathcal{B}$ active we have that if $\|\mathbf{p}_1-\mathbf{p}_0\|_2<\delta^*$ then:
\[
|\overline{a}(\mathbf{p}_1)-\overline{a}(\mathbf{p}_0)|\leq \max_{j\in\mathcal{B}}|a^*_{j}(\mathbf{p}_1)-a^*_{j}(\mathbf{p}_0)|<\epsilon
\]
as required since the same $j_1$ is also active at $\mathbf{p}_0$. Hence $\overline{a}(\mathbf{p})$ must be continuous at $\mathbf{p}_0$ by definition of $U$. Now since the choice of $\mathbf{p}_0$ above was arbitrary in $H \setminus\{0\}$ we conclude that $\overline{a}:H\setminus\{0\}\to \mathbb{R}_+$ is continuous. 
\end{proof}

\subsubsection{Proof of Theorem \ref{thm:pca.diff}} \label{app:Thm.pca.dff}

\begin{proof}
As in the proof of Theorem \ref{thm:cont.pca} we can again restrict to a feasible set contained in a closed ball with sufficiently large radius. Let $\Phi(\mathbf{p})$ be similarly defined as in \eqref{eqn:Phi.correspondence}. For each $i$, let $J_i(a,\mathbf{p}) = \|x_i-\overline{x}-a\mathbf{p}\|^2$.

We look to apply Theorem \ref{thm:gen.env} to each subproblem in \eqref{eqn:V.with.Phi} for which we must verify the necessary assumptions. Note that as $H$ is finite dimensional Theorem \ref{thm:gen.env} (and Corollary \ref{cor:gen.env.diff}) applies to our setting once we verify the required conditions. For (i) $\mathcal{A}=\mathbb{R}$ is trivially convex. For (ii) it is clear that $J_i$ and $g_j(\overline{x}+a\mathbf{p})$,  $j=1,...,m+1$ are jointly continuous in $(a,\mathbf{p})$ by assumption. For (iii) and (iv) we showed in Lemma \ref{lem:correspondence} that $\Phi(\mathbf{p})$ is compact valued and continuous for $\mathbf{p}\in H\setminus\{0\}$. For (v) this follows by our assumption on $\overline{x}$ by setting $a=0$. For (vi) since $J_i$ and $g_j(\overline{x}+a\mathbf{p})$ are convex in $a$ for $j=1,...,m+1$ and for $a=0$, $g_j(\overline{x})<0$, $j=1,...,m+1$ we have that Slater's condition holds for every $\mathbf{p}\in H\setminus\{0\}$. This then implies that the saddle points of the Lagrangian are non-empty for every $\mathbf{p}\in H\setminus\{0\}$. Finally, (vii) holds since we have assumed the $g_j$ are continuously differentiable and it is clear this similarly holds true for the $J_i$. 
% The composition with the function $\mathbf{h}(a,\mathbf{p})=\overline{x}+a\mathbf{p}$ is also continuously differentiable for all $(a,\mathbf{p})$ which then gives the result since the $J_i$ are also continuously differentiable for all $(a,\mathbf{p})$. 
Taken together, we obtain the result of Theorem \ref{thm:gen.env} for our problem which gives \textit{directional} differentiability. Finally, to establish the points of differentiability, it suffices to note that the assumptions of Corollary \ref{cor:gen.env.diff} hold trivially for our problem when zero or one constraint is binding at the solution corresponding to the choice $\mathbf{p}\not=0$. In fact, this is the only case where the assumptions hold since our minimization variable $a$ is one-dimensional. Averaging over the sub-problems indexed by $i$ gives the result in the statement of the Theorem.

We close by collecting here the form of the directional derivative and gradient for our value function. For the set of saddle points $\mathcal{A}_i^*(\mathbf{p})\times\Lambda^*_i(\mathbf{p})\subset \mathbb{R}\times\mathbb{R}_+^{m+1}$ in Theorem \ref{thm:gen.env} corresponding to each inner problem $J_i$ we have that $V$ admits the directional derivative
\[
V'(\mathbf{p};\hat{\mathbf{p}})=\frac{1}{N}\sum_{i=1}^N\max_{a\in\mathcal{A}_i^*(\mathbf{p})}\min_{\boldsymbol{\lambda}\in\Lambda^*_i(\mathbf{p})}\left[\nabla_\mathbf{p} J_i(a,\mathbf{p})+\sum_{j=1}^{m+1}a\lambda_j\nabla g_j(\overline{x}+a\mathbf{p})\right]\hat{\mathbf{p}},
\]
where $\nabla_{\mathbf{p}}J_i(a,\mathbf{p})=-2a(x_i-\overline{x}-a\mathbf{p})$. Moreover, when $V$ is differentiable at some $\mathbf{p}_0\in H\setminus\{0\}$ we have for the unique choice $(a^*_i(\mathbf{p}_0),\boldsymbol{\lambda}^*_{i}(\mathbf{p}_0))\in \mathcal{A}_i^*(\mathbf{p})\times\Lambda^*_i(\mathbf{p})$ that the gradient takes the form
\[
\nabla_\mathbf{p}V(\mathbf{p}_0)=\frac{1}{N}\sum_{i=1}^N\left[\nabla_\mathbf{p}J_i(a^*_i(\mathbf{p}_0),\mathbf{p}_0)+\sum_{j=1}^{m+1}a^*_i(\mathbf{p}_0)\lambda^*_{i,j}(\mathbf{p}_0)\nabla g_j(\overline{x}+a^*_i(\mathbf{p}_0)\mathbf{p}_0)\right].
\]

\end{proof}

\section{Algorithmic Considerations}\label{sec:appendix.alg.considerations}

\setcounter{algocf}{0}
\renewcommand{\thealgocf}{\ref{sec:appendix.alg.considerations}.\arabic{algocf}}

We discuss here in greater detail the algorithm outlined in Section \ref{sec:algorithm}. In particular, we explain the evaluation of the value function and its gradient.

\subsection{Optimization}
We first touch on function evaluation. As mentioned in Section \ref{sec:algorithm}, for each $i$, the inner minimization problem in \eqref{eqn:nested.PCA.polyhedral1} is easy to solve since $L = (\overline{x} + \mathrm{span}\{ {\bf p} \}) \cap X$ can be identified with an interval $[t_0({\bf{p}}),t_1({\bf{p}})]$ for $-\infty\leq t_0({\bf{p}})<0<t_1({\bf{p}})\leq \infty$. The boundary point(s) of $L$ are given by $(z_0({\bf p}),z_1({\bf p}))=\left(\overline{x}+t_0({\bf p}){\bf p},\overline{x}+t_1({\bf p}){\bf p}\right)$. If we know the interval endpoints then we can use them to compute the inner minimization. 
First, we compute the usual orthogonal projection, say $\tilde{z}_i^*=\overline{x}+\tilde{t}_i^* {\bf{p}}$, of $x_i$ onto $L$. If $\tilde{t}_i^*\in [t_0({\bf{p}}),t_1({\bf{p}})]$ then $z_i^*=\tilde{z}_i^*$ is the optimal point. Otherwise, $z_i^*$ is the boundary point of $L$ which is closest to $\tilde{z}_i^*$. Hence, to evaluate the objective $V({\bf p})$ (equivalently, $\tilde{V}(\boldsymbol{\omega}(\boldsymbol{\theta}))$) we only need to efficiently find the boundary point(s) of $L$. We use that they are the closest intersections to $\overline{x}$ of the line $t\mapsto\overline{x}+t\mathbf{p}$ with the defining hyperplanes of $X$. The details are in Algorithm \ref{alg:intersection.pt}.

\begin{algorithm}[h]
\SetAlgoLined
\KwResult{Output boundary point coefficient(s) $t_0({\bf p}),t_1({\bf p})$.} %corresponding to $\boldsymbol{\theta}$ for any $j$.}
%  Input $\boldsymbol{\theta}$, reference element $\mathbf{x}^{(0)}$, $A$, $\mathbf{b}$, and for $k>1$, $\mathbf{p}^*_1,...,\mathbf{p}^*_{k-1}$\;
%  Initialize matrix $B_k$\;
Initialize $t_0({\bf p})=-\infty$, $t_1({\bf p})=\infty$\;
Let $(A_i)_{i=1}^m$ be the $i$th rows of $A$ and set for $i=1,\dots,m$ \[\alpha_i=\left((b_i-A_i^\top\overline{x})/A_i^\top {\bf p}\right)\mathds{1}_{A_i^\top {\bf p}\not=0}+\infty\mathds{1}_{A_i^\top {\bf p}=0}\] to be the intersection points (if they exist) of the line $t\mapsto\overline{x}+t\mathbf{p}$ with each hyperplane defined by the rows of $A$ and $\mathbf{b}$.\;
% Initialize $\boldsymbol{\alpha}=(\alpha_i)_{i=1}^m$ to be the intersection points with each hyperplane defined by the rows of $A$ and $\mathbf{b}$ and 
 \For{$i=1,...,m$}{
    \If{$\alpha_i>0 \ \mathbf{\mathrm{and}}  \ \alpha_i<t_1({\bf p})$}{$t_1({\bf p})=\alpha_i$\;
         %$k=k+1$\;
         }
         \If{$\alpha_i<0 \ \mathbf{\mathrm{and}}  \ \alpha_i>t_0({\bf p})$}{$t_0({\bf p})=\alpha_i$\;
         }
 }
 Return $(t_0({\bf p}),t_1({\bf p}))$\;
\caption{Find Boundary Point Coefficient(s)}
\label{alg:intersection.pt}
\end{algorithm}

Notice that the boundary points depend only on ${\bf p}$, and therefore are common for all the inner optimization problems in \eqref{eqn:nested.PCA.polyhedral1}. Hence, for problems with an extremely large data set, the inner problems can be solved in parallel. On the other hand, since for extremely high dimensional problems the gradient is similarly high dimensional, it is often convenient to parallelize the gradient estimation process (e.g.~by parallelizing independent central difference computations). For our purposes we implement the parallelization and function evaluation in \CC \ and rely on central differences for the gradient estimation. Since function evaluation is cheap using the aforementioned approach, this is a convenient way to recover the gradient.

Once this computational challenge is taken care of we can solve the resulting unconstrained optimization problem via a gradient descent approach or through other conventional solvers. Our implementation makes use of a variant of the Broyden–Fletcher–Goldfarb–Shanno (BFGS) algorithm in R available through the ``nloptr'' package \parencite{johnson2014nlopt}. Specifically, we take advantage of the ``Rcpp'' package \parencite{rcpp} to integrate our \CC \ implementation with R and perform the optimizations.

For algorithms of this type a relevant question is the choice of the initial point or ``guess'' for the solution. This is particularly important for large problems, as if we can be intelligent about our initialization, we can substantially improve the convergence time. For our approach, we choose an initial point for each successive principal component that is suggested through an iterated application of traditional PCA techniques. The explicit choice, and the rationale behind it, will be made precise in the following discussion.

\subsection{Choice of initialization point}\label{app:init.pt}

We begin this discussion with an equivalent representation of the convex PCA objective function.

\begin{proposition}[Equivalent formulations of PCA and CPCA Problems]\label{prop:equiv.rep.cpca}
Let $j\geq1$ and suppose $P_{j-1}$ is a $j-1$ dimensional subspace of $H$.
% given by the previous $j-1$ convex principal directions $(\mathbf{p}_{k}^*)_{k=0}^{j-1}$ (where $\mathbf{p}_0\equiv0$). 
For
\[\hat{V}_j(\mathbf{p};X)=\frac{1}{N}\sum_{i=1}^N\inf_{a_i\in\mathbb{R}:\ \overline{x}+a_i\mathbf{p}\in X}\|\Pi_{P_{j-1}^\perp}(x_i-\overline{x})-a_i\mathbf{p}\|^2\]
and denoting the CPCA objective as $V(\mathbf{p};X)$ to make the dependence on $X$ explicit we have
%If there exists an $i\in\{1,\dots,N\}$ such that $x_i-\overline{x}\not\in P_{j-1}$ then:
\begin{enumerate}
    \item[(i)] $\argmin_{\mathbf{p}\in P_{j-1}^\perp:\|\mathbf{p}\|=1}V(\mathbf{p};X) =\argmin_{\mathbf{p}\in P_{j-1}^\perp:\|\mathbf{p}\|=1}\hat{V}_j(\mathbf{p};X)$, and;
    \item[(ii)] If there exists an $i\in\{1,\dots,N\}$ such that $x_i-\overline{x}\not\in P_{j-1}$ then \[\argmin_{\mathbf{p}\in P_{j-1}^\perp:\|\mathbf{p}\|=1}V(\mathbf{p};H) = \argmin_{\mathbf{p}\in H:\|\mathbf{p}\|=1}\hat{V}_j(\mathbf{p};H).\]
\end{enumerate}

\end{proposition}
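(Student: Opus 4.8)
The plan is to reduce both equalities to an exact decomposition of the CPCA objective along $P_{j-1}\oplus P_{j-1}^{\perp}$, together with, for (ii), the explicit solution of an unconstrained line search. For (i) I would fix $\mathbf{p}\in P_{j-1}^{\perp}$ with $\|\mathbf{p}\|=1$ and observe that, since $a_i\mathbf{p}\in P_{j-1}^{\perp}$ for every $a_i\in\mathbb{R}$, the orthogonal decomposition of $x_i-\overline{x}-a_i\mathbf{p}$ is $\Pi_{P_{j-1}}(x_i-\overline{x})+\bigl(\Pi_{P_{j-1}^{\perp}}(x_i-\overline{x})-a_i\mathbf{p}\bigr)$, so Pythagoras gives
\[\|x_i-\overline{x}-a_i\mathbf{p}\|^{2}=\|\Pi_{P_{j-1}}(x_i-\overline{x})\|^{2}+\|\Pi_{P_{j-1}^{\perp}}(x_i-\overline{x})-a_i\mathbf{p}\|^{2}.\]
The constraint $\overline{x}+a_i\mathbf{p}\in X$ does not involve the first term; taking the infimum over the feasible set (which contains $a_i=0$ since $\overline{x}\in X$ by Condition \ref{cond:finite.dim.conditions}) and averaging over $i$ yields $V(\mathbf{p};X)=c+\hat{V}_j(\mathbf{p};X)$, where $c=\frac{1}{N}\sum_{i=1}^N\|\Pi_{P_{j-1}}(x_i-\overline{x})\|^{2}$ is independent of $\mathbf{p}$. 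Two functions that differ by a constant on the whole feasible sphere $\{\mathbf{p}\in P_{j-1}^{\perp}:\|\mathbf{p}\|=1\}$ have the same minimizers there, which is (i).

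For (ii) I would take $X=H$, removing the constraint, and use that for a unit vector $\inf_{a\in\mathbb{R}}\|v-a\mathbf{p}\|^{2}=\|v\|^{2}-\langle v,\mathbf{p}\rangle^{2}$. Applying this with $v=x_i-\overline{x}$ and with $v=\Pi_{P_{j-1}^{\perp}}(x_i-\overline{x})$, and using self-adjointness of $\Pi_{P_{j-1}^{\perp}}$, one gets $V(\mathbf{p};H)=c_1-Q(\mathbf{p})$ and $\hat{V}_j(\mathbf{p};H)=c_2-Q(\Pi_{P_{j-1}^{\perp}}\mathbf{p})$, where $Q(\mathbf{u})=\frac{1}{N}\sum_{i}\langle x_i-\overline{x},\mathbf{u}\rangle^{2}$ and $c_1,c_2$ are constants. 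Hence the left-hand side of (ii) is the set of maximizers of $Q$ over the unit sphere of $P_{j-1}^{\perp}$, and the right-hand side is the set of maximizers of $\mathbf{p}\mapsto Q(\Pi_{P_{j-1}^{\perp}}\mathbf{p})$ over the unit sphere of $H$. To identify these I would decompose $\mathbf{p}=\mathbf{p}_{\parallel}+\mathbf{p}_{\perp}$ with $\mathbf{p}_{\parallel}\in P_{j-1}$, $\mathbf{p}_{\perp}\in P_{j-1}^{\perp}$, so $\Pi_{P_{j-1}^{\perp}}\mathbf{p}=\mathbf{p}_{\perp}$ and $\|\mathbf{p}_{\perp}\|\le 1$ with equality iff $\mathbf{p}\in P_{j-1}^{\perp}$; then $Q(\Pi_{P_{j-1}^{\perp}}\mathbf{p})=\|\mathbf{p}_{\perp}\|^{2}\,Q(\mathbf{p}_{\perp}/\|\mathbf{p}_{\perp}\|)\le\|\mathbf{p}_{\perp}\|^{2}M\le M$, where $M=\max_{\mathbf{u}\in P_{j-1}^{\perp},\,\|\mathbf{u}\|=1}Q(\mathbf{u})$ (the middle term being $0$ when $\mathbf{p}_{\perp}=0$). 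The hypothesis that some $x_i-\overline{x}\notin P_{j-1}$ means some $\Pi_{P_{j-1}^{\perp}}(x_i-\overline{x})\neq 0$, hence $M>0$; this makes $\|\mathbf{p}_{\perp}\|^{2}M\le M$ strict unless $\|\mathbf{p}_{\perp}\|=1$, i.e.\ unless $\mathbf{p}\in P_{j-1}^{\perp}$, and on $P_{j-1}^{\perp}$ the two objectives coincide. Therefore the maximum over the $H$-sphere equals $M$ and is attained exactly at the unit vectors of $P_{j-1}^{\perp}$ that maximize $Q$, which is (ii).

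The routine ingredients are the Pythagorean splitting and the one-line line-search formula; the only substantive step is the last estimate in (ii), namely that projecting a candidate direction onto $P_{j-1}^{\perp}$ never decreases $Q$ and strictly increases it unless the direction already lies in $P_{j-1}^{\perp}$. Its strictness is exactly where the hypothesis $x_i-\overline{x}\notin P_{j-1}$ for some $i$ is needed: without it $M=0$, $\hat{V}_j(\cdot;H)$ is constant, and the right-hand argmin becomes the whole unit sphere of $H$ rather than a sub-sphere. Attainment of $M$ is not an issue here, since $H$ is finite dimensional in this section (and in general $Q$ is a finite-rank quadratic form, so its maximum over a closed unit sphere is attained).
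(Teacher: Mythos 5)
Your proposal is correct. It follows the same two-step strategy as the paper — a Pythagorean split of the objective along $P_{j-1}\oplus P_{j-1}^{\perp}$ for (i), and for (ii) the observation that replacing a candidate direction by its normalized projection onto $P_{j-1}^{\perp}$ never increases the unconstrained objective and strictly decreases it unless the direction already lies in $P_{j-1}^{\perp}$, with the hypothesis $x_i-\overline{x}\notin P_{j-1}$ supplying the strictness — but your execution differs in two ways worth noting. For (i), the paper (Lemma \ref{lem:proj.data}) works with the metric projection onto $S_{\mathbf{p}}\cap\overline{X}$ and needs the iterated-projection identity of Lemma \ref{lem:iterated.proj} to re-express the distance of the projected data; you bypass both by applying Pythagoras directly to the parametrized objective and noting that the feasibility constraint on $a_i$ is identical for $V$ and $\hat V_j$, which yields the exact constant-offset identity $V(\mathbf{p};X)=c+\hat V_j(\mathbf{p};X)$ on the sphere in $P_{j-1}^{\perp}$ — a genuinely shorter route that eliminates an auxiliary lemma. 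For (ii), the paper's Lemma \ref{lem:drop.constr} runs a chain of norm/inner-product inequalities; you repackage the same mechanism as maximization of the finite-rank quadratic form $Q$ via $V(\mathbf{p};H)=c_1-Q(\mathbf{p})$ and $\hat V_j(\mathbf{p};H)=c_2-Q(\Pi_{P_{j-1}^{\perp}}\mathbf{p})$, and the $2$-homogeneity bound $Q(\mathbf{p}_{\perp})\le\|\mathbf{p}_{\perp}\|^2 M\le M$ makes the strictness (via $M>0$) and the identification of the two argmin sets more transparent; your closing remark on attainment is also adequate since $H$ is finite dimensional here. Both arguments prove the same statement; yours buys brevity and fewer moving parts, while the paper's formulation is written so that the constrained case and the unconstrained case share notation across the appendix.
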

\begin{proof}
See Section \ref{sec:prop.equiv.rep.cpca} below.
\end{proof}

Naturally, we are interested in the case where $P_{j-1}$ is given by the previous $j-1$ convex principal directions $(\mathbf{p}_{k}^*)_{k=0}^{j-1}$ (where $\mathbf{p}^*_0\equiv0$). The first statement says that to solve the nested CPCA problem one may equivalently \textit{first} project the data onto the orthogonal complement of the subspace spanned by the preceding principal directions and \textit{then} perform the minimization. The second statement says that the solution to traditional Euclidean PCA ($X=H$) with full dimensional data can be  recovered in the same way, but that we can further simplify to an \textit{unconstrained} minimization in the nested formulation of the problem. That is, we can simply perform a Euclidean PCA on the projected data. %We will defer the proof of this result to the last section of this appendix for the interested reader.

The key takeaway is given by the following corollary.

\begin{corollary}
Let $j\geq 1$ and suppose there exists an $i\in\{1,\dots,N\}$ such that $x_i-\overline{x}\not\in P_{j-1}$. If $\mathbf{p}^{*,H}_j\in\argmin_{\mathbf{p}\in H:\|\mathbf{p}\|=1}\hat{V}_{j}(\mathbf{p};H)$ then
\[\hat{V}_{j}(\mathbf{p}^{*,H}_j;H)=\inf_{\mathbf{p}\in P_{j-1}^\perp:\|\mathbf{p}\|=1}\hat{V}_{j}(\mathbf{p};H)\leq \inf_{\mathbf{p}\in P_{j-1}^\perp:\|\mathbf{p}\|=1}\hat{V}_{j}(\mathbf{p};X)\leq  \hat{V}_{j}(\mathbf{p}^{*,H}_j;X).\]
\end{corollary}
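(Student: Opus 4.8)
The plan is to derive all three relations from a single consequence of Proposition \ref{prop:equiv.rep.cpca}(ii): under the standing hypothesis that some $x_i - \overline{x} \notin P_{j-1}$, every Euclidean optimizer $\mathbf{p}^{*,H}_j$ automatically lies in $P_{j-1}^\perp$. Indeed, part (ii) identifies $\argmin_{\mathbf{p}\in H:\|\mathbf{p}\|=1}\hat{V}_j(\mathbf{p};H)$ with $\argmin_{\mathbf{p}\in P_{j-1}^\perp:\|\mathbf{p}\|=1}V(\mathbf{p};H)$, and the latter set consists of unit vectors in $P_{j-1}^\perp$; hence $\mathbf{p}^{*,H}_j\in P_{j-1}^\perp$ and $\|\mathbf{p}^{*,H}_j\|=1$. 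I would record this observation first, since it is what makes the first equality and the final inequality meaningful.

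For the displayed equality, I would argue by two-sided comparison. On the one hand $\mathbf{p}^{*,H}_j$ minimizes $\hat{V}_j(\cdot;H)$ over the entire unit sphere of $H$, and $\{\mathbf{p}\in P_{j-1}^\perp:\|\mathbf{p}\|=1\}$ is a subset of that sphere, so $\hat{V}_j(\mathbf{p}^{*,H}_j;H)\le \inf_{\mathbf{p}\in P_{j-1}^\perp:\|\mathbf{p}\|=1}\hat{V}_j(\mathbf{p};H)$. On the other hand, by the observation above $\mathbf{p}^{*,H}_j$ is itself a feasible competitor in that constrained infimum, which gives the reverse inequality; hence equality.

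For the middle inequality, I would fix a unit vector $\mathbf{p}$ and compare the two inner infima pointwise: since $X\subseteq H$, the feasible set $\{a\in\mathbb{R}:\overline{x}+a\mathbf{p}\in X\}$ (non-empty because $\overline{x}\in X$ puts $0$ into it) is contained in $\{a\in\mathbb{R}:\overline{x}+a\mathbf{p}\in H\}=\mathbb{R}$, so minimizing $\|\Pi_{P_{j-1}^\perp}(x_i-\overline{x})-a\mathbf{p}\|^2$ over the smaller set yields a value at least as large; averaging over $i$ gives $\hat{V}_j(\mathbf{p};H)\le \hat{V}_j(\mathbf{p};X)$, and taking the infimum over unit $\mathbf{p}\in P_{j-1}^\perp$ preserves it. The last inequality is then immediate, as $\mathbf{p}^{*,H}_j$ (a unit vector in $P_{j-1}^\perp$, as recorded) is feasible for $\inf_{\mathbf{p}\in P_{j-1}^\perp:\|\mathbf{p}\|=1}\hat{V}_j(\mathbf{p};X)$. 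I do not anticipate any genuine obstacle here; the one step that genuinely requires Proposition \ref{prop:equiv.rep.cpca} — rather than elementary monotonicity of infima — is the passage guaranteeing $\mathbf{p}^{*,H}_j\in P_{j-1}^\perp$, which I would therefore state carefully at the outset.
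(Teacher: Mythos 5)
Your proposal is correct and follows essentially the same route as the paper: Proposition \ref{prop:equiv.rep.cpca}(ii) places $\mathbf{p}^{*,H}_j$ in $P_{j-1}^\perp$, after which the equality and the two inequalities follow from feasibility and the monotonicity of infima under added constraints. The only (immaterial) difference is that you obtain the first equality by a direct two-sided comparison using global optimality of $\mathbf{p}^{*,H}_j$ over the whole unit sphere, whereas the paper cites Proposition \ref{prop:equiv.rep.cpca}(i) with $X=H$ together with (ii) to identify the two argmin sets.
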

\begin{proof}
The first equality follows from Proposition \ref{prop:equiv.rep.cpca}(i) (with $X=H$) and (ii), and the choice of $\mathbf{p}^{*,H}_j$. The last inequality follows again from Proposition \ref{prop:equiv.rep.cpca}(ii) as we get $\mathbf{p}_j^{*,H}\in P_{j-1}^\perp$. The middle inequality then follows by adding the constraint that the projections are in $X$.
\end{proof}

This says that the Euclidean principal component can be used to bound the value of the convex PCA problem using the representation in Proposition \ref{prop:equiv.rep.cpca}. It is also clear that if none of the constraints defining $X$ are binding for all $i$ then the optimal values and minimizers of the two problems agree $\hat{V}_{j}(\mathbf{p}^{*,H}_j;H)=\hat{V}_{j}(\mathbf{p}^{*,H}_j;X)$. Note that this also illustrates that, as with standard Euclidean PCA problems, the minimizer for convex PCA problems need not be unique. Most importantly, as Proposition \ref{prop:equiv.rep.cpca}(i) says the minimizers of $\hat{V}_j$ and $V$ in $P_{j-1}^\perp$ coincide, this observation suggests that a Euclidean principal direction for the projected data is a good initial guess for the convex principal direction. Indeed, they will coincide if no constraints are binding and so we expect that $\mathbf{p}^{*,H}_j$ should be ``close'' to a true minimizer. Moreover, the direction $\mathbf{p}^{*,H}_j$ is cheap to compute as it is the solution to a regular PCA problem. As a result, in our optimization routine we initialize our search at some $\mathbf{p}^{*,H}_j\in\argmin_{\mathbf{p}\in H:\|\mathbf{p}\|=1}\hat{V}_{j}(\mathbf{p};H)$ to further accelerate convergence.

\subsection{Proof of Proposition \ref{prop:equiv.rep.cpca}} \label{sec:prop.equiv.rep.cpca}
To verify this we will use the following notation:
\[X=\bigcap_{j=1}^m\{x\in H: g_j(x)\leq0\}, \ \ \ \overline{X}=\bigcap_{j=1}^m\{x\in H: g_j(\overline{x}+x)\leq0\},\]
\[S_{\mathbf{p}}=\mathrm{span}\{\mathbf{p}\} \ \ \ \mathrm{and} \ \ \ \overline{S}_{\mathbf{p}}=\overline{x}+\mathrm{span}\{\mathbf{p}\}.\] Note that it is easy to see that for $x\in X$ we have
\begin{equation}\label{eqn:equiv.parameterizations}
\inf_{z\in \overline{S_{\mathbf{p}}}\cap X}\|x-z\|^2=\inf_{z\in S_{\mathbf{p}}\cap \overline{X}}\|x-\overline{x}-z\|^2=\inf_{a\in\mathbb{R}:\ \overline{x}+a\mathbf{p}\in X}\|x-\overline{x}-a\mathbf{p}\|^2.    
\end{equation}
%With these parameterizations in mind, we will want to make use of a useful iterated representation of our projection. We omit the proof which is standard.

\begin{lemma}[Iterated projection]\label{lem:iterated.proj}
We have $\Pi_{S_\mathbf{p}\cap \overline{X}}=\Pi_{S_\mathbf{p}\cap \overline{X}}\Pi_{S_\mathbf{p}}$.
\end{lemma}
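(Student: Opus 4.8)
The plan is to use that $S_{\mathbf p}\cap\overline X$ lives inside the \emph{linear} subspace $S_{\mathbf p}$, so that the Pythagorean decomposition of any $x\in H$ relative to $S_{\mathbf p}$ splits the squared distance $\|x-z\|^2$ into a term independent of the target $z$ and a term equal to $\|\Pi_{S_{\mathbf p}}x-z\|^2$. Iterated projection is then immediate.

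First I would record the basic well-posedness: the set $S_{\mathbf p}\cap\overline X$ contains $0$, since $g_j(\overline x)<0$ for every $j$ by Condition~\ref{cond:finite.dim.conditions}(iv), and it is the intersection of the closed convex set $\overline X$ with the closed subspace $S_{\mathbf p}$, hence closed, convex and nonempty; thus $\Pi_{S_{\mathbf p}\cap\overline X}$ is well defined by the Hilbert projection theorem (and is in fact a closed interval of the line $S_{\mathbf p}$ containing the origin, as already noted in the text). Likewise $\Pi_{S_{\mathbf p}}$ is the usual orthogonal projection onto the one-dimensional subspace $S_{\mathbf p}$.

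Next, fix $x\in H$ and write $x=\Pi_{S_{\mathbf p}}x+\bigl(x-\Pi_{S_{\mathbf p}}x\bigr)$ with $x-\Pi_{S_{\mathbf p}}x\perp S_{\mathbf p}$. For any $z\in S_{\mathbf p}\cap\overline X$ we have $\Pi_{S_{\mathbf p}}x-z\in S_{\mathbf p}$, so it is orthogonal to $x-\Pi_{S_{\mathbf p}}x$, and Pythagoras gives
\[
\|x-z\|^2=\bigl\|x-\Pi_{S_{\mathbf p}}x\bigr\|^2+\bigl\|\Pi_{S_{\mathbf p}}x-z\bigr\|^2 .
\]
The first summand does not depend on $z$, so minimizing the left-hand side over $z\in S_{\mathbf p}\cap\overline X$ is equivalent to minimizing $\|\Pi_{S_{\mathbf p}}x-z\|^2$ over the same set. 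By strict convexity of the squared norm on a closed convex set both minimizers are unique, hence they coincide; the left-hand minimizer is $\Pi_{S_{\mathbf p}\cap\overline X}(x)$ and the right-hand one is $\Pi_{S_{\mathbf p}\cap\overline X}(\Pi_{S_{\mathbf p}}x)$, which is exactly the asserted identity.

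There is essentially no obstacle: the only points requiring care are that $S_{\mathbf p}$ is genuinely a linear subspace (so the cross term in the expansion vanishes) and that $S_{\mathbf p}\cap\overline X\neq\emptyset$ (so both projections exist), and both follow directly from the standing assumptions. The lemma is then a one-line consequence of orthogonality.
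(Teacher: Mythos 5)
Your argument is correct: the Pythagorean split of $\|x-z\|^2$ along $S_{\mathbf p}$ and its orthogonal complement, combined with uniqueness of the metric projection onto the nonempty closed convex set $S_{\mathbf p}\cap\overline X$ (which contains $0$ since $g_j(\overline x)<0$), gives exactly the identity $\Pi_{S_\mathbf{p}\cap \overline{X}}=\Pi_{S_\mathbf{p}\cap \overline{X}}\Pi_{S_\mathbf{p}}$. The paper omits the proof of this lemma, but your reasoning is the standard argument it relies on, and it is the same orthogonal-decomposition computation the paper then uses explicitly in the proof of Lemma \ref{lem:proj.data}, so nothing further is needed.
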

\begin{proof}
Omitted.
\end{proof}

This will be used now to prove the following result which immediately gives us the first claim in our proposition. Namely, we now show that we may project the data while maintaining the set of minimizers.

\begin{lemma} \label{lem:proj.data}
For $j\geq 1$ we have
\begin{equation*}
\begin{split}
&\argmin_{\mathbf{p}\in P_{j-1}^\perp: \|\mathbf{p}\|=1}\frac{1}{N}\sum_{i=1}^N\inf_{z\in S_\mathbf{p}\cap \overline{X}}\|x_i-\overline{x}-z\|^2=\argmin_{\mathbf{p}\in P_{j-1}^\perp,:\|\mathbf{p}\|=1}\frac{1}{N}\sum_{i=1}^N\inf_{z\in S_\mathbf{p}\cap \overline{X}}\|\Pi_{P_{j-1}^\perp}(x_i-\overline{x})-z\|^2.
\end{split}
\end{equation*}
\end{lemma}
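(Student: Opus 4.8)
The plan is to show that, over the common feasible set $\{\mathbf{p}\in P_{j-1}^\perp:\|\mathbf{p}\|=1\}$, the two objective functions differ only by an additive constant independent of $\mathbf{p}$, whence they share the same minimizers. The mechanism is a Pythagorean split of each squared distance relative to the orthogonal decomposition $H=P_{j-1}\oplus P_{j-1}^\perp$.

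First I would fix $\mathbf{p}\in P_{j-1}^\perp$ with $\|\mathbf{p}\|=1$ and note that $S_\mathbf{p}=\mathrm{span}\{\mathbf{p}\}\subset P_{j-1}^\perp$, so $\Pi_{P_{j-1}^\perp}z=z$ for every $z\in S_\mathbf{p}$. Writing $u_i:=\Pi_{P_{j-1}}(x_i-\overline{x})\in P_{j-1}$ and $v_i:=\Pi_{P_{j-1}^\perp}(x_i-\overline{x})\in P_{j-1}^\perp$, for any $z\in S_\mathbf{p}$ one has $v_i-z\in P_{j-1}^\perp$, hence $u_i\perp(v_i-z)$, and therefore
\[
\|x_i-\overline{x}-z\|^2=\|u_i\|^2+\|v_i-z\|^2=\|\Pi_{P_{j-1}}(x_i-\overline{x})\|^2+\bigl\|\Pi_{P_{j-1}^\perp}(x_i-\overline{x})-z\bigr\|^2.
\]
Since $0\in S_\mathbf{p}\cap\overline{X}$ (because $\overline{x}\in X$ gives $g_j(\overline{x})\le 0$, i.e. $0\in\overline{X}$), this set is nonempty; as the first term does not involve $z$, taking $\inf_{z\in S_\mathbf{p}\cap\overline{X}}$ of both sides and then averaging over $i=1,\dots,N$ yields
\[
\frac1N\sum_{i=1}^N\inf_{z\in S_\mathbf{p}\cap\overline{X}}\|x_i-\overline{x}-z\|^2
= C+\frac1N\sum_{i=1}^N\inf_{z\in S_\mathbf{p}\cap\overline{X}}\bigl\|\Pi_{P_{j-1}^\perp}(x_i-\overline{x})-z\bigr\|^2,
\]
with $C:=\frac1N\sum_{i=1}^N\|\Pi_{P_{j-1}}(x_i-\overline{x})\|^2$ independent of $\mathbf{p}$ on the feasible set.

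Finally, since the two functions being minimized in the statement differ, on the feasible set $\{\mathbf{p}\in P_{j-1}^\perp:\|\mathbf{p}\|=1\}$, exactly by the constant $C$, they attain their minima at the same points, which is the claimed equality of $\argmin$ sets. I do not anticipate a genuine obstacle here: the only delicate points are (i) using $\mathbf{p}\in P_{j-1}^\perp$ to guarantee $S_\mathbf{p}\subset P_{j-1}^\perp$ so that the cross term $\langle u_i,\,v_i-z\rangle$ vanishes, and (ii) noting the infimum sets are nonempty so that pulling the $z$-independent term out of the infimum is legitimate.
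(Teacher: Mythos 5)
Your proof is correct, and it rests on the same core identity as the paper's: for $\mathbf{p}\in P_{j-1}^\perp$ the inner infimum splits as $\inf_{z\in S_\mathbf{p}\cap\overline{X}}\|x_i-\overline{x}-z\|^2=\|\Pi_{P_{j-1}}(x_i-\overline{x})\|^2+\inf_{z\in S_\mathbf{p}\cap\overline{X}}\|\Pi_{P_{j-1}^\perp}(x_i-\overline{x})-z\|^2$, so the two objectives differ by a $\mathbf{p}$-independent constant. The difference is in how you get there: you apply the Pythagorean split pointwise in $z$ (using $S_\mathbf{p}\subset P_{j-1}^\perp$, so $u_i\perp(v_i-z)$ for every feasible $z$) and only then take infima, whereas the paper realizes each infimum through the metric projection $\Pi_{S_\mathbf{p}\cap\overline{X}}(x_i-\overline{x})$ and must then invoke the iterated-projection identity $\Pi_{S_\mathbf{p}\cap\overline{X}}=\Pi_{S_\mathbf{p}\cap\overline{X}}\Pi_{S_\mathbf{p}}$ (Lemma \ref{lem:iterated.proj}) to re-identify the minimizing point as the projection of the \emph{projected} datum $\Pi_{P_{j-1}^\perp}(x_i-\overline{x})$. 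Your pointwise route makes that auxiliary lemma unnecessary and is slightly more elementary; the paper's operator formulation buys an explicit description of the optimizer itself, not just the optimal value. Your attention to the two delicate points --- that $\mathbf{p}\in P_{j-1}^\perp$ is what kills the cross term, and that $0\in S_\mathbf{p}\cap\overline{X}$ makes the feasible set nonempty so the constant can be pulled out of the infimum --- is exactly what is needed, so there is no gap.
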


\begin{proof}
If $\mathbf{p}\in P_{j-1}^\perp$ we have, for any $i$,
\begin{align*}
    &\inf_{z\in S_\mathbf{p}\cap \overline{X}}\|x_i-\overline{x}-z\|^2\\
    &=\|x_i-\overline{x}-\Pi_{S_\mathbf{p}\cap \overline{X}}(x_i-\overline{x})\|^2\\
    &=\|\Pi_{P_{j-1}}(x_i-\overline{x})+\Pi_{P_{j-1}^\perp}(x_i-\overline{x})-\Pi_{S_\mathbf{p}\cap \overline{X}}(x_i-\overline{x})\|^2\\
    &=\|\Pi_{P_{j-1}^\perp}(x_i-\overline{x})-\Pi_{S_\mathbf{p}\cap \overline{X}}(x_i-\overline{x})\|^2+\|\Pi_{P_{j-1}}(x_i-\overline{x})\|^2\\
    &=\|\Pi_{P_{j-1}^\perp}(x_i-\overline{x})-\Pi_{S_\mathbf{p}\cap \overline{X}}\Pi_{P_{j-1}^\perp}(x_i-\overline{x})\|^2+\|\Pi_{P_{j-1}}(x_i-\overline{x})\|^2\\
    &=\inf_{z\in S_{\mathbf{p}}\cap \overline{X}}\|\Pi_{P_{j-1}^\perp}(x_i-\overline{x})-z\|^2+\|\Pi_{P_{j-1}}(x_i-\overline{x})\|^2.
\end{align*}
Note the fourth equality follows by Lemma \ref{lem:iterated.proj} as
\begin{align*}
    \Pi_{S_{\mathbf{p}}\cap \overline{X}}(x)&=\Pi_{S_{\mathbf{p}}\cap \overline{X}}\Pi_{S_\mathbf{p}}(\Pi_{P_{j-1}}x+\Pi_{P_{j-1}^\perp}x)=\Pi_{S_{\mathbf{p}}\cap \overline{X}}\Pi_{S_\mathbf{p}}\Pi_{P_{j-1}^\perp}x=\Pi_{S_{\mathbf{p}}\cap \overline{X}}\Pi_{P_{j-1}^\perp} x.
\end{align*}
Now, since the second term on the right hand side is constant in $\mathbf{p}$ the claim in the lemma then follows after averaging over the $x_i$.
\end{proof}

We now proceed to present a result whose statement implies the second claim of the proposition. 

\begin{lemma}\label{lem:drop.constr} For $j\geq 1$ we have
%For $j\geq 1$ if there exists a $\tilde{\mathbf{p}}\in P_{j-1}^\perp$ and $i\in\{1,...,m\}$ such that $\Pi_{S_{\tilde{\mathbf{p}}}}\Pi_{P_{j-1}^\perp}(\mathbf{x}^{(i)}-\mathbf{x}^{(0)})\not=\mathbf{0}$ then:
\begin{equation*}
\begin{split}
&\inf_{\mathbf{p}\in P_{j-1}^\perp: \|\mathbf{p}\|=1}\frac{1}{N}\sum_{i=1}^N\inf_{z\in S_\mathbf{p}}\|\Pi_{P_{j-1}^\perp}(x_i-\overline{x})-z\|^2\\
&=\inf_{\mathbf{p}\in H:\|\mathbf{p}\|=1}\frac{1}{N}\sum_{i=1}^N\inf_{z\in S_\mathbf{p}}\|\Pi_{P_{j-1}^\perp}(x_i-\overline{x})-z\|^2,
\end{split}
\end{equation*}
and if $(x_i-\overline{x})\not\in P_{j-1}$ for at least one $i$ then the minimizers coincide.
\end{lemma}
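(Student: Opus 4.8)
The plan is to reduce both optimizations to the same elementary Rayleigh-quotient problem and then show that the constraint $\mathbf{p}\in P_{j-1}^\perp$ is automatically active. Write $y_i := \Pi_{P_{j-1}^\perp}(x_i-\overline{x})$, so that $y_i\in P_{j-1}^\perp$ for every $i$, and use the identity $\inf_{z\in S_{\mathbf{p}}}\|y-z\|^2 = \|y\|^2 - \langle y,\mathbf{p}\rangle^2/\|\mathbf{p}\|^2$, valid for the line $S_{\mathbf{p}}=\mathrm{span}\{\mathbf{p}\}$ with $\mathbf{p}\neq 0$. Since the inner quantity depends only on the line $S_{\mathbf{p}}$ it is scale-invariant in $\mathbf{p}$, so I would first discard the unit-norm constraint and write both sides as infima of
\[
F(\mathbf{p}) := \frac{1}{N}\sum_{i=1}^N\Big(\|y_i\|^2 - \frac{\langle y_i,\mathbf{p}\rangle^2}{\|\mathbf{p}\|^2}\Big)
\]
over $\mathbf{p}\in H\setminus\{0\}$ (the right-hand side) and over $\mathbf{p}\in P_{j-1}^\perp\setminus\{0\}$ (the left-hand side); the latter set is nonempty because $j\leq\dim H$ whenever a $j$-th principal direction exists.

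For the equality of the two infima, only the inequality $\inf_{H\setminus\{0\}}F\geq\inf_{P_{j-1}^\perp\setminus\{0\}}F$ needs an argument. Given $\mathbf{p}\in H\setminus\{0\}$ I would pass to $\mathbf{p}_\perp := \Pi_{P_{j-1}^\perp}\mathbf{p}$: since $y_i\in P_{j-1}^\perp$ one has $\langle y_i,\mathbf{p}\rangle=\langle y_i,\mathbf{p}_\perp\rangle$ for all $i$, and $\|\mathbf{p}_\perp\|\leq\|\mathbf{p}\|$ gives the termwise bound $\langle y_i,\mathbf{p}\rangle^2/\|\mathbf{p}\|^2\leq\langle y_i,\mathbf{p}_\perp\rangle^2/\|\mathbf{p}_\perp\|^2$ when $\mathbf{p}_\perp\neq 0$, hence $F(\mathbf{p})\geq F(\mathbf{p}_\perp)$; and if $\mathbf{p}_\perp=0$ then $F(\mathbf{p})=\frac{1}{N}\sum_i\|y_i\|^2$, which dominates $F(\mathbf{q})$ for every $\mathbf{q}\in P_{j-1}^\perp\setminus\{0\}$. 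Either way $F(\mathbf{p})$ is at least $\inf_{P_{j-1}^\perp\setminus\{0\}}F$, which proves the first claim.

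For the coincidence of minimizers I would assume $y_{i_0}\neq 0$ for some $i_0$ (equivalently $x_{i_0}-\overline{x}\notin P_{j-1}$) and first note that the common infimum $v$ is strictly less than $\frac{1}{N}\sum_i\|y_i\|^2$, by testing $\mathbf{q}=y_{i_0}\in P_{j-1}^\perp\setminus\{0\}$. A unit-norm minimizer of the left-hand problem is feasible and optimal for the right-hand one, hence a right-hand minimizer. Conversely, let $\mathbf{p}$ be a unit-norm right-hand minimizer and split $\mathbf{p}=\mathbf{p}_\parallel+\mathbf{p}_\perp$ as before; from $F(\mathbf{p})=\frac{1}{N}\sum_i\|y_i\|^2-\frac{1}{N}\sum_i\langle y_i,\mathbf{p}_\perp\rangle^2=v<\frac{1}{N}\sum_i\|y_i\|^2$ I obtain $\frac{1}{N}\sum_i\langle y_i,\mathbf{p}_\perp\rangle^2>0$, so $\mathbf{p}_\perp\neq 0$ and $\langle y_{i_1},\mathbf{p}_\perp\rangle\neq 0$ for some $i_1$. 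If $\|\mathbf{p}_\perp\|<\|\mathbf{p}\|=1$ the termwise bound above is strict at $i=i_1$, so $F(\mathbf{p})>F(\mathbf{p}_\perp)\geq v$, a contradiction; hence $\|\mathbf{p}_\perp\|=1$, which forces $\mathbf{p}=\mathbf{p}_\perp\in P_{j-1}^\perp$ and makes $\mathbf{p}$ a left-hand minimizer. Thus the two minimizer sets coincide.

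I do not anticipate a substantive obstacle. The two points that need a little care are the reduction that removes the unit-norm constraint (which rests only on scale invariance of $S_{\mathbf{p}}$) and, in the minimizer argument, the strictness of the termwise comparison: it is precisely there that the hypothesis $y_{i_0}\neq 0$ enters, through the positivity $\frac{1}{N}\sum_i\langle y_i,\mathbf{p}_\perp\rangle^2>0$, which both excludes the degenerate case $\mathbf{p}_\perp=0$ and upgrades the inequality in at least one index.
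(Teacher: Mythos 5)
Your proof is correct and follows essentially the same route as the paper's: both rest on the observation that since $\Pi_{P_{j-1}^\perp}(x_i-\overline{x})\in P_{j-1}^\perp$, only the component $\Pi_{P_{j-1}^\perp}\mathbf{p}$ of $\mathbf{p}$ enters the inner products, so replacing $\mathbf{p}$ by its normalized projection onto $P_{j-1}^\perp$ can only decrease the objective, which yields the nontrivial inequality and, via strictness when some $\langle y_{i},\mathbf{p}_\perp\rangle\neq 0$, the coincidence of minimizers. Your Rayleigh-quotient/scale-invariance phrasing is a cosmetic variant of the paper's Pythagorean decomposition, and your explicit use of $v<\frac{1}{N}\sum_i\|y_i\|^2$ to rule out the degenerate case $\langle y_i,\mathbf{p}_\perp\rangle=0$ for all $i$ is, if anything, a slightly more careful rendering of the paper's ``the chain of inequalities is strict'' step.
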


\begin{proof}
It suffices to show the relation ``$\leq$'' as the inequality ``$\geq$'' is immediate. We have using the definition of the projection for $\|\mathbf{p}\|=1$ that for any $i$:
\begin{align*}
    \inf_{z\in S_{\mathbf{p}}}\|\Pi_{P_{j-1}^\perp}(x_i-\overline{x})-z\|^2
    % &=\|\Pi_{P_{j-1}^\perp}(x_i-\overline{x})-\Pi_{S_\mathbf{p}}\Pi_{P_{j-1}^\perp}(x_i-\overline{x})\|^2\\
    &=\big\|\Pi_{P_{j-1}^\perp}(x_i-\overline{x})-\langle\Pi_{P_{j-1}^\perp}(x_i-\overline{x}),\mathbf{p}\rangle\mathbf{p}\big\|^2 =: (\star).
\end{align*}
Now by the orthogonal decomposition for $\mathbf{p}$ and linearity of the inner product
\begin{align*}
    \langle\Pi_{P_{j-1}^\perp}(x_i-\overline{x}),\mathbf{p}\rangle&=\langle\Pi_{P_{j-1}^\perp}(x_i-\overline{x}),\Pi_{P_{j-1}^\perp}\mathbf{p}\rangle+\langle\Pi_{P_{j-1}^\perp}(x_i-\overline{x}),\Pi_{P_{j-1}}\mathbf{p}\rangle\\
    &=\langle\Pi_{P_{j-1}^\perp}(x_i-\overline{x}),\Pi_{P_{j-1}^\perp}\mathbf{p}\rangle,
\end{align*}
where the last equality follows by orthogonality. Hence we have:

\begin{align*}
    % \inf_{z\in S_{\mathbf{p}}}||\Pi_{P_{j-1}^\perp}(\mathbf{x}^{(i)}-\mathbf{x}^{(0)})-\mathbf{z}||^2
    (\star)&=\big\|\Pi_{P_{j-1}^\perp}(x_i-\overline{x})-\langle\Pi_{P_{j-1}^\perp}(x_i-\overline{x}),\Pi_{P_{j-1}^\perp}\mathbf{p}\rangle\mathbf{p}\big\|^2\\
    % &=\big\|\Pi_{P_{j-1}^\perp}(x_i-\overline{x})-\langle\Pi_{P_{j-1}^\perp}(x_i-\overline{x}),\Pi_{P_{j-1}^\perp}\mathbf{p}\rangle(\Pi_{P_{j-1}^\perp}\mathbf{p}+\Pi_{P_{j-1}}\mathbf{p})\big\|^2\\
    &=\big\|\Pi_{P_{j-1}^\perp}(x_i-\overline{x})-\langle\Pi_{P_{j-1}^\perp}(x_i-\overline{x}),\Pi_{P_{j-1}^\perp}\mathbf{p}\rangle\Pi_{P_{j-1}^\perp}\mathbf{p}\big\|^2 \\
    & \quad \quad \quad \quad \quad \quad \quad + \big\|\langle\Pi_{P_{j-1}^\perp}(x_i-\overline{x}),\Pi_{P_{j-1}^\perp}\mathbf{p}\rangle\Pi_{P_{j-1}}\mathbf{p})\big\|^2\\
    &\geq \big\|\Pi_{P_{j-1}^\perp}(x_i-\overline{x})-\langle\Pi_{P_{j-1}^\perp}(x_i-\overline{x}),\Pi_{P_{j-1}^\perp}\mathbf{p}\rangle\Pi_{P_{j-1}^\perp}\mathbf{p}\big\|^2\\
    % &=\big\|\Pi_{P_{j-1}^\perp}(x_i-\overline{x})-||\Pi_{P_{j-1}^\perp}\mathbf{p}||^2\Pi_{S_{\Pi_{P_{j-1}^\perp}\mathbf{p}}}\Pi_{P_{j-1}^\perp}(x_i-\overline{x})\big\|^2\\
    % &\geq \big\|\Pi_{P_{j-1}^\perp}(\mathbf{x}^{(i)}-\mathbf{x}^{(0)})-\Pi_{S_{\Pi_{P_{j-1}^\perp}\mathbf{p}}}\Pi_{P_{j-1}^\perp}(\mathbf{x}^{(i)}-\mathbf{x}^{(0)})\big\|^2\\
    % &=\big\|\Pi_{P_{j-1}^\perp}(\mathbf{x}^{(i)}-\mathbf{x}^{(0)})-\Pi_{S_{\frac{1}{||\Pi_{P_{j-1}^\perp}\mathbf{p}||}\Pi_{P_{j-1}^\perp}\mathbf{p}}}\Pi_{P_{j-1}^\perp}(\mathbf{x}^{(i)}-\mathbf{x}^{(0)})\big\|^2\\
    &\geq \inf_{z\in S_{\Pi_{P_{j-1}^\perp}\mathbf{p}}}\big\|\Pi_{P_{j-1}^\perp}(x_i-\overline{x})-z\big\|^2.\\
\end{align*}
% where the third equality follows by orthogonality and the second inequality follows by choosing the unique minimizer in $S_{\Pi_{P_{j-1}^\perp}\mathbf{p}}$. The second last equality follows by invariance under rescaling of the defining vector in the span so it has norm $1$. 
% Now by definition
% \[\inf_{\mathbf{p}\in P_{j-1}^\perp, ||\mathbf{p}||=1}\frac{1}{m}\sum_{i=1}^m\inf_{z\in S_\mathbf{p}}||\Pi_{P_{j-1}^\perp}(\mathbf{x}^{(i)}-\mathbf{x}^{(0)})-\mathbf{z}||^2\geq\inf_{\mathbf{p}:||\mathbf{p}||=1}\frac{1}{m}\sum_{i=1}^m\inf_{z\in S_\mathbf{p}}||\Pi_{P_{j-1}^\perp}(\mathbf{x}^{(i)}-\mathbf{x}^{(0)})-\mathbf{z}||^2\]
Consequently for any $\mathbf{p}\in H$ with $\|\mathbf{p}\|=1$ we get:
\[
\frac{1}{N}\sum_{i=1}^N\inf_{z\in S_{\Pi_{P_{j-1}^\perp}\mathbf{p}}}\|\Pi_{P_{j-1}^\perp}(x_i-\overline{x})-z\|^2\leq\frac{1}{N}\sum_{i=1}^N\inf_{z\in S_\mathbf{p}}\|\Pi_{P_{j-1}^\perp}(x_i-\overline{x})-z\|^2.
\]
If $\Pi_{P_{j-1}^\perp}\mathbf{p}\not=0$ then $S_{\Pi_{P_{j-1}^\perp}\mathbf{p}}=S_{\Pi_{P_{j-1}^\perp}\mathbf{p}/\|\Pi_{P_{j-1}^\perp}\mathbf{p}\|}$. Otherwise, $\mathbf{p}\in P_{j-1}$ and equality holds in the above with $z=0$ which is admissible for any $S_{\mathbf{p}'}$ with $\mathbf{p}'\in P_{j-1}^\perp$. Hence, it is clear that we have
\[
\inf_{\mathbf{p}\in P_{j-1}^\perp, \|\mathbf{p}\|=1}\frac{1}{N}\sum_{i=1}^N\inf_{z\in S_\mathbf{p}}\|\Pi_{P_{j-1}^\perp}(x_i-\overline{x})-z\|^2\leq\inf_{\mathbf{p}:\|\mathbf{p}\|=1}\frac{1}{N}\sum_{i=1}^N\inf_{z\in S_\mathbf{p}}\|\Pi_{P_{j-1}^\perp}(x_i-\overline{x})-z\|^2,
\]
from which the claimed equality follows.

It remains to see that if $(x_i-\overline{x})\not\in P_{j-1}$ for at least one $i$ then the minimizers must coincide. It is clear that if $(x_i-\overline{x})\not\in P_{j-1}$ then $\Pi_{P_{j-1}^\perp}(x_i-\overline{x})\not=0$ and so if $\mathbf{p}\in P_{j-1}$ then it is not a minimizer of the RHS as it can be strictly improved upon. Moreover, if $\mathbf{p}\not\in P_{j-1}^\perp$ and $\mathbf{p}\not\in P_{j-1}$ then the chain of inequalities derived above is strict so $\mathbf{p}$ cannot be a minimizer of the RHS as $\|\Pi_{P_{j-1}^\perp}\mathbf{p}\|^{-1}\Pi_{P_{j-1}^\perp}\mathbf{p}\in P^\perp_{j-1}$ is an improvement. We conclude that any minimizer of the RHS must be in $P_{j-1}^\perp$ from which the claim follows.
\end{proof}

With these results in hand, we formally summarize the conclusions below to complete the proof of the proposition.

\begin{proof}[Proof of Proposition \ref{prop:equiv.rep.cpca}]
By a reparameterization of the infimums (see equation \ref{eqn:equiv.parameterizations}) in the definition of $V$ and $\hat{V}_j$, the statement (i) follows immediately from Lemma \ref{lem:proj.data} and (ii) similarly follows from Lemma \ref{lem:drop.constr}. 
\end{proof}

\subsection{Comparison with alternative approaches for Wasserstein GPCA}\label{app:comparision}

We compare two alternative approaches for the dimensionality reduction of distribution-valued data in the Wasserstein space over the interval with the methodology proposed in this paper. 

The first approach, due to \cite{cazelles2018geodesic}, employs a proximal forward-backward splitting method to solve the nonlinear (and constrained) nested GPCA problem which they call the ``iterative geodesic approach''. Their algorithm directly takes as an input the data histograms and they choose the base measure $\mathbb{P}_0$ to coincide with the Fr\'echet mean $\bar{\mathbb{P}}$. To the best of our knowledge, it represents the first numerically tractable implementation of GPCA in the literature.

The second approach, due to \cite{PB22} (which we denote P \& B), is a projected method that relaxes the constraints of (nested) GPCA. To reduce the computational complexity, the authors approximate the transport maps with monotone quadratic B-splines. Then, they perform an unconstrained optimization to search for the principal components of the data in the space of basis function coefficients. By restricting the resulting principal components to the convex constraint set (in our notation $X$), they obtain ``projected'' principal components. When the constraints are not binding, these exactly coincide with geodesic principal components on the approximated quantile functions (see Section \ref{sec:Application.GPCA}).

To compare these methods with ours (which we denote by C \& W), we simulate synthetic ranked return distributions that replicate qualitatively the features observed in Section \ref{sec:Application.GPCA} using CRSP data (which was used under license). Specifically, we simulate return data for $N=101$ stocks from the Atlas model of equity markets (see \cite{banner2005atlas,fernholz2013second,ichiba2011hybrid})\footnote{The code to simulate the data and reproduce the comparisons discussed here has been made available at \url{https://github.com/stevenacampbell/ConvexPCA}}. We choose the model parameters so that the expected return and volatility of a stock increases as a function of its capitalization rank. To analyze the resulting data using the approach of \cite{cazelles2018geodesic}, we export histograms of various dimensions and directly use their publicly available code\footnote{\url{https://github.com/ecazelles/2017-GPCA-vs-LogPCA-Wasserstein}} which is written in MATLAB. On the other hand, in order to compare the spline approximation proposed by \cite{PB22} with the piecewise constant approximation of Section \ref{sec:Application.GPCA} we reproduce their publicly available Python code\footnote{\url{https://github.com/mberaha/ProjectedWasserstein}} in $\tt{R}$.

\begin{table}[h]
\centering
\begin{minipage}{0.45\textwidth}
    \centering
    \begin{tabular}{|c|c|c|c|}
    \hline
    Dimension & P \& B & C \& W & Cazelles et al. \\ \hline
    $2^4$ & 0.08s & 0.01s & 28s \\ \hline
    $2^5$ & 0.14s & 0.01s & 38s \\ \hline
    $2^6$ & 0.42s & 0.03s & 52s \\ \hline
    $2^7$ & 1.71s & 0.66s & 142s \\ \hline
    $2^8$ & 7.67s & 1.45s & 334s \\ \hline
    $2^9$ & 57s & 4.94s & 563s \\ \hline
    \end{tabular}
    \caption{Computational times for 1st and 2nd principal components (geodesic or projected).}
    \label{tab:opt.times}
\end{minipage}%
%\hspace{0.0cm}
\hfill
\begin{minipage}{0.45\textwidth}
    \centering
    \begin{tabular}{|c|c|c|}
    \hline
    Dimension & P \& B & C \& W \\ \hline
    $2^4$ & 0.0088 & 0.0297 \\ \hline
    $2^5$ & 0.0059 & 0.0191 \\ \hline
    $2^6$ & 0.0043 & 0.0125 \\ \hline
    $2^7$ & 0.0034 & 0.0084 \\ \hline
    $2^8$ & 0.0030 & 0.0058 \\ \hline
    $2^9$ & 0.0028 & 0.0042 \\ \hline
    \end{tabular}
    \caption{Reconstruction error using 1st and 2nd principal components (geodesic or projected).}
    \label{tab:rec.error}
\end{minipage}
\end{table}

A summary of the computation times\footnote{All of the results shown here are for \textit{serial} implementations of the algorithms run on a  laptop computer with an Intel Core i7-1185G7 processor and 32.0 GB of RAM.} for running each of the optimizations for the first two principal components as a function of the problem dimension (i.e. the number of histogram bins, basis splines, or piecewise constant basis functions) is given in Table \ref{tab:opt.times}. Included in these times for P \& B and C \& W is the cost of fitting the basis functions. Associated with each data point is a high resolution quantile function evaluated on a dyadic partition with $2^{12}$ points. We take the distributions that these characterize as the ``true'' distributional data. That is, in the notation of Section \ref{sec:Application.GPCA}, we let these quantile functions define $\mathbb{P}_1,\dots, \mathbb{P}_N$. Let $\hat{\mathbb{P}}_i$ for $i=1,\dots,N$ be the approximate measures that are recovered by first approximating the quantile functions and then projecting onto the first two geodesic (or projected) principal components. We define the \textit{reconstruction error} of the estimates as their average Wasserstein distance to the true measures, $n^{-1}\sum_{i=1}^NW_2(\mathbb{P}_i,\hat{\mathbb{P}}_i)$. These errors as a function of the problem dimension are summarized in Table \ref{tab:rec.error} for each method.

We provide several comments to contextualize these results. In Table \ref{tab:opt.times} we observe that the implementation of this paper has a distinct speed advantage as a function of the \textit{dimension}. We would like to note that while we have put considerable effort into improving the numerical efficiency of our algorithm, we have not made any effort to reduce the computational cost of the other approaches. Moreover, the computational time of our method is sensitive to the initial ``guess'' of the solution. In particular, in this paper we have \textit{not} suggested any tailored optimization routine like in \cite{cazelles2018geodesic} and use a gradient-descent style solver available in $\tt{R}$. Our goal was to make objective function and gradient evaluations sufficiently inexpensive so that the problem comes in reach of traditional numerical methods (see Tables \ref{tab:comp.time} and \ref{tab:comp.time.sph} for a comparison of the computational times with $N$ inner product evaluations of the same dimension in $\tt{R}$). This means that our implementation will suffer from all of the shortcomings associated with applying the chosen optimization routine to high dimensional non-linear functions. If our initial ``guess'' is far from the true minimum then gradient descent may require many iterations to converge up to numerical precision or converge to a local (rather than global) minimum. For this problem, the initialization point that we suggest in Appendix \ref{app:init.pt} is sufficiently close to the minimum that we do not need a very large number of iterations to converge.

\begin{table}[h]
\centering
\begin{minipage}{0.45\textwidth}
    \centering
    \begin{tabular}{|c|c|c|c|}
    \hline
    Dimension & $V(\cdot)$  & $\nabla V(\cdot)$ & $\langle \cdot, \cdot \rangle$ \\ \hline
    $2^4$ & 6$\mu$s & 139$\mu$s & 30.4$\mu$s\\ \hline
    $2^5$ & 13$\mu$s & 710$\mu$s &  34.9$\mu$s\\ \hline
    $2^6$ & 27$\mu$s & 3.18ms& 52.1$\mu$s \\ \hline
    $2^7$ & 43$\mu$s & 15.0ms & 67.7$\mu$s\\ \hline
    $2^8$ & 78$\mu$s & 44.9ms & 105.3$\mu$s\\ \hline
    $2^9$ & 215$\mu$s & 195ms & 184.4$\mu$s\\ \hline
    \end{tabular}
    \caption{Computational cost of function evaluation alongside $N=101$ (number of samples) repeated inner product evaluations.}
    \label{tab:comp.time}
\end{minipage}%
%\hspace{0.0cm}
\hfill
\begin{minipage}{0.45\textwidth}
    \centering
    \begin{tabular}{|c|c|c|}
    \hline
    Dimension & $\tilde{V}(\boldsymbol{\omega}(\cdot))$  & $\nabla \tilde{V}(\boldsymbol{\omega}(\cdot))$  \\ \hline
    $2^4$ & 7.3$\mu$s & 183$\mu$s\\ \hline
    $2^5$ & 17.3$\mu$s & 902$\mu$s  \\ \hline
    $2^6$ & 32.5$\mu$s & 4.08ms \\ \hline
    $2^7$ & 57.5$\mu$s & 16.4ms \\ \hline
    $2^8$ & 116$\mu$s & 69.6ms \\ \hline
    $2^9$ & 282$\mu$s& 331ms \\ \hline
    \end{tabular}
    \caption{Computational cost of function evaluation in spherical coordinates with basis matrix $B_1=I$ (see \ref{eqn:nested.PCA.polyhedral2}).}
    \label{tab:comp.time.sph}
\end{minipage}
\end{table}

We also highlight that the unconstrained approach of \cite{PB22} actually obtains all of the projected principal components at once (rather than just the first two) from the eigendecomposition of a specific matrix that arises in their formulation. Moreover, as seen in Table \ref{tab:rec.error}, they need fewer basis functions to achieve the same level of reconstruction error. This is despite the fact that the convex PCA solution minimizes a quantity that is closely related to this error. There are two probable explanations for this phenomenon. First, the difference between the first two convex and projected PCs is not large relative to the approximation error. Second, the distributional data in this example is ``nice'' enough that the transport maps can be efficiently approximated by a low dimensional spline basis. In this sense, the basis chosen for the projected PCA is more efficient than the piecewise constant basis we use in the default implemenation of convex PCA.

On a related note, the approximations proposed here have the effect of reducing the burden imposed by the constraints. In GPCA, a constraint is binding if the quantile function is not \textit{strictly} monotone or takes a value at the boundary of the sample space $\Omega=[a,b]$. Lower resolution basis function approximations can have the effect of pulling the data away from these boundaries since they locally average the function values at nearby points in the domain. This, along with other structural features of the chosen basis, can have the effect of distorting the resulting principal components. In Figure \ref{fig:comparison.pcs} we provide an illustration of the projected principal components arising from a low dimensional spline approximation ($2^4$) with the geodesic principal components arising from a piecewise constant approximation of higher dimension ($2^7$). Despite having similar reconstruction errors, the second principal components still exhibit some visible differences. While not an issue here, if the true data is sufficiently concentrated near the boundary (as in the simple example illustrated in Figure \ref{fig:Low.Dim.Example}) the projected and geodesic components can differ materially. 

\begin{figure}[t!]
    \centering
    \begin{center}
        \includegraphics[width=0.49\textwidth]{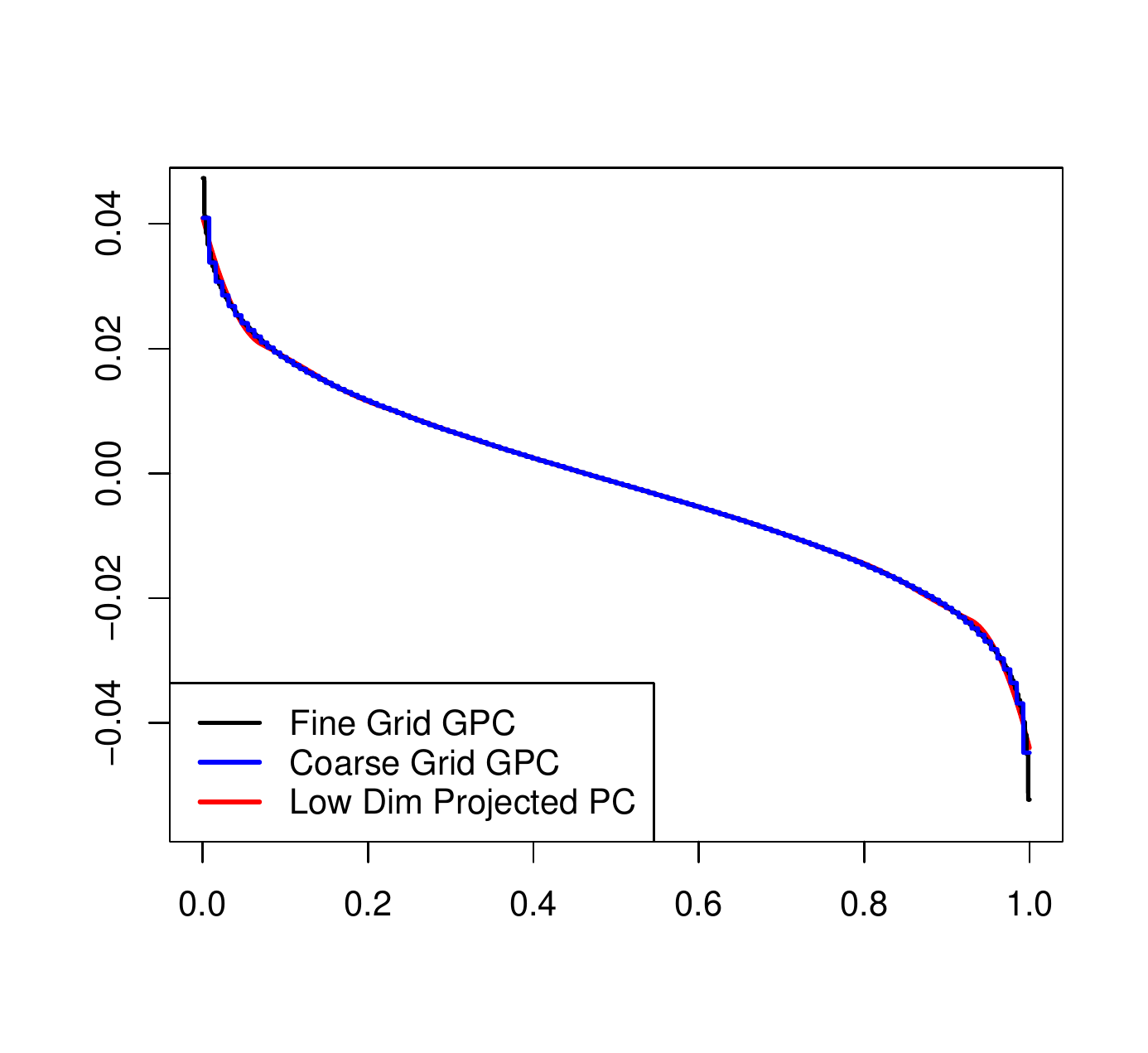}
        \includegraphics[width=0.49\textwidth]{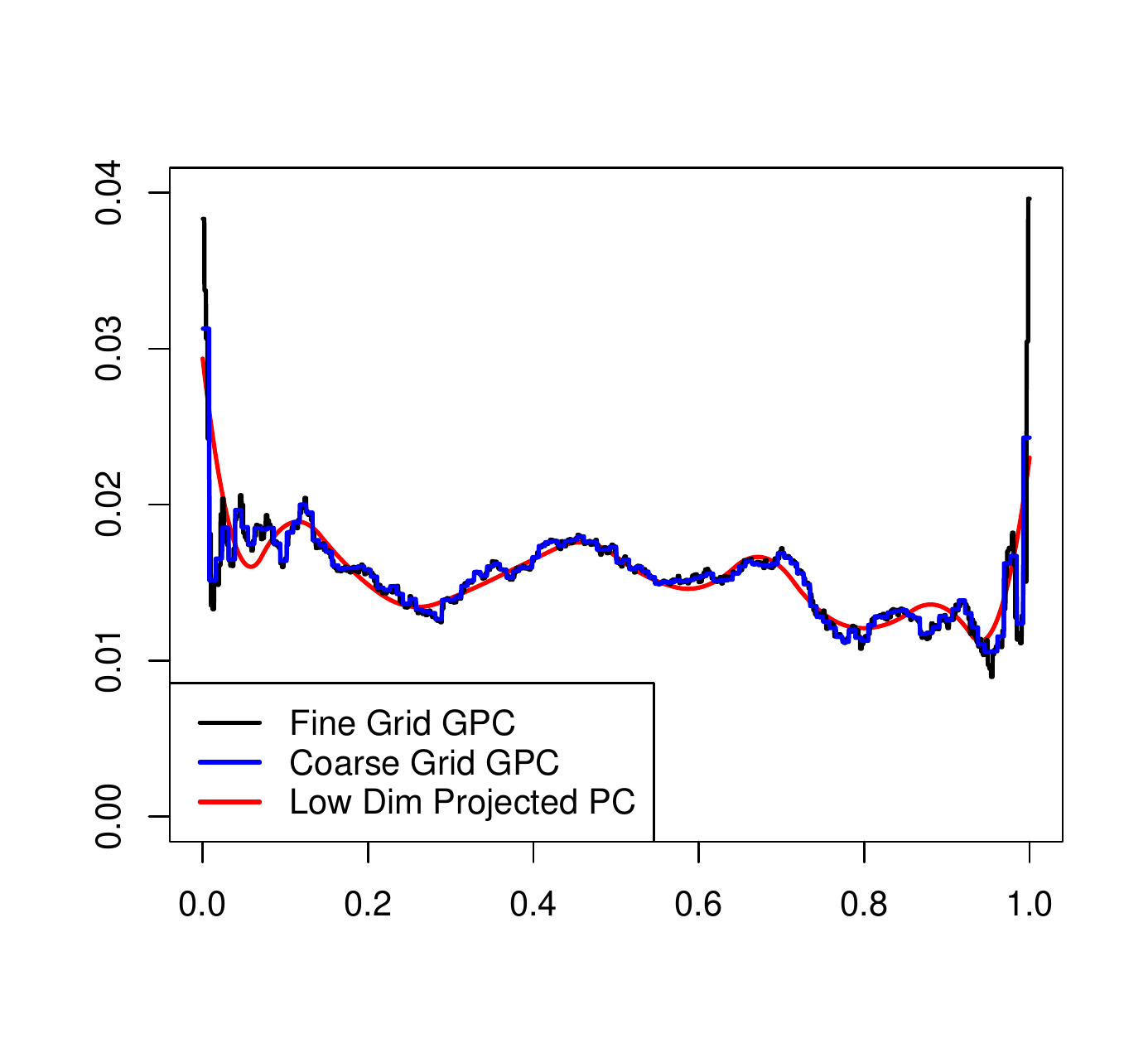}
    \end{center}
    \vspace{-0.5cm}
    \caption{First (left) and second (right) geodesic and projected PCs (represented as perturbation directions of the quantile functions on $[0,1]$). The fine grid GPC corresponds to a basis dimension of $2^9$ and serves as a higher resolution reference for the other estimates. The coarse grid GPC has basis dimension $2^7$ and the low dimensional projected PC corresponds to a spline basis of size $2^4$.}
    \label{fig:comparison.pcs}
\end{figure}

%%%%%%%%%%%%%%%%%%%%%%%%%%%%%%%%%%%%%

\section{Miscellaneous definitions and results}\label{sec:appendix.external.def.and.results}

\subsection{Convergence of sets and functions}
Throughout this subsection we let a metric space $(\mathcal{Y},\mathsf{d})$ be given. For $y \in \mathcal{Y}$, we let $\mathcal{N}(y)$ be the set of all open neighborhoods of $y$. Let $F_n$, $n \geq 1$, and $F$ be functions from $\mathcal{Y}$ to $\overline{\mathbb{R}}$.

\begin{definition}[Continuous convergence] \label{def:cont.convergence}\parencite[Definition 4.7]{dal2012introduction}
We say that $F_n$ continuously converges to $F$ if for every $y\in \mathcal{Y}$ and every neighborhood $V$ of $F(y)$ there exists an integer $k$ and $U\in \mathcal{N}(y)$, such that for $n\geq k$ and $y\in U$ we have $F_n(y)\in V$.
\end{definition}

\begin{definition}[$\Gamma$-convergence] \label{def:gamma.convergence}\parencite[Proposition 8.1]{dal2012introduction}
%Let $(\mathcal{Y},d)$ be a metric space. 
We say that $(F_n)_{n\geq0}$ $\Gamma$-converges to $F$, denoted $\Glim_{n\to\infty}F_n=F$, if the following holds for all $y \in \mathcal{Y}$.
\begin{enumerate}
    \item[(i)] $F(y)\leq\liminf_{n\to\infty}F_n(y_n)$ for any $y_n\in \mathcal{Y}$, $n\geq1$ with $y_n\to y$ and;
    \item[(ii)] There exists $y_n\in \mathcal{Y}$, $n\geq1$, with $y_n\to y$ such that $F(y)=\lim_{n\to\infty}F_n(y_n)$.
\end{enumerate}
\end{definition}

\begin{proposition}\label{prop:consistency.gamma.conv}\parencite[Theorems 7.8 and 7.23]{dal2012introduction}
%Assume that $(\mathcal{Y},d)$ is a compact metric space
Suppose that $\mathcal{Y}$ is compact and $\Glim_{n\to\infty} F_n=F$. Then $M(F) = \left\{y\in\mathcal{Y}:F(y)=\inf_{z\in \mathcal{Y}}F(z) \right\}$ 
is non-empty, $\lim_{n\to\infty}\inf_{y\in\mathcal{Y}}F_n(y)=\min_{y\in\mathcal{Y}} F(y)$, and if $y_n\in M(F_n)$, $n\geq 1$ then the accumulation points of $(y_n)_{n\geq 1}$ belong to $M(F)$.
\end{proposition}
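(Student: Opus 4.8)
The plan is to argue directly from Definition \ref{def:gamma.convergence}, invoking compactness of $\mathcal{Y}$ at exactly two places: to extract convergent subsequences of near-minimizers, and to guarantee that $F$ attains its infimum. Throughout, one works in $\overline{\mathbb{R}}$, so some care with $\pm\infty$ values is needed, but the inequalities below all make sense there.

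First I would establish that a $\Gamma$-limit is lower semicontinuous. Given $y^{(j)}\to y$ in $\mathcal{Y}$, pass to a subsequence realizing $L:=\liminf_j F(y^{(j)})$; for each $j$ choose a recovery sequence $y^{(j)}_n\to y^{(j)}$ with $F_n(y^{(j)}_n)\to F(y^{(j)})$, and select indices $n_j\uparrow\infty$ with $\mathsf{d}(y^{(j)}_{n_j},y^{(j)})<1/j$ and $|F_{n_j}(y^{(j)}_{n_j})-F(y^{(j)})|<1/j$. Define $z_n=y^{(j)}_{n_j}$ when $n=n_j$ and $z_n=y$ otherwise; then $z_n\to y$, and $\liminf_n F_n(z_n)\le \liminf_j F_{n_j}(z_{n_j})=L$, so Definition \ref{def:gamma.convergence}(i) gives $F(y)\le L=\liminf_j F(y^{(j)})$. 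Since $\mathcal{Y}$ is compact and $F$ is lsc, $F$ is bounded below and attains its infimum; write $m=\min_{y\in\mathcal{Y}}F(y)=F(y^*)$. This shows $M(F)\neq\emptyset$.

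Next, convergence of infima. For the upper bound, apply Definition \ref{def:gamma.convergence}(ii) at $y^*$ to get $y_n\to y^*$ with $F_n(y_n)\to F(y^*)=m$; since $\inf_{\mathcal{Y}}F_n\le F_n(y_n)$, we obtain $\limsup_n \inf_{\mathcal{Y}}F_n\le m$. For the lower bound, suppose toward a contradiction that $\inf_{\mathcal{Y}}F_{n_k}\to\ell<m$ along a subsequence; pick near-minimizers $y_{n_k}$ with $F_{n_k}(y_{n_k})\le \inf_{\mathcal{Y}}F_{n_k}+1/k$, and by compactness pass to a further subsequence with $y_{n_k}\to y_\infty$. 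Applying Definition \ref{def:gamma.convergence}(i) to the full sequence that equals $y_{n_k}$ at the indices $n_k$ and is constant $=y_\infty$ elsewhere (which converges to $y_\infty$) yields $F(y_\infty)\le \liminf_k F_{n_k}(y_{n_k})=\ell<m$, contradicting $m=\min_{\mathcal{Y}}F$. Hence $\lim_n\inf_{\mathcal{Y}}F_n=m$.

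Finally, for accumulation points of minimizers: if $y_n\in M(F_n)$ and $y_{n_k}\to y_\infty$, the same interleaving trick promotes (i) to the subsequence and gives $F(y_\infty)\le \liminf_k F_{n_k}(y_{n_k})=\liminf_k \inf_{\mathcal{Y}}F_{n_k}=m$, the last equality by the convergence of infima just proved. Therefore $F(y_\infty)=m$, i.e. $y_\infty\in M(F)$. I expect the main technical nuisance to be precisely this point — upgrading the liminf inequality of Definition \ref{def:gamma.convergence}(i), which is phrased for full sequences, to subsequences via the interleaving-with-a-constant device — together with the bookkeeping needed so that all $\liminf$/$\limsup$ comparisons remain valid when some values lie in $\{\pm\infty\}$.
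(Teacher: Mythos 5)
Your argument is correct, and it is essentially the standard proof of the fundamental theorem of $\Gamma$-convergence. Note that the paper itself gives no proof of this proposition: it is quoted from \cite{dal2012introduction} (Theorems 7.8 and 7.23), whose proofs follow the same scheme you reconstructed — lower semicontinuity of the $\Gamma$-limit, equi-coercivity (automatic here since $\mathcal{Y}$ is compact) to get attainment of $\min F$, a recovery sequence at a minimizer for the upper bound on the infima, and the liminf inequality along subsequences for the lower bound and for accumulation points of minimizers. Two small remarks. First, your interleaving-with-a-constant device is exactly the standard way of showing that the liminf inequality in Definition \ref{def:gamma.convergence}(i) persists along subsequences (equivalently, that every subsequence of $(F_n)$ still $\Gamma$-converges to $F$), and you use it correctly: the full-sequence liminf is dominated by the subsequence liminf, which is the direction you need. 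Second, the $\pm\infty$ bookkeeping you flag is genuinely required but routine: when $F(y^{(j)})=+\infty$ (resp. $-\infty$) replace the condition $|F_{n_j}(y^{(j)}_{n_j})-F(y^{(j)})|<1/j$ by $F_{n_j}(y^{(j)}_{n_j})>j$ (resp. $<-j$), and choose near-minimizers with $F_{n_k}(y_{n_k})\leq -k$ when $\inf_{\mathcal{Y}}F_{n_k}=-\infty$; also, compactness plus lower semicontinuity gives attainment of the infimum but not boundedness below ($\min F=-\infty$ is possible), though this does not affect any step of your argument.
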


\begin{definition}[Kuratowski Convergence]\label{def:kuratowski}\parencite[Remark 8.2.]{dal2012introduction} %Let $(\mathcal{Y},d)$ be a metric space and 
Let $A,A_n\subset \mathcal{Y}$, $n\geq1$. We say that the $A_n$ converges to $A$ in the sense of Kuratowski, denoted by $\Klim_{n\to\infty}A_n=A$, if
\begin{enumerate}
    \item[(i)] for all $y\in A$, there exists a constant $m\in\mathbb{N}$ and $y_n\in A_n$ for $n\geq m$ such that $y_n\to y$, and;
    \item[(ii)] for all $y_n\in A_n$, $n\geq 1$, and for any accumulation point $y$ of $(y_n)$, $y\in A$.
\end{enumerate}
\end{definition}

\begin{lemma}\label{lem:kur.haus}\parencite[Remark A.2]{BGKL17}
%Let $(\mathcal{Y},d)$ be a metric space. 
Convergence with respect to the Hausdorff distance implies convergence in the sense of Kuratowski. Moreover, if $\mathcal{Y}$ is compact both notions of convergence coincide.
\end{lemma}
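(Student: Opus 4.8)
The plan is to prove the two assertions separately, as both are classical facts about set convergence. Throughout I would adopt the standing hypothesis under which the Hausdorff distance is an honest metric, namely that $A$ and all the $A_n$ are nonempty and closed (in the compact case, closed $=$ compact). Closedness is precisely what makes $\mathsf{d}(y,A)=0\iff y\in A$, and nonemptiness is what lets me select points out of the sets; these are the only structural facts I expect to use besides the triangle inequality and the $1$-Lipschitz continuity of $y\mapsto\mathsf{d}(y,A)$.

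For the first assertion (Hausdorff $\Rightarrow$ Kuratowski) I would just verify the two conditions of Definition \ref{def:kuratowski}, assuming $\mathsf{h}(A_n,A)\to0$. For (i): given $y\in A$ one has $\mathsf{d}(y,A_n)\le\mathsf{h}(A,A_n)\to0$, so picking $y_n\in A_n$ with $\mathsf{d}(y,y_n)\le\mathsf{d}(y,A_n)+1/n$ gives $y_n\to y$. For (ii): if $y_n\in A_n$ and $y_{n_k}\to y$ along a subsequence, then $\mathsf{d}(y,A)\le\mathsf{d}(y,y_{n_k})+\mathsf{d}(y_{n_k},A)\le\mathsf{d}(y,y_{n_k})+\mathsf{h}(A_{n_k},A)\to0$, so $y\in A$ since $A$ is closed. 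This direction uses no compactness.

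For the converse under compactness of $\mathcal{Y}$, I would argue by contradiction. If $\mathsf{h}(A_n,A)\not\to0$ there are $\varepsilon>0$ and a subsequence with $\mathsf{h}(A_{n_k},A)\ge\varepsilon$; since $\mathsf{h}$ is the maximum of the two one-sided deviations, after passing to a further subsequence either (a) there are $a_k\in A_{n_k}$ with $\mathsf{d}(a_k,A)\ge\varepsilon/2$ for all $k$, or (b) there are $b_k\in A$ with $\mathsf{d}(b_k,A_{n_k})\ge\varepsilon/2$ for all $k$. In case (a) I would splice $(a_k)$ into a full selection $y_n\in A_n$ (setting $y_{n_k}=a_k$ and choosing $y_n\in A_n$ arbitrarily for the remaining indices), use compactness of $\mathcal{Y}$ to extract an accumulation point $a^{*}$ of $(y_n)$ that is the limit of a sub-subsequence of $(a_k)$, and then conclude $\mathsf{d}(a^{*},A)\ge\varepsilon/2>0$ by continuity of $\mathsf{d}(\cdot,A)$, contradicting $a^{*}\in A$, which condition (ii) forces. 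In case (b), compactness gives $b_{k_j}\to b^{*}\in A$ ($A$ closed); condition (i) then supplies $c_n\in A_n$ (for $n$ past some index) with $c_n\to b^{*}$, so $\mathsf{d}(b_{k_j},A_{n_{k_j}})\le\mathsf{d}(b_{k_j},b^{*})+\mathsf{d}(b^{*},c_{n_{k_j}})\to0$, contradicting the choice of $b_k$.

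The only point requiring real care is case (a) of the second direction: condition (ii) of Kuratowski convergence concerns \emph{entire} sequences $(y_n)_{n\ge1}$ with $y_n\in A_n$, so one cannot feed it the subsequence $(a_k)$ directly but must first complete it to a full selection — and this is exactly where nonemptiness of every $A_n$ is used. Compactness of $\mathcal{Y}$ enters only through the two subsequence extractions above; a standard non-totally-bounded example shows the "moreover" clause genuinely fails without it.
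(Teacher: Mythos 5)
Your proof is correct: both directions are carried out cleanly, and you rightly flag the one delicate point, namely that condition (ii) of Kuratowski convergence applies only to full selections $y_n\in A_n$, so the subsequence $(a_k)$ in case (a) must first be completed to such a selection using nonemptiness of every $A_n$. The paper itself offers no proof of this lemma (it is quoted from Remark A.2 of \cite{BGKL17}), and your argument is the standard one that reference relies on, so nothing further is needed.
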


\begin{lemma}\label{lem:kur.gamma}\parencite[Proposition 4.15]{attouch1984variational}
%Let $(\mathcal{Y},d)$ be a metric space. 
If $A_n,A\subset \mathcal{Y}$ for $n\geq 1$ then $\Klim_{n\to\infty}A_n=A$ if and only if $\Glim_{n\to\infty}\chi_{A_n}=\chi_A$.
\end{lemma}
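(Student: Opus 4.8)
The statement is the classical identification of Kuratowski--Painlev\'e set convergence (Definition \ref{def:kuratowski}) with $\Gamma$-convergence of the associated indicator functionals (Definition \ref{def:gamma.convergence}), so the plan is to verify the two clauses of one definition against the two clauses of the other, using throughout only that $\chi_{A_n}$ is $\{0,+\infty\}$-valued: for a sequence $z_n\to y$ one has $\chi_{A_n}(z_n)\to 0$ iff $z_n\in A_n$ for all large $n$, one has $\liminf_n\chi_{A_n}(z_n)=0$ iff $z_n\in A_n$ for infinitely many $n$, and otherwise the $\liminf$ equals $+\infty$. One half of the argument rests on a single observation I will call the \emph{separation claim}: if $\Klim_{n\to\infty}A_n=A$ and $y\notin A$, then $\liminf_{n\to\infty}\mathsf{d}(y,A_n)>0$. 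Its proof is short: were $\mathsf{d}(y,A_{n_k})\to 0$ along a subsequence, I would pick $w_{n_k}\in A_{n_k}$ with $w_{n_k}\to y$; completing this to a full sequence $w_n\in A_n$ (possible for all large $n$ because, when $A\neq\emptyset$, Definition \ref{def:kuratowski}(i) forces $A_n\neq\emptyset$ eventually --- and in the application all $A_n=\mathfrak{C}_n$ are nonempty anyway) exhibits $y$ as an accumulation point of a sequence lying in the $A_n$, contradicting Definition \ref{def:kuratowski}(ii).

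For the direction $\Glim_{n\to\infty}\chi_{A_n}=\chi_A\Rightarrow\Klim_{n\to\infty}A_n=A$: Definition \ref{def:kuratowski}(i) is immediate from the recovery-sequence clause Definition \ref{def:gamma.convergence}(ii) applied at a point $y\in A$, since a recovery sequence $y_n\to y$ must satisfy $\chi_{A_n}(y_n)\to\chi_A(y)=0$, hence $y_n\in A_n$ eventually. For Definition \ref{def:kuratowski}(ii), let $y_n\in A_n$ for all $n$ and let $y$ be an accumulation point, say $y_{n_k}\to y$; I build a full sequence $z_n\to y$ by setting $z_n=y_{n_{k(n)}}$, where $n_{k(n)}$ is the largest of the $n_k$ that is $\le n$ (and $z_n$ arbitrary for small $n$). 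Then $z_n\to y$ and $z_{n_k}=y_{n_k}\in A_{n_k}$, so $\chi_{A_n}(z_n)=0$ for infinitely many $n$ and $\liminf_n\chi_{A_n}(z_n)=0$; the $\Gamma$-liminf inequality Definition \ref{def:gamma.convergence}(i) then yields $\chi_A(y)\le 0$, i.e.\ $y\in A$.

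For the converse $\Klim_{n\to\infty}A_n=A\Rightarrow\Glim_{n\to\infty}\chi_{A_n}=\chi_A$, I invoke the separation claim in both remaining cases. For Definition \ref{def:gamma.convergence}(i), fix $y$ and $z_n\to y$: if $y\in A$ then $\chi_A(y)=0\le\liminf_n\chi_{A_n}(z_n)$ trivially; if $y\notin A$, the separation claim gives $\mathsf{d}(y,A_n)\ge c>0$ for all large $n$, so $z_n\notin A_n$ eventually and $\liminf_n\chi_{A_n}(z_n)=+\infty=\chi_A(y)$. For Definition \ref{def:gamma.convergence}(ii), fix $y$: if $y\in A$, Definition \ref{def:kuratowski}(i) directly supplies $y_n\in A_n$ (eventually) with $y_n\to y$, whence $\chi_{A_n}(y_n)\to 0=\chi_A(y)$; if $y\notin A$, the separation claim shows $y\notin A_n$ for large $n$, so the constant sequence $y_n\equiv y$ satisfies $\chi_{A_n}(y_n)\to+\infty=\chi_A(y)$. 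This establishes the equivalence.

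The one genuinely non-formal step --- and the place I would expect to spend the most care --- is the separation claim, since it is precisely where clause (ii) of Kuratowski convergence (rather than the easy clause (i)) is needed, and it is the only spot where the possibility of empty $A_n$ has to be handled; as noted, in the setting of the excerpt ($\mathcal{Y}=CL(X)$, $A_n=\mathfrak{C}_n$) all sets are nonempty, so this is automatic. It is worth noting that, unlike Lemma \ref{lem:kur.haus}, this equivalence needs no compactness of $\mathcal{Y}$; compactness would be relevant only if one additionally wanted the accumulation points in question to exist.
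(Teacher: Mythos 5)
Your proof is correct, and there is nothing in the paper to compare it against line by line: the paper does not prove this lemma but simply cites \parencite[Proposition 4.15]{attouch1984variational}, so your contribution is a self-contained verification of an imported result. Your argument is the standard one and all four implications check out: the recovery-sequence clause of Definition \ref{def:gamma.convergence}(ii) at a point of $A$ gives Definition \ref{def:kuratowski}(i) because a $\{0,+\infty\}$-valued sequence converging to $0$ is eventually $0$; your interleaving construction $z_n=y_{n_{k(n)}}$ correctly turns a convergent subsequence of selections into a full sequence on which the $\Gamma$-liminf inequality forces $y\in A$; and in the converse direction the separation claim is indeed the only place where clause (ii) of Kuratowski convergence does real work, and your use of it (plus constant sequences off $A$ and clause (i) on $A$) yields both halves of $\Gamma$-convergence. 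The one caveat is the one you already flag: the paper's Definition \ref{def:kuratowski}(ii) is phrased via full selections $y_n\in A_n$ for all $n\geq 1$, so if some $A_n$ are empty that clause becomes vacuous and the literal equivalence can degenerate (e.g.\ $A_1=\emptyset$ makes every superset of the genuine limit a ``Kuratowski limit'' while the $\Gamma$-limit stays pinned down); your completion of the subsequence to a full selection, and hence the separation claim, needs all $A_n$ nonempty (or the subsequential form of clause (ii) used by Attouch). Since every family $\mathfrak{C}_n$ appearing in Theorems \ref{thm:Global.CPCA.measure.consistency}, \ref{thm:nested.CPCA.measure.consistency} and \ref{thm:global.consistency.CPCA.approx} contains at least the singleton $\{\overline{x}_n\}$ (or $C_{n,j-1}$ itself), this is harmless in every application in the paper, and your remark that, unlike Lemma \ref{lem:kur.haus}, no compactness of $\mathcal{Y}$ is needed is also accurate.
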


\subsection{Correspondence and Berge's Maximum Theorem}
The following definitions and results, which can be found in \cite{guide2006infinite}, are used to prove Theorem \ref{thm:cont.pca}. Recall that a correspondence $\Phi:X\twoheadrightarrow Y$ maps each $x \in X$ to a subset of $Y$. Here we assume that $X$ and $Y$ are topological spaces. The graph of $\Phi$ is defined by
\[
\mathrm{Gr}(\Phi) = \{(x,y)\in X\times Y: y\in \Phi(x)\}.
\]

\begin{definition}
Let $\Phi:X\twoheadrightarrow Y$ be a correspondence and $x \in X$.
\begin{enumerate}
    \item[(i)] We say that $\Phi$ is upper hemicontinuous at $x$ if for every neighborhood $U$ of $\Phi(x)$, there is a neighborhood $V$ of $x$ such that $z\in V$ implies $\Phi(z)\subset U$.
    \item[(ii)] We say that $\Phi$ lower hemicontinuous at the point $x$ if for every open set $U$ that intersects  $\Phi(x)$, there is a neighborhood $V$ of $x$ such that $z\in V$ implies $\Phi(z)\cap U\not=\emptyset$.
\end{enumerate}
If $\Phi$ is both upper and lower hemicontinuous at $x$ we say it is continuous at $x$.
\end{definition}

\begin{lemma}\label{lem:hemi.singleton}
A singleton-valued correspondence is upper hemicontinuous if and only if it is lower hemicontinous, in which case it is continuous as a function.
\end{lemma}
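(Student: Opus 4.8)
The plan is to exploit the fact that a singleton-valued correspondence is nothing other than the graph of a function. Write $\Phi(x) = \{f(x)\}$, which uniquely determines a map $f : X \to Y$. I would then show that upper hemicontinuity of $\Phi$, lower hemicontinuity of $\Phi$, and continuity of $f$ are all equivalent to one and the same pointwise condition: for every $x \in X$ and every open $U$ with $f(x) \in U$, there is an open $V \ni x$ such that $f(V) \subseteq U$.

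First I would unwind upper hemicontinuity at a point $x$. Since $\Phi(x) = \{f(x)\}$, a neighborhood $U$ of $\Phi(x)$ is precisely a neighborhood of the point $f(x)$; and for $z \in V$ the inclusion $\Phi(z) \subseteq U$ says exactly $f(z) \in U$, i.e.\ $f(V) \subseteq U$. This is verbatim the definition of continuity of $f$ at $x$. Next I would do the same for lower hemicontinuity at $x$: an open set $U$ meets $\Phi(x) = \{f(x)\}$ if and only if $f(x) \in U$, and for $z \in V$ the condition $\Phi(z) \cap U \neq \emptyset$ again says $f(z) \in U$, i.e.\ $f(V) \subseteq U$. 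Hence lower hemicontinuity at $x$ is also exactly continuity of $f$ at $x$. Combining the two equivalences yields, at every point and hence globally, that $\Phi$ is upper hemicontinuous $\iff$ $f$ is continuous $\iff$ $\Phi$ is lower hemicontinuous, and in that case $\Phi$ is (the graph of) the continuous function $f$.

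I do not expect any real obstacle here; the proof is essentially a bookkeeping exercise. The only points requiring a word of care are the elementary remarks that a neighborhood of a singleton $\{p\}$ is the same thing as a neighborhood of the point $p$, and that in verifying continuity of $f$ at $x$ it suffices to test open neighborhoods $U$ of $f(x)$ (so that the ``open $U$'' appearing in the definition of lower hemicontinuity is no loss of generality). The global statement then follows immediately by quantifying the pointwise equivalences over all $x \in X$.
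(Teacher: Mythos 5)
Your proof is correct: for a singleton-valued correspondence $\Phi(x)=\{f(x)\}$, both hemicontinuity conditions at a point $x$ unwind, exactly as you say, to the statement that every (open) neighborhood $U$ of $f(x)$ admits a neighborhood $V$ of $x$ with $f(V)\subseteq U$, i.e.\ continuity of $f$ at $x$, and quantifying over $x$ gives the lemma. Note that the paper does not prove this lemma at all --- it is stated as a standard fact cited from \cite{guide2006infinite} --- so your definition-unwinding argument simply supplies the routine textbook proof that the paper delegates to the reference; there is no gap, and nothing further is needed.
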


\begin{theorem}\label{thm:cont.func.convex.hull.cont}
Given $f_1, \ldots, f_k :X \rightarrow  Y$, define a correspondence $\Phi:X\twoheadrightarrow Y$ by $\Phi(x)=\{f_1(x), \ldots ,f_k(x)\}$. If each $f_i$ is continuous at some point $x_0$ then:
    \begin{enumerate}
        \item[(i)] $\Phi$ is continuous at $x_0$.
        \item[(ii)] If $Y$ is a locally convex topological vector space, then the correspondence $x \mapsto \mathrm{conv}(\Phi(x))$ is also continuous at $x_0$.
    \end{enumerate}
\end{theorem}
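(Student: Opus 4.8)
The plan is to verify, directly from the definitions of upper and lower hemicontinuity, that both $\Phi$ and $x \mapsto \mathrm{conv}(\Phi(x))$ are continuous at $x_0$. For part (i) I would argue as follows. Given an open set $U \supseteq \Phi(x_0) = \{f_1(x_0), \dots, f_k(x_0)\}$, use continuity of each $f_i$ at $x_0$ to pick a neighborhood $V_i$ of $x_0$ with $f_i(V_i) \subseteq U$; then $V = \bigcap_{i=1}^k V_i$ is a neighborhood of $x_0$ with $\Phi(z) \subseteq U$ for all $z \in V$, giving upper hemicontinuity. For lower hemicontinuity, if an open set $U$ meets $\Phi(x_0)$, say $f_{i_0}(x_0) \in U$, then continuity of $f_{i_0}$ at $x_0$ yields a neighborhood $V$ of $x_0$ with $f_{i_0}(z) \in \Phi(z) \cap U$ for every $z \in V$. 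This part is entirely routine.

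For part (ii), write $\Delta$ for the standard simplex in $\mathbb{R}^k$ and set $g(\lambda, x) = \sum_{i=1}^k \lambda_i f_i(x)$, so that $\mathrm{conv}(\Phi(x)) = g(\Delta, x)$. The crux I would establish first is a \emph{uniform-in-$\lambda$} continuity estimate at $x_0$: given a neighborhood $U$ of $0$ in $Y$, local convexity lets me assume $U$ is convex, and then choosing $V = \bigcap_i V_i$ so that $f_i(z) - f_i(x_0) \in U$ on $V_i$ gives $g(\lambda, z) - g(\lambda, x_0) = \sum_i \lambda_i\bigl(f_i(z) - f_i(x_0)\bigr) \in U$ for every $\lambda \in \Delta$ and $z \in V$, since the right-hand side is a convex combination of points of the convex set $U$. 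Hence $\mathrm{conv}(\Phi(z)) \subseteq \mathrm{conv}(\Phi(x_0)) + U$ on $V$. Since $\mathrm{conv}(\Phi(x_0))$ is compact, being the continuous image of the compact simplex under $g(\cdot, x_0)$, a standard compact-plus-open argument — followed by shrinking to a convex neighborhood of $0$ via local convexity — lets me fit $\mathrm{conv}(\Phi(x_0)) + U$ inside any prescribed open superset of $\mathrm{conv}(\Phi(x_0))$; this delivers upper hemicontinuity of $x \mapsto \mathrm{conv}(\Phi(x))$. Lower hemicontinuity is then easy: if an open set meets $\mathrm{conv}(\Phi(x_0))$ at a point $g(\lambda^0, x_0)$, continuity of the single map $g(\lambda^0, \cdot)$ at $x_0$ (a fixed finite combination of the $f_i$) produces the required neighborhood $V$ with $g(\lambda^0, z) \in \mathrm{conv}(\Phi(z)) \cap U$ on $V$.

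The step I expect to be the main obstacle is the upper-hemicontinuity half of part (ii): this is precisely where local convexity of $Y$ is indispensable — it is needed both to push a single perturbation bound through all convex combinations simultaneously and to produce a convex open ``tube'' around the compact polytope $\mathrm{conv}(\Phi(x_0))$ inside a given open set. Without local convexity the convex-hull correspondence need not remain upper hemicontinuous, so this is the point requiring care; everything else amounts to bookkeeping with neighborhoods.
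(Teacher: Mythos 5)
Your proof is correct. Note that the paper does not prove this statement at all: it is quoted from \cite{guide2006infinite} (Aliprantis and Border), among the standard facts on correspondences collected before Berge's Maximum Theorem, so your blind argument serves as a self-contained substitute rather than a variant of an in-paper proof. Part (i) is the routine check and is fine. In part (ii), your chain of steps --- parametrize $\mathrm{conv}(\Phi(x))$ as $g(\Delta,x)$ with $g(\lambda,x)=\sum_{i}\lambda_i f_i(x)$, observe that $g(\Delta,x_0)$ is compact as the continuous image of the simplex, use the compact-inside-open ``tube'' fact that $K+U\subseteq W$ for some neighborhood $U$ of $0$ (valid in any topological vector space), and push a single convex neighborhood $U$ through all convex combinations simultaneously --- is exactly how the cited textbook results are proved in this special case, and your lower hemicontinuity argument via a fixed coefficient vector $\lambda^0$ is also sound. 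One small correction to your closing remark: local convexity is convenient here but not indispensable. Every topological vector space has a base of balanced neighborhoods of $0$, so given $U$ one can choose a balanced $U'$ with the $k$-fold sum $U'+\cdots+U'\subseteq U$; then $\sum_i \lambda_i u_i\in U$ for all $\lambda\in\Delta$ and $u_i\in U'$, and the upper hemicontinuity argument goes through in any TVS. Local convexity is genuinely needed only for the more general statement about (closed) convex hulls of arbitrary compact-valued upper hemicontinuous correspondences, where the convex hull of a compact set need not be compact; for the convex hull of finitely many continuous selections, as here, it is automatic. This does not affect the validity of your proof under the stated hypothesis.
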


\begin{theorem}[Berge's Maximum Theorem] \label{thm:berge} Let $\Phi:X\twoheadrightarrow Y$ be a continuous correspondence such that each $\Phi(x)$ is a nomempty compact subset of $Y$, and let $f:\mathrm{Gr}(\Phi) \to \mathbb{R}$ be continuous. Define the value function $m:X\to\mathbb{R}$ by
\[
m(x) = \max_{y\in\Phi(x)}f(x,y),
\]
and the correspondence $\mu:X\twoheadrightarrow Y$ of maximizers by 
\[\mu(x)=\{y\in\Phi(x):f(x,y)=m(x)\}.\]
Then: $m$ is continuous and $\mu$ has non-empty compact values. Moreover, if either $f$ has a continuous extension to all of $X\times Y$ or $Y$ is Hausdorff, then $\mu$ is upper hemicontinuous.
\end{theorem}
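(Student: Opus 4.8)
The plan is to handle the three assertions --- nonemptiness and compactness of $\mu(x)$, continuity of $m$, and upper hemicontinuity of $\mu$ --- separately, since each draws on a different part of the hypotheses. For the first, fix $x$: the map $y\mapsto f(x,y)$ is continuous on the nonempty compact set $\Phi(x)$, so by the extreme value theorem the supremum $m(x)$ is attained, giving $\mu(x)\neq\emptyset$; and $\mu(x)$ is the intersection of $\Phi(x)$ with the preimage of the single point $m(x)$ under this continuous map, hence closed in $\Phi(x)$ and therefore compact.

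For continuity of $m$ at an arbitrary $x_0$ I would establish lower and upper semicontinuity separately. \emph{Lower semicontinuity}: given $\varepsilon>0$, choose $y_0\in\Phi(x_0)$ with $f(x_0,y_0)>m(x_0)-\varepsilon/2$; by continuity of $f$ at $(x_0,y_0)$ there are neighborhoods $W$ of $x_0$ and $N$ of $y_0$ on which $f>m(x_0)-\varepsilon$; by \emph{lower} hemicontinuity of $\Phi$ there is a neighborhood $V\subseteq W$ of $x_0$ with $\Phi(x)\cap N\neq\emptyset$ for $x\in V$, and any $y$ in this intersection gives $m(x)\geq f(x,y)>m(x_0)-\varepsilon$. \emph{Upper semicontinuity}: given $\varepsilon>0$, for each $y\in\Phi(x_0)$ pick via continuity of $f$ neighborhoods $W_y$ of $x_0$ and $N_y$ of $y$ with $f<m(x_0)+\varepsilon$ on $(W_y\times N_y)\cap\mathrm{Gr}(\Phi)$; extract a finite subcover $N_{y_1},\dots,N_{y_k}$ of the compact set $\Phi(x_0)$, set $N=\bigcup_i N_{y_i}$ and $W=\bigcap_i W_{y_i}$, and use \emph{upper} hemicontinuity of $\Phi$ to get $V\subseteq W$ with $\Phi(x)\subseteq N$ for $x\in V$; then every $y'\in\Phi(x)$ lies in some $N_{y_i}$ with $x\in W_{y_i}$, so $f(x,y')<m(x_0)+\varepsilon$, whence $m(x)\leq m(x_0)+\varepsilon$. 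Combining the two inequalities gives continuity of $m$.

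For upper hemicontinuity of $\mu$ I would first show $\mu$ has closed graph: if a net $(x_\alpha,y_\alpha)\to(x,y)$ with $y_\alpha\in\mu(x_\alpha)$, then $y_\alpha\in\Phi(x_\alpha)$ and $f(x_\alpha,y_\alpha)=m(x_\alpha)$; one of the extra hypotheses --- $Y$ Hausdorff (so that the compact set $\Phi(x)$ is closed and can be separated from any point outside it) or a continuous extension of $f$ --- lets me deduce $y\in\Phi(x)$ from upper hemicontinuity of $\Phi$, and then $f(x_\alpha,y_\alpha)\to f(x,y)$ while $m(x_\alpha)\to m(x)$ by the continuity of $m$ already shown, so $f(x,y)=m(x)$, i.e. $y\in\mu(x)$. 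Then I would invoke the standard fact that a closed-graph correspondence contained in an upper hemicontinuous compact-valued one (here $\mu\subseteq\Phi$) is itself upper hemicontinuous with compact values: were $\mu$ not upper hemicontinuous at $x$, there would be an open $U\supseteq\mu(x)$, a net $x_\alpha\to x$, and $y_\alpha\in\mu(x_\alpha)\setminus U$; since the $y_\alpha$ are eventually inside every neighborhood of the compact set $\Phi(x)$, a subnet converges to some $y\in\Phi(x)$, and the closed graph of $\mu$ forces $y\in\mu(x)\subseteq U$, contradicting $y_\alpha\notin U$ and $y_\alpha\to y$.

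The main obstacle is this last step. The extreme value theorem, the two $\varepsilon$-estimates for $m$, and the closed-graph computation itself are all routine; what genuinely requires care is converting the closed graph of $\mu$ into upper hemicontinuity, which needs a convergent subnet of $(y_\alpha)$ together with good behaviour of compact sets --- and this is precisely the role of the extra topological hypothesis on $Y$ (or on $f$). In the paper's application $X=H\setminus\{0\}$ and $Y=\mathbb{R}$ are metric spaces, so sequences replace nets and Hausdorffness is automatic.
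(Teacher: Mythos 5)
The paper itself offers no proof of Theorem \ref{thm:berge}: it is quoted from \cite{guide2006infinite} and used as an external tool in the proof of Theorem \ref{thm:cont.pca}, so your proposal can only be compared with the standard textbook argument, which it largely reproduces. The extreme-value/compactness step for $\mu(x)$, the two-sided $\varepsilon$-argument for continuity of $m$ (lower hemicontinuity of $\Phi$ for the lower estimate, upper hemicontinuity plus a finite subcover of the compact set $\Phi(x_0)$ for the upper estimate, with $f$ only ever evaluated on $\mathrm{Gr}(\Phi)$), and the final subnet lemma (a net eventually in every neighborhood of the compact set $\Phi(x)$ has a subnet converging into $\Phi(x)$, so a closed-graph sub-correspondence of an upper hemicontinuous compact-valued correspondence is upper hemicontinuous) are all correct as written.

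The genuine gap is in how you dispatch the two alternative hypotheses for upper hemicontinuity of $\mu$. Your closed-graph computation for $\mu$ needs the limit point to satisfy $y\in\Phi(x)$, and you assert that \emph{either} hypothesis yields this ``from upper hemicontinuity of $\Phi$''. That is correct for the Hausdorff branch: compactness of $\Phi(x)$ plus Hausdorffness of $Y$ lets you separate $\Phi(x)$ from any outside point, so upper hemicontinuity forces the limit into $\Phi(x)$ (i.e.\ $\Phi$ has closed graph). But the continuous-extension hypothesis gives no such conclusion: without Hausdorffness an upper hemicontinuous compact-valued correspondence need not have closed graph, and extending $f$ to $X\times Y$ says nothing about where the limit $y$ lands, so in that branch $\mu$ itself may fail to have closed graph and your argument stalls. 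The standard repair (and the route in \cite{guide2006infinite}) is to use the extension to form the auxiliary correspondence $\hat\mu(x)=\{y\in Y: f(x,y)\geq m(x)\}$, which does have closed graph because the extended $f$ and $m$ are continuous on all of $X\times Y$ and $X$, note that $\mu=\hat\mu\cap\Phi$, and then run exactly your final subnet argument on this intersection: the subnet limit $y$ lies in $\Phi(x)$ by the compactness argument, satisfies $f(x,y)\geq m(x)$ by the closed graph of $\hat\mu$, hence $y\in\mu(x)$. For the paper's application the distinction is immaterial, since there $Y=\mathbb{R}$ is metric and the objective and constraints are globally defined, as you yourself observe.
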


\subsection{Generalized envelope theorem} \label{sec:envelope}
The following result, taken from \cite[Theorem 1]{marimon2021envelope} is used in the proof of Theorem \ref{thm:pca.diff}. For convenience, we state the result using notations that are compatible with ours.

\begin{theorem}[Generalized envelope theorem]\label{thm:gen.env}
Consider the problem
\[
V(\mathbf{p})=\sup\left\{ J(\mathbf{a},\mathbf{p}): \mathbf{a}\in\mathcal{A} \text{ and } g_j(\mathbf{a},\mathbf{p})\geq 0 \text{ for $j = 1, \ldots, k$} \right\},
\]
where $\mathcal{A}\subset \mathbb{R}^m$, $\mathbf{p}\in\mathcal{P}\subset \mathbb{R}^n$ and $J$ and $g_i$ are real-vaued functions on $\mathbb{R}^m \times \mathbb{R}^n$. For $\boldsymbol{\lambda} = (\lambda_1, \ldots, \lambda_k) \in \mathbb{R}_+^k$ consider the Lagrangian
\[
\mathcal{L}(\mathbf{p},\mathbf{a},\boldsymbol{\lambda})=J(\mathbf{a},\mathbf{p})+ \sum_{j = 1}^k \lambda_j  \mathbf{g}(\mathbf{a},\mathbf{p}),
\]
and assume that the following regularity conditions hold.
\begin{enumerate}
        \item[(i)] The set $\mathcal{A}$ is convex.
        \item[(ii)] The functions $J$ and $g_j$, $j=1, \ldots ,k$, are continuous on $\mathbb{R}^m \times \mathbb{R}^n$.
        \item[(iii)] The constraint set $\Phi(\mathbf{p})=\{\mathbf{a}\in \mathcal{A}: g_j(\mathbf{a},\mathbf{p})\geq 0, \ j=1, \ldots ,k\}$ is compact for every $\mathbf{p}\in\mathcal{P}$.
        \item[(iv)] The correspondence $\Phi(\mathbf{p})$ is continuous.
        \item[(v)] For every $\mathbf{p}\in\mathcal{P}$ there exist $j$ and  $\hat{\mathbf{a}}_j\in\mathcal{A}$ such that $g_j(\hat{\mathbf{a}}_j,\mathbf{p})>0$ and $g_{\ell}(\hat{\mathbf{a}}_{\ell},\mathbf{p})\geq0$ for ${\ell}\not=j$.
        \item[(vi)] The set of saddle-points of $\mathcal{L}$, which has the form $\mathcal{A}^*(\mathbf{p})\times\Lambda^*(\mathbf{p})$, is non-empty for every $\mathbf{p}\in\mathcal{P}$.\footnote{For the definition of saddle point see \cite[Section 2]{marimon2021envelope}.}
        \item[(vii)] The gradients with respect to $\mathbf{p}$, $\nabla_\mathbf{p} J,\nabla_\mathbf{p} g_j$, exist and are jointly continuous in $(\mathbf{a},\mathbf{p})$.
\end{enumerate}
Then the directional derivatives of $V$ exist for ${\bf p} \in \intt(\mathcal{P})$ and are given by
\[
V'(\mathbf{p};\hat{\mathbf{p}})=\max_{\mathbf{a}\in\mathcal{A}^*(\mathbf{p})}\min_{\boldsymbol{\lambda}\in\Lambda^*(\mathbf{p})}\left[\nabla_\mathbf{p} J(\mathbf{a},\mathbf{p})+\sum_{j=1}^k\lambda_i\nabla_\mathbf{p}g_j(\mathbf{a},\mathbf{p})\right]\hat{\mathbf{p}},
\]
and the order of maximum and minimum does not matter.
\end{theorem}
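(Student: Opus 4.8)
The plan is to bracket the forward difference quotient $\Delta_t := \frac{1}{t}(V(\mathbf{p}+t\hat{\mathbf{p}}) - V(\mathbf{p}))$ for $t \downarrow 0$ between a $\liminf$ estimate obtained by perturbing an optimal primal solution and a $\limsup$ estimate obtained from the Lagrangian dual, and then to collapse the two estimates using the product structure $\mathcal{A}^*(\mathbf{p}) \times \Lambda^*(\mathbf{p})$ of the saddle set. First I would invoke Lagrangian duality: from the Slater-type condition (v), compactness and continuity of $\Phi$ in (iii)--(iv), and the continuity in (ii), one gets strong duality $V(\mathbf{p}) = \min_{\boldsymbol{\lambda} \in \mathbb{R}_+^k} \sup_{\mathbf{a} \in \mathcal{A}} \mathcal{L}(\mathbf{p}, \mathbf{a}, \boldsymbol{\lambda})$, the dual solution set $\Lambda^*(\mathbf{p})$ nonempty and bounded (the Slater margin bounds $\|\boldsymbol{\lambda}\|$ at any dual optimum), and $\mathcal{L}(\mathbf{p}, \mathbf{a}^*, \boldsymbol{\lambda}^*) = V(\mathbf{p})$ together with complementary slackness $\lambda_j^* g_j(\mathbf{a}^*, \mathbf{p}) = 0$ for every $(\mathbf{a}^*, \boldsymbol{\lambda}^*) \in \mathcal{A}^*(\mathbf{p}) \times \Lambda^*(\mathbf{p})$; condition (vi) is precisely the assertion that the saddle set equals this product. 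Throughout, write $\sigma(\mathbf{a}, \boldsymbol{\lambda}) := [\nabla_{\mathbf{p}} J(\mathbf{a}, \mathbf{p}) + \sum_{j=1}^k \lambda_j \nabla_{\mathbf{p}} g_j(\mathbf{a}, \mathbf{p})]\,\hat{\mathbf{p}}$ for the sensitivity integrand in the claimed formula, and note that $\sigma$ is affine in $\boldsymbol{\lambda}$.

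For the $\limsup$ bound, fix $\boldsymbol{\lambda}^* \in \Lambda^*(\mathbf{p})$ and set $\Psi(\mathbf{q}) := \sup_{\mathbf{a} \in \mathcal{A}} \mathcal{L}(\mathbf{q}, \mathbf{a}, \boldsymbol{\lambda}^*)$, so that $V \leq \Psi$ everywhere and $V(\mathbf{p}) = \Psi(\mathbf{p})$, whence $\limsup_{t \downarrow 0} \Delta_t \leq \limsup_{t \downarrow 0} \frac{1}{t}(\Psi(\mathbf{p} + t\hat{\mathbf{p}}) - \Psi(\mathbf{p}))$. Because the effective maximizing set over $\mathcal{A}$ is compact (the $\boldsymbol{\lambda}^*$-penalty confines it to a bounded neighbourhood of $\Phi(\mathbf{p})$) and varies upper-hemicontinuously by Berge's Maximum Theorem (Theorem \ref{thm:berge}), and the $\mathbf{p}$-gradients are jointly continuous by (vii), a Danskin-type envelope estimate gives $\limsup_{t \downarrow 0} \frac{1}{t}(\Psi(\mathbf{p} + t\hat{\mathbf{p}}) - \Psi(\mathbf{p})) \leq \max_{\mathbf{a} \in M(\boldsymbol{\lambda}^*)} \sigma(\mathbf{a}, \boldsymbol{\lambda}^*)$, where $M(\boldsymbol{\lambda}^*)$ is the set of maximizers of $\mathcal{L}(\mathbf{p}, \cdot, \boldsymbol{\lambda}^*)$ over $\mathcal{A}$. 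One always has $\mathcal{A}^*(\mathbf{p}) \subseteq M(\boldsymbol{\lambda}^*)$, so to recover the asserted formula one must argue that for a suitably chosen dual optimum (for instance, $\boldsymbol{\lambda}^*$ in the relative interior of $\Lambda^*(\mathbf{p})$) complementary slackness forces $M(\boldsymbol{\lambda}^*) = \mathcal{A}^*(\mathbf{p})$; granting this and minimizing over $\boldsymbol{\lambda}^*$ gives $\limsup_{t \downarrow 0} \Delta_t \leq \min_{\boldsymbol{\lambda} \in \Lambda^*(\mathbf{p})} \max_{\mathbf{a} \in \mathcal{A}^*(\mathbf{p})} \sigma(\mathbf{a}, \boldsymbol{\lambda})$.

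For the $\liminf$ bound, fix $\mathbf{a}^* \in \mathcal{A}^*(\mathbf{p})$; the obstacle is that $\mathbf{a}^*$ need not be feasible for $\mathbf{p} + t\hat{\mathbf{p}}$. Using a Slater point $\hat{\mathbf{a}}$ from (v) and the convexity of $\mathcal{A}$ from (i), I would set $\mathbf{a}_t := (1 - \varepsilon_t)\mathbf{a}^* + \varepsilon_t \hat{\mathbf{a}}$ with $\varepsilon_t \downarrow 0$ chosen so that $\varepsilon_t / t \to 0$: the $O(\varepsilon_t)$ improvement in the constraint margins supplied by the strict Slater inequality then dominates the $O(t)$ degradation caused by moving $\mathbf{p}$, so $\mathbf{a}_t$ is feasible for all small $t$ and hence $V(\mathbf{p} + t\hat{\mathbf{p}}) \geq J(\mathbf{a}_t, \mathbf{p} + t\hat{\mathbf{p}})$. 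A first-order expansion of $J$ about $(\mathbf{a}^*, \mathbf{p})$, combined with $V(\mathbf{p}) = J(\mathbf{a}^*, \mathbf{p})$ and complementary slackness (so that the $\lambda_j^*$-terms may be inserted at no cost), gives $\liminf_{t \downarrow 0} \Delta_t \geq \min_{\boldsymbol{\lambda} \in \Lambda^*(\mathbf{p})} \sigma(\mathbf{a}^*, \boldsymbol{\lambda})$; taking the supremum over $\mathbf{a}^* \in \mathcal{A}^*(\mathbf{p})$ yields $\liminf_{t \downarrow 0} \Delta_t \geq \max_{\mathbf{a} \in \mathcal{A}^*(\mathbf{p})} \min_{\boldsymbol{\lambda} \in \Lambda^*(\mathbf{p})} \sigma(\mathbf{a}, \boldsymbol{\lambda})$.

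Combining the two brackets gives $\max_{\mathbf{a}} \min_{\boldsymbol{\lambda}} \sigma \leq \liminf_{t \downarrow 0} \Delta_t \leq \limsup_{t \downarrow 0} \Delta_t \leq \min_{\boldsymbol{\lambda}} \max_{\mathbf{a}} \sigma$ (all extrema over $\mathcal{A}^*(\mathbf{p})$ and $\Lambda^*(\mathbf{p})$), so it remains to establish the minimax identity $\max_{\mathbf{a}} \min_{\boldsymbol{\lambda}} \sigma = \min_{\boldsymbol{\lambda}} \max_{\mathbf{a}} \sigma$. Here $\Lambda^*(\mathbf{p})$ is convex and compact and $\sigma$ is affine in $\boldsymbol{\lambda}$, whereas $\mathcal{A}^*(\mathbf{p})$ carries no convexity, so Sion's theorem is not directly available; the repair is once more the product structure in (vi): since $\Lambda^*(\mathbf{p})$ is the same polyhedron for every primal optimizer, a short complementary-slackness argument shows that the inner minimum over $\Lambda^*(\mathbf{p})$ can be attained at a single $\boldsymbol{\lambda}$, which forces the two iterated extrema to agree. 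This yields the existence of $V'(\mathbf{p}; \hat{\mathbf{p}})$ for $\mathbf{p} \in \intt(\mathcal{P})$ and the stated formula with $\max$ and $\min$ interchangeable. I expect the two genuinely delicate points to be the quantitative feasibility restoration in the $\liminf$ step (where the strictness of the Slater inequality, uniform in $\mathbf{p}$ near the base point, and the convexity of $\mathcal{A}$ are both indispensable) and the identification $M(\boldsymbol{\lambda}^*) = \mathcal{A}^*(\mathbf{p})$, equivalently the minimax step, for which one must exploit the product description of the saddle set rather than any compactness or concavity of the primal optimizer set.
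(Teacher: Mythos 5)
First, note that the paper does not prove this statement at all: it is imported verbatim (with adapted notation) from Marimon and Werner, \emph{op.~cit.}, Theorem 1, and is only used as a black box in the proof of Theorem \ref{thm:pca.diff}. So the comparison is between your attempt and the cited proof, and your attempt has genuine gaps. The crucial point you are missing is that the theorem assumes no concavity of $J$ or $g_j$ in $\mathbf{a}$ (that is what ``generalized'' means here; condition (vi) replaces convexity), and several of your steps silently use convex-duality facts that fail without it. In the $\liminf$ step, the feasibility-restoration $\mathbf{a}_t=(1-\varepsilon_t)\mathbf{a}^*+\varepsilon_t\hat{\mathbf{a}}$ only improves the constraint margins if $g_j(\cdot,\mathbf{p})$ is concave, which is not assumed; moreover your rate bookkeeping is backwards: with $\varepsilon_t/t\to0$ the margin gain is $o(t)$ and cannot dominate the $O(t)$ degradation from moving $\mathbf{p}$, while taking $\varepsilon_t\asymp t$ injects an $O(t)$ change of $J$ in the $\mathbf{a}$-argument that contaminates the derivative bound; and the ``insert the $\lambda^*_j$-terms at no cost'' manipulation goes in the wrong direction for a lower bound (for feasible $\mathbf{a}_t$ one has $J\le\mathcal{L}$, not $\ge$). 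In the $\limsup$ step, $\Psi(\mathbf{q})=\sup_{\mathbf{a}\in\mathcal{A}}\mathcal{L}(\mathbf{q},\mathbf{a},\boldsymbol{\lambda}^*)$ is taken over the possibly unbounded set $\mathcal{A}$: nothing in (i)--(vii) makes the maximizer set compact or even the sup finite for $\mathbf{q}\ne\mathbf{p}$, and the identification $M(\boldsymbol{\lambda}^*)=\mathcal{A}^*(\mathbf{p})$ is simply false in the nonconvex setting (Lagrangian maximizers at a dual optimum need not be feasible, let alone primal-optimal); choosing $\boldsymbol{\lambda}^*$ in the relative interior of $\Lambda^*(\mathbf{p})$ does not repair this. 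Finally, the minimax interchange is asserted via ``a short complementary-slackness argument'' that is not supplied and essentially presupposes the conclusion. The preliminary claim that Slater-type condition (v) plus compactness yields strong duality is also incorrect without concavity; it is precisely assumption (vi) that supplies zero gap and the product structure.

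The working argument (and, in essence, the cited one) stays inside the constrained problem and the saddle inequalities rather than passing to $\sup_{\mathbf{a}\in\mathcal{A}}$. For the upper bound, pick maximizers $\mathbf{a}_t\in\Phi(\mathbf{p}+t\hat{\mathbf{p}})$ of the perturbed problem (nonempty and compact by (iii), continuous by (iv), so Berge's theorem applies), use $V(\mathbf{p}+t\hat{\mathbf{p}})-V(\mathbf{p})\le\mathcal{L}(\mathbf{p}+t\hat{\mathbf{p}},\mathbf{a}_t,\boldsymbol{\lambda}^*)-\mathcal{L}(\mathbf{p},\mathbf{a}_t,\boldsymbol{\lambda}^*)$ for every $\boldsymbol{\lambda}^*\in\Lambda^*(\mathbf{p})$, apply the mean value theorem in $\mathbf{p}$ via (vii), and pass to an accumulation point $\mathbf{a}_0\in\mathcal{A}^*(\mathbf{p})$ of $(\mathbf{a}_t)$; this yields $\limsup_{t\downarrow0}\Delta_t\le\min_{\boldsymbol{\lambda}\in\Lambda^*(\mathbf{p})}\sigma(\mathbf{a}_0,\boldsymbol{\lambda})\le\max_{\mathbf{a}}\min_{\boldsymbol{\lambda}}\sigma$. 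For the lower bound, pick dual solutions $\boldsymbol{\lambda}_t\in\Lambda^*(\mathbf{p}+t\hat{\mathbf{p}})$ (nonempty by (vi), uniformly bounded near $\mathbf{p}$ via the Slater-type condition (v)), use the two saddle inequalities $V(\mathbf{p}+t\hat{\mathbf{p}})\ge\mathcal{L}(\mathbf{p}+t\hat{\mathbf{p}},\mathbf{a}^*,\boldsymbol{\lambda}_t)$ and $V(\mathbf{p})\le\mathcal{L}(\mathbf{p},\mathbf{a}^*,\boldsymbol{\lambda}_t)$ for every $\mathbf{a}^*\in\mathcal{A}^*(\mathbf{p})$, and pass to an accumulation point $\boldsymbol{\lambda}_0\in\Lambda^*(\mathbf{p})$ to get $\liminf_{t\downarrow0}\Delta_t\ge\max_{\mathbf{a}}\sigma(\mathbf{a},\boldsymbol{\lambda}_0)\ge\min_{\boldsymbol{\lambda}}\max_{\mathbf{a}}\sigma$. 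Since $\max\min\le\min\max$ always, the two brackets force equality of the iterated extrema, existence of $V'(\mathbf{p};\hat{\mathbf{p}})$, and the interchangeability of $\max$ and $\min$ all at once --- no separate minimax lemma, no feasibility surgery on $\mathbf{a}^*$, and no dual function over all of $\mathcal{A}$ are needed.
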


\begin{corollary}\label{cor:gen.env.diff}
Suppose that in addition to the assumptions of Theorem \ref{thm:gen.env} the following conditions hold at a given $\mathbf{p}_0\in\mathrm{int}(\mathcal{P})$:
\begin{enumerate}
    \item[(i)] $J$ is strictly concave in $\mathbf{a}$ and $g_j$ is concave in $\mathbf{a}$, $j=1,...,k$.
    \item[(ii)] $J$ and $g_j$ are continuously differentiable in $\mathbf{a}$.
    \item[(iii)] The Linear Independence Constraint Qualification (LICQ) holds. That is, the vectors $\nabla_\mathbf{a}g_j(\mathbf{a}^*,\mathbf{p}_0)$ are linearly independent for binding constraints $i\in\mathcal{I}(\mathbf{a}^*,\mathbf{p}_0) =\{j:g_j(\mathbf{a}^*,\mathbf{p}_0)=0\}$.
\end{enumerate}
Then the saddle-point $(\mathbf{a}^*(\mathbf{p}_0),\boldsymbol{\lambda}^*(\mathbf{p}_0))$ is unique and the gradient of $V$ exists at $\mathbf{p}_0$ and is given by
\[
\nabla_\mathbf{p}V(\mathbf{p}_0)=\nabla_\mathbf{p}J(\mathbf{a}^*(\mathbf{p}_0),\mathbf{p}_0)+\sum_{j=1}^k\lambda^*_j(\mathbf{p}_0)\nabla_\mathbf{p}g_j(\mathbf{a}^*(\mathbf{p}_0),\mathbf{p}_0).
\]
\end{corollary}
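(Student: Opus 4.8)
The plan is to lean entirely on Theorem~\ref{thm:gen.env}, which under assumptions (i)--(vii) already yields, for $\mathbf{p}_0 \in \intt(\mathcal{P})$, the existence of the directional derivatives
\[
V'(\mathbf{p}_0;\hat{\mathbf{p}})=\max_{\mathbf{a}\in\mathcal{A}^*(\mathbf{p}_0)}\min_{\boldsymbol{\lambda}\in\Lambda^*(\mathbf{p}_0)}\left[\nabla_\mathbf{p} J(\mathbf{a},\mathbf{p}_0)+\sum_{j=1}^k\lambda_j\nabla_\mathbf{p}g_j(\mathbf{a},\mathbf{p}_0)\right]\hat{\mathbf{p}}.
\]
All the corollary adds is that, under (i)--(iii), the saddle-point set $\mathcal{A}^*(\mathbf{p}_0)\times\Lambda^*(\mathbf{p}_0)$ reduces to a single point $(\mathbf{a}^*(\mathbf{p}_0),\boldsymbol{\lambda}^*(\mathbf{p}_0))$. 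Once that is shown the $\max$-$\min$ collapses, so $\hat{\mathbf{p}}\mapsto V'(\mathbf{p}_0;\hat{\mathbf{p}})$ is the \emph{linear} functional $\hat{\mathbf{p}}\mapsto[\nabla_\mathbf{p} J(\mathbf{a}^*(\mathbf{p}_0),\mathbf{p}_0)+\sum_j\lambda_j^*(\mathbf{p}_0)\nabla_\mathbf{p}g_j(\mathbf{a}^*(\mathbf{p}_0),\mathbf{p}_0)]\hat{\mathbf{p}}$. A function whose directional derivative exists in every direction and depends linearly on the direction is Gâteaux differentiable there, and together with the local Lipschitz continuity of $V$ on $\mathcal{P}$ that is built into the hypotheses of Theorem~\ref{thm:gen.env} (difference quotients are then equi-Lipschitz in the direction, hence converge uniformly on the compact unit sphere) this upgrades to Fréchet differentiability, yielding the asserted gradient formula. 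It therefore remains only to establish the two uniqueness claims.

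For uniqueness of the primal optimizer: by convexity of $\mathcal{A}$ and concavity of each $g_j(\cdot,\mathbf{p}_0)$, the feasible set $\Phi(\mathbf{p}_0)=\{\mathbf{a}\in\mathcal{A}: g_j(\mathbf{a},\mathbf{p}_0)\geq 0,\ j=1,\dots,k\}$ is convex, and it is compact by assumption (iii) of Theorem~\ref{thm:gen.env}. From the saddle inequalities one reads off in the standard way that the $\mathbf{a}$-component of any saddle point is feasible, satisfies complementary slackness, and maximizes $J(\cdot,\mathbf{p}_0)$ over $\Phi(\mathbf{p}_0)$; since $J(\cdot,\mathbf{p}_0)$ is strictly concave and continuous on this compact convex set, the maximizer $\mathbf{a}^*(\mathbf{p}_0)$ is unique, so $\mathcal{A}^*(\mathbf{p}_0)=\{\mathbf{a}^*(\mathbf{p}_0)\}$.

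For uniqueness of the multiplier: fix a saddle point $(\mathbf{a}^*(\mathbf{p}_0),\boldsymbol{\lambda})$ and let $\mathcal{I}=\mathcal{I}(\mathbf{a}^*(\mathbf{p}_0),\mathbf{p}_0)$ be the active set. Complementary slackness forces $\lambda_j=0$ for $j\notin\mathcal{I}$, while the first-order condition expressing that $\mathbf{a}^*(\mathbf{p}_0)$ maximizes $\mathbf{a}\mapsto\mathcal{L}(\mathbf{p}_0,\mathbf{a},\boldsymbol{\lambda})$ over $\mathcal{A}$ — using that $J$ and the $g_j$ are $C^1$ in $\mathbf{a}$ — says that $\nabla_\mathbf{a}J(\mathbf{a}^*(\mathbf{p}_0),\mathbf{p}_0)+\sum_{j\in\mathcal{I}}\lambda_j\nabla_\mathbf{a}g_j(\mathbf{a}^*(\mathbf{p}_0),\mathbf{p}_0)$ lies in the normal cone of $\mathcal{A}$ at $\mathbf{a}^*(\mathbf{p}_0)$ (which is $\{0\}$ when $\mathcal{A}$ is open or all of $\mathbb{R}^m$, as in the application of Theorem~\ref{thm:pca.diff} where $\mathcal{A}=\mathbb{R}$). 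Given the normal-cone element, LICQ — linear independence of $\{\nabla_\mathbf{a}g_j(\mathbf{a}^*(\mathbf{p}_0),\mathbf{p}_0)\}_{j\in\mathcal{I}}$ — forces a unique choice of $(\lambda_j)_{j\in\mathcal{I}}$, hence $\boldsymbol{\lambda}=\boldsymbol{\lambda}^*(\mathbf{p}_0)$ and $\Lambda^*(\mathbf{p}_0)$ is a singleton.

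The step I expect to be most delicate is this multiplier uniqueness for a general convex $\mathcal{A}$: if the normal cone to $\mathcal{A}$ at $\mathbf{a}^*(\mathbf{p}_0)$ is nontrivial one must argue that LICQ, possibly enlarged by the active normals of $\mathcal{A}$, still pins down the multipliers — a subtlety that evaporates in the relevant setting $\mathcal{A}=\mathbb{R}^m$. Apart from that and the promotion from directional to Fréchet differentiability via local Lipschitz continuity of $V$, everything reduces to the saddle-point/KKT dictionary under Slater's condition (assumption (v) of Theorem~\ref{thm:gen.env}) and concavity.
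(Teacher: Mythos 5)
Your argument is correct and follows essentially the same route as the paper's own proof, which notes that strict concavity in $\mathbf{a}$ gives uniqueness of the primal solution, that $C^1$ regularity plus LICQ give uniqueness of the multiplier (delegating the details to the discussion in \cite{marimon2021envelope}), and that the gradient representation then follows from Theorem \ref{thm:gen.env} once the max--min collapses to a single linear functional. Your extra remarks only make explicit what the paper leaves to the cited reference: the normal-cone caveat for a general convex $\mathcal{A}$ is moot in the paper's application (where $\mathcal{A}=\mathbb{R}$), and the Lipschitz-based upgrade from linear directional derivatives to the full gradient is a bonus whose premise (local Lipschitz continuity of $V$) is not literally among the stated hypotheses of Theorem \ref{thm:gen.env}, though the paper itself claims no more than what the collapsed directional-derivative formula already delivers.
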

\begin{proof}
Condition (i) implies the uniqueness of the saddle point solution $\mathbf{a}^*(\mathbf{p}_0)\in \mathcal{A}^*(\mathbf{p}_0)$ and conditions (ii) and (iii) imply the uniqueness of the multiplier $\lambda^*(\mathbf{p}_0)\in\Lambda^*(\mathbf{p}_0)$ (see the discussion in \cite{marimon2021envelope}). Consequently, the saddle point corresponding to $\mathbf{p}_0$ is unique and the representation for the gradient follows from Theorem \ref{thm:gen.env}.
\end{proof}

\subsection{Miscellaneous results}
The following lemma summarizes some properties of level sets of convex functions. Details can be found in \cite{rockafellar1970convex}.

% \begin{lemma}\label{lem:proj.dist.conv}
% Let $K$ be a closed convex set, $\mathbf{x}\in\mathbb{R}^d$, and $\Pi_K(\mathbf{x})=\argmin_{\mathbf{y}\in K}||\mathbf{y}-\mathbf{x}||_2^2$. For any $\mathbf{y}\in K$
% \[||\Pi_K(\mathbf{x})-\mathbf{y}||_2\leq ||\mathbf{x}-\mathbf{y}||_2.\]
% \end{lemma}

%The following result must be known in optimization, but we collect the results needed in Theorem \ref{thm:cont.pca} here for completeness.

\begin{lemma} \label{lem:constr.set.properties}
Let $g : \mathbb{R}^n \rightarrow \mathbb{R} \cup \{\infty\}$ be convex. Then $L =\{\mathbf{x}\in\mathbb{R}^n: g(\mathbf{x})< 0\}$ is open and convex. If $L$ is non-empty then the closure of $L$ is given by $\overline{L}=\{\mathbf{x}\in\mathbb{R}^n: g(\mathbf{x})\leq 0\}$ and $L = \intt(\overline{L})$. In particular, we have $\partial L=\partial \bar{L}=\{\mathbf{x}\in\mathbb{R}^n: g(\mathbf{x})= 0\}$.

Now let $g_1, \ldots, g_k$ be convex, and let $L_j = \{ {\bf x} \in \mathbb{R}^n : g_j({\bf x}) < 0\}$. Suppose $C = \bigcap_{j = 1}^k L_j$ is nonempty. Then  $\partial C \subset \bigcup_{j = 1}^k \partial L_j$.
\end{lemma}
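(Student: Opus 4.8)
The plan is to prove the single-function assertions first and then read off the statement about $C=\bigcap_{j=1}^k L_j$. Throughout I will use that a convex function which is finite on all of $\mathbb{R}^n$ is continuous (this is all that is needed in the applications, where the $g_j$ are even $C^1$; for genuinely $\mathbb{R}\cup\{\infty\}$-valued $g$ one should either restrict to the interior of the effective domain or invoke the level-set formulas for closed proper convex functions in \cite{rockafellar1970convex}). Given continuity, $L=g^{-1}((-\infty,0))$ is open and $\{g\le 0\}=g^{-1}((-\infty,0])$ is closed; convexity of $L$ is immediate from $g((1-t)x+ty)\le(1-t)g(x)+tg(y)$.

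For $\overline{L}=\{g\le 0\}$, one inclusion is free: $\{g\le 0\}$ is closed and contains $L$, so $\overline{L}\subseteq\{g\le 0\}$. For the reverse, fix $x_0\in L$ (possible since $L\neq\emptyset$) and let $y$ satisfy $g(y)\le 0$; along the segment $x_t=(1-t)y+tx_0$, convexity gives $g(x_t)\le(1-t)g(y)+tg(x_0)\le tg(x_0)<0$ for $t\in(0,1]$, so $x_t\in L$ and $x_t\to y$, whence $y\in\overline{L}$. For $L=\intt(\overline{L})$, ``$\subseteq$'' holds since $L$ is open, and for ``$\supseteq$'' I would argue by contradiction: if $y\in\intt(\overline{L})\setminus L$ then $g(y)=0$ (using the closure identity just proved), and picking $x_0\in L$ and $z=y+\delta(y-x_0)$ with $\delta>0$ small enough that $z\in\overline{L}$, the point $y=\tfrac{\delta}{1+\delta}x_0+\tfrac{1}{1+\delta}z$ lies strictly between $x_0$ and $z$, so convexity forces $g(y)\le\tfrac{\delta}{1+\delta}g(x_0)+\tfrac{1}{1+\delta}g(z)\le\tfrac{\delta}{1+\delta}g(x_0)<0$, a contradiction. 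The boundary identities then follow formally: since $L$ is open, $\partial L=\overline{L}\setminus L=\{g\le 0\}\setminus\{g<0\}=\{g=0\}$, and since $\intt(\overline{L})=L$, also $\partial\overline{L}=\overline{L}\setminus L=\{g=0\}$.

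For the final assertion, note $C=\bigcap_{j=1}^k L_j$ is a finite intersection of open sets, hence open, so $\partial C=\overline{C}\setminus C$. Let $y\in\partial C$. Since $y\notin C$ there is an index $j$ with $g_j(y)\ge 0$. On the other hand $C\subseteq L_j$ gives $\overline{C}\subseteq\overline{L_j}=\{g_j\le 0\}$ (the closure formula applies because $L_j\supseteq C\neq\emptyset$), so $g_j(y)\le 0$; hence $g_j(y)=0$, i.e.\ $y\in\partial L_j$ by the boundary identity. Therefore $\partial C\subseteq\bigcup_{j=1}^k\partial L_j$.

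The genuinely routine parts are the continuity and convexity observations and the two segment computations. The only step needing a little care is ``$\intt(\overline{L})\subseteq L$'', where one pushes a candidate boundary point outward past $\partial L$ so that it lands strictly inside a segment with one endpoint in $L$; and, if one wishes to allow $g$ to take the value $+\infty$, the subtlety that openness of $L$ (via continuity) can fail at the boundary of the effective domain, which is why I would either assume $g$ finite or defer to the closed-convex-function results of \cite{rockafellar1970convex}.
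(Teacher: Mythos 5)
Your proof is correct. Note that the paper does not actually prove this lemma --- it states it and defers to \cite{rockafellar1970convex} --- so what you have supplied is a self-contained elementary argument rather than a variant of the paper's: convexity of $L$ from the defining inequality, openness from continuity of a finite convex function on $\mathbb{R}^n$, the closure identity $\overline{L}=\{g\le 0\}$ by sliding a point $y$ with $g(y)\le 0$ along the segment toward a fixed $x_0\in L$, the identity $L=\intt(\overline{L})$ by writing a putative point $y\in\intt(\overline{L})\setminus L$ as the strict convex combination $y=\tfrac{\delta}{1+\delta}x_0+\tfrac{1}{1+\delta}z$ with $z=y+\delta(y-x_0)\in\overline{L}$ (the algebra checks out), and the intersection statement from the two observations that $y\in\partial C$ forces $g_j(y)\ge 0$ for some $j$ while $\overline{C}\subset\{g_j\le 0\}$ forces $g_j(y)\le 0$; you also correctly use $L_j\supseteq C\neq\emptyset$ to justify $\partial L_j=\{g_j=0\}$. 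Your caveat about extended-real-valued $g$ is well placed and in fact flags a mild imprecision in the statement itself: with $g:\mathbb{R}^n\to\mathbb{R}\cup\{\infty\}$ the openness of $L$ (and hence $L=\intt(\overline{L})$ and $\partial L=\{g=0\}$) can fail at the boundary of the effective domain --- e.g.\ $g\equiv -1$ on $[0,1]$ and $+\infty$ elsewhere gives $L=[0,1]$ --- so one needs either your finiteness/continuity hypothesis or the relative-interior formulations in \cite{rockafellar1970convex}. Since the lemma is only invoked in this paper for continuously differentiable real-valued convex constraints (Condition \ref{cond:finite.dim.conditions} and the added ball constraint $g_{m+1}$), your restriction loses nothing, and your argument is a perfectly adequate substitute for the omitted proof.
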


\newpage
%\bibliography{geometry.ref}
\begingroup
%\small
%\sloppy
%\emergencystretch=1em
\printbibliography
\endgroup
\end{document}